\theoremstyle{plain}
\numberwithin{equation}{section}
\newtheorem{theorem}{Theorem}[section]
\newtheorem{remark}{Remark}[section]
\newtheorem{proposition}{Proposition}[section]
\newtheorem{lemma}{Lemma}[section]
\newtheorem{definition}{Definition}[section]
\newtheorem{example}{Example}[section]
\newtheorem{construction}{Construction}[section]
\def \Z {{\mathbb Z}}
\def \S {{\mathcal S}_{\mathcal M}^{\mbox{\tiny TNN}}}
\def \Grkn {Gr^{\mbox{\tiny TNN}}(k,n)}
\newcommand\mycom[2]{\genfrac{}{}{0pt}{}{#1}{#2}}
\def \DKP {{\mathcal D}_{\textup{\scriptsize KP},\Gamma}}
\def \DDG {{\mathcal D}_{\textup{\scriptsize dr},\Gamma}}
\def \DS {{\mathcal D}_{\textup{\scriptsize S},\Gamma_0}}
\def \DVG {{\mathcal D}_{\textup{\scriptsize vac},\Gamma}}
\def \DVN {{\mathcal D}_{\textup{\scriptsize vac},{\mathcal N}^{\prime}}}
\def \DDN {{\mathcal D}_{\textup{\scriptsize dr},{\mathcal N}^{\prime}}}
\def \gvac {\gamma^{\textup{\scriptsize (vac)}}}
\def \gdr {\gamma^{\textup{\scriptsize (dr)}}}
\def \gs {\gamma^{\textup{\scriptsize (S)}}}
\def \Pvac {P^{\textup{\scriptsize (vac)}}}
\def \Pdr {P^{\textup{\scriptsize (dr)}}}
\def \Ps {P^{\textup{\scriptsize (S)}}}
\def \sign {\mbox{sign\,}}
\title[$M$-curves for Le-networks and KP--II solitons]{Reducible $M$-curves for Le-networks in the totally-nonnegative Grassmannian and KP--II multiline solitons}
\author{Simonetta Abenda}
\address{Dipartimento di Matematica, Universit\`a di Bologna, P.zza di Porta San Donato 5, I-40126 Bologna BO, ITALY
}
\email{simonetta.abenda@unibo.it
}
\author{Petr G. Grinevich}
\address{L.D.Landau Institute for Theoretical Physics,
pr. Ak Semenova 1a, Chernogolovka, 142432, Russia,
{\footnotesize pgg@landau.ac.ru}\\
Lomonosov Moscow State University,
Faculty of Mechanics and Mathematics, 
Russia, 119991, Moscow, GSP-1, 1 Leninskiye Gory, Main Building,\\
Moscow Institute of Physics and Technology, 
9 Institutskiy per., Dolgoprudny,
Moscow Region, 141700, Russia.}
\thanks{
This research has been partially supported by GNFM-INDAM and RFO University of Bologna, by the Russian Foundation for Basic Research, grant 17-01-00366, 
by the program ``Fundamental problems of nonlinear dynamics'', Presidium of RAS. Partially this research was fulfilled during the visit of the second author (P.G.) to IHES, 
Université Paris-Saclay, France in November 2017.}
\begin{document}

\begin{abstract}
{
We associate real and regular algebraic--geometric data to each multi--line soliton solution of Kadomtsev-Petviashvili II (KP) equation. These solutions are known to be parametrized  by points of the totally non--negative part of real Grassmannians $\Grkn$. In \cite{AG1} we were able to construct real algebraic-geometric data for soliton data in the main cell $Gr^{\mbox{\tiny TP}} (k,n)$ only. Here we do not just extend that construction to all points in $\Grkn$, but we also considerably simplify it, since both the reducible rational $\mathtt M$--curve $\Gamma$ and the real regular KP divisor on $\Gamma$  are directly related to the parametrization of positroid cells in $\Grkn$ via the Le--networks introduced in \cite{Pos}. In particular, the direct relation of our construction to the Le--networks guarantees that the genus of the underlying smooth $\mathtt M$--curve is minimal and it coincides with the dimension of the positroid cell in $\Grkn$ to which the soliton data belong to. Finally, we apply our construction to soliton data in $Gr^{\mbox{\tiny TP}}(2,4)$ and we compare it with that in \cite{AG1}.} 

\medskip \noindent {\sc{2010 MSC.}} 37K40; 37K20; 14H50; 14H70.

 \noindent {\sc{Keywords.}} Total positivity, totally non-negative Grassmannians, KP hierarchy, real solitons, M-curves, Le--diagrams, planar bipartite networks in the disk, Baker--Akhiezer function.
\end{abstract}
\maketitle

\tableofcontents
\section{Introduction}

The deep relation between the asymptotic behavior of real bounded multi--line KP\footnote{Throughout the paper, we always use the notation KP for KP II, with the heat conductivity operator in the Lax pair.} soliton solutions, asymptotic web networks and total positivity has been unveiled in a series of papers (see \cite{BC,BPPP,CK2,CK,DMH,Kod1,Kod2,KW1,KW2, Z} and references therein). On the other side, real regular KP finite-gap solutions are associated to  $\mathtt M$--curves \cite{DN}, and soliton solutions can be obtained as degenerations of complex finite-gap ones, when some cycles on the spectral curves shrink to double points. As pointed out by S.P. Novikov, it is natural to check whether \textbf{real regular} degenerate solutions may be obtained by degenerating \textbf{real regular} finite-gap solutions. In particular, in the case of real bounded multi-line KP solitons, this means to investigate whether they can be obtained by degenerating smooth  $\mathtt M$--curves to rational (reducible) ones. Therefore we have started to search for new relations between total positivity in Grassmannians \cite{Lus1,Lus2,Pos,PSW,Rie} and $\mathtt M$--curves \cite{Har,Gud,Nat,Vi}, by connecting two relevant approaches used in KP theory to classify solutions, the Sato Grassmannian \cite{S} and KP finite--gap theory \cite{Kr1,Kr2,DN}, and in \cite{AG1} we have succeeded in establishing a new connection between classical total positivity \cite{Kar,Pin} and rational degenerations of $\mathtt M$--curves when the KP soliton data belonging to the main cells, $Gr^{\mbox{\tiny TP}}(k,n)\subset\Grkn$. 

Here we construct an analogous connection between all positroid cells in $\Grkn$ and  $\mathtt M$--curves. More precisely, we provide a new interpretation of minimal parametrizations of $g$--dimensional positroid cells $\S\subset\Grkn$ via degree $g$ real and regular KP  divisors on reducible $\mathtt M$--curves which are rational degenerations of genus $g$ $\mathtt M$--curves, using the Le--networks introduced in \cite{Pos}. To the Le--graph representing $\S$ \cite{Pos}, we canonically associate an universal reducible curve $\Gamma =\Gamma(\S)$ with $g+1$ ovals which is a rational degeneration of a genus $g$ smooth $\mathtt M$--curve. Then we establish a canonical relation between points in $\S$ parametrized by Le--networks and degree $g$ divisors on $\Gamma$. Since such networks provide a minimal parametrization of positroid cells \cite{Pos} and the degree of the KP divisor coincides with the dimension of $\S$, our parametrization is optimal for generic soliton data.

The starting point is the fact that regular multi--line KP solitons are obtained in well--defined finite--dimensional reductions  of the Sato Grassmannian \cite{S}. More precisely,
each family of KP real regular multiline soliton solutions corresponds to soliton data in a uniquely identified $d$--dimensional irreducible positroid cell $\S$ in a totally non--negative Grassmannian $\Grkn$\footnote{Each family of soliton solutions is also realized in an infinite number of $d$--dimensional reducible positroid cells in $Gr^{\mbox{\tiny TNN}}(k^{\prime}, n^{\prime})$ with $k^{\prime}\ge k$ and $n^{\prime}> n$.} (see \cite{CK, KW1,KW2} and references therein).
In this setting, the soliton data are a set of ordered phases ${\mathcal K} =\{ \kappa_1 < \cdots < \kappa_n\}$ and a point $[A] \in \S$. Then the corresponding KP multiline soliton solution is real regular for real times and the direct spectral approach  provides a
$k$ point divisor on a rational curve $\Gamma_0$ with $n$ marked points corresponding to the phases $\mathcal K$ and a marked point corresponding to the essential singularity of the normalized KP wave function \cite{Mal}. However, these spectral data are, in general, insufficient to reconstruct soliton data varying in $\S$ since $\max\{k, n-k\}\le d \le k(n-k)$.

On the other side, in principle, soliton solutions can be obtained by degenerating finite-gap solutions on smooth curves in the limit when some gaps degenerate to double points, as it was first observed in \cite{Nov} for the case of the Korteweg--de Vries equation.\footnote{ Using other
degenerations analogous to those in \cite{DKMM}, one can construct other interesting classes of KP solutions, including the rational ones. For additional information about singular spectral curves in soliton theory see \cite{Taim}.} In particular, 
finite-gap KP solutions were constructed in \cite{Kr1,Kr2}: they are parametrized  by degree $g$ non-special divisors on genus $g$ Riemann surfaces with a marked point. Real regular finite-gap KP solutions 
correspond to algebraic data on genus $g$ ${\mathtt M}$-curves satisfying natural constraints \cite{DN}: by definition, the curve has $g+1$ real ovals, and 
one of them contains the marked point, while each other oval contains exactly one divisor point.

Therefore, in order to obtain a one--to--one correspondence between soliton data varying in a given positroid cell and KP divisors, it is 
natural to impose that $\Gamma_0$ is an irreducible component of a reducible spectral curve. This approach is fully justified analytically by the extension of finite--gap theory to degenerate solutions, like solitons, on reducible curves in \cite{Kr3}. 
However, as remarked in \cite{Kr3}, the finite--gap approach on reducible curves
is ill--defined, in the sense that, due to degeneracy, there is not a unique way to extend the Baker--Akhiezer function. It is then relevant to search for canonical constructions compatible with the degeneration from regular finite--gap solutions. Since real regular quasi--periodic KP solutions are parametrized by real and regular divisors on smooth $\mathtt M$--curves \cite{DN},
we construct reducible curves $\Gamma$, which are rational degenerations of smooth $\mathtt M$--curves, and KP divisors satisfying reality and regularity conditions compatible with those settled in \cite{DN}.
The relevance of the construction proposed in this paper relies on the fact that, on one side, it perfectly matches the reality problem for KP finite--gap theory and, on the other side, provides a canonical parametrization of positroid cells. 
More precisely, in this paper:
\begin{enumerate}
\item We associate a canonical reducible $\mathtt M$--curve $\Gamma$ to the Le--graph ${\mathcal G}$ describing the corresponding cell and we prove that it is a rational degeneration of a smooth $\mathtt M$--curve of genus equal to the dimension $g$ of the positroid cell. This curve $\Gamma$ contains the rational curve $\Gamma_0$ coming from the direct spectral analysis as one of its irreducible components;
\item We then provide a parametrization of each $g$--dimensional positroid cell by real regular degree $g$ non--special divisors $\DKP$. The Sato divisor coincides with $\DKP\cap\Gamma_0$.
\end{enumerate}
We have decided to treat the Le--network case separately both because the reducible curve for the Le--network is the rational degeneration of a smooth $\mathtt M$--curve of \textbf{minimal genus} equal to the dimension of the positroid cell $\S$ and because we get a parametrization of $\S$ via degree $g$ non--special KP divisors.
Moreover, the explicit use of Le--networks is directly connected
to the construction proposed in \cite{AG1} for the case $\S = Gr^{\mbox{\tiny TP}}(k,n)$ and it also considerably simplifies it. Finally, throughout this paper, we carry out explicitly the construction for the usual acyclic orientation of the Le--network  and we postpone to \cite{AG2} the proof of the invariance of the KP divisor with respect to changes of orientation of the network and the generalization of this construction to all Postnikov networks.

\smallskip

Before outlining our construction, we would like to point out that there are other well-known relations of different nature between networks, algebraic curves and integrable systems in literature. In dimer models with periodic boundary conditions (models on tori) the Riemann surfaces arise as the spectral curves for operators on networks on tori \cite{KSO}, and such spectral curves, which are generically regular, may be associated to classical or quantum integrable systems \cite{KG}. Another big area of activity is currently associated with the use of planar networks in the disk for the computation of scattering amplitudes in $N=4$ super Yang-Mills on-shell diagrams, see \cite{AGP1,AGP2, ADM} and references therein. We have noticed an analogy between the momentum--helicity conservation relations in the trivalent planar networks in the approach of \cite{AGP1,AGP2} and the relations satisfied by the vacuum and dressed edge wave functions in our approach. Consequently, a relevant open problem is whether our approach for KP may be interpreted as a scalar analog of a field theoretic model. Finally another relevant open problem is the connection of our construction to that in \cite{KW2} where the asymptotic behavior of the multiline KP soliton solutions in the $(x,y)$-plane for large time $t$ are shown to give rise to soliton webs interpreted in terms of real tropical geometry \cite{IMS} and cluster algebras \cite{FZ2}. In our approach the KP solution plays the role of a potential in the spectral problem and its asymptotic behavior should be put in relation to that of KP zero divisors on $\Gamma$.

Total positivity itself, since its appearance in \cite{Sch}, naturally arises in many applications in connection with some reality properties of the system. In particular, important connections between positivity and oscillatory properties of mechanical systems were found in \cite{GK,GK2}. The extension of the positivity property to integral kernels was investigated in \cite{Kar}. Extension of total positivity to split reductive connected algebraic groups and flag manifolds was developped in \cite{Lus1,Lus2}. Total positivity in classical and generalized sense is also one of the basic concepts in the theory of cluster manifolds and cluster algebras \cite{FZ1,FZ2}, see also the book \cite{GSV}. Applications of the theory of total positivity in Lie groups to the study of homomorphisms of the fundamental group of a closed surface into a Lie group is considered in \cite{FG}. Non-negativity of systems of modified Bessel functions of the first kind arising in the solutions to a model of overdamped Josephson junction was studied in \cite{BG,BG2}. Of course, this list of literature is far from being complete. 

Finally the positroid stratification of $\Grkn$ \cite{Pos} is naturally related to the Gelfand-Serganova stratification of the complex Grassmannian $Gr(k,n)$ \cite{GGMS,GS}. As observed in \cite{Pos}, the real positivity condition essentially simplifies the problem, whereas the geometrical structure of the strata for complex Grassmannians can be as complicated as essentially any algebraic variety \cite{Mn}. Let us point out that full Gelfand-Serganova stratification corresponds to the action of all complex KP flows on the Grassmannians. In particular, the factor-space of the Grassmannians by the action of compact tori corresponding to pure imaginary times has interesting topology studied in \cite{BT1,BT2}.

\smallskip

\paragraph{\textbf{Outline of the main construction}} 
In this paper, to any given ordered $n$-set of real phases $\mathcal K$ and $g$-dimensional positroid cell in $\S\subset\Grkn$,  we associate a canonical curve $\Gamma$ which is a rational degeneration of a smooth $\mathtt M$-curve of minimal genus $g$, and we parametrize these cells via real regular KP divisors on these curves. For any soliton data $({\mathcal K},[A])$, $[A]\in\S$, the divisor can be made non-special by a proper choice of the normalization time $\vec t_0$. In our construction an essential tool  is the parametrization of all positroid cells in $\Grkn$ by the Le--networks introduced in \cite{Pos}, which, in particular, guarantees both the non-specialty of the divisor and the minimality of the genus. 

Indeed, given $\mathcal K$ and the planar trivalent bipartite Le--graph  ${\mathcal G}$ in the disk representing $\S$, to such data we associate a unique reducible rational curve $\Gamma=\Gamma(\mathcal G,\mathcal K)$  using the following natural correspondence: 
\begin{enumerate}
\item The boundary of the disk corresponds to the rational component $\Gamma_0$ containing the essential singularity of the KP wave function and the Sato divisor, while the $n$ boundary vertices correspond to the $n$ marked phases $\kappa_j$ on $\Gamma_0$;
\item Each bivalent or trivalent internal vertex corresponds to a rational component of $\Gamma$. The black and white colors are related to the different analytic properties of the KP wave function on the corresponding rational components. The divisor points are associated to white trivalent vertices through linear relations;
\item Each edge corresponds to a double point of $\Gamma$ where different components are glued. Thanks to the trivalency assumption, each $\mathbb{CP}^1$ component carries three marked points and we avoid the introduction of parameters marking double points, and the curve $\Gamma$ is the same for all points in $\S$;
\item Faces of the graph correspond to ovals of $\Gamma$;
\item The canonical acyclic orientation of the Le--graph in \cite{Pos} is associated to a well-defined choice of coordinates on the rational components of $\Gamma$.
\end{enumerate}
We remark that the above correspondence is a minor modification of a special case in the representation of reducible curves by dual graphs (see, for example, \cite{ACG}, Section X). Non rational components are allowed in degenerate finite--gap theory on reducible curves as well; however, the rational ansatz for $\Gamma\backslash \Gamma_0$ considerably simplifies the overall construction.

By our construction, $\Gamma$ is a real curve with $g+1$ ovals and is the rational degeneration of an $\mathtt M$--curve of genus $g$. If the soliton data belong to the top cell $Gr^{\mbox{\tiny TP}}(k,n)$, then $g=k(n-k)$ and the curve $\Gamma(\xi)$ constructed in \cite{AG1} corresponds to a particular desingularization of $\Gamma({\mathcal G})$ which reduces the number of rational components to $k+1$. We thouroughly discuss such desingularization in the simplest non--trivial case of soliton data in $Gr^{\mbox{\tiny TP}} (2,4)$. 

Then, we fix a point $[A] \in \S$ and extend its KP wave function from $\Gamma_0$ to $\Gamma$. We must control that at each pair of double points the values of the normalized KP wave function coincide for all times. In particular, this requirement has to be satisfied at the double points connecting components to $\Gamma_0$. Moreover, if we have linear relations between the values of the normalized KP wave function at the marked points of any given component, then its meromorphic  extension to the whole curve is canonical. To define the wave function at all double points, we use the canonically oriented Le--network ${\mathcal N}$ representing $[A]$. 

On $\mathcal N$ we first construct a system of \textbf{edge} vectors satisfying linear relations at the vertices. Each component of a given edge vector coincides, up to a sign, with the sum of the weights of all paths starting at the given edge and ending at the same boundary sink vertex. We remark that, for soliton data in $Gr^{\mbox{\tiny TP}}(k,n)$, the recursive construction of such system of vectors generalizes the algebraic construction in \cite{AG1}. 

We then use this system of vectors to construct both a vacuum edge wave function $\Phi(\vec t)$ and its dressing $\Psi (\vec t)$ on the Le--network. At this step, we modify the original network adding an univalent internal vertex next to each boundary source vertex using Postnikov move (M2) in order that the vacuum edge wave function satisfies Sato boundary conditions on $\Gamma_0$ and an edge vector corresponds to each Darboux point in $\Gamma$. Then, using the linear relations satisfied by the vacuum edge wave function at the vertices, we associate a canonical real number, which we call vacuum network divisor number, to each trivalent white vertex of the modified network $\mathcal N^{\prime}$. Similarly,  using the same linear relations for the dressed edge wave function, we also associate another canonical real number, which we call dressed network divisor number, to any trivalent white vertex of $\mathcal N^{\prime}$ not containing a Darboux edge. 

Moreover, thanks to these linear relations, the normalized dressed wave function admits degree one meromorphic extension on each component corresponding to a trivalent white vertex of $\mathcal N$, and admits constant in the spectral parameter extension on each other component. The dressed network divisor numbers become the coordinates of the divisor points on the corresponding components. The full KP divisor is the sum of these points on $\Gamma\backslash\Gamma_0$ and the Sato divisor on $\Gamma_0$, it is effective, non-special and has degree $g$. Finally, we check that each finite oval of $\Gamma$ contains exactly one divisor point. 

We would like to remark that the divisors on networks introduced in our text are different from the commonly used divisors on graphs, see for example, \cite{BN}. To each trivalent white vertex we associate not only the multiplicity of divisor (it is always 1 in our setting), but also its position on the real part of the corresponding rational component, which is a real number.  

\smallskip

\paragraph{\textbf{Plan of the paper}:} We did our best to make the paper self--contained. In Section \ref{sec:soliton_theory} and in Appendix \ref{app:TNN}, we briefly present a review of the necessary results respectively for KP soliton theory and totally non--negative Grassmannians. In Section \ref{sec:3} we outline the main construction and state the principal theorems. In Section \ref{sec:le}, we link the main algebraic construction in \cite{AG1} to the Le--networks and extend it to any positroid cell. The construction of the system of edge vectors on the Le--networkn and the proof of the main Theorem on the characterization of the vacuum divisor is carried out in Section~\ref{sec:proof}.
In section \ref{sec:example} we apply our construction to soliton data in $Gr^{\mbox{\tiny TP}} (2,4)$ and compare it with \cite{AG1}.

{\bf Notations:} We use the following notations throughout the paper:
\begin{enumerate}
\item $k$ and $n$ are positive integers such that $k<n$;
\item  for $s\in {\mathbb N}$ let $[s] =\{ 1,2,\dots, s\}$; if $s,j \in {\mathbb N}$, $s<j$, then
$[s,j] =\{ s, s+1, s+2,\dots, j-1,j\}$;
\item  ${\vec t} = (t_1,t_2,t_3,\dots)$, where
$t_1=x$, $t_2=y$, $t_3=t$. Throughout the paper we assume that $\vec t$ always has only a finite number of non-zero components, but this number can be arbitrarily large;
\item $\theta(\zeta,\vec t)= \sum\limits_{s=1}^{\infty} \zeta^s t_s,$
\item {we} denote the real phases 
$\kappa_1< \kappa_2 < \cdots < \kappa_n$ and
$\theta_j \equiv \theta (\kappa_j, \vec t)$.
\end{enumerate}

\section{KP--II multi-line solitons}\label{sec:soliton_theory}

In this section, we review the characterization of real bounded regular multiline KP soliton solutions via Darboux transformations, Sato's dressing transformations and finite gap--theory.
The KP-II equation \cite{KP}
\begin{equation}\label{eq:KP}
(-4u_t+6uu_x+u_{xxx})_x+3u_{yy}=0,
\end{equation}
is the first non--trivial flow of an integrable hierarchy \cite{D,DKN,H,MJD,S}. In the following we denote $\vec t = (t_1=x,t_2=y,t_3=t, t_4,\dots)$. The family of solutions we consider belong to the class of real regular exact KP solutions used, in particular, to model the shallow water waves in the approximation where the surface tension is negligible.

\subsection{The heat hierarchy and the dressing transformation}\label{sec:Sato}

Multiline KP solitons may be realized starting from the soliton data $({\mathcal K}, [A])$, where 
${\mathcal K}$ is a set of real ordered phases $\kappa_1<\cdots<\kappa_n$,  $A =( A^i_j )$ is a $k\times n$ real matrix of rank $k$ and $[A]$ denotes
the point in the finite dimensional real Grassmannian $Gr (k,n)$ corresponding to $A$.
Following \cite{Mat}, see also \cite{FN}, multiline KP soliton solutions to the KP equation are defined as
\begin{equation}\label{eq:KPsol}
u( \vec t ) = 2\partial_{x}^2 \log(\tau ( {\vec t})),
\end{equation}
where 
\begin{equation}\label{eq:tau}
\tau (\vec t) = Wr (f^{(1)},\dots, f^{(k)}) \equiv 
\mbox{det } \left|
\begin{array}{ccc}
f^{(1)} & \cdots & f^{(k)}\\
\partial_x f^{(1)} & \cdots & \partial_x f^{(k)}\\
\vdots & \ddots & \vdots\\
\partial_x^{n-1} f^{(1)} & \cdots &\partial_x^{n-1} f^{(k)}\\
\end{array}
\right|
 = \sum\limits_{I} \Delta_I (A)\prod_{\mycom{i_1<i_2}{ i_1,i_2 \in I}} (\kappa_{i_2}-\kappa_{i_1} ) e^{ \sum\limits_{i\in I} \theta_i }
\end{equation}
is the Wronskian of $k$ linear independent solutions to the heat hierarchy
$\partial_{t_l} f = \partial_x^l f$, $l=2,3,\dots$, of the form
$f^{(i)}(\vec t) = \sum_{j=1}^n A^i_j e^{\theta_j}$, $i\in [k]$.
In (\ref{eq:tau}), the sum is over all $k$--element ordered subsets $I$ in $[n]$, {\it i.e.} $I=\{ 1\le i_1<i_2 < \cdots < i_k \le n\}$ and $\Delta_I (A)$ are the maximal minors of the matrix $A$. Since we obtain the same KP solution by linearly recombining the heat hierarchy solutions, $u(\vec t)$ is associated to the equivalence class $[A]$ of $A$, which is a point in the Grassmannian $Gr(k,n)$.

$u(x,y,t, \vec 0)$ is regular and bounded for all real $x,y,t$ if and only if $\Delta_I (A) \ge 0$, for all $I$ \cite{KW2}. In such case, let $Mat^{\mbox{\tiny TNN}}_{k,n}$ and $GL_k^+$, respectively denote the set of real $k\times n$ matrices of maximal rank $k$ with non--negative maximal minors $\Delta_I (A)$, and the group of $k\times k$ matrices with positive determinants.
Since left multiplication by elements in $GL_k^+$
preserves $u({\vec t})$ in (\ref{eq:KPsol}), we conclude that the soliton data $[A]$ is a point in the totally non--negative Grassmannian \cite{Pos}
$\Grkn = GL_k^+ \backslash Mat^{\mbox{\tiny TNN}}_{k,n}$.

Any given soliton solution is associated to an infinite set of soliton data $({\mathcal K}, [A])$, but there exists a unique minimal pair $(k,n)$, such that the soliton solution can be realized with $n$ phases $\kappa_1<\cdots<\kappa_n$ and $[A]\in \Grkn$, but not with $n-1$ phases and $[A^\prime]\in Gr^{\mbox{\tiny TNN}} (k^{\prime},n^{\prime})$, where $(k^{\prime}, n^{\prime})$ is either $(k, n-1)$ or $(k-1, n-1)$.

\begin{definition}\label{def:regsol}{\bf Regular and irreducible soliton data} \cite{CK2}.
We call $({\mathcal K}, [A])$ regular soliton data if ${\mathcal K} = \{ \kappa_1 < \cdots < \kappa_n \}$ and $[A]\in \Grkn$.
We call the regular soliton data $({\mathcal K}, [A])$ irreducible  if $[A]$ is a point in the irreducible part of the real Grassmannian, {\sl i.e.} if the reduced row echelon matrix $A$ has the following properties:
\begin{enumerate}
\item\label{it:col} Each column of $A$ contains at least a non--zero element;
\item\label{it:row} Each row of $A$ contains at least one nonzero element in addition to the pivot.
\end{enumerate}
If either (\ref{it:col}) or (\ref{it:row}) doesn't occur, we call the soliton data $({\mathcal K}, [A])$ reducible.
\end{definition}

\begin{remark}\label{rem:irred}\textbf{Reducible soliton data} \cite{CK2}.
If (\ref{it:col}) in  Definition \ref{def:regsol} is violated for column $l$, then the phase $\kappa_l$ does not appear in the solution (\ref{eq:KPsol}). Then, one may remove 
such phase from ${\mathcal K}$, remove the zero column from $A$ (see also Remark \ref{rem:redLe}) and realize the soliton in $Gr^{\mbox{\tiny TNN}} (k, n-1)$.

If (\ref{it:row}) in Definition \ref{def:regsol} is violated for the row $l$ corresponding to the pivot index $i_l$, then the heat hierarchy solution $f^{(l)}(\vec t)$ 
contains only the phase $\kappa_{i_l}$, and such phase is missing in all other heat hierarchy solutions associated to RREF
(reduced row echelon form) matrix. $f^{(l)}(\vec t)$ is factored out in (\ref{eq:tau}), and again, $\kappa_{i_l}$ is missing in (\ref{eq:KPsol}). So one may eliminate such phase 
from ${\mathcal K}$, remove the corresponding row and pivot column from $A$, change all signs in the new matrix to the right of the removed column and above the removed 
row and realize the soliton in $Gr^{\mbox{\tiny TNN}} (k-1, n-1)$ (see also Remark \ref{rem:redLe}).
\end{remark}
For generic choices of the phases ${\mathcal K}$, the combinatorial classification of the irreducible part $\Grkn$ rules the classification of the asymptotic properties of multi--soliton solutions both in the $(x,y)$ plane at fixed time $t$ and in the tropical limit ($t\to \pm \infty)$ (see \cite{BC,BPPP,CK2,CK,DMH,Kod1,Kod2,KW1,KW2,Z} and references therein).

\smallskip

The following spectral data are associated to each soliton data $({\mathcal K},[A])$, $[A]\in \Grkn$: an irreducible rational curve, which we denote $\Gamma_0$, a marked point $P_0\in \Gamma_0$, a degree $k$ real divisor, which we call Sato divisor, and a KP wave function meromorphic on $\Gamma_0\backslash \{ P_0\}$, which we call the Sato KP wave function.  
The unnormalized Sato wave function can be obtained from the dressing (inverse gauge) transformation \cite{S} of the vacuum (zero--potential) eigenfunction $\displaystyle \phi^{(0)} (\zeta, \vec t) =\exp ( \theta(\zeta, {\vec t}))$, which solves
\begin{equation}
\label{eq:vacuum_eq}
\partial_x \phi^{(0)} (\zeta, \vec t)=\zeta \phi^{(0)} (\zeta, \vec t), \quad\quad
\partial_{t_l}\phi^{(0)} (\zeta, \vec t) = \zeta^l \phi^{(0)} (\zeta, \vec t),\quad l\ge 2.
\end{equation}
The operator $
W = 1 -{\mathfrak w}_1({\vec t})\partial_x^{-1} -\cdots - {\mathfrak w}_k({\vec t})\partial_x^{-k}$,
where ${\mathfrak w}_1({\vec t}),\dots,{\mathfrak w}_k({\vec t})$ are the solutions to the following linear system of
equations
$\partial_x^k f^{(i)} = {\mathfrak w}_1 \partial_x^{k-1} f^{(i)}+\cdots + {\mathfrak w}_k f^{(i)}$,$i\in [k]$,
is the dressing ({\it i.e.} gauge) operator for the soliton data $({\mathcal K},[A])$. Indeed $W$ satisfies Sato equations $\partial_{t_l} W = B_l W - W \partial_x^l$, $l\ge 1$,
with $B_l = (W \partial_x^l W^{-1} )_+$ (the symbol $(H)_+$ denotes the differential part of the operator $H$). Therefore
$L= W \partial_x W^{-1} = \partial_x + \frac{u(\vec t)}{2}\partial_x^{-1} +\cdots$,  $u(\vec t) = 2\partial_x {\mathfrak w}_1 (\vec t)$ and $\psi^{(0)} (\zeta; \vec t)= W\phi^{(0)} (\zeta; \vec t)$ are,
respectively, the KP-Lax operator, the KP--potential (KP solution) and the KP-eigenfunction, {\sl i.e.}
$L \psi^{(0)} (\zeta; \vec t) =\zeta \psi^{(0)} (\zeta; \vec t)$, $\partial_{t_l}\psi^{(0)} (\zeta; \vec t)= B_l \psi^{(0)} (\zeta; \vec t)$, forall $l\ge 2$.

The Darboux dressing operator ${\mathfrak D}$ is defined as 
\begin{equation}\label{eq:D}
{\mathfrak D}\equiv W \partial_x^k = \partial_x^k - {\mathfrak w}_1 (\vec t) \partial_x^{k-1} -\cdots - {\mathfrak w}_k(\vec t)
\end{equation}
and the KP-eigenfunction  may be also represented by
\begin{equation}\label{eq:Satowf}
{\mathfrak D}\phi^{(0)} (\zeta; \vec t)  = W\partial_x^k \phi^{(0)} (\zeta; \vec t)
= \left(\zeta^k -{\mathfrak w}_1 (\vec t)\zeta^{k-1} -\cdots - {\mathfrak w}_k(\vec t)\right)
\phi^{(0)} (\zeta; \vec t) = \zeta^k \psi^{(0)} (\zeta; \vec t).
\end{equation}

\begin{definition}\label{def:Satodiv}{\bf Sato divisor}
Let the regular soliton data be $({\mathcal K}, [A])$, ${\mathcal K} = \{ \kappa_1 < \cdots < \kappa_n \}$, $[A]\in \Grkn$. We call Sato divisor at time $\vec t_0$, $\DS (\vec t_0)$, the set of the roots of the characteristic equation associated to the Dressing transformation
\begin{equation}\label{eq:Satodiv}
\DS (\vec t_0) = \{ \gs_j (\vec t_0), \; j\in [k]\, : \;\; (\gs_j(\vec t_0))^k - {\mathfrak w}_1 (\vec t)(\gs_j(\vec t_0))^{k-1}-\cdots  -  {\mathfrak w}_k(\vec t_0) = 0 	\,	\}. 
\end{equation}
\end{definition}

In \cite{Mal} it is proven the following proposition
\begin{proposition} \textbf{The Sato divisor}\label{prop:malanyuk} \cite{Mal}.
Let the regular soliton data be $({\mathcal K}, [A])$, ${\mathcal K} = \{ \kappa_1 < \cdots < \kappa_n \}$, $[A]\in \Grkn$. Then for all real $\vec t_0$ the Sato divisor  $\DS (\vec t_0)$ is real and satisfies $\gs_{j}(\vec t)\in [\kappa_1,\kappa_n]$, $j\in [k]$. 
Moreover for almost all $\vec t_0$ the Sato divisor points are distinct.
\end{proposition}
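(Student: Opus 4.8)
The plan is to analyze the roots of the degree-$k$ monic polynomial
\[
p(\zeta, \vec t_0) = \zeta^k - {\mathfrak w}_1(\vec t_0)\zeta^{k-1} - \cdots - {\mathfrak w}_k(\vec t_0),
\]
which by \eqref{eq:Satowf} satisfies $p(\zeta,\vec t_0)\,\phi^{(0)}(\zeta;\vec t_0) = \zeta^k\psi^{(0)}(\zeta;\vec t_0) = {\mathfrak D}\phi^{(0)}(\zeta;\vec t_0)$. First I would observe that since $p$ is a real polynomial (the ${\mathfrak w}_j$ are real for real $\vec t_0$, being solutions of a real linear system with real coefficients built from $\kappa_j$ and $[A]\in\Grkn$), its roots come in complex-conjugate pairs; reality of the whole divisor is therefore equivalent to all roots being real. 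The key identity is that, by the definition of ${\mathfrak D} = W\partial_x^k$ and the fact that $f^{(1)},\dots,f^{(k)}$ span the kernel of ${\mathfrak D}$ (this is exactly the Wronskian/Darboux construction recalled before Definition \ref{def:Satodiv}), one has the factorization coming from
\[
{\mathfrak D}\phi^{(0)}(\zeta;\vec t_0) = \frac{Wr\big(f^{(1)},\dots,f^{(k)},\phi^{(0)}(\zeta;\cdot)\big)}{Wr\big(f^{(1)},\dots,f^{(k)}\big)}\Bigg|_{\vec t_0},
\]
and the numerator, as a function of $\zeta$, is (up to the nonzero factor $e^{\theta(\zeta,\vec t_0)}\prod$) proportional to $p(\zeta,\vec t_0)$. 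Concretely, writing $f^{(i)}(\vec t)=\sum_j A^i_j e^{\theta_j}$, the Wronskian in the numerator evaluated at $\vec t_0$ is, up to positive factors, $\sum_{I'} \Delta_{I'}(\tilde A(\zeta))\prod_{\ldots}(\cdots)e^{\ldots}$ where the $(k{+}1)\times(n{+}1)$ matrix $\tilde A(\zeta)$ augments $A$ by the phase $\zeta$; expanding, $p(\zeta,\vec t_0)$ is a positive-coefficient combination (the $\Delta_I(A)\ge 0$ together with the positive Vandermonde factors $\prod(\kappa_{i_2}-\kappa_{i_1})$ and the positive exponentials $e^{\sum\theta_i}$, all evaluated at $\vec t_0$) of the polynomials $\prod_{i\in I}(\zeta-\kappa_i)$ over $k$-subsets $I$.

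The heart of the argument is then a classical total-positivity fact: a nonnegative linear combination of the polynomials $\{\prod_{i\in I}(\zeta-\kappa_i) : |I|=k\}$, with $\kappa_1<\cdots<\kappa_n$, has only real roots lying in $[\kappa_1,\kappa_n]$. I would prove the interval bound directly — for $\zeta<\kappa_1$ every factor $\prod_{i\in I}(\zeta-\kappa_i)$ has sign $(-1)^k$, so the whole sum has constant sign $(-1)^k\ne 0$ there, and similarly for $\zeta>\kappa_n$ every such product is strictly positive, so $p$ has no roots outside $[\kappa_1,\kappa_n]$. Reality of all $k$ roots then follows because the conjugate-pair structure forces the number of real roots in $[\kappa_1,\kappa_n]$ to have the same parity as $k$ while a nonreal root would have to come with its conjugate; the clean way is to invoke the Gantmacher–Krein / variation-diminishing property of the matrix with columns $(1,\kappa_i,\dots,\kappa_i^{k-1})^T$ (a submatrix of a Vandermonde, hence totally positive), which bounds the number of sign changes of the coefficient sequence of $p$ and, via Descartes-type reasoning applied to $p(\zeta+\kappa_1)$, pins the number of real roots in $(\kappa_1,\kappa_n)$ at exactly $k$. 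Alternatively, one can realize $p(\zeta,\vec t_0)$ as the characteristic polynomial (up to sign) of a real tridiagonal / oscillatory Jacobi-type matrix associated to the dressing operator and the phases, whose eigenvalues are automatically real and interlacing; since a degree-$k$ polynomial has exactly $k$ roots counted with multiplicity, all $k$ are real and in $[\kappa_1,\kappa_n]$.

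Finally, for the genericity statement I would argue that the discriminant $\operatorname{disc}_\zeta p(\zeta,\vec t_0)$ is a real-analytic (indeed entire) function of $\vec t_0$ that is not identically zero — it suffices to exhibit one time $\vec t_0$, e.g. one with $t_s=0$ for $s>$ some bound and the remaining times chosen so that the coefficients $\Delta_I(A)e^{\sum_{i\in I}\theta_i(\vec t_0)}$ make $p$ a perturbation of $\prod_{i=1}^k(\zeta-\kappa_i)$ with distinct roots (e.g. take $[A]$ generic, or use a limit where a single minor $\Delta_I(A)$ dominates) — hence its zero set is a proper analytic subvariety and has measure zero. At the complement, the roots of $p(\cdot,\vec t_0)$ are simple, i.e. the Sato divisor points are distinct, which is the claim. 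The main obstacle is the reality-of-all-roots step: the interval bound is elementary, but ruling out complex-conjugate pairs inside $(\kappa_1,\kappa_n)$ genuinely uses the total positivity of the Vandermonde-type matrix built from the $\kappa_j$ (equivalently the oscillatory/variation-diminishing property from \cite{GK,Kar}), and I would want to state that classical input precisely rather than reprove it.
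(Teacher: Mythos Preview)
The paper does not supply a proof of this proposition; it simply quotes the result from \cite{Mal}. So there is no ``paper's proof'' to compare against, and your proposal is being judged on its own merits.

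Your derivation of the Wronskian identity
\[
p(\zeta,\vec t_0)=\frac{1}{\tau(\vec t_0)}\sum_{I}\Delta_I(A)\,\Big(\prod_{i<j,\;i,j\in I}(\kappa_j-\kappa_i)\Big)\,e^{\sum_{i\in I}\theta_i(\vec t_0)}\prod_{i\in I}(\zeta-\kappa_i)
\]
is correct, and the sign argument for $\zeta\notin[\kappa_1,\kappa_n]$ is fine. The genericity step via the analytic discriminant is also fine.

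The reality step, however, has a genuine gap. Your stated ``classical total-positivity fact'' that any nonnegative combination of the polynomials $\prod_{i\in I}(\zeta-\kappa_i)$ has only real roots is \emph{false} as written. With $k=2$, $n=4$, $\kappa_j=j-1$, the combination $\zeta(\zeta-1)+(\zeta-2)(\zeta-3)=2\zeta^2-6\zeta+6$ has negative discriminant. What rescues the soliton case is not mere nonnegativity of the coefficients but the Pl\"ucker relations satisfied by the $\Delta_I(A)$: in the counterexample, $\Delta_{12}\Delta_{34}=\Delta_{13}\Delta_{24}-\Delta_{14}\Delta_{23}$ forbids $\Delta_{12},\Delta_{34}>0$ with the other four minors vanishing. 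Your subsequent parity remark is circular, and the appeals to Descartes or to an unspecified oscillatory Jacobi matrix are too vague to constitute a proof.

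The clean way to close the gap is the sign-variation route you allude to, made precise as follows. Since $\sum_{j}A^r_j\,{\mathfrak D}e^{\theta_j(\vec t_0)}={\mathfrak D}f^{(r)}(\vec t_0)=0$ for every $r\in[k]$, the vector $\big(p(\kappa_1,\vec t_0),\dots,p(\kappa_n,\vec t_0)\big)$ (after dividing by the positive factors $e^{\theta_j(\vec t_0)}$) lies in $\ker A$; it is nonzero because $\deg p=k<n$. The sign-variation characterization of $\Grkn$ (Gantmacher--Krein/Karlin; equivalently, the dual of the variation-diminishing property) says that every nonzero vector in $\ker A$ has at least $k$ sign changes. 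By the intermediate value theorem this forces at least $k$ real zeros of $p$ in $[\kappa_1,\kappa_n]$, hence all $k$ roots are real and lie there. If you want a self-contained write-up, state and cite that sign-variation lemma precisely rather than the incorrect ``nonnegative combination'' claim.
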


\begin{remark}\label{rem:reddiv}{\bf Sato divisor for reducible regular soliton data}
In the case of reducible regular soliton data $({\mathcal K}, [A])$, ${\mathcal K} = \{ \kappa_1<\cdots <\kappa_n\}$, $[A]\in \Grkn$ (see Remark \ref{rem:irred}), we use the reduced Sato divisor $\DS^{\prime}(\vec t_0)$ of the corresponding maximally reduced positroid cell $Gr^{\mbox{\tiny TNN}}(k^{\prime},n^{\prime})$.

More precisely, if the representative RREF matrix $A$ in $\Grkn$, contains a zero column in position $l$, then $k^{\prime} = k$ and $\DS(\vec t_0)= \DS^{\prime} (\vec t_0)$ for any $\vec t_0$, since the reducible and the reduced Darboux transformations coincide ${\mathfrak D}^{(k)} = {\mathfrak D}^{(k^{\prime})}$.

Instead, if for some $r\in [k]$ and $i_r\in [r,n]$, the $r$--th row of the RREF matrix $A$ contains only the pivot element: $A^r_{j} = \delta_{j,i_r}$, then, for all $\vec t_0$,
$\kappa_{i_r} \in \DS (\vec t_0)$, $k^{\prime} =k-1$ and $\DS^{\prime} (\vec t_0)= \DS (\vec t_0)\backslash \{ \kappa_{i_r}\}$. Indeed, the characteristic polynomial associated to the Darboux differential operator ${\mathfrak D}$ satisfies
$\zeta^k -\zeta^{k-1} {\mathfrak w}_1 (\vec t)-\cdots - {\mathfrak w}_k(\vec t) = (\zeta-\kappa_{i_r}) \left(  \zeta^{k-1} - {\mathfrak w}_1^{\prime} (\vec t)\zeta^{k-2}-\cdots - {\mathfrak w}_{k-1}^{\prime} (\vec t) \right)$. Then
${\mathfrak D}^{\prime} =W^{\prime} \partial_x^{k-1}= \partial_x^{k-1} -{\mathfrak w}_1^{\prime} (\vec t)\partial_x^{k-2} -\cdots - {\mathfrak w}_{k-1}^{\prime} (\vec t)$
is the Darboux transformation associated to the reduced soliton data $({\mathcal K}^{\prime}, [A^{\prime}])$, with ${\mathcal K}^{\prime} = {\mathcal K} \backslash \{ \kappa_{i_r} \}$, $[A^{\prime}]\in Gr^{\mbox{\tiny TNN}} (k-1,n-1)$ and $A^{\prime}$ related to $A$ as in Remark \ref{rem:irred}.
\end{remark}

\begin{definition}\label{def:Sato_data}\textbf{Sato algebraic--geometric data} Let $({\mathcal K}, [A])$ be given regular soliton data with $[A]$ belonging to a $d$ dimensional positroid cell in $\Grkn$. Let $\vec t_0$ such that the Sato divisor consists of $k$ simple poles. Let $\Gamma_0$ be a copy of $\mathbb{CP}^1$ with marked points $P_0$, local coordinate $\zeta$ such that $\zeta^{-1} (P_0)=0$ and $\zeta(\kappa_1)<\zeta(\kappa_2)<\cdots<\zeta(\kappa_n)$.

Then to the data $({\mathcal K}, [A], \Gamma_0\backslash \{ P_0	\} ,\vec t_0)$ we associate the Sato divisor $\DS=\DS(\vec t_0)$
as in Definition (\ref{def:Satodiv}) and the normalized Sato wave function
\begin{equation}\label{eq:SatoDN}
{\hat \psi } (P, \vec t) = \frac{{\mathfrak D}\phi^{(0)} (P; \vec t)}{{\mathfrak D}\phi^{(0)} (P; \vec t_0)} = \frac{\psi^{(0)} (P; \vec t)}{\psi^{(0)} (P; \vec t_0)}, \quad\quad \forall P\in \Gamma_0\backslash \{ P_0\},
\end{equation}
with ${\mathfrak D}\phi^{(0)} (\zeta; \vec t)$ as in (\ref{eq:Satowf}). 

By definition $({\hat \psi}_0 (P,\vec t)) + \DS(\vec t_0) \ge 0$, for all $\vec t$.
\end{definition}

\begin{remark}\label{rem:fundam}\textbf{Incompleteness of Sato algebraic--geometric data}
Let $1\le k<n$ and let $\vec t_0$ be fixed. Given the phases $\kappa_1<\cdots <\kappa_n$ and the spectral data $( \Gamma_0\backslash \{ P_0	\} , \DS) $, where $\DS=\DS(\vec t_0) $ is a $k$ point divisor satisfying Proposition \ref{prop:malanyuk}, it is, in general, impossible to identify uniquely the point $[A]\in \Grkn$ corresponding to such spectral data.
Indeed, if we assume that the soliton data belong to an irreducible positroid cell
of dimension $d$, then $\max\{k, n-k\} \le d \le k(n-k)$. Otherwise, an analogous inequality holds for the reduced Sato divisor.
\end{remark}

\subsection{Finite-gap KP solutions and their multi-line soliton limits}

Soliton KP solutions can be obtained from the finite-gap ones by proper degenerations of the spectral curve \cite{Kr2,DKN}. 

The spectral data for periodic and quasiperiodic solutions of the KP equation (\ref{eq:KP}) in the finite-gap approach \cite{Kr1,Kr2} are: a finite genus $g$ compact Riemann surface $\Gamma$ with a marked point $P_0$, a local parameter $1/\zeta$ near $P_0$ and a non-special 
divisor $\mathcal D=\gamma_1+\ldots+\gamma_g$ of degree $g$ in $\Gamma$.
The Baker-Akhiezer function $\hat\psi (P, \vec t)$, $P\in\Gamma$, is defined by the following analytic properties:
\begin{enumerate}
\item For any fixed $\vec t$ the function $\hat\psi (P, \vec t)$ is meromorphic in $P$ on $\Gamma\backslash P_0$.
\item On  $\Gamma\backslash P_0$ the function $\hat\psi (P, \vec t)$ is regular outside the divisor points $\gamma_j$ and has at most first order poles 
at the divisor points. Equivalently, if we consider the line bundle $\mathcal L(\mathcal D)$  associated to $\mathcal D$, then
for each fixed $\vec t$ the function $\hat\psi (P, \vec t)$ is a holomorphic section of $\mathcal L(\mathcal D)$ outside $P_0$.
\item $\hat\psi (P, \vec t)$ has an essential singularity at the point $P_0$ with the following asymptotic:
\[
{\hat \psi} (\zeta, \vec t) = e^{ \zeta x +\zeta^2 y +\zeta^3 t +\cdots} \left( 1 - \chi_1({\vec t})\zeta^{-1} - \cdots
-\chi_k({\vec t})\zeta^{-k}  - \cdots\right). 
\]
\end{enumerate}
For generic data these properties define an unique function, which is a common eigenfunction to all KP hierarchy auxiliary linear operators 
$-\partial_{t_j} + B_j$, where $B_j =(L^j)_+$, and the Lax operator is $L=\partial_x+\frac{u(\vec t)}{2}\partial_x^{-1}+ u_2(\vec t)\partial_x^{-2}+\ldots$. All these operators commute and the potential $u(\vec t)$ satisfies the KP hierarchy. In particular, the KP equation arises in the 
Dryuma-Zakharov-Shabat commutation representation \cite{Druma}, \cite{ZS} as the compatibility for the second and the third operator: $[ -\partial_y + B_2, -\partial_t +B_3] =0$,
with $B_2 \equiv (L^2)_+ = \partial_x^2 + u$, $B_3 = (L^3)_+ = \partial_x^3 +\frac{3}{4} (u\partial_x +\partial_x u) + \tilde u$
and $\partial_x\tilde u =\frac{3}{4} \partial_y u$.

The Its-Matveev formula represents the KP hierarchy solution $u(\vec t)$ in terms of the Riemann theta-functions associated with $\Gamma$ (see, for example, \cite{Dub}). 
After fixing a canonical basis of cycles $a_1,\dots,a_g,b_1,\dots,b_g$ and a basis of normalized holomorphic differentials $\omega_1,\dots,\omega_g$ on $\Gamma$ such that
$\oint_{a_j} \omega_l =2\pi i \delta_{jl}$, $\oint_{b_j} \omega_l = B_{lj}$, $j,l \in [g]$,
the KP solution takes the form $u(\vec t) = 2\partial_x^2 \log \theta (\sum_j t_j U^{(j)} + z_0)+c_1,$
where $\theta$ is the Riemann theta function and $U^{(j)}$ are the vectors of the $b$--periods of the following normalized meromorphic differentials, holomorphic on $\Gamma\backslash \{ P_0\}$ and with principal parts $\hat \omega^{(j)} = d (\zeta^j ) +O(1)$, at $P_0$ (see \cite{Kr1,DN}).

The real regular solutions are the most relevant in physical applications. In \cite{DN} the necessary and sufficient conditions on spectral data to generate real regular KP hierarchy solutions for all real $\vec t$ were established, under the assumption that $\Gamma$ is smooth and has genus $g$: 
\begin{enumerate}
\item $\Gamma$ possesses an antiholomorphic involution ${\sigma}:\Gamma\rightarrow\Gamma$, ${\sigma}^2=\mbox{id}$, which has the maximal possible number of fixed components (real ovals). This number 
is equal to $g+1$, therefore $(\Gamma,\sigma)$ is an $\mathtt M$-curve.
\item $P_0$ lies in one of the ovals, and each other oval contains exactly one divisor points. The oval containing $P_0$ is called ``infinite'' and all 
other ovals are called ``finite''.
\end{enumerate}

The set of real ovals divides $\Gamma$ into two connected components. Each of these components is homeomorphic to a sphere with $g+1$ holes. 
In Figure \ref{fig:regM_g2}(left) we show an example for $g=2$.

\begin{figure}
  \centering
  {\includegraphics[scale=0.15,angle=90,width=0.44\textwidth]{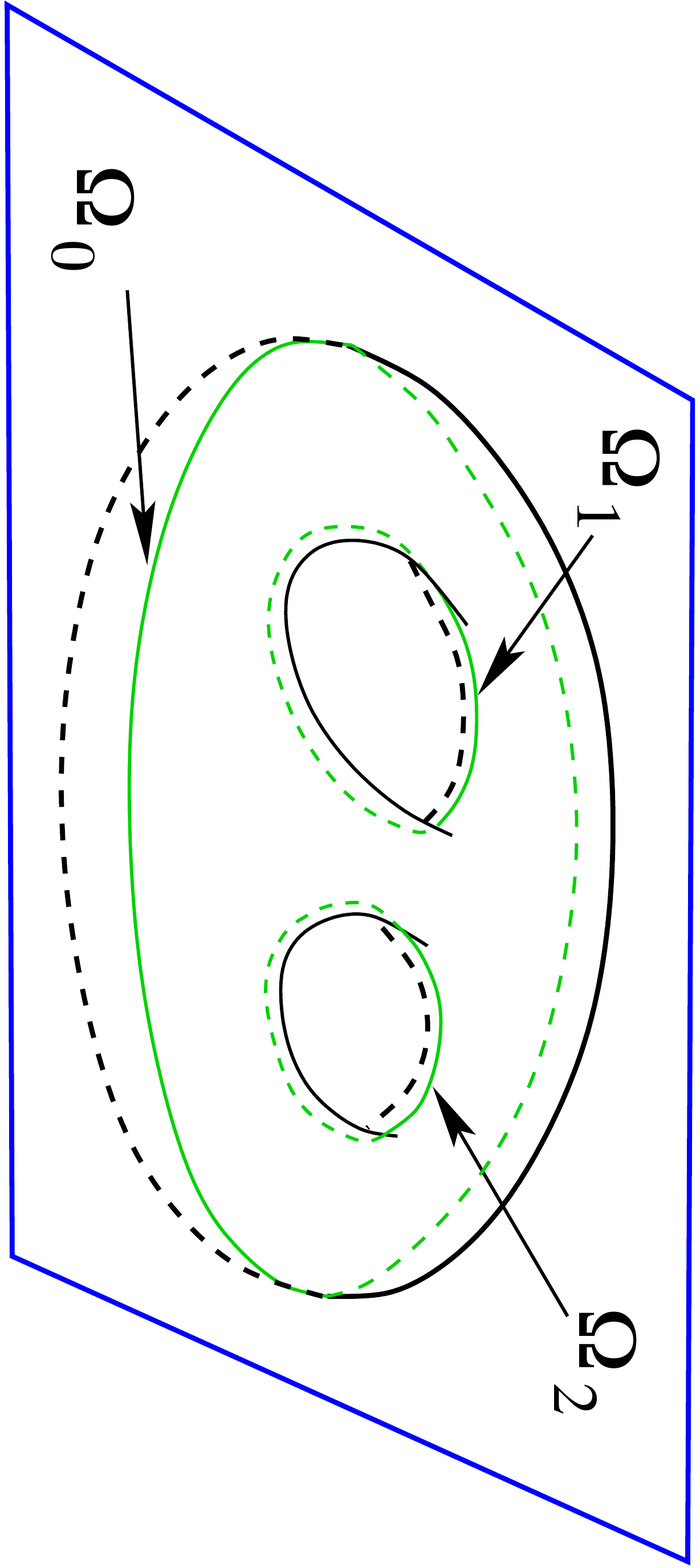}}
  \hfill
  {\includegraphics[scale=0.15,angle=90,width=0.44\textwidth]{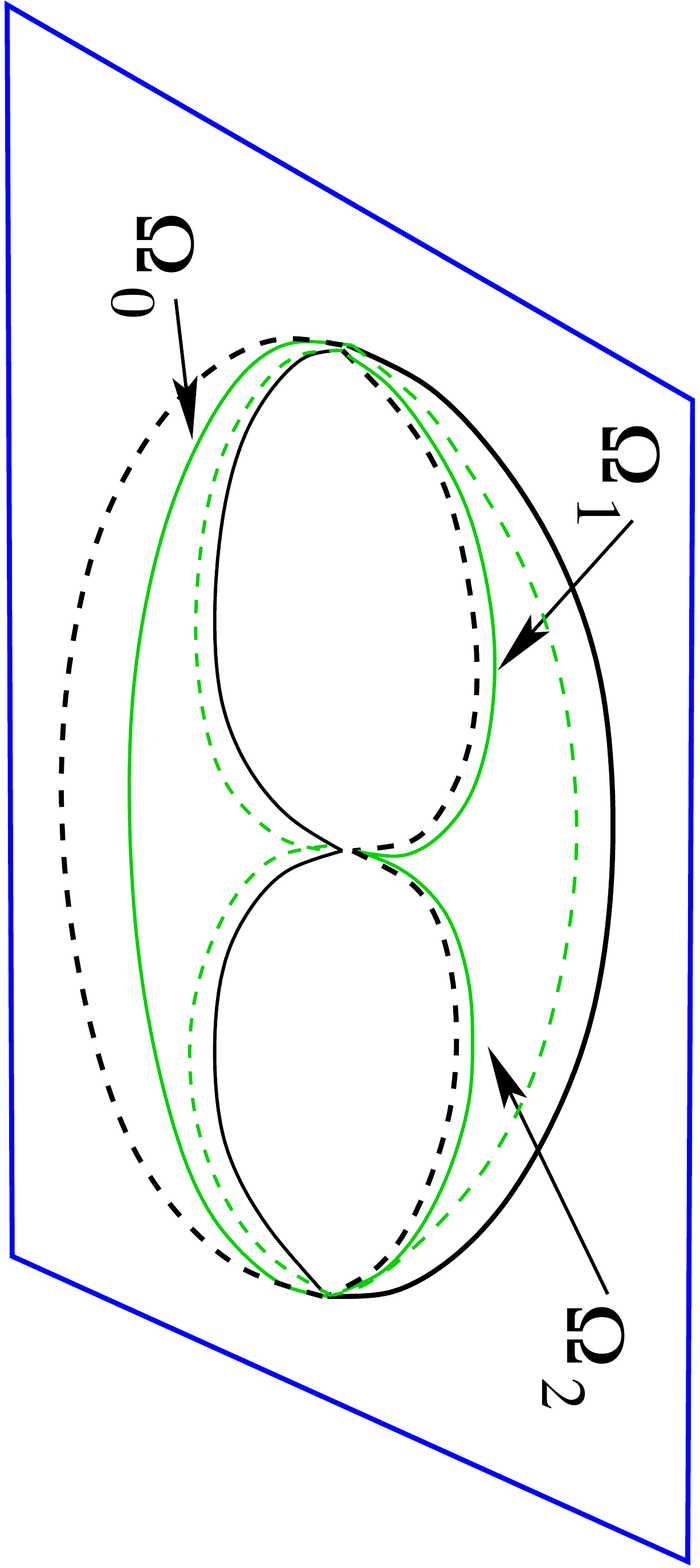}}
	\vspace{-1. truecm}
\caption{\footnotesize{\sl Left: a genus 2 regular $\mathtt M$-curve  with 3 real ovals invariant w.r.t. the 
involution {$\sigma$} (orthogonal reflection with respect to the horizontal plane). Right: its degeneration is a reducible $\mathtt M$-curve still possessing 3 real ovals.}}\label{fig:regM_g2}	
\end{figure}

The sufficient condition of the Theorem in \cite{DN} still holds true if the spectral curve $\Gamma$ degenerates in such a way that the divisor remains in the finite ovals 
at a finite distance from the essential singularity \cite{DN}. Of course, this condition is not necessary for degenerate curves, but the properties of the Sato divisor established in \cite{Mal} are compatible with such an ansatz. Moreover, in \cite{Kr3}, it has been proven that the algebraic--geometric approach goes through also for degenerate finite--gap solutions on \textbf{reducible} curves. Such inverse spectral problem is ill--posed, since there is not a unique reducible curve associated to the given soliton data. Finally there is also no a priori reason why, given one such reducible curve, the divisor on it should satisfy any reality condition. 

In \cite{AG1}, we have proven that the multiline soliton solutions corresponding to points in $Gr^{\mbox{\tiny TP}} (k,n)$ may indeed be obtained as limits of real regular finite-gap solutions on smooth $\mathtt M$--curves: to any soliton datum in $Gr^{\mbox{\tiny TP}} (k,n)$ and any $\xi\gg 1$, we have associated a curve $\Gamma_{\xi}$, which is the rational degeneration of 
a smooth $\mathtt M$--curve of minimal genus $k(n-k)$ and a degree $k(n-k)$ divisor satisfying the reality conditions of Dubrovin and Natanzon's theorem.
In Figure~\ref{fig:regM_g2}(right) we show the rational degeneration of the genus $g=2$ curve associated to soliton data in 
$Gr^{\mbox{\tiny TP}}(1,3)$ and $Gr^{\mbox{\tiny TP}}(2,3)$ in \cite{A,AG1}.

The main objective of this paper is therefore twofold: \textbf{provide a canonical construction of a reducible rational $\mathtt M$--curve of minimal genus $g=d$ for any fixed positroid cell in $\Grkn$ and show that real and regular divisors on such curve provide a parametrization of the cell}. We therefore give the following definition:

\begin{definition} 
\label{def:rrss}
\textbf{Real regular algebraic-geometrical data associated with a given soliton solution.}
Assume that we have fixed soliton data $({\mathcal K},[A])$, where ${\mathcal K}$ is a 
collection of real phases $\kappa_1<\kappa_2<\ldots <\kappa_n$, $[A]\in \Grkn$. Let $d$ be the dimension of the positroid cell to which $[A]$ belongs. 

Assume that we have a reducible connected curve $\Gamma$ with a marked point $P_0$, a local parameter $1/\zeta$ near $P_0$. In addition, assume that the curve $\Gamma$  may be obtained from a rational degeneration of a smooth ${\mathtt M}$-curve of genus $g$, with $g\ge d$, and that the antiholomorphic involution preserves the maximum number of the ovals in the limit, so that $\Gamma$ possesses $g+1$ real ovals. 

Assume that ${\mathcal D}$ is a degree $g$ non-special divisor on $\Gamma\backslash P_0$, and that $\hat\psi$ is the normalized Baker-Ahkiezer function associated to such data, i.e. for any $\vec t$ its pole divisor is contained in ${\mathcal D}$: $({\hat \psi} (P , \vec t))+{\mathcal D} \ge 0$ on $\Gamma\backslash P_0$, where $(f)$ denotes the divisor of $f$.

We say that  \textbf{the algebraic-geometrical data $\Gamma$, ${\mathcal D}$ are associated to the soliton data $({\mathcal K},[A])$,} if the irreducible component $\Gamma_0$ of $\Gamma$ containing $P_0$ is $\mathbb{CP}^1$, and the restriction of $\hat\psi$ to $\Gamma_0$ coincides with Sato normalized dressed wave function for the soliton data $({\mathcal K},[A])$. In particular, for such data the restriction of ${\mathcal D}$ to $\Gamma_0$ coincides with the Sato divisor.
 
We say that the \textbf{divisor 
${\mathcal D}$ satisfies the reality and regularity conditions} if $P_0$ belongs to one of the fixed ovals and the boundary of each other finite oval contains exactly one divisor point. 
\end{definition}

We remark that the simplicity and reality of the Sato divisor points proven in \cite{Mal} is compatible with the reality and regularity of the algebraic-geometrical data associated with a given soliton solution in the Definition above, provided that the reducible curve $\Gamma$ possesses $k$ distinct ovals containing the Sato divisor points.

\section{Algebraic-geometric approach for KP soliton data in $\Grkn$: the main construction}\label{sec:3}

Since $\Grkn$ is topologically the closure of $Gr^{\mbox{\tiny TP}}(k,n)$, one can try to extend \textbf{indirectly} the construction of \cite{AG1} to soliton data in $\Grkn$ considering the latter as the limit of a sequence of soliton data in $Gr^{\mbox{\tiny TP}}(k,n)$. But this limiting procedure is very non-trivial, and it provides only an upper bound for the genus: $g\le k(n-k)$.

In the following we present a \textbf{direct} construction of algebraic geometric data associated to points in $\Grkn$ which is naturally related to the characterization of positroid cells in \cite{Pos} and provides optimal genus spectral curves. Moreover, the present construction unveils the relation of the algebraic construction in \cite{AG1} with Le-networks in $Gr^{\mbox{\tiny TP}}(k,n)$. The starting point are the algebraic geometric data associated to $({\mathcal K}, [A])$ via Sato dressing (see Definition \ref{def:Sato_data}):
\begin{enumerate}
\item A rational curve $\Gamma_0$ equipped with a finite number of marked points: the ordered real phases $\kappa_1<\cdots<\kappa_n$, and the essential singularity 
$P_0$ of the wave function;
\item The Sato divisor $\DS(\vec t_0)$ for the soliton data defined in Definition \ref{def:Satodiv};
\item The normalized wave function $\hat \psi(P,\vec t) $ on $\Gamma_0 \backslash \{ P_0\}$ defined in (\ref{eq:SatoDN}). 
\end{enumerate}
As pointed out in Section \ref{sec:Sato}, in general, the Sato divisor does not parametrize the whole positroid cell to which $[A]$ belongs to. Therefore, in general, we cannot reconstruct the soliton data $[A]$ just from the Sato divisor at $\vec t_0$.
 
Below, to any regular soliton data $({\mathcal K}, [A])$, we associate a well--defined curve $\Gamma$ containing $\Gamma_0$ as a connected component, and a unique KP divisor on it using the Le--network  ${\mathcal N}$ representing $[A]$. In particular, we show that $\Gamma$ is reducible and a rational degeneration of a $\mathtt M$-curve having the minimal possible genus, $g=d=\dim \S$, under genericity assumption on the soliton data. Finally, the set of poles of $\hat\psi$ exactly coincides with $\DKP$ for generic $\vec t$.  

\smallskip

\textbf{Main construction} {\sl Assume we are given a real regular bounded multiline KP soliton solution generated by the following soliton data:
\begin{enumerate}
\item A set of $n$ real ordered phases ${\mathcal K} =\{ \kappa_1<\kappa_2<\dots<\kappa_n\}$;
\item A point $[A]\in \S \subset \Grkn$, where $\S$ is a positroid stratum of dimension $d$. 
\end{enumerate}
We represent $[A]$ with its canonically oriented bipartite trivalent Le--network ${\mathcal N}$. We recall that ${\mathcal N}$ provides a representation of the points of the cell depending exactly by $d$ parameters. Let us also denote ${\mathcal G}$ the Le--graph representing $\S$.
Then, we associate the following algebraic-geometric objects to the soliton data $({\mathcal K}, [A])$:
\begin{enumerate}
\item A reducible $\mathtt M$--curve $\Gamma=\Gamma({\mathcal G})$ with $g+1$ ovals which is the rational degeneration of a smooth $\mathtt M$--curve of genus $g = d$. In our approach, $\Gamma_0$ is one of the irreducible components of $\Gamma$. The marked point $P_0$ belongs to the intersection of $\Gamma_0$ with an oval (infinite oval); 
\item An unique real and regular degree $g$ non--special KP divisor $\DKP= \DKP({\mathcal K}, [A])\subset \Gamma\backslash \{P_0\}$ such that any finite oval contains exactly one divisor point and $\DKP \cap \Gamma_0$ coincides with Sato divisor $\DS(\vec t_0)$ for some initial time $\vec t_0$;
\item An unique KP wave--function $\hat \psi(P, \vec t)$ on $\Gamma \backslash \{ P_0 \}$ as in Definition \ref{def:rrss} such that
\begin{enumerate}
\item Its restriction to $\Gamma_0\backslash \{P_0\}$ coincides with the normalized Sato wave function (\ref{eq:Satowf});
\item Its pole divisor has degree $\mathfrak d \le g$ and is contained in $\DKP$.
\end{enumerate}
\end{enumerate}
}

\begin{remark}
Here and in the following, when we refer to the Sato divisor for reducible real and regular soliton data, we mean the reduced Sato divisor defined in Remark \ref{rem:reddiv}. In particular, the KP divisor $\DKP$ restricted to $\Gamma_0$ is the reduced Sato divisor.
\end{remark}

\begin{remark}
In \cite{AG2} we extend the main construction using any oriented bipartite trivalent network in the disk representing $[A]$ in Postnikov equivalence class \cite{Pos}. 
\end{remark}

\subsection{The reducible rational curve $\Gamma$}
\label{sec:gamma}

Given the oriented graph ${\mathcal G}$ representing a given positroid cell $\S\subset \Grkn$, the curve $\Gamma=\Gamma({\mathcal G})$ is obtained gluing a finite number of copies of $\mathbb{CP}^1$, corresponding to the internal vertices in ${\mathcal G}$, and one copy of $\mathbb{CP}^1=\Gamma_0$, corresponding to the boundary of the disk. We glue these components at pairs of points corresponding to its edges. We also fix a local affine coordinate $\zeta$ on each component (see Definition~\ref{def:loccoor}), therefore we have complex conjugation $\zeta\rightarrow\bar\zeta$ at each component. The points with real $\zeta$ form the real part of the given component. By construction (see Definition \ref{def:loccoor}), the coordinates at each pair of glued points $P$, $Q$, are real. We then topologically represent the real part of $\Gamma$ as a union of circles (ovals), where the latter correspond to the faces of ${\mathcal G}$.

\begin{figure}
  \centering
  {\includegraphics[width=0.46\textwidth]{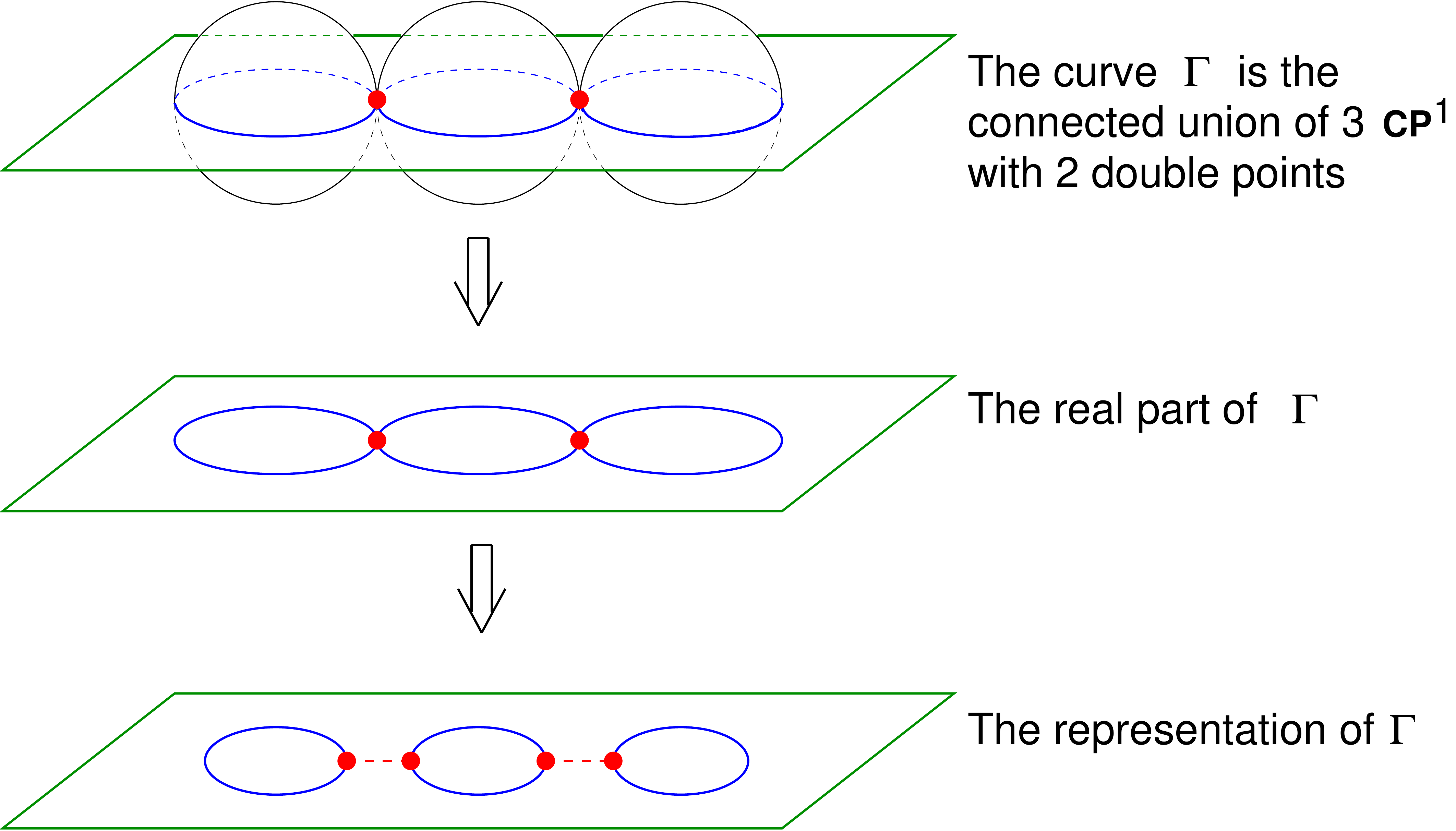}}
  \caption{\label{fig:curve_model}\footnotesize{\sl The model of a reducible rational curve $\Gamma$ with three components and one oval.}}
\end{figure}

We use the same representation for real rational curves as in \cite{AG1} (see Fig.~\ref{fig:curve_model}). We draw only the real part of each component and we represent it with a circle. Then we schematically represent the real part of $\Gamma$ by drawing these circles separately and connecting the glued points by dashed lines. The planarity of the Le--graph implies that $\Gamma$ is a reducible rational $\mathtt M$--curve. 

\begin{construction}\label{def:gamma}\textbf{The curve $\Gamma=\Gamma({\mathcal G})$.}
Let ${\mathcal K} =\{\kappa_1 < \cdots < \kappa_n\}$ and let $\S\subset \Grkn$ be the positroid cell corresponding to the realizable matroid ${\mathcal M}$. Let ${\mathcal G}$ be the planar connected acyclically oriented trivalent bipartite Le--graph in the disk of Definition \ref{def:can_Le} representing $\S$. 
Let $I=\{1 \le i_1 < \cdots < i_k\le n\}$  be the set of the pivot indexes ({\sl i.e } the lexicographically minimal base of ${\mathcal M}$) and, for any $r\in [k]$, let $N_r$ be the number of filled boxes in the $r$-th row of the corresponding Le--diagram (see (\ref{eq:Nij}) in Appendix \ref{app:TNN}). Finally, let $1\le j_1 < j_2 < \cdots < j_{N_r} \le n$ be the non--pivot indexes of the boxes $B_{i_r, j_s}$ in the Le--diagram with index $\chi^{i_r}_{j_s}=1$, $s\in [N_r]$, where notations are consistent with (\ref{eq:Nij}) and (\ref{eq:chiindex}) in Appendix \ref{app:TNN}.

\begin{table}[H]
\caption{The correspondence between the Le--graph ${\mathcal G}$  and the reducible rational curve $\Gamma$} 
\centering
\begin{tabular}{|c|c|}
\hline\hline
${\mathcal G}$ & $\Gamma$\\[0.5ex]
\hline
Boundary of disk & Copy of $\mathbb{CP}^1$ denoted $\Gamma_0$ \\
Boundary vertex $b_l$             & Marked point $\kappa_l$ on  $\Gamma_0$\\
Internal black vertex   $V^{\prime}_{ij}$  & Copy of $\mathbb{CP}^1$ denoted $\Sigma_{ij}$\\
Internal white vertex   $V_{ij}$           & Copy of $\mathbb{CP}^1$ denoted $\Gamma_{ij}$\\
Edge                     & Double point\\
Face                              & Oval\\ [1ex]
\hline
\end{tabular}
\label{table:LeG}
\end{table}

The curve
$\Gamma=\Gamma({\mathcal G})$ is associated to ${\mathcal G}$ according to 
Table \ref{table:LeG}, after reflecting the graph w.r.t. a line orthogonal to the one containing the boundary vertices 
(we reflect the graph to have the natural increasing order of the phases on $\Gamma_0\subset\Gamma$).
More precisely, $\Gamma$ is the connected union of
$2n+k+1$ copies of $\mathbb{ CP}^1$ denoted as $\Gamma_0$, $\Sigma_{i_rj_s}$, $\Gamma_{i_rj_s}$, $\Gamma_{i_r}$, for $r\in [k]$, $s\in [N_r]$
\[
\displaystyle \Gamma=\Gamma_0 \bigsqcup\limits_{r=1}^k \left(\Gamma_{i_r} \sqcup\Bigg(\bigsqcup\limits_{s=1}^{N_r} \Gamma_{i_rj_s} \sqcup\Sigma_{i_rj_s}\Bigg)\right) 
\]
according to the following rules (see also Figures \ref{fig:markedpoints} and \ref{fig:doublepoints}):   
\begin{enumerate}
\item $\Gamma_0$ is the copy of $\mathbb{CP}^1$ corresponding to the boundary of the disk. It has $n+1$ marked points: $P_0$ such that $\zeta^{-1} (P_0)=0$ and the points $\kappa_1<\cdots <\kappa_n$ corresponding to the boundary vertices $b_1,\dots, b_n$ on ${\mathcal G}$; 
\item A copy of $\mathbb{CP}^1$ corresponds to any internal vertex of ${\mathcal G}$. For any fixed $r\in [k]$ and $s\in [N_r]$ we denote $\Gamma_{i_rj_s}$ (resp. $\Sigma_{i_rj_s}$) the copy of $\mathbb{CP}^1$ corresponding to the white vertex $V_{i_rj_s}$ (resp. black vertex $V^{\prime}_{i_rj_s}$);
\item We denote $\Gamma_{i_r}$ the copy of $\mathbb{CP}^1$ corresponding to the internal white vertex $V_{i_r}$ joined by an edge to the source $b_{i_r}$, $r\in [k]$;

\begin{figure}
  \centering
  \includegraphics[width=0.46\textwidth]{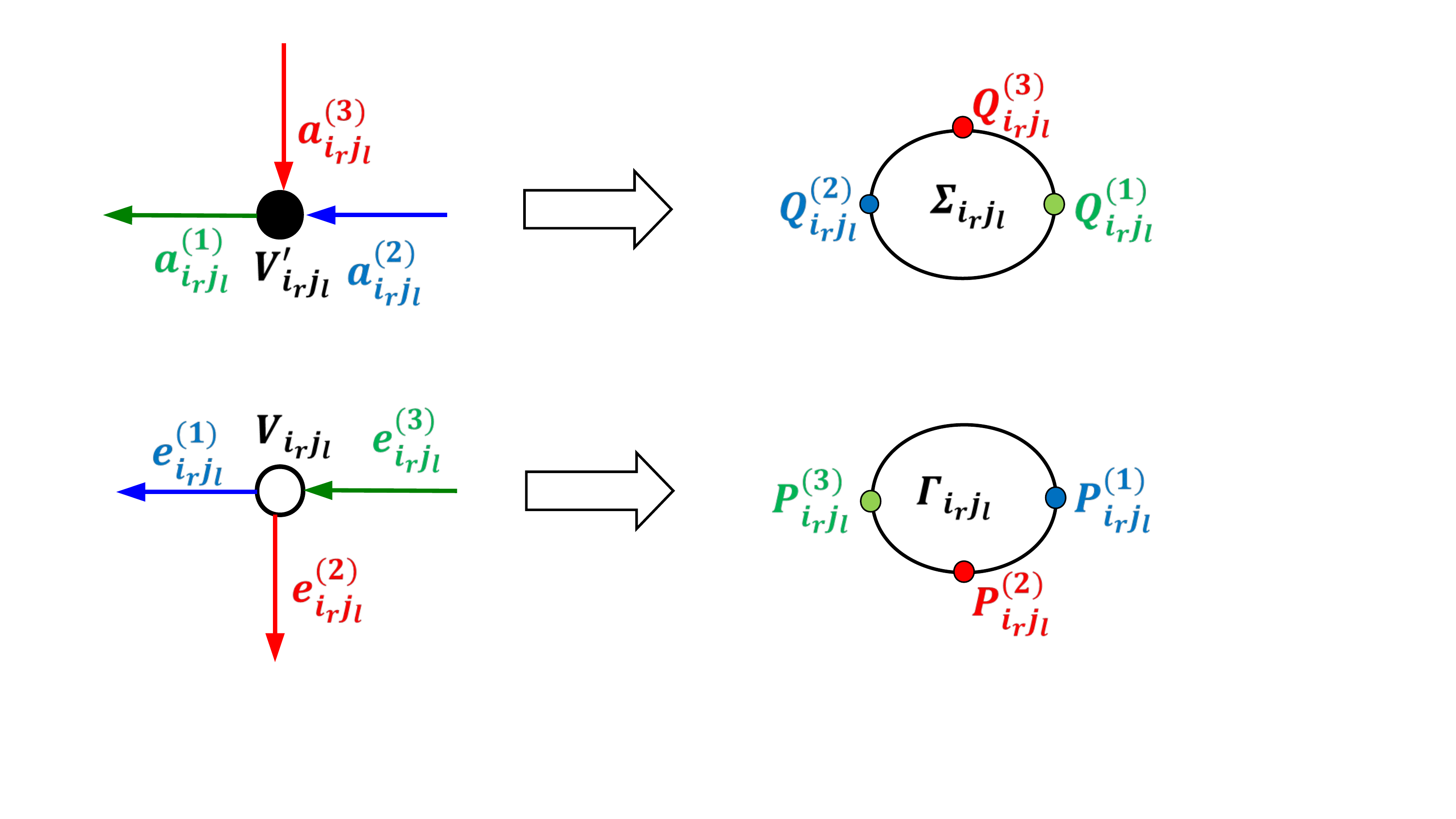}
	\hspace{.5 truecm}
	 \includegraphics[width=0.46\textwidth]{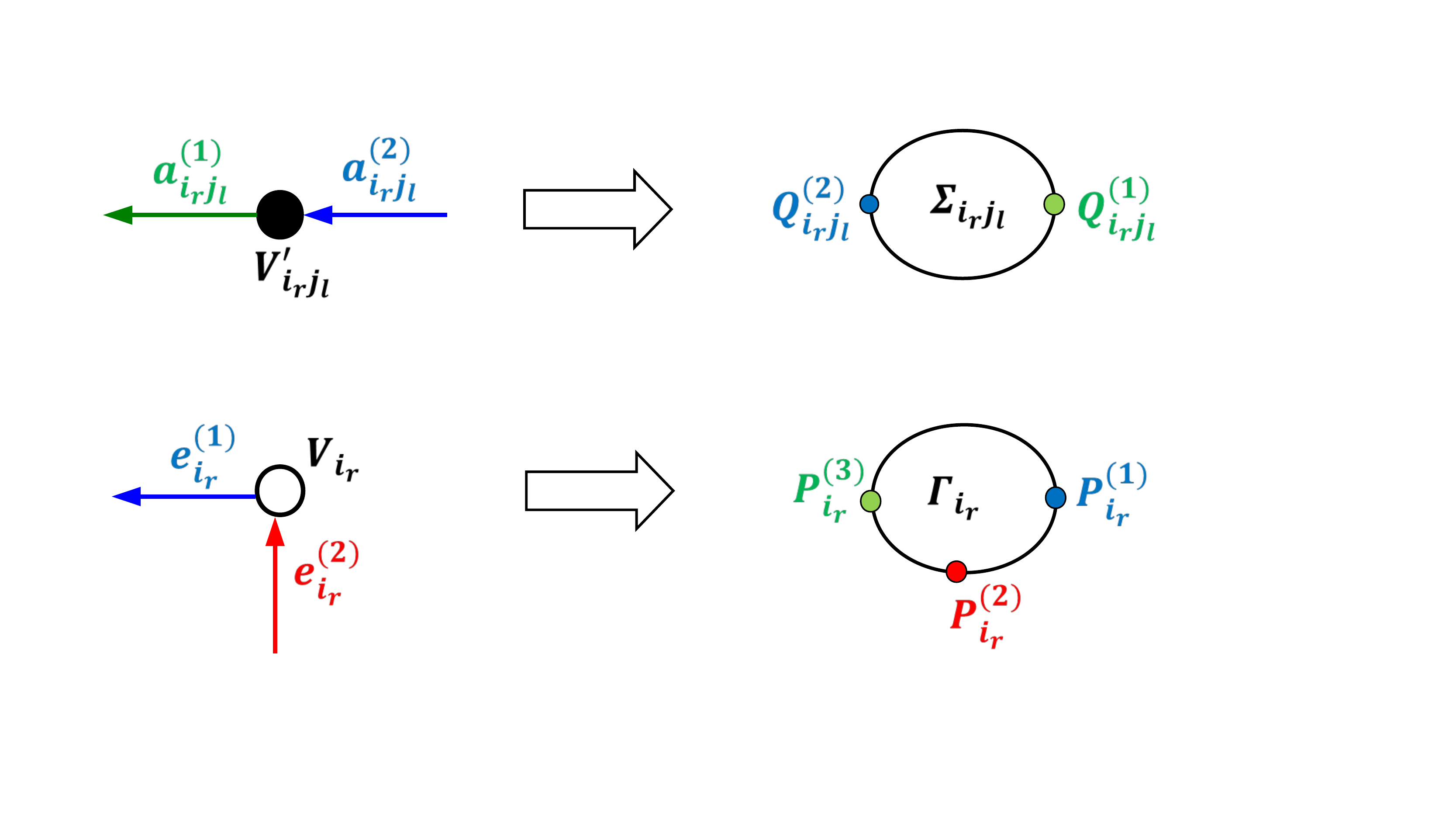}
  \vspace{-.7 truecm}
  \caption{\footnotesize{\sl The correspondence between marked points on copies $\Gamma_{i_rj_l}$ and $\Sigma_{i_rj_l}$ and edges of white and black vertices. The rule at the marked points corresponding to the edges of a bivalent white vertex at a boundary source $b_{i_r}$ is justified by the necessity of adding a third marked point (Darboux point) on $\Gamma_{i_r}$.}}
	\label{fig:markedpoints}
\end{figure}

\item On each copy of $\mathbb{CP}^1$ corresponding to an internal vertex $V$, we mark as many points as edges at $V$. 
We number the edges at $V$ anticlockwise in increasing order, so that, on the corresponding copy of $\mathbb{CP}^1$, the marked points are numbered clockwise because of the mirror rule (see Figure \ref{fig:markedpoints}). We use the following numbering rule:
\begin{enumerate}
\item The unique horizontal edge pointing inward at the white vertex $V_{i_r i_s}$ is numbered 3, for any $r\in [k]$, $s\in [N_r]$. Therefore $\Gamma_{i_rj_s}$, $r\in [k]$, $s\in [N_r-1]$, has 3 real ordered marked points which we denote $P_{i_rj_s}^{(1)},P_{i_rj_s}^{(2)},P_{i_rj_s}^{(3)}$ (see Figure \ref{fig:markedpoints}[bottom, left]) and $\Gamma_{i_r j_{N_r}}$ has two marked points $P_{i_rj_s}^{(2)}, P_{i_rj_s}^{(3)}$;
\item At each white vertex $V_{i_r}$ we have a horizontal edge marked $e^{(1)}_{i_r}$ and a vertical edge marked $e^{(2)}_{i_r}$
which correspond to the marked points $P^{(1)}_{i_r}, P^{(2)}_{i_r}\in \Gamma_{i_r}$.
 On each $\Gamma_{i_r}$, $r\in [k]$, we add an extra point, the Darboux point $P^{(3)}_{i_r}$, which we use to rule the position of the vacuum divisor;
\item The unique edge pointing outward at a black vertex $V^{\prime}_{i_r j_s}$, $r\in [k]$, $s\in [N_r-1]$, is always numbered $1$. We denote $Q_{i_rj_s}^{(m)}$, $m\in [3]$ (resp. $m\in [2]$) the marked points on $\Sigma_{irj_s}$ corresponding to  the trivalent (resp. bivalent) black vertex $V^{\prime}_{i_rj_s}$.
\end{enumerate}
\item We glue copies of $\mathbb{CP}^1$ in pairs at the marked points corresponding to the end points of the corresponding edge on ${\mathcal G}$ (see Figure \ref{fig:doublepoints}). More precisely:

\begin{figure}
  \centering
  {\includegraphics[width=0.43\textwidth]{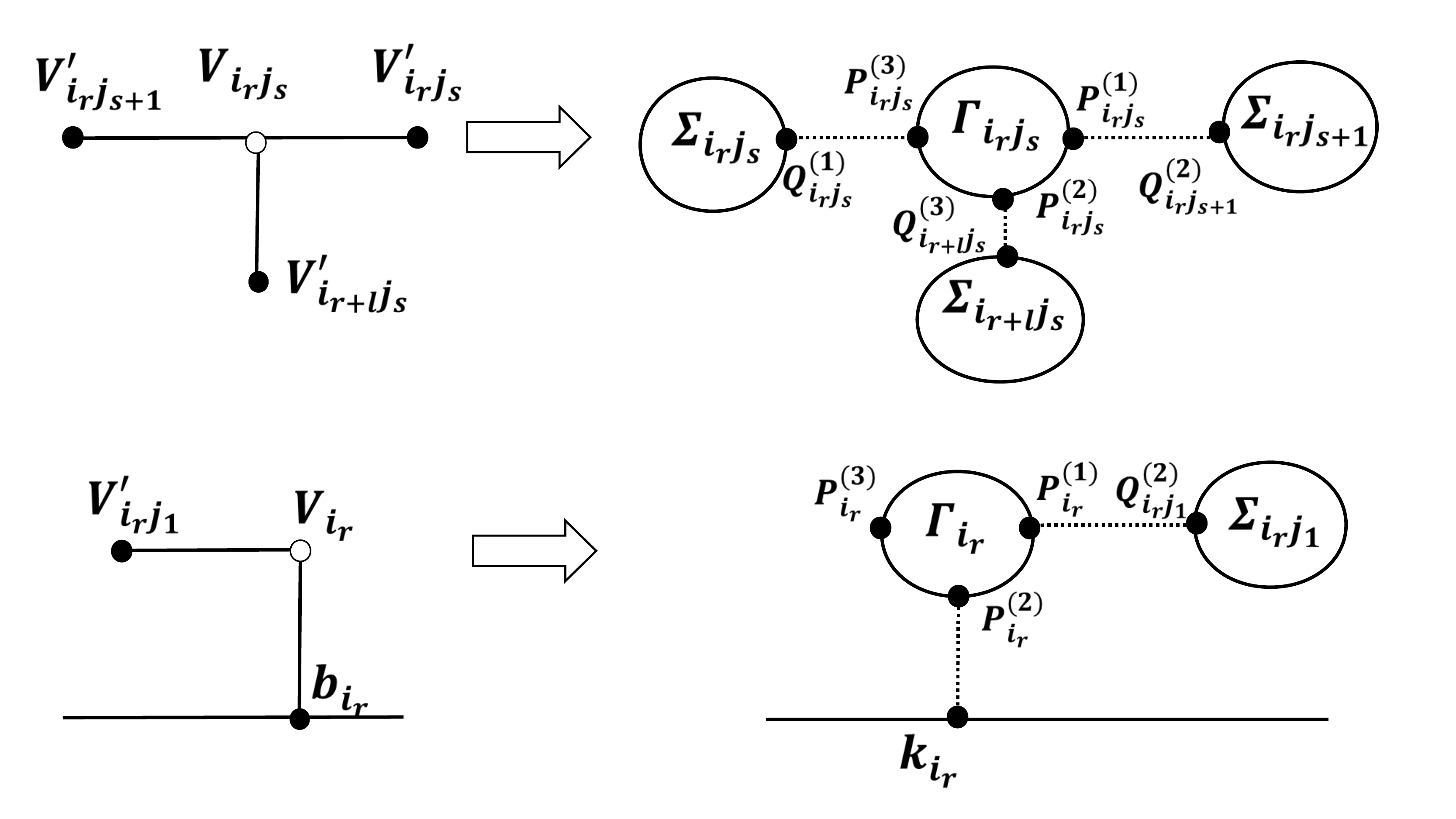}}
	\hspace{1.2 truecm}
	{\includegraphics[width=0.43\textwidth]{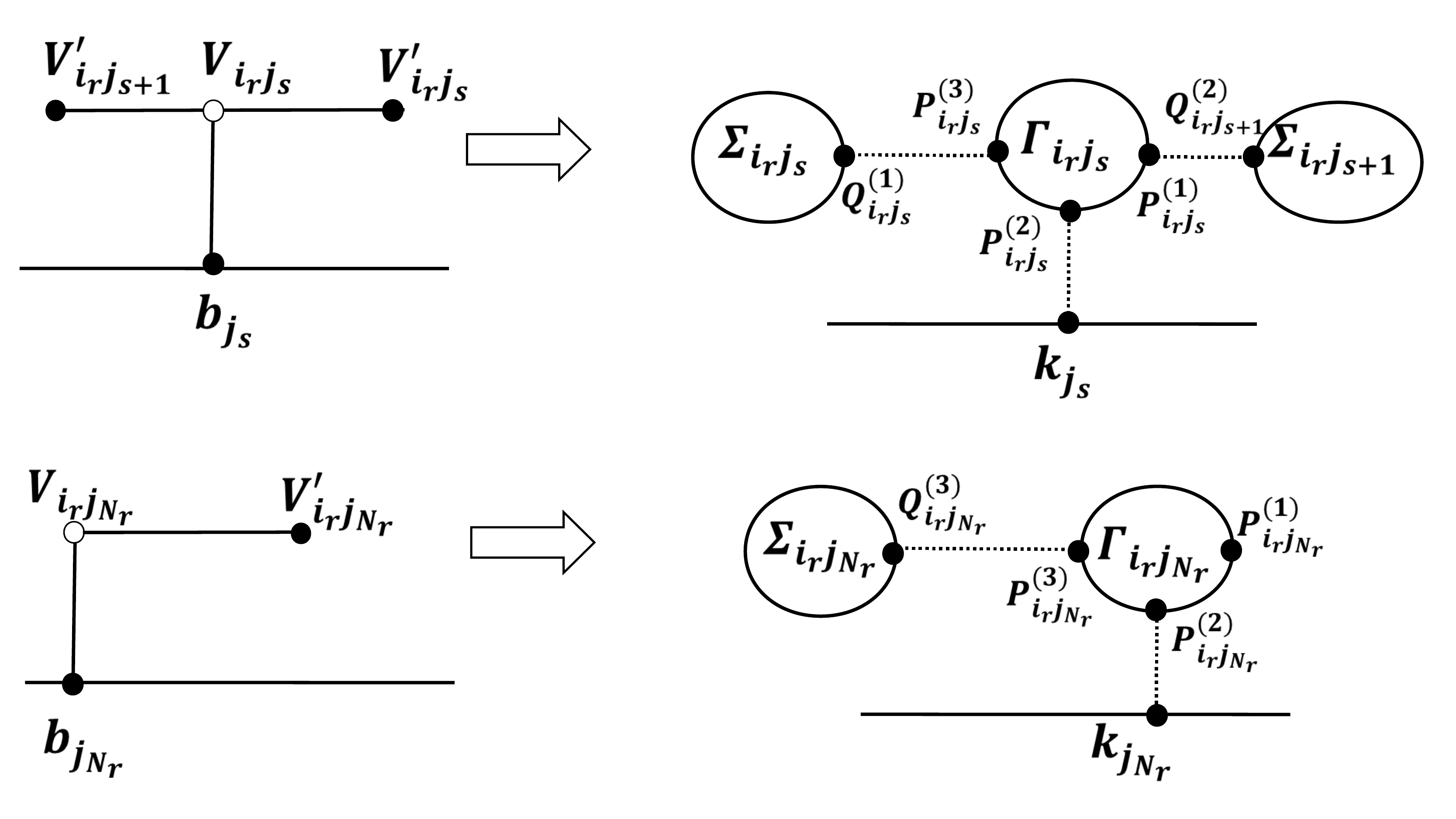}}
	\vspace{.8 truecm}
	{\includegraphics[width=0.43\textwidth]{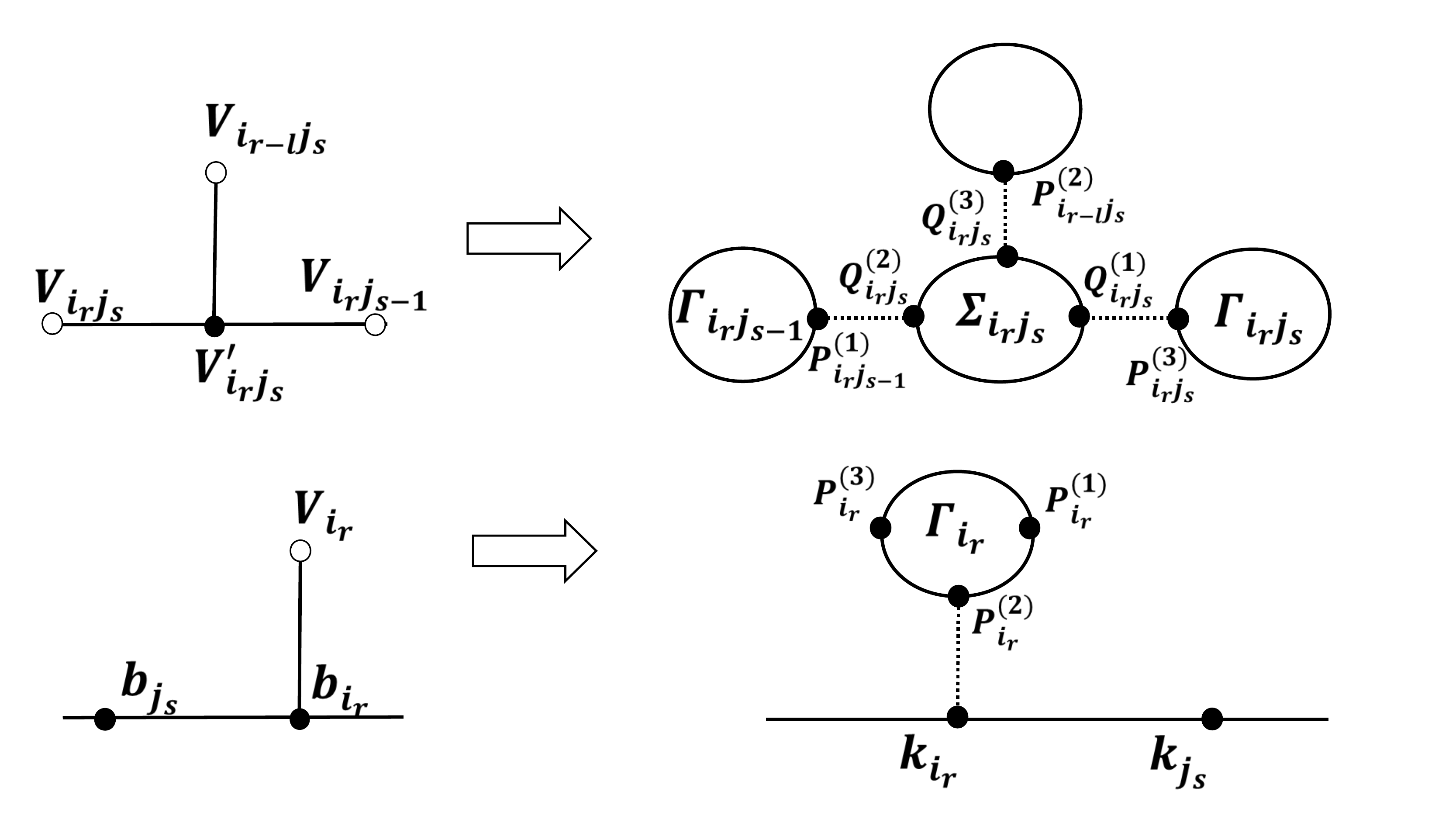}}
	\hspace{1.2 truecm}
	{\includegraphics[width=0.43\textwidth]{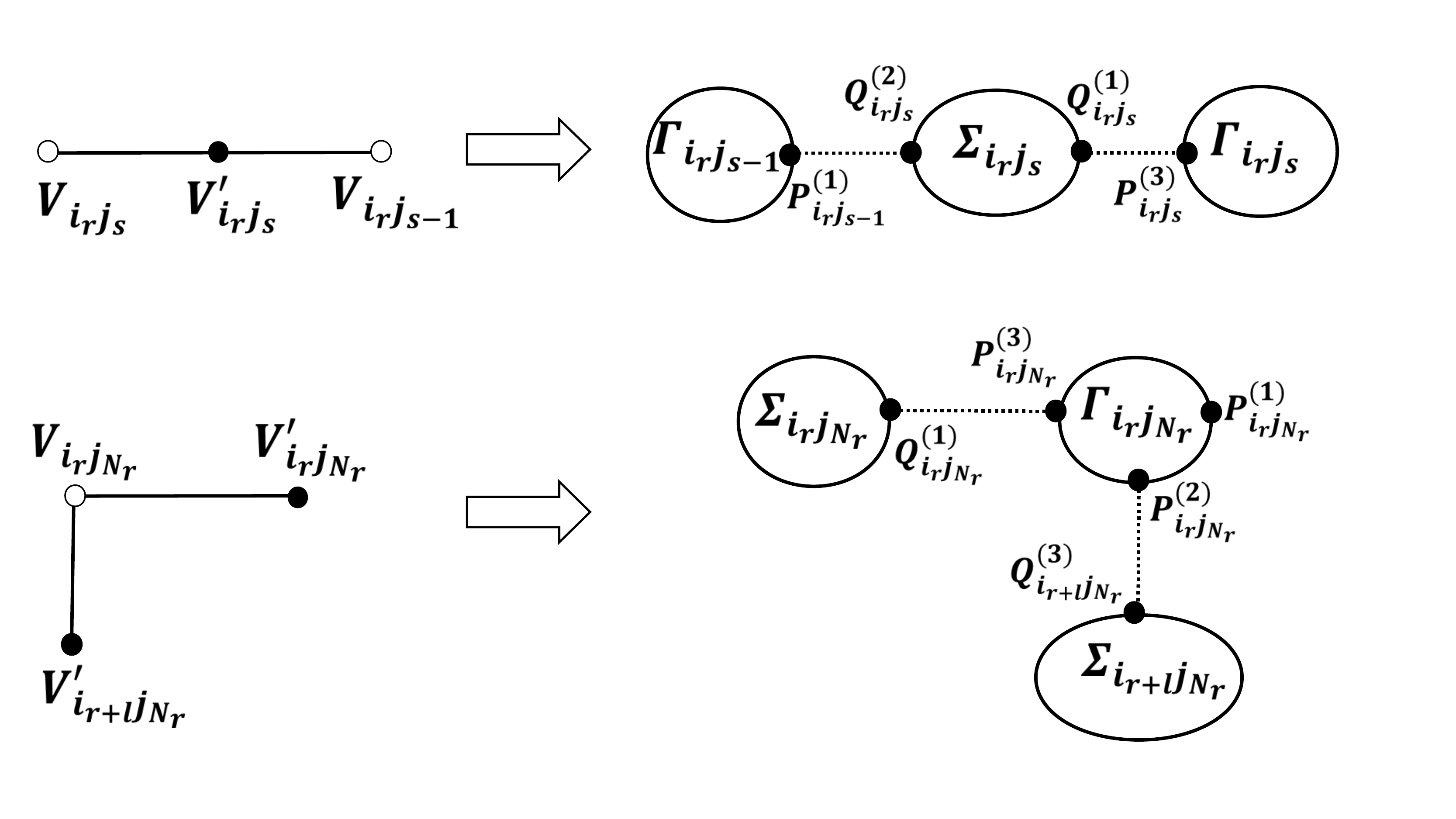}}
	\vspace{-.9 truecm}
  \caption{\footnotesize{\sl The gluing rules on $\Gamma$ are modeled on the  bipartite Le--graph ${\mathcal G}$ reflected w.r.t. the vertical axis. The dotted lines mark the points where we glue different copies of $\mathbb{CP}^1$.}}
	\label{fig:doublepoints}
\end{figure}

\item {\bf Horizontal gluing rules for fixed $r\in [k]$}:
\begin{enumerate} 
\item If, for some $r\in [k]$, $N_r=0$, then $P_{i_r}^{(1)}\in \Gamma_{i_r}$ is not glued to any other marked point;
\item If, for some $r\in [k]$, $N_r>0$, then $P_{i_r}^{(1)}\in \Gamma_{i_r}$ is glued to $Q_{i_rj_1}^{(2)}\in\Sigma_{i_rj_1}$;
\item For any $s\in [N_r-1]$, $P_{i_rj_s}^{(1)}\in \Gamma_{i_rj_s}$ is glued to $Q_{i_rj_{s+1}}^{(2)}\in\Sigma_{i_rj_{s+1}}$;
\item For any $s\in [N_r]$, $P_{i_rj_s}^{(3)}\in \Gamma_{i_rj_s}$ is glued to $Q_{i_rj_s}^{(1)}\in\Sigma_{i_rj_s}$;
\item $P_{i_rj}^{(3)}\in \Gamma_{i_r}$ is not glued to any other marked point.
\end{enumerate}
\item {\bf Vertical gluing rules}:
\begin{enumerate}
\item For any $r\in [k]$, $\kappa_{i_r}\in\Gamma_0$ is glued to $P_{i_r}^{(2)}\in \Gamma_{i_r}$;
\item For any  $j\in {\bar I}$ such that ${\bar r} = \max \{ r \in [k] \, : \, \chi^{i_r}_j=1\}>0$, $\kappa_{j}\in\Gamma_0$ is glued to $P_{i_{\bar r}j}^{(2)}\in \Gamma_{i_{\bar r}j}$;
\item If, for some $j\in {\bar I}$, $\chi^{i_r}_j=0$ for all $r\in [k]$, then $\kappa_{j}\in\Gamma_0$ is not glued to any other marked point;
\item For any fixed $r\in [2,k]$ and any fixed $s\in [N_r]$, let
${\bar r } =\max \{ l\in [1,r-1] \, : \chi^{i_l}_{j_s}=1\}$. Then $Q_{i_rj_s}^{(3)}\in\Sigma_{i_rj_3}$ is glued to $P_{i_{\bar r}j_s}^{(2)}\in \Gamma_{i_{\bar r}j_s}$.
\end{enumerate}
\item The faces of ${\mathcal G}$ correspond to the ovals of $\Gamma$. We label the ovals $\Omega_0$, $\Omega_{i_rj_s}$, $s\in [N_r]$, $r\in [k]$, as the corresponding faces of ${\mathcal N}$. 
\end{enumerate}
\end{construction}

\begin{remark} \textbf{Universality of the reducible rational curve $\Gamma$.}
Let us point out that, for any fixed positroid cell ${\mathcal S} =\S$, the construction of $\Gamma$ does \textbf{not} require the introduction of any parameter. Therefore it provides an \textbf{universal} curve $\Gamma$ for the whole positroid cell. In the next section we introduce the parametrization of $\S$ via KP divisors on $\Gamma$.
\end{remark}

\begin{figure}
  \centering
  {\includegraphics[width=0.46\textwidth]{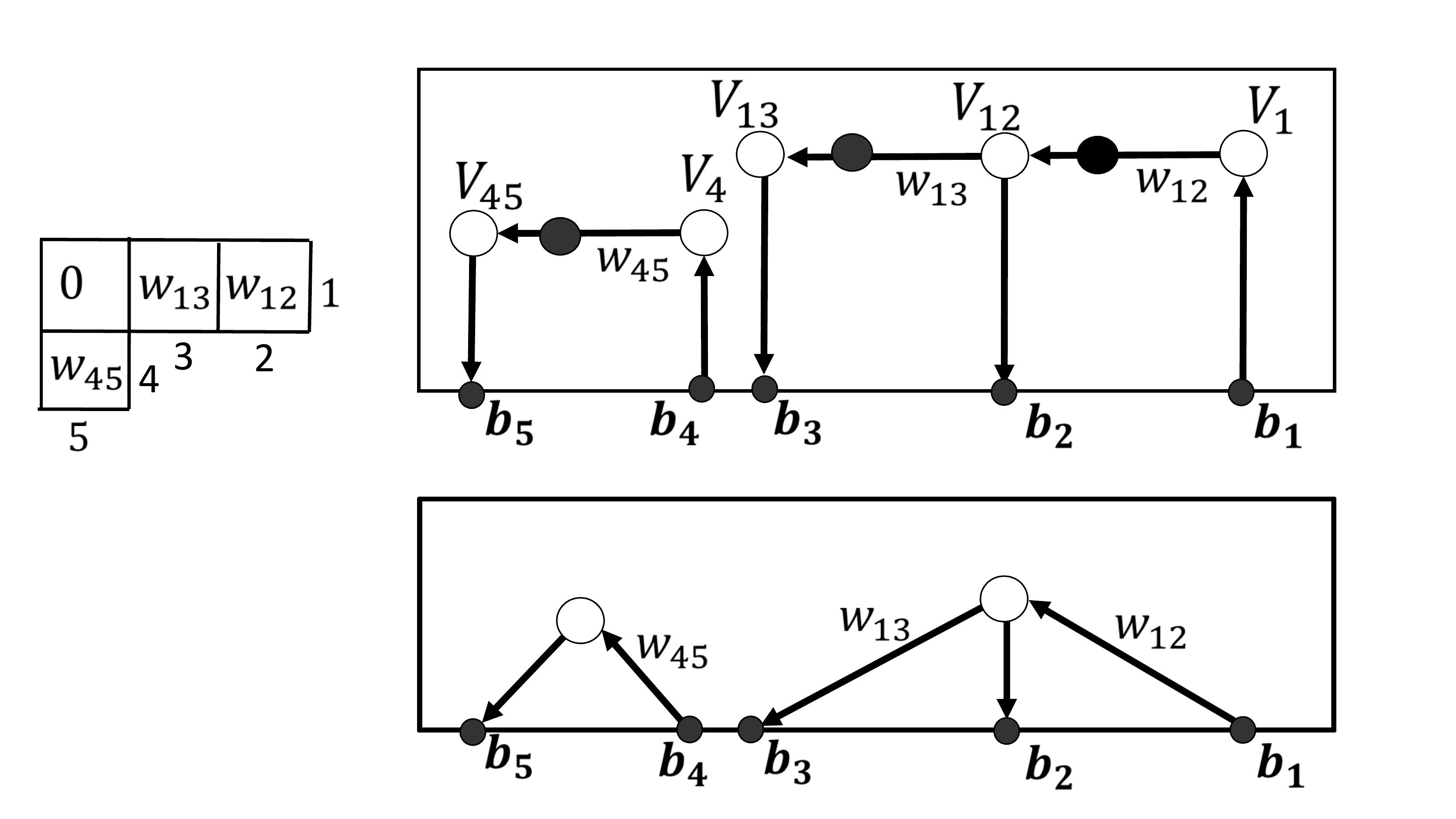}}
  \caption{\label{fig:biv_comp}\footnotesize{\sl The Le--networks corresponding to the Le--graph ${\mathcal G}$ [top] and its reduction ${\mathcal G}_{\mbox{\scriptsize red}}$ [bottom] for the same Le--tableau of a positroid cell of dimension $d=3$ in $Gr^{\mbox{\tiny TNN}} (2,5)$ [left]. The weight is one for any edge not marked in the Figure.}}
\end{figure}

\begin{remark}\label{rem:red_1}\textbf{The role of bivalent vertices, the reduced graph ${\mathcal G}_{\mbox{\scriptsize red}}$ and the reduced $\mathtt M$--curve $\Gamma({\mathcal G}_{\mbox{\scriptsize red}})$}
The number of copies of $\mathbb{CP}^1$ used to construct $\Gamma$ above is \textbf{excessive} in the sense that both the number of ovals and the \textbf{KP divisor} are invariant if we eliminate from $\mathcal G$ all copies of $\mathbb{CP}^1$ corresponding to bivalent vertices and change edge weights following \cite{Pos}. In this procedure we maintain a bivalent vertex in ${\mathcal G}_{\mbox{\scriptsize red}}$ for each component which disconnects from the graph upon removing the boundary of the disk and consists of a single boundary source connected to a single boundary sink. We show a simple example in Figure \ref{fig:biv_comp}. In the following, we denote ${\mathcal G}_{{\mbox{\scriptsize red}}}$ the reduced trivalent graph and $\Gamma({\mathcal G}_{\mbox{\scriptsize red}})$ the reducible rational curve associated to it.

For the construction of the \textbf{reducible rational} ${\mathtt M}$-curve we can use both $\mathcal G$ and ${\mathcal G}_{{\mbox{\scriptsize red}}}$ graphs. In Sections~\ref{sec:le} and \ref{sec:proof} we use the Le--graph ${\mathcal G}$ to evidence the recursive construction in the proof. 
However, it is also possible to directly construct the KP wave function and its divisor on $\Gamma({\mathcal G}_{\mbox{\scriptsize red}})$, since, by our construction, the KP wave function is constant with respect to the spectral parameter on each $\mathbb{CP}^1$ corresponding to a bivalent vertex.

For constructing a regular perturbed ${\mathtt M}$-curve of genus equal to $d$ it is convenient to start from $\Gamma({\mathcal G}_{\mbox{\scriptsize red}})$ since it corresponds to a nodal plane curve of degree lesser than that for $\Gamma({\mathcal G})$. In Section \ref{sec:example}, we use $\Gamma({\mathcal G}_{{\mbox{\scriptsize red}}})$ in the construction of the plane curve and of the KP divisor for soliton data in $Gr^{\mbox{\tiny TP}}(2,4)$.
\end{remark}

\begin{remark}\label{rem:comp2}\textbf{Comparison with the construction in  \cite{AG1}.}
In \cite{AG1}, to any given soliton data $({\mathcal K}, [A])$, $[A]\in Gr^{\mbox{\tiny TP}}(k,n)$, we associate a curve obtained gluing $k+1$ copies of $\mathbb{CP}^1$ at double points whose position is ruled by a parameter $\xi\gg 1$. We then control the asymptotic leading behavior in $\xi$ of the vacuum wave function via an algebraic construction using the positivity properties of a specific representative matrix of $[A]$. In this approach the number of $\mathbb{CP}^1$ components is 
much smaller, but we have to introduce extra parameters marking the positions of the glued points. In practice one can obtain such curve from the universal one by a proper desingularization of some double point (see also Section \ref{sec:example}, where we desingularize explicitly $\Gamma({\mathcal G}_{{\mbox{\scriptsize red}}})$ to $\Gamma(\xi)$ when $[A]\in Gr^{\mbox{\tiny TP}}(2,4)$).
\end{remark}

\begin{proposition}\textbf{The oval structure of $\Gamma$.}
Let ${\mathcal K} = \{ \kappa_1 < \cdots < \kappa_n\}$ and $\S\subset \Grkn$ be a positroid cell of dimension $d$. Let $\Gamma$ be as in Construction~\ref{def:gamma}. Then $\Gamma$ possesses $d+1$ ovals which we label $\Omega_0$, $\Omega_{i_r j_s}$, $s\in [N_r]$, $r\in [k]$, $N_r\ge 1$. Moreover the ovals are uniquely identified by the following properties:
\begin{enumerate}
\item  $\Omega_0$ is the unique oval whose boundary contains both $\kappa_1$ and $\kappa_n$; 
\item  For any $r\in [k]$, $s\in [2,N_r]$, $\Omega_{i_r j_s}$ is the unique oval  whose boundary contains both $P^{(2)}_{i_r j_s}\in \Gamma_{i_r j_s}$ and $P^{(2)}_{i_r j_{s-1}}\in \Gamma_{i_r j_{s-1}}$;
\item  For any $r\in [k]$, $\Omega_{i_r j_1}$ is the unique oval  whose boundary contains both $P^{(2)}_{i_r j_1}\in \Gamma_{i_r j_1}$ and $P^{(2)}_{i_r}\in \Gamma_{i_r}$.
\end{enumerate}
\end{proposition}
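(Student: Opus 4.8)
The plan is to reduce everything to the combinatorics of the planar graph ${\mathcal G}$, using that by Construction~\ref{def:gamma} the ovals of $\Gamma$ are exactly its faces: the real part of $\Gamma$ is the union of the real circles of the $\mathbb{CP}^1$ components glued at the real double points, and by planarity of ${\mathcal G}$ tracing the boundary of a face yields a closed curve made of arcs of the components attached to the vertices of that face and of the double points attached to its edges. Under this bijection the labelling of Construction~\ref{def:gamma}(9) assigns $\Omega_0$ to the face of ${\mathcal G}$ touching the boundary arc $(b_n,b_1)$ of the disk and $\Omega_{i_rj_s}$ to the face attached to the box $(i_r,j_s)$, so the proposition becomes a statement about faces of ${\mathcal G}$.

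First I would settle the count. By \cite{Pos} (see Appendix~\ref{app:TNN}) the number of faces of the Le--network representing a $d$-dimensional positroid cell is $d+1$, and passing from that network to the trivalent bipartite graph ${\mathcal G}$ and then inserting the auxiliary bivalent vertices of our construction is done by the local moves of \cite{Pos}, each of which changes the number of internal vertices and the number of edges by the same amount and hence leaves the number of faces unchanged. Since $d=\dim\S$ equals the number of filled boxes $\sum_{r=1}^{k}N_r$, the label set $\{\Omega_0\}\cup\{\Omega_{i_rj_s}:r\in[k],\,s\in[N_r],\,N_r\ge1\}$ has $1+\sum_r N_r=d+1$ elements, hence is in bijection with the ovals; this proves that $\Gamma$ has $d+1$ ovals.

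For the identifications I would first record the local picture at a double point $P$, where two components are glued along local parameters vanishing at $P$: the real part there is two transversal arcs, and exactly two ovals meet at $P$, each joining one of the two arcs of one component to one of the two arcs of the other, as determined by the local coordinates of Construction~\ref{def:gamma}. Combining this with the explicit cyclic orders of the marked points on the components ($P^{(1)}_{i_rj_s},P^{(2)}_{i_rj_s},P^{(3)}_{i_rj_s}$ on $\Gamma_{i_rj_s}$, $P^{(1)}_{i_r},P^{(2)}_{i_r},P^{(3)}_{i_r}$ on $\Gamma_{i_r}$, $P_0,\kappa_1,\dots,\kappa_n$ on $\Gamma_0$) and with the gluing rules~6--8 one can identify the two ovals through each of the double points $\kappa_1$, $\kappa_n$, $P^{(2)}_{i_r}$, $P^{(2)}_{i_rj_s}$. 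Then (1) follows because the arc of $\Gamma_0$ running from $\kappa_n$ through $P_0$ to $\kappa_1$ carries no double point, hence lies on a single oval, which is $\Omega_0$ and contains both $\kappa_1$ and $\kappa_n$; the second oval through $\kappa_1$ carries the $\Gamma_0$-arc $(\kappa_1,\kappa_2)$ and the second through $\kappa_n$ carries $(\kappa_{n-1},\kappa_n)$, and the structure of the Le--graph shows these two faces are distinct, so no other oval contains both. For (2) and (3), the horizontal chain of vertices along row $i_r$ glues, by rules~6(b)--6(d) and, for $s=1$, rule~7, an arc of $\Gamma_{i_rj_{s-1}}$ ending at $P^{(2)}_{i_rj_{s-1}}$ (resp.\ an arc of $\Gamma_{i_r}$ ending at $P^{(2)}_{i_r}$), an arc of $\Sigma_{i_rj_s}$ and an arc of $\Gamma_{i_rj_s}$ ending at $P^{(2)}_{i_rj_s}$ into a single arc of one oval, which is the face $\Omega_{i_rj_s}$; since each of $P^{(2)}_{i_rj_{s-1}}$ and $P^{(2)}_{i_rj_s}$ lies on only two ovals and the other oval at each is a different face (a neighbouring $\Omega$ or $\Omega_0$), $\Omega_{i_rj_s}$ is the unique oval containing both.

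The laborious step is this last identification: one has to carry out a systematic case analysis, keeping track of the vertical gluing rules~7(b) and 8(d), which chain the $P^{(2)}$-points down the columns of the Le--diagram, of the special role of the last box $j_{N_r}$ of a row (no $P^{(1)}$-point there), of rows with $N_r=0$, and of which $\kappa_j$ are glued, and in each case verify both that the prescribed pair of double points lies on $\Omega_{i_rj_s}$ and that it singles out that oval uniquely.
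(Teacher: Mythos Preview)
Your proposal is correct and follows precisely the natural route the paper has in mind: the paper states that ``The proof is straightforward and we omit it,'' and your argument---reducing to the face count of the Le--graph via \cite{Pos} and then reading off the boundary incidences from the gluing rules of Construction~\ref{def:gamma}---is exactly that straightforward verification spelled out in detail.
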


The proof is straightforward and we omit it. We remark that $\Gamma({\mathcal G}_{\mbox{\scriptsize red}})$ has the same number of
ovals as $\Gamma({\mathcal G})$.

Let $d$ be the dimension of the \textbf{irreducible} positroid cell $\S\subset \Grkn$. Let ${\mathcal G}_{\mbox{\scriptsize red}}$ be its reduced graph as in Remark \ref{rem:red_1} and suppose that it has $n_b$ bivalent vertices after the reduction. Then $\Gamma({\mathcal G}_{\mbox{\scriptsize red}})$ is a partial normalization \cite{ACG} of a connected reducible nodal plane curve with $d+1$ ovals obtained by gluing $2d-n+n_b+1$ copies of $\mathbb{CP}^1$. The curve $\Gamma({\mathcal G}_{{\mbox{\scriptsize red}}})$ is a rational degeneration of a genus $d$ smooth $\mathtt M$--curve. The total number of edges of ${\mathcal G}_{\mbox{\scriptsize red}}$ is $3d-n+n_b$, and each of them corresponds to an handle of the desingularized $\mathtt M$--curve. In the next Proposition we verify that the genus of the latter coincides with the dimension of $\S$.

\begin{proposition}\textbf{ $\Gamma({\mathcal G}_{\mbox{\scriptsize red}})$ is the rational degeneration of a smooth $\mathtt M$-curve of genus $d$.}
Let ${\mathcal K} = \{ \kappa_1 < \cdots < \kappa_n\}$ and $\S\subset \Grkn$ be an irreducible positroid cell of dimension $d$. Let $\Gamma$ be as in Construction~\ref{def:gamma} and $\Gamma({\mathcal G}_{\mbox{\scriptsize red}})$ be the its reduction obtained by eliminating the components corresponding to bivalent vertices eliminated in ${\mathcal G}_{\mbox{\scriptsize red}}$. Then $\Gamma({\mathcal G}_{\mbox{\scriptsize red}})$ is a rational degeneration of a regular $\mathtt M$--curve of genus $d$ equal to the dimension of the positroid cell, possessing $d+1$ ovals.
\end{proposition}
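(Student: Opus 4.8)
The plan is to treat the two assertions separately: first that the arithmetic genus of $\Gamma({\mathcal G}_{\mbox{\scriptsize red}})$ equals $d$, and then that $\Gamma({\mathcal G}_{\mbox{\scriptsize red}})$ admits a smoothing to a smooth ${\mathtt M}$-curve with $d+1$ ovals. For the genus, which is really what the present Proposition adds to the discussion preceding it, I would use that $\Gamma({\mathcal G}_{\mbox{\scriptsize red}})$ is a connected nodal curve all of whose irreducible components are copies of $\mathbb{CP}^1$ (namely $\Gamma_0$ for the boundary of the disk, and one component for each internal vertex of ${\mathcal G}_{\mbox{\scriptsize red}}$), and whose double points are in bijection with the edges of ${\mathcal G}_{\mbox{\scriptsize red}}$, each edge gluing the two components attached to its endpoints. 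For such a curve the arithmetic genus is the first Betti number of the dual graph, that is, the number of double points minus the number of components plus one. Inserting the two counts recalled just before the statement, $2d-n+n_b+1$ components and $3d-n+n_b$ double points, gives $p_a(\Gamma({\mathcal G}_{\mbox{\scriptsize red}}))=(3d-n+n_b)-(2d-n+n_b+1)+1=d$. I would also recall briefly why those counts hold: the handshaking identity for ${\mathcal G}_{\mbox{\scriptsize red}}$ (which has $n$ univalent boundary vertices, $n_b$ bivalent internal vertices and all other internal vertices trivalent), Euler's formula for ${\mathcal G}_{\mbox{\scriptsize red}}$ drawn in the disk, and the identification of its faces with the $d+1$ ovals of $\Gamma$ (the previous Proposition, equivalently Postnikov's dimension count \cite{Pos}) pin down both numbers.

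For the second assertion, since $\Gamma({\mathcal G}_{\mbox{\scriptsize red}})$ has only ordinary double points its deformations are unobstructed \cite{ACG}, so it is smoothable and the generic deformation is a smooth connected curve of genus $p_a=d$; one may also obtain this by perturbing the reducible nodal plane-curve model recalled before the statement, in the spirit of Viro patchworking. The real structure on $\Gamma({\mathcal G}_{\mbox{\scriptsize red}})$ is complex conjugation $\zeta\mapsto\bar\zeta$ on each component, which is compatible with all the gluings because every double point has real coordinates; its fixed locus $\Re\,\Gamma({\mathcal G}_{\mbox{\scriptsize red}})$ is the union of the real circles of the components and has exactly $d+1$ ovals by the previous Proposition. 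The essential point is that, using the planarity of ${\mathcal G}_{\mbox{\scriptsize red}}$, one can fix the local coordinates on the components and a real smoothing at each double point compatibly, so that $\Re\,\Gamma({\mathcal G}_{\mbox{\scriptsize red}})$ separates $\Gamma({\mathcal G}_{\mbox{\scriptsize red}})$, and hence its smoothing $\Gamma_\epsilon$, into two $\sigma$-conjugate halves $\Gamma^{+},\Gamma^{-}$. Granting this, an Euler-characteristic count confirms the claim: $\Gamma^{+}_\epsilon$ is obtained from the $2d-n+n_b+1$ half-sphere disks of the components by attaching $3d-n+n_b$ bands, one per smoothed double point, so $\chi(\Gamma^{+}_\epsilon)=(2d-n+n_b+1)-(3d-n+n_b)=1-d$; being connected with $d+1$ boundary circles it is a sphere with $d+1$ holes, whence $\Gamma_\epsilon$ has genus $d$ and $d+1$ real ovals, i.e. is an ${\mathtt M}$-curve (the Harnack bound). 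Equivalently one may argue one double point at a time: a real node with two real branches always possesses a real smoothing that does not lower the number of ovals, and planarity makes these choices mutually consistent; this is the same mechanism used in \cite{AG1} for the top cell $Gr^{\mbox{\tiny TP}}(k,n)$.

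The genus computation is routine. The step that requires real care, and the main obstacle, is the ${\mathtt M}$-property: concretely, the compatibility of the oval-preserving smoothings at all double points, equivalently the persistence under smoothing of the two-sided decomposition $\Gamma^{+}\cup\Gamma^{-}$ cut out by $\Re\,\Gamma({\mathcal G}_{\mbox{\scriptsize red}})$. This is precisely where the planarity of the Le-graph is essential, and is the part I would write out in full, or else reduce directly to the corresponding argument of \cite{AG1}.
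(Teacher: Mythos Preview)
Your proposal is correct and, for the genus computation, literally identical to the paper's: both compute $g=(3d-n+n_b)-(2d-n+n_b+1)+1=d$ from the counts of double points and components, which is all the paper's proof actually writes out. On the ${\mathtt M}$-curve smoothing, you are more careful than the paper's proof proper: the paper simply asserts that one opens the gaps ``respecting the real structure and keeping the number of real ovals fixed'' and then observes that $d+1$ ovals on a genus $d$ curve means ${\mathtt M}$; the concrete justification is deferred to the explicit plane-curve perturbation in Section~\ref{sec:planar}. Your Euler-characteristic argument for the half $\Gamma^+_\epsilon$ and your emphasis on planarity as the mechanism guaranteeing compatible oval-preserving smoothings are a welcome addition, but the overall route is the same.
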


\begin{proof}
The only untrivial statement is the one concerning the genus of the perturbed curve. Let $n_b$ be the number of bivalent vertices survived the reduction of the graph ${\mathcal G}$ according to Remark~\ref{rem:red_1}. By definition, $\Gamma({\mathcal G}_{\mbox{\scriptsize red}})$ is represented by $2d-n+n_b+1$ copies of $\mathbb{CP}^1$ connected at $3d-n+n_b$ pairs of double points. The regular curve is obtained opening a gap at each pair of these double points. We perform this desingularization respecting the real structure and keeping the number of real ovals fixed.

By construction, the desingularized curve has genus $g= \#\mbox{handles}-\#\mathbb{CP}^1+1= 3d-n+n_b - (2d-n+n_b+1) +1 =d$, and it possesses $d+1$ real ovals, therefore it is an  ${\mathtt M}$-curve. 
\end{proof}

\subsection{The planar representation of the desingularized curve.}
\label{sec:planar}

Generic Riemann surfaces cannot be holomorphically mapped into $\mathbb{CP}^2$ without self-intersections \cite{GrH}, therefore partial normalization is necessary if the number of $\mathbb{CP}^1$ copies is sufficiently high. In our construction we have $2d-n+n_b+1$ copies of $\mathbb{CP}^1$, which may be lines, quadrics or rational cubics in $\mathbb{CP}^2$. Denote the numbers of lines, quadrics and cubics by $n_1$, $n_2$, $n_3$ respectively. Clearly $n_1+n_2+n_3=2d-n+n_b+1$, the total degree of the rational reducible curve $\Gamma({\mathcal G}_{\mbox{\scriptsize red}})$ is $n_1+2n_2+3n_3$. The total number of singularities before normalization is 
$$
n_s=\frac{n_1(n_1-1)}{2} + 2 n_1 n_2 + 3 n_1 n_3 + 2 n_2(n_2-1) + 6 n_2 n_3 + \frac{9 n_3(n_3-1)}{2}+n_3
$$
The last term in the above sum takes into account that all cubics are rational and each has one cusp. When we desingularize $\Gamma({\mathcal G}_{\mbox{\scriptsize red}})$ to the 
genus $d$ curve, $n_s-3d+n-n_b$ intersections have to remain intersections for its plane curve model, and they are resolved after normalization.

Let us provide evidence that we have enough parameters. Let us assume that $\Gamma_0$ is defined by $\mu=0$, and we have 3 systems of linear functions $l_j=a_j \lambda + b_j \mu$, $j\in[n_1]$,  $l'_j=a'_j \lambda + b'_j \mu$, $j\in[n_2]$, $l''_j=a''_j \lambda + b''_j \mu$, $j\in[n_3]$ such that
\begin{enumerate}
\item All slopes are pairwise distinct and all $a_j$, $a'_j$, $a''_j$ are non-zero; 
\item The system of lines $\Gamma_0$, ${\mathcal L}_j:\{l_j-\alpha_j=0 \}$ intersect only in pairs.  
\end{enumerate}
The quadrics and cubics are represented by ${\mathcal Q}_j=0$, $j\in[n_2]$ and ${\mathcal C}_i=0$, $i\in[n_3]$ respectively, where ${\mathcal Q}_j= y-\alpha'_{j,2}(l'_j)^2-\alpha'_{j,1}l'_j-\alpha'_{j,0}$ and ${\mathcal C}_i=  y-\alpha''_{i,3}(l''_i)^3-\alpha''_{i,2}(l''_i)^2-\alpha''_{i,1}l''_i-\alpha''_{i,0}$. 

The coefficients $\alpha$, $\alpha'$, $\alpha''$ have to be chosen so that all lines, quadrics and cubics intersect $\Gamma_0$ at proper points. We also assume that  all $\alpha'_{j,2}$, $\alpha''_{i,3}$ are sufficiently large, so that all quadrics and cubics are small perturbations of pairs or triples of parallel lines respectively, and all intersections of components are real. 

The unperturbed curve has then the following form
$$
{\Pi}_0(\lambda,\mu)=0, \ \ \mbox{where} \ \  {\Pi}_0(\lambda,\mu) =\mu \prod\limits_{i_1\in[n_1]} {\mathcal L}_{i_1} \prod\limits_{i_2\in[n_2]} {\mathcal Q}_{i_2} \prod\limits_{i_3\in[n_3]} {\mathcal C}_{i_3}.
$$
We use the following collection of perturbative terms: $\{ {\Pi}^{[0,1]}_{i_1};{\Pi}^{[0,2],k}_{i_2},k\in[2];{\Pi}^{[0,3],k}_{i_3},k\in[3]; {\Pi}^{[1,1]}_{i_1,j_1};{\Pi}^{[1,2],k}_{i_1,i_2},k\in[2];{\Pi}^{[1,3],k}_{i_1,i_3},k\in[3];{\Pi}^{[2,2],k_1,k_2}_{i_2,j_2},k_1,k_2\in[2];{\Pi}^{[2,3],k_1,k_2}_{i_2,i_3},k_1\in[2],k_2\in[3]; {\Pi}^{[3,3],k_1,k_2}_{i_3,j_3},k_1,k_2\in[3]\}$, where $i_l,j_l\in[n_l]$, $l\in[3]$, $i_1<j_1$,  $i_2<j_2$,  $i_3<j_3$, and
$$
{\Pi}^{[0,1]}_{i_1}=\frac{{\Pi}_0}{\mu {\mathcal L}_{i_1}},  \ \ {\Pi}^{[0,2],k}_{i_2}=\frac{{\Pi}_0 (l'_{i_2})^{k-1}}{\mu {\mathcal Q}_{i_2}},  \ \ {\Pi}^{[0,3],k}_{i_3}=\frac{{\Pi}_0 (l''_{i_3})^{k-1}}{\mu {\mathcal C}_{i_3}},
$$
$$
{\Pi}^{[1,1]}_{i_1,j_1}=\frac{{\Pi}_0}{{\mathcal L}_{i_1} {\mathcal L}_{j_1}},  \ \ {\Pi}^{[1,2],k}_{i_1,i_2}=\frac{{\Pi}_0 (l'_{i_2})^{k-1}}{{\mathcal L}_{i_1} {\mathcal Q}_{i_2}},  \ \ {\Pi}^{[1,3],k}_{i_1,i_3}=\frac{{\Pi}_0 (l''_{i_3})^{k-1}}{{\mathcal L}_{i_1} {\mathcal C}_{i_3}},
$$
$$
{\Pi}^{[2,2],k_1,k_2}_{i_2,j_2}=\frac{{\Pi}_0  (l'_{i_2})^{k_1-1} (l'_{j_2})^{k_2-1} }{{\mathcal Q}_{i_2} {\mathcal Q}_{j_2}},  \ \ {\Pi}^{[2,3],k_1,k_2}_{i_2,i_3}=\frac{{\Pi}_0 (l'_{i_2})^{k_1-1} (l''_{i_3})^{k_2-1}}{{\mathcal Q}_{i_2} {\mathcal C}_{i_3}},  \ \ {\Pi}^{[3,3],k_1,k_2}_{i_3,j_3}=\frac{{\Pi}_0 (l''_{i_3})^{k_1-1}(l''_{j_3})^{k_2-1}  }{{\mathcal C}_{i_3} {\mathcal C}_{j_3}}.
$$
We then consider the following perturbation of our rational curve ${\Pi}_0(\lambda,\mu)=0$:
\begin{equation}
\label{eq:perturbed_curve}
{\Pi}(\lambda,\mu)=0, \ \ \mbox{where} \ \  {\Pi}(\lambda,\mu)={\Pi}_0+\sum {\epsilon}^{r}_{s} {\Pi}^{r}_{s},
\end{equation}
where the sum runs over all perturbation terms described above. The perturbed curve in $\mathbb{CP}^2$ has the same structure at the infinite line as the original rational curve. The number of perturbation parameters ${\epsilon}^{r}_{s}$  in (\ref{eq:perturbed_curve}) coincides with the number of intersections in ${\Pi}_0(\lambda,\mu)=0$. For sufficiently small   ${\epsilon}^{r}_{s}$ we have the following map
\begin{equation}
\label{eq:analytic_map}
\{{\epsilon}^{r}_{s}\}\rightarrow  {\Pi}({\mathcal R}^{r}_{s}),
\end{equation}
where ${\mathcal R}^{r}_{s}$ are the solutions of the system
\begin{equation}
\partial_{\lambda}{\Pi}(\lambda,\mu)=0, \ \ \partial_{\mu}{\Pi}(\lambda,\mu)=0.
\end{equation}
For the unperturbed curve, the set $\{ {\mathcal R}^{r}_{s} \}$ coincides with the intersection points, therefore for small ${\epsilon}^{r}_{s}$ we have a natural enumeration. The map (\ref{eq:analytic_map}) is analytic for $|\epsilon^r_s|\ll 1$ and its Jacobian is non-zero, therefore it is locally invertible, and at each double point we can open a gap in the desired direction, or keep the point double.  

Let us remark that these arguments are analogous to arguments used in \cite{Kr4}.

\subsection{The KP divisor on $\Gamma$}
\label{sec:KPdiv}

Throughout this section we fix a set of phases ${\mathcal K} =\{ \kappa_1 < \cdots <\kappa_n\}$ and a positroid cell $\S \subset \Grkn$of dimension $g$. $\Gamma$ is the curve of Construction~\ref{def:gamma} associated to such data with marked point $P_0\in \Gamma_0$.

\textbf{In this section we state the main results of our paper: for any soliton data $({\mathcal K}, [A])$, $[A]\in \S$, we construct a unique real and regular degree $g$ KP divisor $\DKP$ on $\Gamma$ as follows:}
\begin{enumerate}
\item We first construct a unique degree $g$ effective real and regular \textbf{vacuum} divisor $\DVG$ and a unique real and regular \textbf{vacuum} 
wave function $\hat \phi(P,\vec t)$ on $\Gamma$ satisfying appropriate boundary conditions (Theorem~\ref{theo:exist});
\item We then apply the Darboux dressing to such vacuum wave function and define the normalized \textbf{dressed divisor} $\DKP$, which, by construction is effective and of degree $g$;
\item $\DKP$ has minimal degree $g$ and is the \textbf{KP divisor} for the given soliton data on the spectral curve $\Gamma$, since its restriction to $\Gamma_0$ coincides with the Sato divisor defined in Section~\ref{sec:Sato}, and, by construction, it satisfies the reality and regularity conditions of Definition \ref{def:rrss} (see Theorem~\ref{theo:KPdiv0}).
\end{enumerate}

\begin{remark} \textbf{Parametrization of positroid cells.}
In our construction we associate a system of edge vectors to each Le--network in $\S$. The properties of such edge vectors guarantee that the non--normalized KP wave function has untrivial dependence on $\vec t$ at all double points of $\Gamma$. Therefore, for any $[A_0]\in S$ it is possible to find an initial time $\vec t_0$ such that the degree $g$ KP divisor $\DKP = \DKP (\vec t_0)$ is non--special for any point $[A]\in S$ sufficiently near to $[A_0]$ in the natural metric associated to the reduced row echelon matrix representation of such points. It is in this sense that we obtain a parametrization of $\S$ via degree $g$ non--special KP divisors.
\end{remark}

We start introducing local affine coordinates on each copy of $\mathbb{CP}^1$ and we use the same symbol $\zeta$  for any such affine coordinate to simplify notations (see also Figure~\ref{fig:lcoord}).

\begin{figure}
  \centering
  {\includegraphics[width=0.60\textwidth]{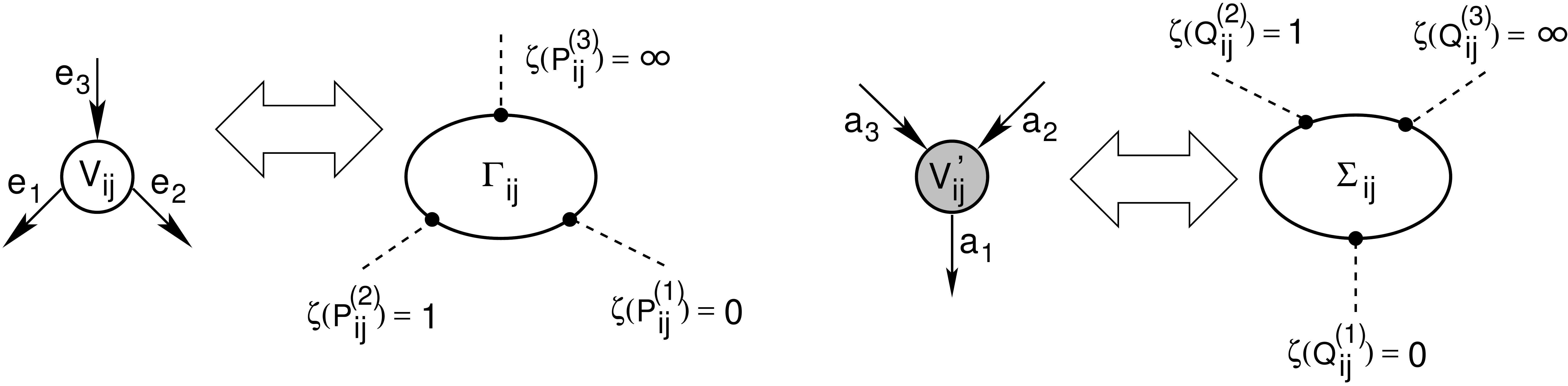}}
  \caption{\footnotesize{\sl Local coordinates on the components $\Gamma_{ij}$ and $\Sigma_{ij}$: the marked point $P^{(s)}_{ij}\in \Gamma_{ij}$ corresponds to the edge $e_s$ at the white vertex $V_{ij}$ and the marked point $Q_s\in \Gamma_{ij}$ corresponds to the edge $a_s$ at the black vertex $V_{ij}^{\prime}$.}\label{fig:lcoord} }
\end{figure}

\begin{definition}\label{def:loccoor}{\bf Local affine coordinate on $\Gamma$}
On each copy of $\mathbb{ CP}^{1}$ the local coordinate $\zeta$ is uniquely identified by the following properties: 
\begin{enumerate}
\item On $\Gamma_0$, $\zeta^{-1} (P_0)=0$ and $\zeta(\kappa_1)< \cdots < \zeta(\kappa_n)$. To abridge notations, we identify the $\zeta$--coordinate with the marked points $\kappa_j=\zeta(\kappa_j)$, $j\in [n]$;
\item On the component $\Gamma_{ij}$ corresponding to the internal white vertex $V_{ij}$:
\[
\zeta (P^{(1)}_{ij}) =0, \;\;\zeta (P^{(2)}_{ij}) =1, \;\;\zeta (P^{(3)}_{ij}) =\infty,
\]
while on the component $\Sigma_{ij}$ corresponding to the internal black vertex $V^{\prime}_{ij}$:
\[
\zeta (Q^{(1)}_{ij}) =0, \;\;\zeta (Q^{(2)}_{ij}) =1, \;\;\zeta (Q^{(3)}_{ij}) =\infty.
\]
\end{enumerate}
\end{definition}

In the following Definitions we state the desired properties for both the vacuum divisor and the vacuum wave function on $\Gamma$.

\begin{definition}\label{def:real_vac_div}\textbf{Real and regular vacuuum divisor compatible with $\S$.}  Let $\Omega_0$ be the infinite oval containing the marked point $P_0\in\Gamma_0$ and let $\Omega_j$, $j\in[g]$ be the finite ovals of $\Gamma$. Let $P^{(i_r)}_3$, $r\in [k]$ be the Darboux points in $\Gamma$.

We call a degree $g$ divisor $\DVG\in\Gamma\backslash\Gamma_0$ a real and 
regular vacuum divisor compatible with $\S$ if:
\begin{enumerate}
\item $\DVG$ is contained in the union of all the ovals of $\Gamma$; 
\item There is exactly one divisor point on each component of $\Gamma$ corresponding to a trivalent white vertex or a bivalent white vertex containing a Darboux point; 
\item\label{item:defodd} In any $\Omega_j$, $j\in [g]$, the total number of vacuum divisor poles plus the number of Darboux points is 1 mod 2;
\item\label{item:defeven} In $\Omega_0$, the total number of vacuum divisor poles plus the number of Darboux points plus $k$ is 0 mod 2.
\end{enumerate}
\end{definition}

\begin{definition}\label{def:vacuumKP}{\bf A real and regular vacuum wave function on $\Gamma$ corresponding to $\DVG$:}  
Let $\DVG$ be a degree $g$ real regular divisor on $\Gamma$ as in Definition~\ref{def:real_vac_div}.
A function ${\hat \phi} (P, \vec t)$, where $P\in\Gamma\backslash \{ P_0\}$ and $\vec t$ are the KP times, is called a real and regular vacuum wave function on $\Gamma$ corresponding to $\DVG$ if:
\begin{enumerate}
\item\label{it:second} There exists $\vec t_0$ such that $\hat \phi (P, \vec t_0)=1$ at all points $P\in \Gamma\backslash \{P_0\} $;
\item The restriction of $\hat \phi$ to $\Gamma_0\backslash \{P_0\}$ coincides with the following normalization of Sato's wave function,  ${\hat \phi} (\zeta(P), \vec t)=e^{\theta (\zeta, \vec t-\vec t_0)}$, where $\theta (\zeta, \vec t) = \sum_{l\ge 1} t_l \zeta^l$;
\item\label{it:first} For all $P\in\Gamma\backslash\Gamma_0$ the function ${\hat \phi} (P, \vec t)$ satisfies all equations (\ref{eq:vacuum_eq}) of the vacuum hierarchy;
\item If both $\vec t$ and $\zeta(P)$ are real, then ${\hat \phi} (\zeta(P), \vec t)$ is real. Here $\zeta(P)$ is the local affine coordinate on the corresponding component of $\Gamma$ as in Definition~\ref{def:loccoor};
\item $\hat \phi$ takes equal values at pairs of glued points $P,Q\in \Gamma$, for all $\vec t$:  $\hat \phi(P, \vec t) = \hat \phi(Q, \vec t)$;
\item\label{it:last} For each fixed $\vec t$ the function $\hat \phi(P, \vec t)$ is meromorphic of degree $\le g$ in $P$ on $\Gamma\backslash \{ P_0\}$: for any fixed $\vec t$ we have $({\hat \phi} (P, \vec t))+\DVG\ge 0$ on $\Gamma\backslash P_0$, where $(f)$ denotes the divisor 
of $f$. Equivalently, for any fixed $\vec t$ on $\Gamma\backslash \{ P_0\}$ the function $\hat \phi(\zeta, \vec t)$ is regular outside the points of 
$\DVG$ and at each of these points either it has a first order pole or it is regular;
\item For each $P\in\Gamma\backslash \{ P_0\}$ outside  $\DVG$  the function $\hat \phi(P, \vec t)$ is regular in $\vec t$ for all times.
\end{enumerate}
\end{definition}

\begin{definition}\label{def:vacuumKP2}{\bf A real and regular vacuum wave function on $\Gamma$ for the soliton data $(\mathcal K,[A])$:} Let ${\mathcal K}$, $\Gamma$ and  ${\hat \phi} (P, \vec t)$ be as in Construction~\ref{def:gamma} and in Definition~\ref{def:vacuumKP}. Let $[A]\in\S$.
The function  ${\hat \phi} (P, \vec t)$ is a real and regular vacuum wave function for the soliton data  $(\mathcal K,[A])$ if, at each Darboux point $P^{(3)}_{i_r}$, $r\in [k]$ and for all $\vec t$,
\begin{equation}\label{item:dar}
\hat \phi(P^{(3)}_{i_r}, \vec t)\equiv f^{(r)} (\vec t),
\end{equation}
 where $f^{(r)} (\vec t)$, $r\in[k]$, generate the Darboux transformation for the soliton data.
\end{definition}

\begin{theorem}\label{theo:exist}\textbf{Existence and uniqueness of a real and regular divisor and vacuum wave function 
on $\Gamma$ satisfying appropriate boundary conditions.}
Let $({\mathcal K}, [A])$ be given soliton data with $[A]\in\S$ of dimension $g$, and let $\Gamma$ be as in Construction~\ref{def:gamma} with Darboux points $\{ P^{(3)}_{i_r}, r\in [k] \}$. Then, we can fix an initial time $\vec t_0$ such that to the following data $({\mathcal K}, [A]; \Gamma, P_0, P^{(3)}_{i_1},\dots, P^{(3)}_{i_k}; \vec t_0 )$ we associate
\begin{enumerate}
\item A \textbf{unique} real and regular degree $g$ vacuum divisor  
$\DVG$ as in Definition \ref{def:real_vac_div},
\item A \textbf{unique} real and regular vacuum wave function $\hat \phi(P, \vec t)$ corresponding to this divisor satisfying Definitions~\ref{def:vacuumKP} and \ref{def:vacuumKP2}.
\end{enumerate}
Moreover, at the Darboux points, $\hat \phi(P, \vec t)$ satisfies 
\begin{equation}\label{eq:darboux_cond}
\hat \phi(P^{(3)}_{i_r}, \vec t)\equiv \frac{\sum_{l=1}^n A^{r}_l \exp (\theta_l (\vec t))}{\sum_{l=1}^n A^{r}_l \exp (\theta_l (\vec t_0))}, \ \ r\in[k], \quad\quad \forall \vec t,
\end{equation}
where $f^{(r)}(\vec t)$ are the generators of the Darboux transformation associated to the RRE representative matrix $A$.
\end{theorem}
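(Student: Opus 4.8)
The plan is to prove existence by an explicit construction on the canonically oriented trivalent bipartite Le--network $\mathcal N$ representing $[A]$, and to deduce uniqueness from the rigidity of the interpolation on each rational component once the Sato data on $\Gamma_0$ are fixed. After inserting, by Postnikov's move (M2), a univalent internal vertex next to each boundary source (so that every source component $\Gamma_{i_r}$ acquires its Darboux marked point $P^{(3)}_{i_r}$), I would attach to each edge $e$ of $\mathcal N$ an \textbf{edge vector} $E(e)=(E_j(e))_j$ indexed by the boundary sinks, the $j$-th entry being, up to the sign fixed by the acyclic orientation, the sum of the weights of all oriented paths from $e$ to the sink $b_j$. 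Acyclicity makes these finite sums, positivity of the weights makes them real, and the fact that every edge lies on a path to a sink makes each $E(e)$ nonzero. One then checks, recursively against the orientation, that the $E(e)$ satisfy at every trivalent white vertex a linear relation with coefficients $\pm 1$ (dictated by the orientation) and are pairwise proportional at every black vertex; this ``conservation/unanimity'' system, which Section~\ref{sec:proof} establishes in detail, specialises to the algebraic construction of \cite{AG1} when $\S=Gr^{\mbox{\tiny TP}}(k,n)$.

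Writing $\mathbf E(\vec t)=(e^{\theta_1(\vec t)},\dots,e^{\theta_n(\vec t)})$, I would then define $\hat\phi$ on $\Gamma$ by $\hat\phi(\zeta,\vec t)=e^{\theta(\zeta,\vec t-\vec t_0)}$ on $\Gamma_0$ (so $\hat\phi(\kappa_j,\vec t)=e^{\theta_j(\vec t)-\theta_j(\vec t_0)}$) and, at the marked point carried by an internal edge $e$, $\hat\phi=\langle E(e),\mathbf E(\vec t)\rangle\big/\langle E(e),\mathbf E(\vec t_0)\rangle$. Since each $\langle E(e),\mathbf E(\cdot)\rangle$ is a nontrivial exponential sum, this is well defined for every real $\vec t_0$ outside a closed nowhere dense set, and it takes equal values at any glued pair of marked points because such a pair is a single edge. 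The key local point is the extension to each $\mathbb{CP}^1$ component with its prescribed degree. On a black component $\Sigma_{ij}$ the edge vectors are pairwise proportional, so after the normalization by the value at $\vec t_0$ the proportionality constants cancel, the three prescribed values coincide, and $\hat\phi$ is forced to be constant in $\zeta$. On a white component $\Gamma_{ij}$, and likewise on $\Gamma_{i_r}$, the values $u_1,u_2,u_3$ at the marked points $\zeta=0,1,\infty$ of Definition~\ref{def:loccoor} satisfy a relation $\beta_1u_1+\beta_2u_2+\beta_3u_3=0$ whose coefficients are the $\pm 1$'s of the vertex relation rescaled by the (nonzero) values at $\vec t_0$, hence are $\vec t$-independent, real, and sum to zero by the normalization $u_i(\vec t_0)=1$; therefore there is a \emph{unique} degree $\le 1$ meromorphic function interpolating them, and a direct computation locates its pole at the fixed real coordinate $\gamma=-\beta_1/\beta_3$ — the vacuum network divisor number. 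On each $\Gamma_{i_r}$ the interpolation is completed by the non-glued Darboux point $P^{(3)}_{i_r}$, whose edge vector is, up to an overall sign, the $r$-th row $(A^r_l)_l$ of the RRE representative matrix, since the Le--network represents $[A]$; hence $\hat\phi(P^{(3)}_{i_r},\vec t)$ is precisely the ratio in (\ref{eq:darboux_cond}), which shows at once that $\hat\phi$ is a vacuum wave function for the soliton data in the sense of Definition~\ref{def:vacuumKP2}.

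With this function in hand, the analytic requirements of Definition~\ref{def:vacuumKP} are read off from the construction: $\hat\phi(\cdot,\vec t_0)\equiv 1$, the restriction to $\Gamma_0$ is the normalized Sato vacuum wave function, $\hat\phi$ is real on the real ovals, it matches at glued points, and on each component it is, for fixed $\vec t$, a meromorphic function of the prescribed degree with its only possible pole at a point of $\DVG$, while for fixed $P$ it lies in the solution space of the vacuum hierarchy. For the divisor $\DVG$ — the sum of the $g$ poles above — items (1)--(2) of Definition~\ref{def:real_vac_div} hold by construction, and $\deg\DVG=g$ follows from the box count of the Le--diagram recalled in Appendix~\ref{app:TNN}, which equates the number of trivalent white vertices plus the $k$ source components with $\dim\S=g$. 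Uniqueness is then forced: any $\hat\phi$ satisfying Definitions~\ref{def:vacuumKP}--\ref{def:vacuumKP2} must equal $e^{\theta(\zeta,\vec t-\vec t_0)}$ on $\Gamma_0$, and propagating its values through the double points while using, on each component, that a meromorphic function of the prescribed degree is determined by its values at the marked points together with the vertex relations forces $\hat\phi$, and hence $\DVG$, to coincide with the objects just built.

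I expect the main obstacle to be the parity conditions, items \ref{item:defodd}--\ref{item:defeven} of Definition~\ref{def:real_vac_div}: that every finite oval contains an odd number of divisor and Darboux points together, while $\Omega_0$ contains an even number after adding $k$. This amounts to determining, component by component, in which arc of the real circle the pole $\gamma=-\beta_1/\beta_3$ lies — equivalently the sign of $\beta_1/\beta_3$, which by the $\pm 1$ structure of the vertex relations and the positivity of the weights is governed by the signs of the exponential sums $\langle E(e),\mathbf E(\vec t_0)\rangle$ on the edges meeting that vertex — and then assembling these local placements coherently around each face of the planar Le--graph. Combining the reality of $\hat\phi$ on the ovals, a sign-change count around each oval, and the $k$ Sato divisor points carried by $\hat\phi$ on $\Gamma_0$ yields the stated parities; it is this global bookkeeping, rather than any single estimate, that constitutes the delicate part of the proof.
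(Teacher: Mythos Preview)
Your proposal is correct and follows essentially the same route as the paper: the modified network $\mathcal N'$ via move (M2), edge vectors as signed path sums, the vacuum edge wave function as a scalar product with the exponential vector, normalization at $\vec t_0$, constant extension on black components and degree--one extension on white ones, and identification of the Darboux values with the rows of the RREF matrix. Your anticipation of the delicate step is also accurate: the paper's proof of the parity conditions (Theorem~\ref{theo:existvac}) is precisely the sign--tracking of $\Phi^{(m)}_{i_rj_l}(\vec t_0)$ around each oval that you sketch, carried out case by case according to whether the oval meets $\Gamma_0$ and how many Darboux points it contains.
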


We prove Theorem \ref{theo:exist} in Section~\ref{sec:proof}. More precisely, we construct a unique vacuum wave function on $\Gamma$ using the algebraic recursion settled in Section \ref{app:mainalgtheo}. We first modify the Le-network moving the boundary sources to convenient inner vertices, added in correspondence of the Darboux points in $\Gamma$. Then we assign a vector constructed in Section~\ref{sec:rowvectors} to each vertical edge of this modified network and use the linear relations at the inner vertices to extend this system of vectors to all its edges. 
We use this system of vectors to define a unique vacuum edge wave function (v.e.w.) satisfying the necessary boundary conditions. By construction, the linear relations at trivalent white vertices define a degree $g$ divisor with the required reality and regularity conditions (see Lemma \ref{lemma:vacvertexwf}).
Finally, we construct the degree $g$ real and regular vacuum wave function on $\Gamma$ imposing that it takes the value of the normalized v.e.w. at the marked points (double points and Darboux points) which correspond to the edges (see Theorem \ref{theo:existvac}).

\begin{definition}\label{def:dressKP}\textbf{The dressing of the vacuum wave function on $\Gamma$.}
Let $\Gamma$ and $\hat\phi$ be as in Theorem~\ref{theo:exist} for given soliton data $({\mathcal K}, [A])$. 
Then the corresponding \textbf{Darboux transformed wave function} is defined by:
\begin{equation}\label{eq:dressKP}
\psi(P, \vec t) = {\mathfrak D}{\hat \phi} (P,\vec t), \quad\quad P\in \Gamma\backslash \{ P_0\}, \; \forall \vec t,
\end{equation}
where ${\mathfrak D}$ is the Darboux dressing differential operator for the soliton data $({\mathcal K}, [A])$ defined
in (\ref{eq:D}). Let the initial condition $\vec t_0$ be such that ${\mathfrak D}{\hat \phi} (P,\vec t_0) 	\not = 0$ at all double points $P\in  \Gamma$. We define the \textbf{normalized dressed wave function} ${\hat \psi} (P,\vec t)$ as
\begin{equation}\label{eq:normKP}
{\hat \psi}(P, \vec t) = \frac{\psi (P,\vec t)}{\psi (P,\vec t_0)}.
\end{equation}
We define the \textbf{normalized dressed divisor} as
\begin{equation}
\label{eq:d_dressed2}
\DKP = \DDG + (\psi (P,\vec t_0)).
\end{equation}
where the non-effective divisor $\DDG$ is defined by
\begin{equation}
\label{eq:d_dressed}
\DDG = \DVG + k P_0 - \sum\limits_{r=1}^k P_{i_r}.
\end{equation}
\end{definition}

\begin{remark}
For reducible soliton data $[A]$ in $\Grkn$ with $s$ isolated boundary sources we have two Darboux dressings: the reducible $k$-th
order dressing operator and the reduced $(k-s)$-order dressing operator (see Remarks \ref{rem:irred} and \ref{rem:reddiv}). The normalized dressed wave function is the same for both dressings, while the $\DKP$ divisor associated to the reducible dressing operator contains $s$ extra points in the intersection of the finite ovals with $\Gamma_0$, so that we have more than one divisor point in some of the finite ovals. Therefore, in such case, in the Definition above we use the $(k-s)$-th order reduced dressing operator. The extra $s$ divisor points may be interpreted as being originated from real and regular divisor data on regular $\mathtt M$--curves of genus $g+s$ under the assumption that $s$ ovals degenerate to points in the solitonic limit.
\end{remark}

\begin{lemma} 
\label{lem:d_dressed} 
\begin{enumerate}
\item For any $\vec t$ we have the following inequality on $\Gamma\backslash P_0$:
\begin{equation}
({\psi} (P, \vec t))+\DDG\ge 0.
\end{equation}
\item The number of poles minus the number of zeroes for $\DDG$ (counted with multiplicities, 
if necessary) at each finite oval is odd and at the infinite oval it is even. 
\end{enumerate}
\end{lemma}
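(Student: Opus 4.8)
The plan is to deduce both parts from the properties of the vacuum divisor $\DVG$ and vacuum wave function $\hat\phi$ supplied by Theorem~\ref{theo:exist}, using only two elementary features of the Darboux operator $\mathfrak D=\partial_x^k-\mathfrak w_1(\vec t)\partial_x^{k-1}-\cdots-\mathfrak w_k(\vec t)$ of \eqref{eq:D}: its coefficients depend on $\vec t$ only and not on the point of $\Gamma$; and $\mathfrak D f^{(r)}\equiv 0$ for each generator $f^{(r)}$ of the Darboux transformation, which is precisely the defining property of the $\mathfrak w_j$. Since $\psi=\mathfrak D\hat\phi$ by \eqref{eq:dressKP}, these two facts transfer the pole bound of $\hat\phi$ to $\psi$ and, on the components attached to the boundary sources, even sharpen it.

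For part (1) I would argue component by component over $\Gamma\backslash\{P_0\}$. On every component of $\Gamma\backslash\Gamma_0$ the function $\hat\phi(\cdot,\vec t)$ is, for fixed $\vec t$, either constant in the local coordinate or of degree one with a single pole at the point of $\DVG$ on that component, and that pole does not move with $\vec t$; hence each derivative $\partial_x^j\hat\phi$ has a pole of order at most one there and no pole elsewhere, so $\psi=\mathfrak D\hat\phi$ has pole divisor bounded by $\DVG$ on $\Gamma\backslash\Gamma_0$. On the component $\Gamma_{i_r}$ one gets more: by \eqref{eq:darboux_cond} the value of $\hat\phi$ at the Darboux point $P_{i_r}=P^{(3)}_{i_r}$ is $f^{(r)}(\vec t)/f^{(r)}(\vec t_0)$, so $\psi(P_{i_r},\vec t)=\mathfrak D f^{(r)}(\vec t)/f^{(r)}(\vec t_0)\equiv 0$ and $\psi(\cdot,\vec t)$ vanishes at $P_{i_r}$ for every $\vec t$; combined with the previous bound this yields, on $\Gamma_{i_r}$, the inequality $(\psi(\cdot,\vec t))+\DVG-P_{i_r}\ge 0$, i.e. the restriction of $\DDG$. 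On $\Gamma_0\backslash\{P_0\}$ one has $\hat\phi=e^{\theta(\zeta,\vec t-\vec t_0)}$, hence $\psi=(\zeta^k-\mathfrak w_1\zeta^{k-1}-\cdots-\mathfrak w_k)\,e^{\theta(\zeta,\vec t-\vec t_0)}$ is holomorphic and nonvanishing off its $k$ Sato zeros (real, lying in $[\kappa_1,\kappa_n]$ by Proposition~\ref{prop:malanyuk}) and carries an order-$k$ pole at $P_0$ on top of the essential singularity, which is exactly what the term $kP_0$ in \eqref{eq:d_dressed} records, while $\DDG$ restricted to $\Gamma_0\backslash\{P_0\}$ vanishes. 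Assembling these local inequalities over all the $\mathbb{CP}^1$ copies gives $(\psi(P,\vec t))+\DDG\ge 0$ on $\Gamma\backslash P_0$.

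For part (2) I would read $\DDG=\DVG+kP_0-\sum_{r=1}^k P_{i_r}$ as a formal divisor: its ``poles'' are the points of $\DVG$ together with $P_0$ counted $k$ times, and its ``zeros'' are the $k$ Darboux points $P_{i_r}$. Hence, on an oval $\Omega$, the number of poles minus the number of zeros of $\DDG$ in $\Omega$ equals $\#(\DVG\cap\Omega)-\#\{r:P_{i_r}\in\Omega\}$, to which $k$ is added when $P_0\in\Omega$; modulo $2$ this is $\#(\DVG\cap\Omega)+\#\{r:P_{i_r}\in\Omega\}$, again with $k$ added when $P_0\in\Omega$. Since $P_0$ lies in the infinite oval $\Omega_0$ only, the third condition of Definition~\ref{def:real_vac_div} makes this quantity $\equiv 1\pmod 2$ on each finite oval and the fourth condition makes it $\equiv 0\pmod 2$ on $\Omega_0$, which is the assertion. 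A hypothetical coincidence of a point of $\DVG$ with some $P_{i_r}$ would change both counts by the same even amount and leave the parity unchanged; in fact no coincidence occurs, since $\hat\phi$ takes the finite value $f^{(r)}(\vec t)/f^{(r)}(\vec t_0)$ at a Darboux point whereas the points of $\DVG$ are poles of $\hat\phi$.

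The only genuinely new ingredient is the identity $\mathfrak D f^{(r)}\equiv 0$ together with the $\vec t$-independence of the vacuum divisor points; it is what upgrades the pole bound for $\hat\phi$ to the sharper one for $\psi$ and, in particular, accounts for the correction $-\sum_r P_{i_r}$ in $\DDG$. Part (2) is then pure parity bookkeeping inherited from Definition~\ref{def:real_vac_div}. I expect the only delicate point to be a non-generic choice of $\vec t$ at which $\psi(\cdot,\vec t)$ drops degree on some component of $\Gamma$; but a drop of degree can only raise $\mathrm{ord}_P\psi$ at finite $P$, so the inequality in part (1) survives, and that is where the bulk of the (otherwise routine) verification will lie.
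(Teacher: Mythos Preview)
Your proof is correct and follows essentially the same approach as the paper's own proof, which is extremely terse: the paper merely states that part (1) ``follows directly from the definition of $\psi(P,\vec t)$'' and that part (2) ``simply follows'' from Definition~\ref{def:vacuumKP2} (equation~\eqref{item:dar}), properties (\ref{item:defodd})--(\ref{item:defeven}) of Definition~\ref{def:real_vac_div}, and formula~\eqref{eq:d_dressed}. Your argument is precisely the unpacking of those citations---the $\vec t$-independence of the vacuum pole locations on each $\mathbb{CP}^1$ component so that $\mathfrak D$ preserves the pole bound, the vanishing $\mathfrak D f^{(r)}\equiv 0$ at Darboux points that accounts for the $-\sum_r P_{i_r}$ correction, and the mod-$2$ bookkeeping from Definition~\ref{def:real_vac_div}---so there is nothing materially different in your route.
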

The first property follows directly from the definition of $\psi(P,\vec t)$. The second statement simply follows from properties of the vacuum wave 
function constructed in Theorem~\ref{theo:exist}, namely equation (\ref{item:dar}) in Definition \ref{def:vacuumKP2}, properties (\ref{item:defodd})-(\ref{item:defeven}) in Definition \ref{def:real_vac_div} and formula~(\ref{eq:d_dressed}). 

\begin{lemma} 
\label{lem:d_dressed2} 
\begin{enumerate}
\item For any $\vec t$ we have the following inequality on $\Gamma\backslash P_0$:
\begin{equation}
({\hat \psi} (P, \vec t))+\DKP \ge 0.
\end{equation}
\item The divisor $\DKP$ is effective and has degree $g$. 
\item All poles of  $\DKP$ lie at the finite ovals, and each finite oval contains exactly one pole 
of $\DKP$.
\end{enumerate}
\end{lemma}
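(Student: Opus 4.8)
The plan is to obtain (1) and (2) essentially for free from Definition~\ref{def:dressKP} and Lemma~\ref{lem:d_dressed}, and to spend the real effort on the parity/counting argument in (3). For (1) I would just unwind the definitions: by (\ref{eq:normKP}) and (\ref{eq:d_dressed2}) one has $\hat\psi(P,\vec t)=\psi(P,\vec t)/\psi(P,\vec t_0)$ and $\DKP=\DDG+(\psi(P,\vec t_0))$, so on $\Gamma\backslash P_0$
\[
(\hat\psi(P,\vec t))+\DKP=(\psi(P,\vec t))-(\psi(P,\vec t_0))+\DDG+(\psi(P,\vec t_0))=(\psi(P,\vec t))+\DDG\ge 0
\]
by Lemma~\ref{lem:d_dressed}(1). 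Specialising this to $\vec t=\vec t_0$, where $\hat\psi(\cdot,\vec t_0)\equiv 1$ has trivial divisor, gives $\DKP\ge 0$, which is the effectiveness in (2). For the degree I would use additivity under $\DKP=\DDG+(\psi(P,\vec t_0))$: $\deg\DDG=g$ because $\DVG$ has degree $g$ by Theorem~\ref{theo:exist} and by (\ref{eq:d_dressed}) the terms $kP_0$ and $-\sum_{r=1}^k P_{i_r}$ cancel in degree, while $\psi(\cdot,\vec t_0)$, for the $\vec t_0$ fixed in Definition~\ref{def:dressKP}, is a rational function on each component of $\Gamma$ that is regular and nonzero at every node, so its zero/pole divisor on $\Gamma$ — which contains the summand $-kP_0$ coming from the Sato normalisation $\psi|_{\Gamma_0}=\zeta^k-{\mathfrak w}_1(\vec t_0)\zeta^{k-1}-\cdots$, cancelling the $kP_0$ in $\DDG$ — has total degree $0$. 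Hence $\deg\DKP=g$.

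For (3) the plan is to transfer the parity statement of Lemma~\ref{lem:d_dressed}(2) from $\DDG$ to $\DKP$. For a real oval $\Omega$ the quantity ``number of poles minus number of zeros in $\Omega$'' is just the sum of the divisor coefficients over the points lying on $\Omega$, hence additive in the divisor, so passing from $\DDG$ to $\DKP$ changes it by $\deg\big((\psi(P,\vec t_0))|_{\Omega}\big)$. The key point I would prove is that this number is even for every oval: $\psi(\cdot,\vec t_0)$ is real on the real part of $\Gamma$, it is nonzero and finite at all nodes, its non-real zeros and poles come in complex-conjugate pairs away from the ovals, and a real function on the topological circle $\Omega$ changes sign an even number of times, so it has evenly many real zeros and poles on $\Omega$ counted with multiplicity. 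Combined with Lemma~\ref{lem:d_dressed}(2), the count ``poles minus zeros of $\DKP$ in $\Omega$'' is then odd for each finite oval $\Omega_j$, $j\in[g]$, and even for the infinite oval $\Omega_0$. Since $\DKP$ is effective by (2), this count equals the number of divisor points of $\DKP$ in $\Omega$ with multiplicity, and every point of $\DKP$ lies in the interior of exactly one oval: on $\Gamma_0$ the divisor points are the real Sato points of Proposition~\ref{prop:malanyuk}, on every other component $\hat\psi$ restricts to a real rational function of degree at most one so its pole is real, the real locus of each $\mathbb{CP}^1$ component is cut by the nodes into arcs of ovals, and no divisor point sits at a node for the chosen $\vec t_0$. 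Therefore each of the $g$ finite ovals carries an odd (hence at least one) number of points of $\DKP$ and $\Omega_0$ carries an even number; as $\deg\DKP=g$ this forces exactly one simple point in each finite oval and none in $\Omega_0$, and in particular $\DKP$ is reduced.

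The steps for (1) and (2) are bookkeeping; the substance — as in the corresponding argument of \cite{AG1} — is the parity/counting step in (3). Within it, the genuinely delicate point is the claim that every point of $\DKP$ lies in the interior of exactly one oval and never at a node, since this is precisely what lets the mod-$2$ information of Lemma~\ref{lem:d_dressed}(2) combine with $\deg\DKP=g$ and the count of $g$ finite ovals to pin the divisor down completely; carrying it out cleanly requires the genericity of $\vec t_0$ relative to the nodes, the support of $\DVG$ on the ovals from Definition~\ref{def:real_vac_div}, and the degree-$\le 1$ behaviour of $\hat\psi$ on the components other than $\Gamma_0$ established together with Theorem~\ref{theo:exist}. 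For reducible soliton data one would run the same argument with the reduced dressing operator, as in the remark following Definition~\ref{def:dressKP}.
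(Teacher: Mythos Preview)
Your proof is correct and follows essentially the same approach as the paper's own argument, which is stated in three terse sentences invoking the definitions for (1), formula~(\ref{eq:d_dressed2}) for (2), and the reality of $\psi(P,\vec t_0)$ on the ovals together with Lemma~\ref{lem:d_dressed} for (3). You have simply unpacked each of these sentences into a full argument: the divisor bookkeeping for (1) and (2), and for (3) the even-sign-change argument on each oval for the real function $\psi(\cdot,\vec t_0)$ (nonzero at the nodes by the choice of $\vec t_0$), transferring the parities of Lemma~\ref{lem:d_dressed}(2) to $\DKP$ and then pigeonholing the $g$ effective points into the $g$ finite ovals.
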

The first statement follows immediately from the definition of ${\hat \psi} (P, \vec t)$ and  $\DKP $.
The second statement follows immediately from (\ref{eq:d_dressed2}). The third statement follows from the fact that 
$\psi (P,\vec t_0)$ is real at all real ovals and from Lemma~\ref{lem:d_dressed}.

\begin{theorem}\label{theo:KPdiv0}\textbf{The effective divisor on $\Gamma$.}
Assume that $\hat \phi$ is the real and regular vacuum wave function on $\Gamma$ of Theorem~\ref{theo:exist} for the given soliton 
data $({\mathcal K}, [A])$. Let $\hat \psi$ be the normalized dressed wave function from Definition \ref{def:dressKP}.

Then the divisor $\DKP$ is the KP divisor on $\Gamma$ for the soliton data $({\mathcal K}, [A])$, and it satisfies the reality and regularity conditions of Definition \ref{def:rrss}, whereas ${\hat \psi}$ is the KP wave function on $\Gamma$ for the soliton data $({\mathcal K},[A])$.
Moreover, the degree of the effective pole divisor of ${\hat \psi}$ coincides with $g$, the dimension of the positroid cell of $[A]$. 
\end{theorem}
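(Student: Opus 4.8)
The plan is to assemble the statement from results already established: Theorem~\ref{theo:exist} provides the real and regular vacuum wave function $\hat\phi$ and the degree $g$ vacuum divisor $\DVG$ on $\Gamma$; Lemmas~\ref{lem:d_dressed} and \ref{lem:d_dressed2} already give that $\DKP$ is effective of degree $g$, that $({\hat\psi}(P,\vec t))+\DKP\ge 0$ for all $\vec t$, and that each finite oval contains exactly one pole of $\DKP$ while $P_0$ lies on the infinite oval $\Omega_0$. Thus what remains to verify is precisely that the data $(\Gamma,\DKP,\hat\psi)$ match Definition~\ref{def:rrss}: namely (i) $\hat\psi$ is the normalized Baker--Akhiezer function associated to $\DKP$, i.e.\ it has the correct essential singularity at $P_0$, is meromorphic on $\Gamma\setminus\{P_0\}$ with pole divisor contained in $\DKP$, and takes equal values at glued points; (ii) the restriction of $\hat\psi$ to $\Gamma_0$ coincides with the Sato normalized dressed wave function, so that $\DKP\cap\Gamma_0$ is the (reduced) Sato divisor; (iii) the reality/regularity oval conditions hold; and finally (iv) the effective pole divisor of $\hat\psi$ has degree exactly $g$.

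First I would check the behaviour on $\Gamma_0$. By Definition~\ref{def:vacuumKP}(2), $\hat\phi|_{\Gamma_0\setminus\{P_0\}}=e^{\theta(\zeta,\vec t-\vec t_0)}$; applying the Darboux operator $\mathfrak D$ in the spectral parameter gives, by \eqref{eq:Satowf}, $\psi(P,\vec t)|_{\Gamma_0}=\mathfrak D\phi^{(0)}(\zeta;\vec t)e^{-\theta(\zeta,\vec t_0)}$, and after normalizing by $\psi(P,\vec t_0)$ we recover exactly the normalized Sato wave function $\hat\psi_0$ of \eqref{eq:SatoDN}. Hence the restriction of $\DKP$ to $\Gamma_0$ is the Sato divisor $\DS(\vec t_0)$ (the reduced one in the reducible case), and the essential singularity at $P_0$ has the required Sato form $e^{\theta(\zeta,\vec t-\vec t_0)}(1+O(\zeta^{-1}))$. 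Next, on every component $\Gamma\setminus\Gamma_0$, $\hat\phi$ satisfies the vacuum equations \eqref{eq:vacuum_eq} (Definition~\ref{def:vacuumKP}(3)), so $\psi=\mathfrak D\hat\phi$ satisfies $\partial_{t_l}\psi=B_l\psi$ with the soliton potential $u(\vec t)$; since $\hat\phi$ is constant in $\zeta$ on components attached away from $\Gamma_0$ and degree $\le 1$ on white-vertex components, the same holds for $\hat\psi$, giving meromorphicity of degree $\le g$ with pole divisor inside $\DKP$, and the gluing condition $\hat\psi(P,\vec t)=\hat\psi(Q,\vec t)$ is inherited from the corresponding property of $\hat\phi$ in Definition~\ref{def:vacuumKP}(5) after dividing by the (glued-point-compatible) normalization $\psi(P,\vec t_0)$. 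Reality of $\hat\psi$ on real ovals follows from reality of $\hat\phi$ (Definition~\ref{def:vacuumKP}(4)) and reality of $\mathfrak D$ and of $\vec t_0$. The oval/regularity statement is then exactly Lemma~\ref{lem:d_dressed2}(3).

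For the final clause, that the \emph{effective} pole divisor of $\hat\psi$ has degree exactly $g$ (and not strictly less), I would argue that the normalization time $\vec t_0$ can be chosen so that no cancellation of poles occurs. By the Remark on parametrization of positroid cells, the edge-vector construction guarantees nontrivial $\vec t$-dependence of the unnormalized wave function at every double point of $\Gamma$, so for generic $\vec t_0$ the divisor $\DKP=\DKP(\vec t_0)$ is non-special of full degree $g$; on each white-vertex component the degree-one meromorphic extension of $\hat\psi$ then genuinely has a simple pole, and the Sato part contributes $k$ distinct simple poles by Proposition~\ref{prop:malanyuk}. The main obstacle I anticipate is precisely this non-degeneracy/non-speciality bookkeeping: one must rule out that, for the chosen $\vec t_0$, a divisor point on some $\Gamma_{i_rj_s}$ collides with a double point or escapes to a marked point, which would drop the degree or violate the one-point-per-oval count. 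This is controlled by the positivity of the edge weights in the Le--network (all path sums entering the divisor numbers are sums of positive quantities, hence nonzero) together with the constancy of $\hat\psi$ in $\zeta$ on bivalent components, so that the count of divisor points matches $g$ componentwise; the remaining details are the routine verification that the dressing $\mathfrak D$ does not introduce spurious poles on $\Gamma\setminus\Gamma_0$, which follows because $\mathfrak D$ acts only in the $\zeta$-independent direction there.
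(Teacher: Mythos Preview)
Your proposal is correct and follows essentially the same route as the paper: the paper's proof is a single sentence stating that the theorem follows from Lemmas~\ref{lem:d_dressed}, \ref{lem:d_dressed2} and Theorem~\ref{theo:exist}, together with the observation that the Darboux transformation automatically produces the Sato divisor on $\Gamma_0$. You have simply unpacked what those citations entail; your additional worry about the effective pole degree being exactly $g$ (rather than strictly less) is handled in the paper not in this proof but via the choice of $\vec t_0$ in Lemma~\ref{rem:sign}, so that part of your elaboration, while reasonable, is more than the paper itself supplies at this point.
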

The proof of the Theorem easily follows from  Lemmas~\ref{lem:d_dressed}, \ref{lem:d_dressed2} and Theorem~\ref{theo:exist}. 
We remark that the Darboux transformation automatically creates the Sato divisor on $\Gamma_0$.

\begin{remark}\textbf{$\DKP$ is the KP divisor on $\Gamma({\mathcal G}_{\mbox{\scriptsize red}}).$}
\label{rem:red_div} In our construction each KP divisor point either belongs to $\Gamma_0$ or to a copy of $\mathbb{CP}^1$ represented by a trivalent white vertex. Below we prove that the value of the normalized KP wave function is constant with respect to the spectral parameter on each component corresponding to a bivalent vertex; therefore the elimination of bivalent vertices doesn't affect either the value of the normalized KP wave function on the $\mathbb{CP}^1$ components corresponding to trivalent white vertices or the position of the KP divisor points. 
\end{remark}

\begin{remark}\textbf{Invariance of the KP divisor}
In \cite{AG2} we prove that $\hat \psi$ and $\DKP$ do not depend neither on the orientation of the network nor on the choice of the position of the Darboux points.
\end{remark}

\section{A system of vectors on the Le--network}
\label{sec:le}

In the previous Section we constructed the spectral curve $\Gamma$ associated with the given positroid cell $\S$ and the ordered set ${\mathcal K}$. The final goal of the main construction is the computation of the divisor corresponding to a given point $[A]\in S$. At this aim, in this and in the next sections, we 
first introduce a system of edge vectors on the Le--network and we use it to compute the values of the vacuum and dressed wave functions at all marked points of the curve.

The construction of a system of vectors at the edges of the Le--network is based on an algebraic procedure analogous to the one introduced in \cite{AG1} on the main cell. In \cite{AG1}, we related a specific representation of the rows of the banded matrix in $[A]\in Gr^{\mbox{\tiny TP}}(k,n)$ to the leading order behavior in the parameter $\xi$ of the vacuum wave function at double points and Darboux points. Here we use an excessive number of copies of $\mathbb{CP}^1$ to relate the system of edge vectors to the \textbf{exact behavior} of both the vacuum and the dressed wave functions at the marked points of $\Gamma$. Moreover, both the vectors and the wave functions satisfy linear relations at the vertices of the Le-network which are used to construct the vacuum and the dressed divisors.

In this section we first use the Le--network ${\mathcal N}$ to express each row of the RREF representative matrix as a linear combination with positive coefficients of a minimal number of some basic row vectors. This construction is a generalization of the Principal Algebraic Lemma in \cite{AG1} and easily follows from \cite{Pos}. Then, in Section~\ref{app:mainalgtheo} we present a recursive construction of these basic vectors, generalizing the corresponding theorem in  \cite{AG1}. In \cite{AG2}, we generalize the construction of edge vectors to any
planar trivalent bipartite network ${\mathcal N}$ in the disk representing a given point of $\Grkn$.

\subsection{Representation of the rows of the RREF matrix using the Le--tableau}
\label{sec:rowvectors}

The notations used in this section are 
the same as in the Appendix. We fix the positroid cell $\S\subset \Grkn$ and the planar bipartite trivalent acyclically oriented Le--graph $\mathcal G$ representing it, whereas ${\mathcal N}$ is the Le--network on $\mathcal G$ representing $[A]\in \S$. In the following $A$ is the representative matrix in RREF of $[A]$, $I$ is the lexicographically minimal base in the matroid $\mathcal M$ and $\bar I=[n]\backslash I$.
Each row $r$ of $A$ may be expressed as a linear combination 
with \textbf{positive coefficients} $c^r_{l_s}$ of $N_r+1$ vectors $E^{(r)}[l_s]$, $s\in[0,N_r]$, computed using ${\mathcal N}$. 
Each coefficient $c^r_{l_s}$ is the weight of the directed path from the boundary source $b_{i_r}$ to the internal white vertex $V_{i_r l_s}$. 
The absolute value of the $j$-th component of the vector $E^{(r)}[l_s]$ is the sum of the weights of all possible paths starting downwards at $V_{i_r l_s}$ and zig--zagging 
to the destination $b_j$, while its sign depends on the number of boundary sources passed before reaching the destination (see Lemma \ref{lemma:Asum}).

\begin{figure}
\includegraphics[scale=0.18]{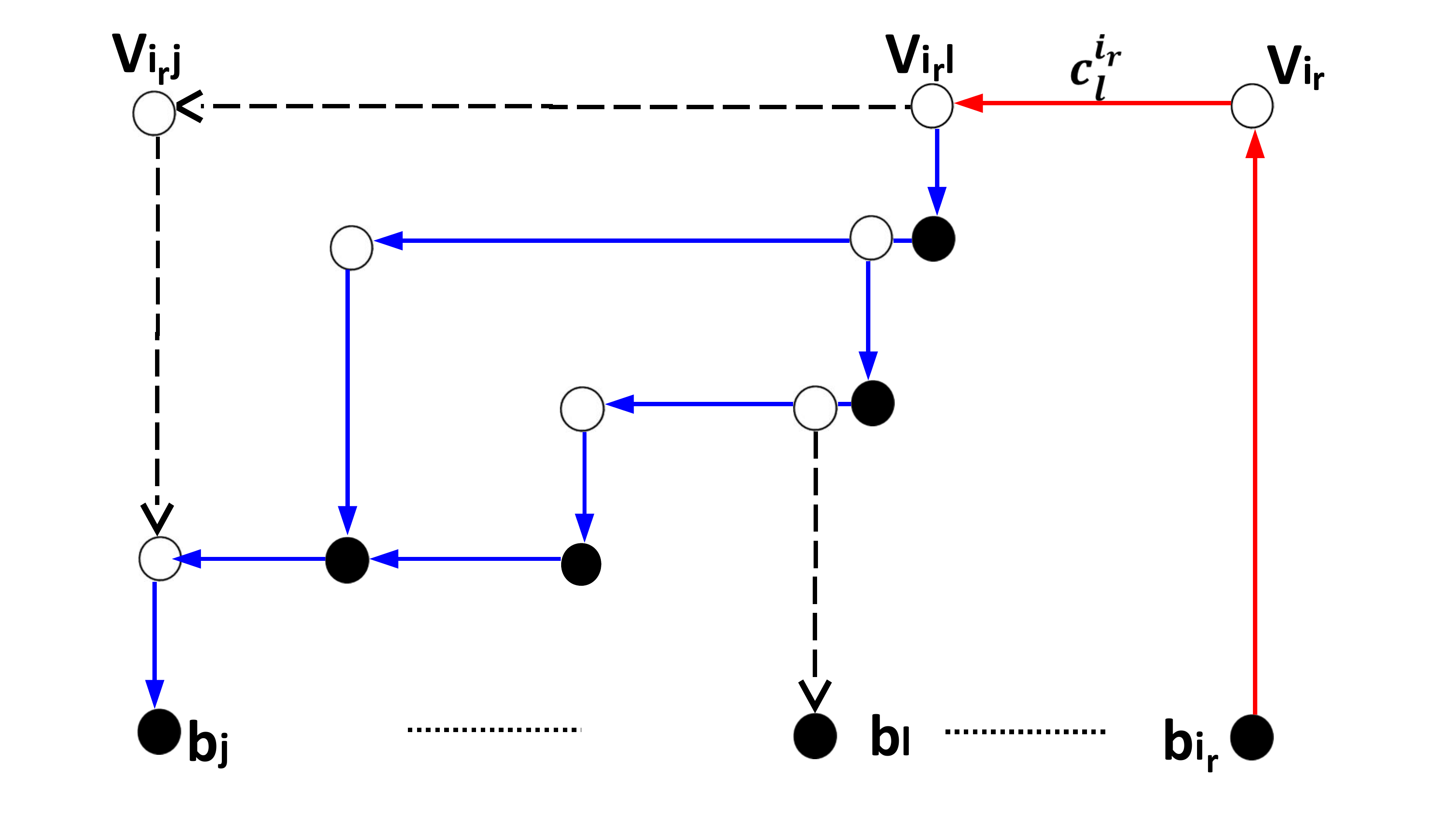}
\vspace{-0.5 truecm}
\caption{\footnotesize{\sl The coefficient $c^{i_r}_l$ is the weight of the directed path from the source $b_{i_r}$ to the white internal vertex $V_{i_r l}$, while $E^{(r)}_j [l] $, the $j$--th component of $E^{(r)} [l]$, has absolute value equal to the sum of the weights of all the directed paths which start downwards at the internal vertex $V_{i_r l}$ and zig--zag to the boundary sink $b_j$. The sign of $E^{(r)}_j [l] $ is equal to the number of boundary sources $b_{i_s}$ with $i_s \in ]i_r, j[$.}}
\label{fig:Eijvector}
\end{figure}

Let us fix $r\in [k]$ and let $i_r\in I$ be the corresponding pivot index. Let $l\in {\bar I}$ such that the box $B_{i_r l}$ has index $\chi^{i_r}_l= 1$ and let $V_{i_r l}$ be the corresponding white vertex in $\mathcal N_t$. Let us consider all directed paths $P$ starting at $b_{i_r}$, moving horizontally from $V_{i_r}$ to $V_{i_r l}$, moving downwards at $V_{i_r l}$ and then zigzagging towards any possible destination $j\in {\bar I}$. Necessarily $j\ge l$; moreover, for $l=j$, there is  exactly one such path from $b_{i_r}$ to $b_l$. For $j\in {\bar I}$ and $j\ge l$, let us define
\begin{equation}\label{eq:path}
\begin{array}{l}
\displaystyle {\mathcal P}^{(r)}_{l j}  \; = \; \{ \, P^{(r)}_{l j} : b_{i_r} \mapsto b_j \; : \; P^{(r)}_{l j} \mbox{ moves downwards at the white vertex } V_{i_r l} \, \},\\
{\bar {\mathcal P}}^{(r)}_{l j}  \; = \; \{ \, {\bar P}^{(r)}_{l j} : V_{i_rl} \mapsto b_j \; : \; {\bar P}^{(r)}_{l j} \mbox{ moves downwards at the white vertex } V_{i_r l} \, \},\\
\end{array}
\end{equation}
and denote by  $P_{i_r l}$ the path from the source  $b_{i_r}$ to the white vertex $V_{i_r l}$. Then
for any $P^{(r)}_{l j}\in {\mathcal P}^{(r)}_{l j}$ there exists a unique path ${\bar P}^{(r)}_{l j}\in {\bar {\mathcal P}}^{(r)}_{l j}$ such that
$P^{(r)}_{l j} = P_{i_r l} \sqcup {\bar P}^{(r)}_{l j}$.
Define
\begin{equation}\label{eq:coeff}
c^i_l \equiv \prod\limits_{e \in P_{i_rl}} w_e.
\end{equation}
Then the weight of the path $P^{(r)}_{l j}$ is
$w(P^{(r)}_{l j} )= \prod_{e\in P^{(r)}_{l j}} w_e = c^i_l \,\left( \prod_{e\in {\bar P}^{(r)}_{l j}} w_e \right)= c^i_l \,w({\bar P}^{(r)}_{l j})$
and, for any $j\in {\bar I}$, $j>i_r$, the matrix entry in reduced row echelon form as in (\ref{eq:ARREF}), may be re--expressed as
$$
A^{r_i}_j \, = \, (-1)^{\sigma_{i_r j}} \sum_{s=1}^{N_r} c^i_{l_s} \sum_{{\bar P}^{(r)}_{l_sj}\in {\bar {\mathcal P}}^{(r)}_{l_s j}} w({\bar P}^{(r)}_{l_s j})
= (-1)^{\sigma_{i_r j}} \sum_{l\le j} c^i_l \sum_{{\bar P}^{(r)}_{lj}\in {\bar {\mathcal P}}^{(r)}_{l j}} w({\bar P}^{(r)}_{l j}),
$$
where $\sigma_{i_r j}$ is the number of boundary sources in $]i_r,j[$. In \cite{Pos} the matrix elements $A^{r_i}_j$ are computed using columns instead of rows.

Finally, for any $r\in [k]$, $i_r\in I$, $l\in {\bar I}$, $l>i_r$, and such that $\chi^{i_r}_l=1$, let us define the row vector
$E^{(r)}[l] = ( E^{(i)}_1 [l],\dots, E^{(i)}_n [l])$, with
\begin{equation}\label{eq:rowentry}
E^{(r)}_j [l] \; = \; \left\{ \begin{array}{ll}
\displaystyle 0 & \quad \mbox{ if } j<l \mbox{ or } j\in I, \\
 (-1)^{\sigma_{i_r j}} \sum\limits_{{\bar P}^{(r)}_{l j}\in {\bar {\mathcal P}}^{(r)}_{lj}} w({\bar P}^{(r)}_{l j})
&\quad \mbox{ if } l,j\in {\bar I} \mbox{ and } j\ge l.
\end{array}
\right.
\end{equation}
In the expression above the entry $E^{(r)}_j [l]=0$ if there is no path moving downwards at the vertex $V_{i_r l}$ and reaching destination $b_j$.
Finally we associate to the boundary source a vector 
\begin{equation}\label{eq:pivec}
E^{(r)}[i_r] = (0,\dots, 0 ,1 ,0,\dots,0),
\end{equation}
with non zero entry in the $i_r$-th column. Then the following Lemma holds true. 

\begin{lemma}\label{lemma:Asum}
Let $A$ be the reduced row echelon form matrix representing a given point in the matroid stratum $\S\subset \Grkn$ and let $D$ be the corresponding Le--diagram. Let $E^{(r)}[i]$, $E^{(r)} [l]$, $c^i_l$ as in (\ref{eq:coeff}), (\ref{eq:rowentry}) and (\ref{eq:pivec}), with $r\in [k]$, $i_r\in I$, $l\in {\bar I}$, such that the box $B_{i_r l}$ is filled by 1. Then, the $r$--th row of $A$ is
\begin{equation}\label{eq:Avec}
A[r] = E^{(r)}[i_r] + \sum\limits_{s=1}^{N_r} c^r_{l_s} E^{(r)} [l_s],
\end{equation}
where the sum runs over the indexes $l_s \in {\bar I}$ such that the index in (\ref{eq:chiindex}) is $\chi^{i_r}_{l_s}=1$.
\end{lemma}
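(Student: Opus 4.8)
The plan is to read off the matrix entries $A^r_j$ of the RREF matrix from the Le--network directly from Postnikov's formula, and then reorganize the sum over paths according to the first white vertex $V_{i_rl}$ at which the path turns downward. First I would recall from \cite{Pos} (and the discussion preceding the statement) that, for the canonically oriented Le--network $\mathcal N$ representing $[A]\in\S$, each entry of the RREF matrix is given by a signed sum of path weights: for a pivot index $i_r$ one has $A^r_{i_r}=1$ and $A^r_{i_s}=0$ for $s\ne r$ (these are the pivot columns), while for $j\in\bar I$ with $j>i_r$ one has
\[
A^r_j=(-1)^{\sigma_{i_rj}}\sum_{P:\,b_{i_r}\mapsto b_j}w(P),
\]
the sum being over all directed paths in $\mathcal N$ from the boundary source $b_{i_r}$ to the boundary sink $b_j$, with $\sigma_{i_rj}$ the number of boundary sources strictly between $i_r$ and $j$. (If $j<i_r$ or $j\in I$ with $j\ne i_r$, the corresponding entry is $0$ since no such path exists in the acyclically oriented Le--network.) These are exactly the formulas displayed just before the statement.

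The key step is the path decomposition. In the canonically oriented Le--network every directed path $P$ issuing from the boundary source $b_{i_r}$ first runs horizontally from $V_{i_r}$ through a (possibly empty) sequence of white vertices $V_{i_r l}$ in the $r$-th row, until at some uniquely determined filled box $B_{i_r l}$ (i.e.\ $\chi^{i_r}_l=1$) it turns downward; thereafter it zigzags to its destination $b_j$. Thus $P$ splits canonically as $P = P_{i_rl}\sqcup \bar P^{(r)}_{lj}$, where $P_{i_rl}$ is the horizontal segment from $b_{i_r}$ to $V_{i_rl}$ and $\bar P^{(r)}_{lj}\in\bar{\mathcal P}^{(r)}_{lj}$ turns down at $V_{i_rl}$; this is precisely the factorization stated in the text, and it is a bijection onto pairs $(l,\bar P^{(r)}_{lj})$ with $l\le j$, $\chi^{i_r}_l=1$. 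Since weights are multiplicative along edges, $w(P)=c^r_l\,w(\bar P^{(r)}_{lj})$ with $c^r_l=\prod_{e\in P_{i_rl}}w_e$ as in (\ref{eq:coeff}). Summing over all paths and grouping by the turning vertex gives, for $j\in\bar I$, $j>i_r$,
\[
A^r_j=(-1)^{\sigma_{i_rj}}\sum_{l\le j,\ \chi^{i_r}_l=1} c^r_l\sum_{\bar P^{(r)}_{lj}\in\bar{\mathcal P}^{(r)}_{lj}} w(\bar P^{(r)}_{lj})
= \sum_{s=1}^{N_r} c^r_{l_s}\, E^{(r)}_j[l_s],
\]
using the definition (\ref{eq:rowentry}) of $E^{(r)}_j[l_s]$; and there is the special degenerate case $j=l$, where $\bar{\mathcal P}^{(r)}_{ll}$ consists of the single trivial downward move, contributing $w(\bar P^{(r)}_{ll})=1$, consistently with (\ref{eq:rowentry}).

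Finally I would assemble the full row: for the pivot column $j=i_r$ the right side of (\ref{eq:Avec}) equals the $i_r$-th component of $E^{(r)}[i_r]$, namely $1$, while all $E^{(r)}_{i_r}[l_s]=0$ since $i_r\in I$; for the other pivot columns $j=i_s$, $s\ne r$, and for $j<i_r$ all components on both sides vanish; and for $j\in\bar I$ with $j>i_r$ the identity just derived is exactly the $j$-th component of (\ref{eq:Avec}). Hence $A[r]=E^{(r)}[i_r]+\sum_{s=1}^{N_r}c^r_{l_s}E^{(r)}[l_s]$, as claimed. The only genuinely delicate point — and where I expect to spend the most care — is verifying that the "first downward turn'' decomposition is really a bijection in the canonically oriented Le--network: one must use the specific acyclic orientation and the structure of the Le--diagram (filled boxes, horizontal/vertical edges, trivalent bipartite structure) to rule out paths that turn down, come back up, and turn down again, and to check that the signs $(-1)^{\sigma_{i_rj}}$ depend only on $(i_r,j)$ and not on the chosen path. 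This is the content that is imported from \cite{Pos} together with the properties of the canonical orientation recalled in the Appendix, so the proof reduces to citing that source and spelling out the grouping of terms above.
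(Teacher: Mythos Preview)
Your proposal is correct and follows exactly the approach the paper has in mind: the paper states that ``the proof is trivial and is omitted,'' and indeed all the ingredients you use---the path decomposition $P^{(r)}_{lj}=P_{i_rl}\sqcup\bar P^{(r)}_{lj}$, the weight factorization $w(P^{(r)}_{lj})=c^r_l\,w(\bar P^{(r)}_{lj})$, and the resulting formula for $A^r_j$---are already spelled out in the paragraphs immediately preceding the lemma, so that (\ref{eq:Avec}) is a direct repackaging of those identities together with the definitions (\ref{eq:rowentry}) and (\ref{eq:pivec}). Your only stated worry, that a path might ``turn down, come back up, and turn down again,'' is vacuous in the canonically oriented Le--network, since all horizontal edges point left and all vertical edges point down; so the first-downward-turn decomposition is automatically a bijection and no further argument is needed.
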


The proof is trivial and is omitted. Let us remark that the vectors $E^{(r)}[i_r]$, $E^{(r)} [l_s]$, $s\in N_r$, form a minimal system of vectors to represent the $r$-th row of the reduced row echelon matrix. 

\smallskip

\subsection{Recursive construction of the row vectors $E^{(r)} [l]$ using the Le--diagram}
\label{app:mainalgtheo}

In Theorem \ref{theo:mainalg} we provide a recursive representation
for the above system of vectors using the Le--tableau starting from the last row ($r=k$) and moving upwards till the first row ($r=1$). 

For any fixed $r\in [k]$, we first complete the system of vectors introduced in the previous section to a convenient basis in $\mathbb{R}^n$, given by the rows of $n\times n$ matrix ${\hat E}^{(r)}$. For $r=k$, we use the canonical basis in $\mathbb{R}^n$. We pass from the basis associated to the $r$-th row to the one associated to the $(r-1)$--th row applying a transition matrix $C^{[r-1, r]}$: ${\hat E}^{(r-1)} = C^{[r-1,r]} {\hat E}^{(r)}$.
Each transition matrix $C^{[r-1, r]}$ keeps track of empty and non--empty boxes of the $r$--th row of the Le-tableau, and is upper triangular 
by definition. The choice of signs in $C^{[r-1, r]}$ in (\ref{eq:Ckk-1}) keeps track of the sign changes when passing a pivot column. We also complete each set of coefficients $c^r_l$ in (\ref{eq:coeff}) to a row vector ${\hat c}^{(r)}$, with $l$ indexing the column position.

This construction is a corollary to the Lindstr\"om lemma in the case of acyclic graphs and, for points $[A]\in Gr^{\mbox{\tiny TP}} (k,n)$, is also the combinatorial version of the recursive algebraic construction in \cite{AG1} for a different choice of the basic vectors and therefore of the representative matrix $A$. In \cite{AG2} we give general rules to construct well defined systems of vectors on any network and for any orientation. 

Theorem~\ref{theo:mainalg} is used in section \ref{sec:vvw} to define a vacuum edge wave function and its dressing on ${\mathcal N}$. Such vacuum edge wave function (respectively its dressing)
\begin{enumerate}
\item rules the behavior of the vacuum wave function (respectively the dressed wave function) on $\Gamma$ at all marked points;
\item satisfies linear relations at the inner vertices of ${\mathcal N}$ which are used to detect the position of the vacuum 
(resp. dressed) divisor points.
\end{enumerate}

\begin{theorem}{\bf (The recursive algebraic construction)}\label{theo:mainalg}
Let $[A] \in \S\subset \Grkn$ with pivot set $I = \{ 1\le i_1 < i_2 < \cdots < i_k \le n\}$. Let $T$ and ${\mathcal N}$ respectively be the Le--tableau and its acyclically oriented bipartite Le--network. For any $r\in [k]$, $j\in \bar I$,  let the index $\chi^{i_r}_j$ be as in (\ref{eq:chiindex}).
Let us define the following collections of $n\times n$ invertible matrices ${\hat E}^{(r)}$, $n\times n$ transition matrices $C^{[r-1,r]}$ and row vectors ${\hat c}^{(r)}$, $r\in [k]$, associated to $\mathcal N$:
\begin{enumerate}
\item ${\hat E}^{(k)}$ is the $n\times n$ identity matrix  and we denote its row vectors as ${\hat E}^{(k)} [l]$, $l\in [n]$;
\item For $r\in [k]$, define the $n\times n$ transition matrix $C^{[r-1,r]}$ as follows:
\begin{equation}\label{eq:Ckk-1}
C^{[r-1,r],l}_{j} = \left\{ \begin{array}{cl}
\delta^l_j,              & \quad \mbox{ if } l\in [1, i_r[, \quad j\in [n],\\
-1                       & \quad \mbox{ if } l=j=i_r,\\
-{\hat w}^{(r)}_{i_r j}, & \quad \mbox{ if } l=i_r,\quad j\in ]i_r, n] \mbox{ and }  \quad  \chi^{i_r}_j= 1,\\
-{\hat w}^{(r)}_{l j},   & \quad \mbox{ if } l,j\in {\bar I}\cap ]i_r, n], \quad j\ge l,  \quad  \chi^{i_r}_j\chi^{i_r}_l = 1,\\
-\delta^l_j ,            & \quad \mbox{ if } l\in {\bar I}\cap ]i_r, n], \quad  \chi^{i_r}_l = 0, \quad j\in [n],\\
-\delta^l_j ,            & \quad \mbox{ if } l\in I\cap ]i_r, n],  \quad j\in [n],\\
 0                       & \quad\mbox{ otherwise, }            
\end{array}
\right.
\end{equation}
where 
\begin{enumerate}
\item for $l,j\in {\bar I}\cap ]i_r, n]$, $j\ge l$, ${\hat w}^{(r)}_{l j}$ is the weight of the directed horizontal path from the black vertex
$V'_{i_r l}$ to the white vertex $V_{i_r j}$. In particular ${\hat w}^{(r)}_{l l}=1$;
\item for $l=i_r$ and $j\in {\bar I}$, ${\hat w}^{(r)}_{i_r j}$ is the weight of the directed horizontal path from the  vertex $V_{i_r}$ to the white vertex $V_{i_r j}$.
\end{enumerate}
\item The matrices ${\hat E}^{(r-1)}$ are recursively computed as $r$ decreases from $k$ to 2, by
\begin{equation}\label{eq:ECE}
{\hat E}^{(r-1)} = C^{[r-1,r]} {\hat E}^{(r)}.
\end{equation} 
\item For any $r\in [k]$, the $l$--element of the row vector ${\hat c}^{(r)}$, $l\in [n]$, is
\begin{equation}\label{eq:coelastrow}
{\hat c}^{(r)}_l = \left\{ \begin{array}{cl} 
1                      &\quad \mbox{ if } l=i_r,\\
\displaystyle c^r_l    &\quad \mbox{ if } \chi^{i_r}_ l =1,\\
0                      &\quad \mbox{ otherwise } ,
\end{array}\right.
\end{equation}
with $c^r_l$ as in (\ref{eq:coeff}).
\end{enumerate}

Then
\begin{enumerate}
\item For $r\in [k-1]$ ${\hat E}^{(r)}$ is the $n\times n$ matrix such that
\begin{equation}\label{eq:Er}
{\hat E}^{(r)} [l] = \left\{ \begin{array}{ll}
{\hat E}^{(k)} [l]        & \quad \mbox{ if } l\le i_r, \\
(-1)^{\sigma_{sr}} A[s]   & \quad \mbox{ if } l=i_s, \mbox{ and } s\in ]r,k],\\
E^{(r)} [l]               & \quad \mbox{ if } l\in {\bar I}\cap ]i_r,n], \mbox{ and }   \chi^r_l = 1, \\
\end{array}
\right.
\end{equation} 
with $E^{(r)} [l]$ as in (\ref{eq:rowentry}) and $\sigma_{sr} = \# \{ i_t\in I ,  r\le t<s\}$;
\item For any $r\in [k]$, 
\begin{equation}\label{eq:Asum}
A[ r] =\sum\limits_{l=1}^n {\hat c}^r_l {\hat E}^{(r)} [l] \equiv E^{(r)} [i_r] + \sum\limits_{s=1}^{N_r} c^r_{l_s}  E^{(r)} [l_s],
\end{equation}
where the second sum is over the indexes such that $\chi^{i_r}_{l_s} =1$.
\end{enumerate}
\end{theorem}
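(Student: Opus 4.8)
The plan is to prove the two assertions (\ref{eq:Er})--(\ref{eq:Asum}) simultaneously by downward induction on $r$, running from $r=k$ to $r=1$. The second assertion follows from the first with essentially no extra work: once we know that ${\hat E}^{(r)}[i_r]=E^{(r)}[i_r]$ (the $i_r$-th standard basis vector, by (\ref{eq:pivec})) and that ${\hat E}^{(r)}[l_s]=E^{(r)}[l_s]$ for every filled box $B_{i_rl_s}$ of row $r$, then, since ${\hat c}^{(r)}_l$ vanishes outside the pivot column $i_r$ and the filled boxes of row $r$ (by (\ref{eq:coelastrow})), formula (\ref{eq:Asum}) is exactly the statement of Lemma~\ref{lemma:Asum}. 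Hence the real content is the description (\ref{eq:Er}) of the rows of ${\hat E}^{(r)}$. To make the induction close it is convenient to strengthen (\ref{eq:Er}) slightly: in its third line one drops the restriction $\chi^r_l=1$ and allows all $l\in\bar I\cap\,]i_r,n]$, reading (\ref{eq:rowentry}) for an empty box $B_{i_rl}$ as the signed sum (with the same signs $(-1)^{\sigma_{i_rj}}$) of the weights of the directed paths of ${\mathcal N}$ that enter column $l$ from above just below row $r-1$ --- running straight through the empty box --- and then zig--zag down to the sink $b_j$. Under this reading every index $l\in[n]$ is covered by (\ref{eq:Er}), and on the pivot column and on the filled boxes it coincides with the formula in the statement, so nothing is lost.

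For the base case $r=k$ the matrix ${\hat E}^{(k)}$ is the identity, so its $l$-th row is $e_l$; since $i_k$ is the last pivot we have $\sigma_{i_kj}=0$ for $j>i_k$, and below row $k$ every column wire runs straight to its own sink, so the signed path--weight vector attached to column $l$ is indeed $e_l$, which is (\ref{eq:Er}) at $r=k$. For the step from $r$ to $r-1$ I would compute ${\hat E}^{(r-1)}[l]=\sum_j C^{[r-1,r],l}_j\,{\hat E}^{(r)}[j]$ by going through the cases of (\ref{eq:Ckk-1}) and inserting the inductive form of ${\hat E}^{(r)}$. For $l<i_r$ the transition matrix acts as the identity and ${\hat E}^{(r)}[l]=e_l$, which matches (\ref{eq:Er}) at level $r-1$; for $i_{r-1}<l<i_r$ one additionally checks that column $l$ lies to the left of $i_r$, has no vertex below row $r-1$, and has no pivot in $\,]i_{r-1},l[\,$, so that $E^{(r-1)}[l]=e_l$ as well. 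For a pivot column $l=i_s$ with $s>r$, and for an empty--box column $l\in\bar I\cap\,]i_r,n]$, the transition matrix negates the row, ${\hat E}^{(r-1)}[l]=-{\hat E}^{(r)}[l]$; this matches (\ref{eq:Er}) precisely because moving from level $r$ to level $r-1$ forces every path in question to cross one additional pivot wire, namely that of row $r$, so that $\sigma_{s,r-1}=\sigma_{sr}+1$ and $\sigma_{i_{r-1}j}=\sigma_{i_rj}+1$ for the sinks $b_j$, $j>i_r$, that occur. For a filled--box column $l\in\bar I\cap\,]i_r,n]$ with $\chi^{i_r}_l=1$, the relevant case of (\ref{eq:Ckk-1}) gives ${\hat E}^{(r-1)}[l]=-\sum_{j\ge l,\ \chi^{i_r}_j=1}{\hat w}^{(r)}_{lj}\,{\hat E}^{(r)}[j]=-\sum_j{\hat w}^{(r)}_{lj}E^{(r)}[j]$; factoring each directed path leaving the column--$l$ wire at level $r-1$ as (weight--one vertical edge down to the black vertex $V^{\prime}_{i_rl}$) $\sqcup$ (horizontal path $V^{\prime}_{i_rl}\to V_{i_rj}$ of weight ${\hat w}^{(r)}_{lj}$) $\sqcup$ (a path leaving $V_{i_rj}$ downward) and using the same extra--pivot sign identity, this sum equals $E^{(r-1)}[l]$ by the Lindström path--weight formula for acyclic networks. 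Finally, for $l=i_r$ the second and third cases of (\ref{eq:Ckk-1}) give ${\hat E}^{(r-1)}[i_r]=-{\hat E}^{(r)}[i_r]-\sum_{j:\chi^{i_r}_j=1}{\hat w}^{(r)}_{i_rj}{\hat E}^{(r)}[j]=-\bigl(E^{(r)}[i_r]+\sum_s c^r_{l_s}E^{(r)}[l_s]\bigr)=-A[r]$, using ${\hat w}^{(r)}_{i_rj}=c^r_j$ and Lemma~\ref{lemma:Asum}, and $-A[r]=(-1)^{\sigma_{r,r-1}}A[r]$ since $\sigma_{r,r-1}=1$. This exhausts all columns, so (\ref{eq:Er}) holds at level $r-1$, and (\ref{eq:Asum}) follows.

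The main obstacle is the sign accounting rather than the algebra: one must establish once and for all, directly from the planar structure of the Le--network, that a directed path descending in a column $l$ lying to the right of the pivot $i_r$ crosses the pivot wire of row $r$ exactly once --- in particular even when the box $B_{i_rl}$ is empty --- and then verify that the nontrivial entries of $C^{[r-1,r]}$ reproduce precisely these sign flips together with the weight factors ${\hat w}^{(r)}$. Two small conventions from the Le--network construction also have to be pinned down and used in the path factorizations above: the purely vertical edges carry weight $1$, and ${\hat w}^{(r)}_{i_rj}=c^r_j$ (equivalently, the horizontal path from the boundary source $b_{i_r}$ to $V_{i_rj}$ has weight $c^r_j$, cf.\ (\ref{eq:coeff})).
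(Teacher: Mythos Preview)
Your proof is correct and follows the same downward induction on $r$ as the paper's own proof; the reduction of (\ref{eq:Asum}) to (\ref{eq:Er}) via Lemma~\ref{lemma:Asum} and the treatment of the pivot row $l=i_r$ are identical to the paper's.

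The one organizational difference is worth noting. You strengthen the inductive hypothesis so that (\ref{eq:Er}) describes ${\hat E}^{(r)}[l]$ for \emph{every} $l\in\bar I\cap\,]i_r,n]$, including empty boxes, and then pass from level $r$ to level $r-1$ using only the single transition matrix $C^{[r-1,r]}$. The paper instead keeps the weaker hypothesis (only filled boxes) and, when proving ${\hat E}^{(r-1)}[l]=E^{(r-1)}[l]$ for a filled box $B_{i_{r-1}l}$, jumps via the full product (\ref{eq:rechatE}) back to the first row $\bar s\ge r$ with $\chi^{i_{\bar s}}_l=1$ (or to level $k$ if none exists), where the inductive hypothesis applies directly. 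Your route is arguably cleaner: it avoids the bookkeeping of how many intermediate transitions merely negate row $l$, and it forces you to handle the range $i_{r-1}<l<i_r$ explicitly (which you do). The paper's route has the compensating advantage of not needing to extend the meaning of $E^{(r)}[l]$ to empty boxes.

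One small comment on your closing paragraph: the ``crossing the pivot wire'' language is more geometric than necessary. The sign identity you need, $\sigma_{i_{r-1}j}=\sigma_{i_rj}+1$ for $j>i_r$, is immediate from the definition of $\sigma_{ij}$ as a count of pivot indices in an open interval, and does not require any planar crossing argument.
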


\begin{remark}{\bf Changing the orientation of the graph}
Any change of orientation in ${\mathcal N}$ corresponds to the composition of elementary changes of orientation \cite{Pos}, either corresponding to a change of the base in the matroid ${\mathcal M}$ of $[A]$ or to a change of orientation in a cycle of the graph. We postpone to \cite{AG2} a thorough explanation of how the system of vectors on any given network representing $[A]$ is effected by such elementary changes and the proof that both the normalized dressed wave function and the effective KP divisor are not affected by them.
\end{remark}

\begin{proof}
(\ref{eq:Asum}) follows immediately from Lemma~\ref{lemma:Asum}, (\ref{eq:coelastrow}) and (\ref{eq:Er}). 

The first statement in (\ref{eq:Er}), ${\hat E}^{(r)} [l] = {\hat E}^{(k)} [l]$ for $l\le i_r$, $r\in [k]$ is trivial by definition of the transition matrix (\ref{eq:Ckk-1}). The remaining statements in (\ref{eq:Er}) follow by induction in the index $r$ as it decreases from $k$ to 1. Indeed, for $r=k$, the transition matrix $C^{[k-1,k]}$
\begin{enumerate}
\item Leaves invariant all canonical basis vectors ${\hat E}^{(k)} [l]$ for all $l< i_k$;
\item Transforms the canonical basis vector ${\hat E}^{(k)}[i_k]$ to $-A[k]$, if $l=i_k$;
\item Changes the sign of the canonical basis vector,  ${\hat E}^{(k-1)} [l] = -{\hat E}^{(k)} [l]$, if $l\in {\bar I} \cap ]i_k, n]$ and $\chi^{i_k}_l=0$;
\item Acts untrivially only if $l\in {\bar I} \cap ]i_k, n]$ and $\chi^{i_k}_l=1$. In such case, the components of ${\hat E}^{(k-1)}[l]$ are transformed to
\[
{\hat E}^{(k-1), l}_j = \left\{ \begin{array}{cl} 
0                        & \quad \mbox{ if } j<  l,\\
-1                       & \quad \mbox{ if } j=l,\\
0                        & \quad \mbox{ if } j> l \mbox{ and } \chi^{i_k}_j=0,\\
-{\hat w}^{(k)}_{l j}    & \quad \mbox{ if } j> l \mbox{ and } \chi^{i_k}_j=1.
\end{array}
\right.
\]
\end{enumerate}
It is straightforward to verify that ${\hat E}^{(k-1)} [l] = E^{(k-1)} [l]$, if $l=i_{k-1}$ or $\chi^{i_{k-1}}_l=1$, for $l\ge i_{k-1}$.
Indeed if both  $\chi^{i_{k-1}}_l =1$ and  $\chi^{i_{k}}_l =1$, the white vertex $V_{i_{k-1} l}$ is joined to the black vertex $V'_{i_k l}$ by an edge of 
weight 1 so that by definition ${\hat E}^{(k-1)} [l] = E^{(k-1)} [l]$. If $\chi^{i_{k-1}}_l =1$ and  $\chi^{i_{k}}_l =0$, the white vertex $V_{i_{k-1} l}$ is joined to the boundary vertex $b_{l}$ by an edge of unit weight and
${\hat E}^{(k-1)} [l] = E^{(k-1)} [l]$.

Let us now suppose to have verified (\ref{eq:Er}), for any $s\in [r, k]$ and let's verify it for $s=r-1$. By definition
\begin{equation}\label{eq:rechatE}
{\hat E}^{(r-1)} = C^{[r-1, r]} {\hat E}^{(r)} =C^{[r-1, r]} C^{[r, r+1]} {\hat E}^{(r+1)} = \cdots = \left(\prod\limits_{s=r}^k C^{[s-1, s]}\right) {\hat E}^{(k)}.
\end{equation}
Let $l\in {\bar I} \cap ]i_{r-1},n]$ be fixed. If $\chi^{i_{r-1}}_l=0$, then there is no contribution to $A[r-1]$ from the vector ${\hat E}^{(r-1)} [l]$ since the coefficient ${\hat c}^{r-1}_l=0$.
Suppose now that $\chi^{i_{r-1}}_l=1$ and consider the set $S= \{ s\in [r,k] \, : \, \chi^{i_s}_l=1\, \}$. If $S= \emptyset$, then the white vertex $V_{i_{r-1} l}$ is joined directly to the boundary sink $b_{l}$ by an edge of unit weight and 
\[
{\hat E}^{(r-1)} [l]= (-1)^{k - r+1} {\hat E}^{(k)} [l]=(-1)^{k - r+1} E^{(k)}[l] =E^{(r-1)} [l].
\]
Otherwise, let ${\bar s} = \mbox{ min } S$. In this case moving downwards from the white vertex $V_{i_{r-1} l}$ the first black vertex that we meet is 
$V'_{i_{{\bar s}} l}$ and such edge has unit weight. Then, using (\ref{eq:rechatE}), we have
\[
{\hat E}^{(r-1)} [l]= (-1)^{{\bar s} - r+1} {\hat E}^{({\bar s})} [l]=(-1)^{{\bar s} - r+1} E^{({\bar s})}[l]=E^{(r-1)} [l].
\]
Finally let $l=i_{r}$. In such case
\[
{\hat E}^{(r-1)} [i_r] =  C^{[r-1, r]} [i_r] {\hat E}^{(r)} = -\sum_{j=1}^n {\hat c}^r_j {\hat E}^{(r)}[j] =-A[r].
\]
since, by definition $C^{[r-1, r],i_r}_j =- {\hat c}^r_j$ and ${\hat c}^r_j$ is non zero if and only if $j=i_r$ or $j\in {\bar I} \cap ]i_r,n]$ is such that $\chi^{i_r}_j=1$. 
\end{proof}

\begin{remark}\label{rem:comp1}\textbf{Comparison with the algebraic construction in \cite{AG1}} 
Lemma \ref{lemma:Asum} and Theorem \ref{theo:mainalg} generalize the algebraic construction in \cite{AG1} for points in $Gr^{\mbox{\tiny TP}} (k,n)$  to any point in $\Grkn$. The main difference is in the choice of the representative matrix: here $A$ is the RREF matrix while in \cite{AG1} we use a matrix in banded form.
\end{remark}

\begin{example}\label{ex:rec}
Let us apply Theorem \ref{theo:mainalg} to the Le--network in Figure \ref{fig:bipex1} representing Example \ref{ex:gr492}. ${\hat E}^{(4)}= Id_{9\times 9}$, the only non--zero coefficients associated to the forth row of $A$ are ${\hat c}^4_7=1$, ${\hat c}^4_8=w_{78}$ and clearly $A[4] = {\hat c}^4{\hat E}^{(4)}$. Then
\[
{\hat E}^{(3)}\equiv C^{[3,4]}= \left( \begin{array}{cc}
I_{6\times 6 } & 0_{6\times 3} \\
0_{3\times 6}               & \begin{array}{ccc}
-1 & -w_{78} & 0 \\
0  & -1      & 0\\
0  & 0 & -1
\end{array}
\end{array}
\right),
\]
where $0_{i\times j}$ is the $(i\times j)$ null matrix, while $I_{l\times l}$ is the $(l\times l)$ identity matrix. The third row coefficient vector is
${\hat c}^3 = (\, 0\, ,0\, ,0\, ,1\, ,w_{45}\, ,w_{45}w_{46}\, ,0\, ,w_{45}w_{46}w_{48}\, ,w_{45}w_{46}w_{48}w_{49}\, )$,
and $A[3] = {\hat c}^3 {\hat E}^{(3)}.$
The transition matrix from the third to the second row $C^{[2,3]}$, the new basis vectors ${\hat E}^{(2)}$ and the coefficient vector ${\hat c}^2$ of the second row respectively are
{\small\[
C^{[2,3]} = \left( \begin{array}{cc}
I_{3\times 3 } & 0_{3\times 6} \\
0_{6\times 3}               & \begin{array}{ccccccc}
-1 & -w_{45} & -w_{45}w_{46} & 0 & -w_{45}w_{46}w_{48} & -w_{45}w_{46}w_{48}w_{49}\\
0  & -1      & -w_{46}       & 0 & -w_{46}w_{48}       & -w_{46}w_{48}w_{49}\\
0  & 0       & -1            & 0 & -w_{48}             & -w_{48}w_{49}\\
0  &  0      & 0             &-1 & 0                   &  0 \\
0  & 0       & 0             & 0 & -1                  & -w_{49}\\
0  &  0      & 0             &0  & 0                   &  -1 
\end{array}
\end{array}
\right),
\]
\[
{\hat E}^{(2)} = C^{[2,3]}{\hat E}^{(3)} =\left( \begin{array}{cc}
Id_{3\times 3 } & 0_{3\times 6} \\
0_{6\times 3}               & \begin{array}{ccccccc}
-1 & -w_{45} & -w_{45}w_{46} & 0 & w_{45}w_{46}w_{48} & w_{45}w_{46}w_{48}w_{49}\\
0  & -1      & -w_{46}       & 0 & w_{46}w_{48}       & w_{46}w_{48}w_{49}\\
0  & 0       & -1            & 0 & w_{48}             & w_{48}w_{49}\\
0  &  0      & 0             & 1 & w_{78}             &  0 \\
0  & 0       & 0             & 0 & 1                  & w_{49}\\
0  &  0      & 0             &0  & 0                  &  1 
\end{array}
\end{array}
\right),
\]}
${\hat c}^2 = (\, 0\, ,1\, ,w_{23}\, ,0\, ,w_{23}w_{25}\, ,0\, ,0\, ,0\, ,0\, )$,
and $A[2] = {\hat c}^2 {\hat E}^{(2)}.$
Finally, the transition matrix from the second to the first row $C^{[1,2]}$, the new basis vectors ${\hat E}^{(1)}$ and the coefficient vector ${\hat c}^1$
of the first row, respectively, are
{\small\[
C^{[1,2]} = \left( \begin{array}{cc}
\begin{array}{ccccc}
1  & 0  & 0       & 0 & 0\\
0  & -1 & -w_{23} & 0 & -w_{23}w_{25} \\
0  & 0  & -1      & 0 & -w_{25}       \\
0  & 0  & 0       & -1&  0            \\
0  & 0  & 0       &  0&  -1            \\
\end{array}
& 0_{5\times 4}\\
0_{5\times 1} & -Id_{4\times 4}
\end{array}
\right),
\]
\[\resizebox{\textwidth}{!}{$
{\hat E}^{(1)} = C^{[1,2]}{\hat E}^{(2)} =\left( \begin{array}{ccccccccc}
1 & 0 & 0       & 0 & 0            & 0 & 0 & 0 & 0\\
0 &-1 & -w_{23} & 0 & w_{23}w_{25} & w_{23}w_{25}w_{46} & 0 & -w_{23}w_{25}w_{46}w_{48} & -w_{23}w_{25}w_{46}w_{48}w_{49}\\
0 &0  & -1 & 0 & w_{25} & w_{25}w_{46} & 0 & -w_{25}w_{46}w_{48} & -w_{25}w_{46}w_{48}w_{49}\\
0 &0  & 0  & 1 & w_{45} & w_{45}w_{46} & 0 & -w_{45}w_{46}w_{48} & -w_{45}w_{46}w_{48}w_{49}\\
0 &0  & 0  & 0 & 1      & w_{46}       & 0 & -w_{46}w_{48}       & -w_{46}w_{48}w_{49}\\
0 &0  & 0  & 0 & 0      & 1            & 0 & -w_{48}             & -w_{48}w_{49}\\
0 &0  & 0  & 0 & 0      & 0            & -1& -w_{78}             & 0\\
0 &0  & 0  & 0 & 0      & 0            & 0 & -1             & -w_{49}\\
0 &0  & 0  & 0 & 0      & 0            & 0 & 0             & -1
\end{array}
\right),$}
\]}
${\hat c}^1 = (\, 1\, ,0\, ,0\, ,0\, ,w_{15}\, ,w_{15}w_{16}\, ,0\, ,0\, ,w_{15}w_{16}w_{19}\, )$,
and $A[1] = {\hat c}^1 {\hat E}^{(1)}.$
\end{example}

\section{Proof of Theorem \ref{theo:exist} on $\Gamma$}
\label{sec:proof}

Throughout this Section we fix the KP regular soliton data, {\sl i.e.} $n$ ordered real phases ${\mathcal K} = \{\kappa_1 < \cdots < \kappa_n\}$ and a point $[A]\in\S\subset \Grkn$, $\dim\S=g$. $I=\{i_1,\ldots,i_k\}$ is the lexicographically minimal base of ${\mathcal M}$ (the pivot set for any matrix $A$ representing $[A]$); $T$ and ${\mathcal N}$ are the Le--tableau and the acyclically oriented bipartite  Le--network representing $[A]$, see \cite{Pos} and, also, Appendix \ref{app:TNN}. Finally, $\Gamma$ is the curve associated to the Le--graph as in Construction~\ref{def:gamma}. 

The idea of the proof of Theorem~\ref{theo:exist} is to extend the vacuum wave function from $\Gamma_0$ to $\Gamma$ using the network to define the values of the wave function at the double points and at the Darboux points so that it admits a good meromorphic extension, and the divisor has the desired properties. At this aim, we first modify the Le-network by adding an edge and an internal vertex in correspondence of each Darboux point in $\Gamma$, and transform the boundary sources into boundary sinks (see Figure~\ref{fig:vertex_added}). 
Then, in Section \ref{sec:vvw} we use the algebraic construction of the previous section to define a system of edge vectors and a vacuum and a dressed edge wave functions on the modified Le--network. In Section~\ref{sec:vacuumdiv} we use the linear relations satisfied by the edge wave functions at the internal vertices to assign to each trivalent white vertex a pair of real numbers, which we call vacuum and dressed network divisor numbers respectively. Finally, in Section \ref{sec:vacdiv}, we extend the normalized vacuum and dressed edge wave functions to $\Gamma$ in such a way that the network divisor numbers become the local coordinates of the pole divisors of such wave functions, and we prove that the vacuum divisor satisfies the reality and regularity conditions of Theorem \ref{theo:exist}.

\begin{figure}
  \centering
  \includegraphics[width=0.6\textwidth]{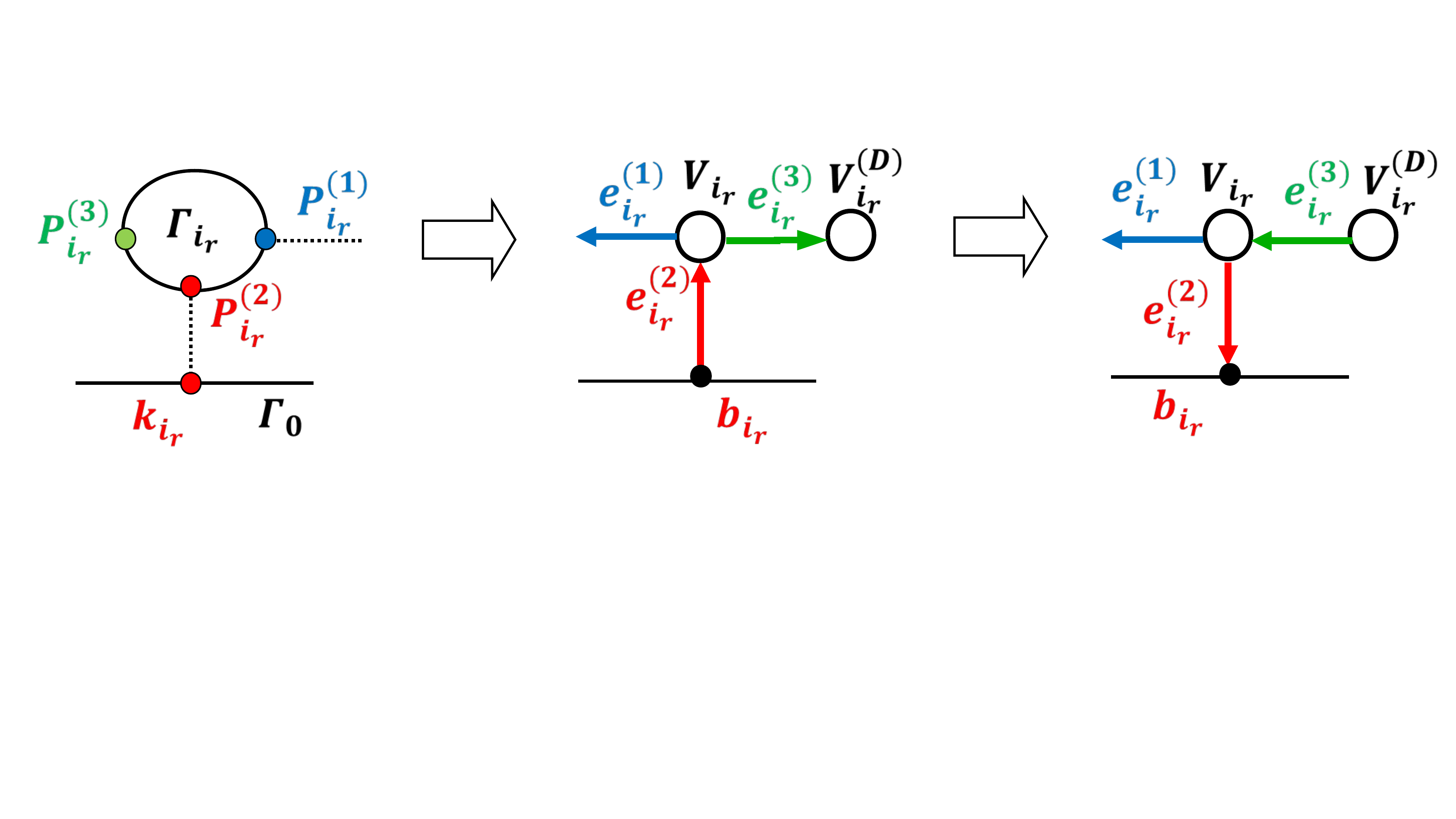}
  \vspace{-2.4 truecm}
  \caption{\footnotesize{\sl The modified graph ${\mathcal N}^{\prime}$ is obtained adding an edge $e^{(3)}_{i_r}$ at each pivot vertex $V_{i_r}$and a white vertex $V_{i_r}^{(D)}$. The orientation in ${\mathcal N}^{\prime}$ is the same as in the initial network ${\mathcal N}$ except at the edge $e^{(2)}_{i_r}$.}}
	\label{fig:vertex_added}
\end{figure}

\begin{definition}\label{def:NTprime}\textbf{The planar oriented trivalent bipartite network ${\mathcal N}^{\prime}$:} 
Denote ${\mathcal N}^{\prime}$ the network obtained from ${\mathcal N}$ adding a unit edge $e^{(3)}_{i_r}$ at each pivot vertex $V_{i_r}$ in the position 
corresponding to the Darboux point $P^{(3)}_{i_r}\in \Gamma_{i_r}$ and let $V^{(D)}_{i_r}$ be the white vertex at the other end of $e^{(3)}_{i_r}$. 
Such move corresponds to Move (M2) - unicolored edge contraction/uncontraction and still represents the same point in the Grassmannian \cite{Pos}. Then orient all edges in ${\mathcal N}^{\prime}$ as in ${\mathcal N}$ except the edges $e^{(2)}_{i_r}$, which point from $V_{i_r}$ to $b_{i_r}$,  and $e^{(3)}_{i_r}$, $r\in [k]$, which point from $V^{(D)}_{i_r}$ to $V_{i_r}$, for any $r\in [k]$ (see Figure~\ref{fig:vertex_added}). 
\end{definition}
For an example of transformation from $\mathcal N$ to $\mathcal N'$ see Figure~\ref{fig:NtoNprime1}.

\begin{figure}
\includegraphics[width=0.45\textwidth]{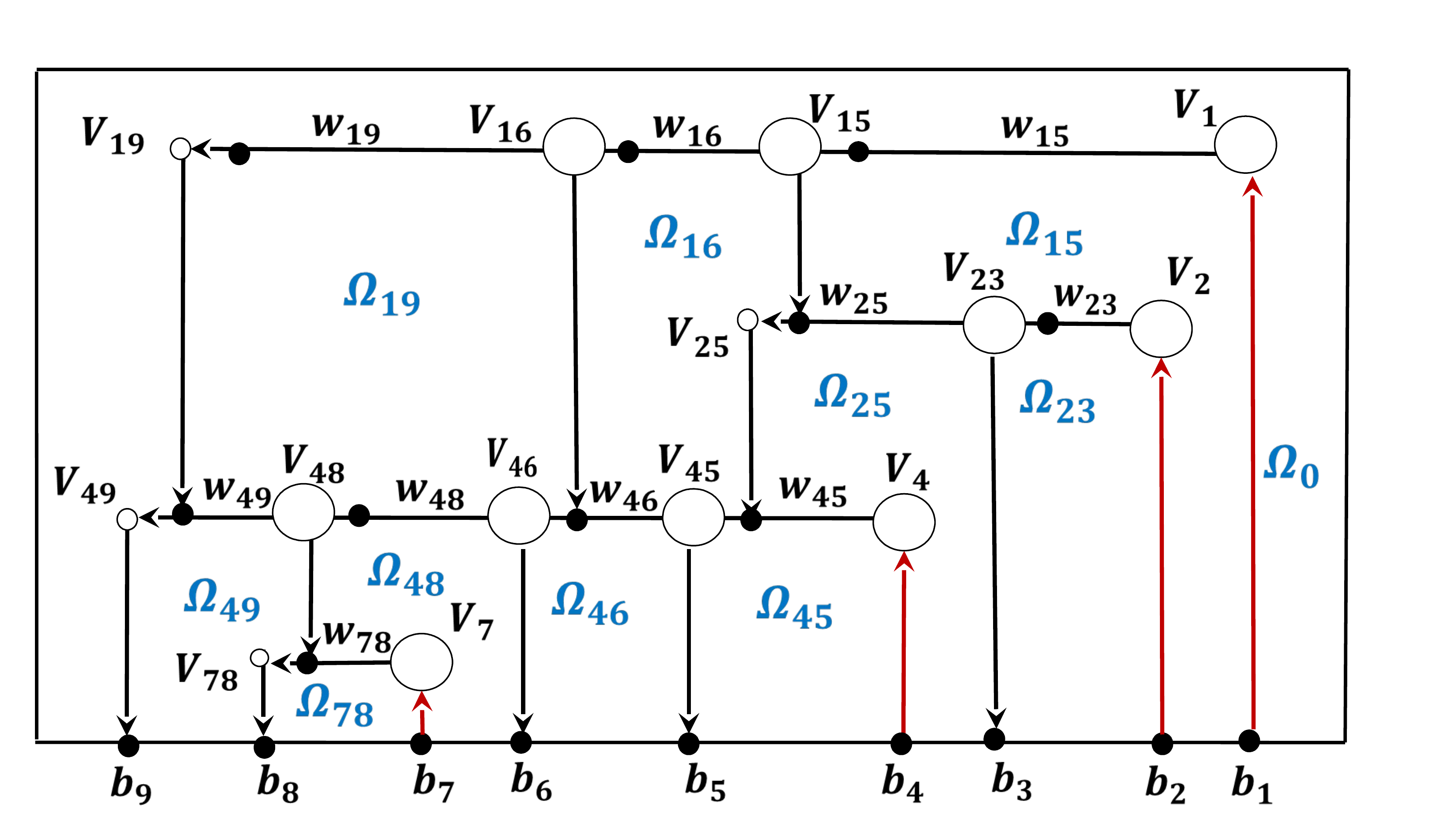}
\includegraphics[width=0.45\textwidth]{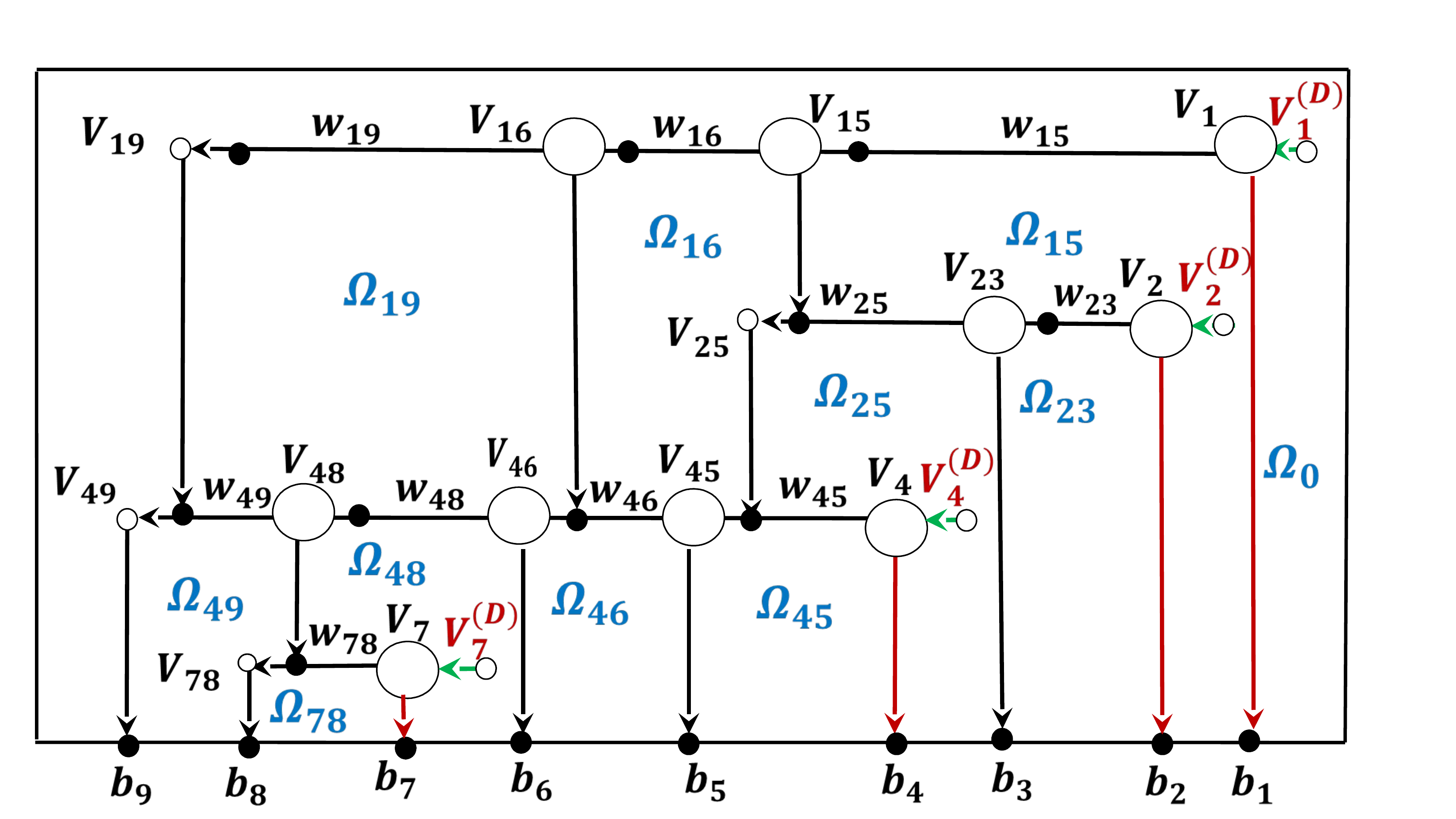}
\caption{\footnotesize{\sl The transformation from the Le-Network $\mathcal N$ (left) to the modified Le-network $\mathcal N'$ (right) for Example \ref{ex:gr492} (see also Figure \ref{fig:lediagex2}).}
\label{fig:NtoNprime1}}
\end{figure}

\begin{remark}\label{rem:datasec5} \textbf{Data and notations.}
From now on we use the modified network ${\mathcal N}^{\prime}$ with the orientation as in Definition \ref{def:NTprime} and, in addition to the assumptions made in the beginning of this Section, we settle the following notations:
\begin{enumerate}
\item $N_r$, $r\in [k]$, is the number of filled boxes in the row $r$ of the Le-tableaux of $A$, 
$\chi^i_j\in\{0,1\}$ is the index of the box $B_{i,j}$ (see (\ref{eq:Nij}) and (\ref{eq:chiindex}));
\item The pivot indexes are denoted $1 \le i_1 < \cdots < i_k\le n$ and, for any $r\in[ k]$,  $1\le j_1 < j_2 < \cdots < j_{N_r} \le n$ are the non--pivot indexes of the boxes $B_{i_r, j_s}$ of index $\chi^{i_r}_{j_s}=1$, $s\in [N_r]$;
\item The index $j_0\equiv 0$ is associated to the vertices, $V_{i_r 0} \equiv V_{i_r}$, and to any quantity referring to them;
\item The edges $e^{(m)}_{i_rj_l}$, $m\in [3]$, are enumerated with the indexes of its incident white vertex $V_{i_r j_l}$, $r\in [k]$ and $l\in [0,N_r]$ (see Figures \ref{fig:markedpoints} and \ref{fig:vertex_added}). Finally, the vertices $V_{i_r j_{N_r}}$ have two edges which we label $e^{(2)}_{i_rj_l}$ and $e^{(3)}_{i_rj_l}$;
\item The families of matrices ${\hat E}^{(r)}$ and vectors ${\hat c}^{(r)}$, $r\in [k]$, are as in Theorem \ref{theo:mainalg};
\item ${\mathfrak E}_\theta (\vec t) = (e^{\theta_1(\vec t)}, \dots, e^{\theta_n(\vec t)})$, where $\theta_j(\vec t) = \sum_{l\ge 1} \kappa_j^{l} t_l$ and $\vec t =(t_1=x,t_2=y,t_3=t, t_4,\dots)$ are the KP times, and 
 $\prec \cdot, \cdot\succ $ denotes the usual scalar product;
\item On each component of $\Gamma$, $\zeta$ is the coordinate of Definition \ref{def:loccoor}. 
\end{enumerate}
In the statements of the Theorems and in the Definitions below we shall not explicitly mention the above data. 
\end{remark}

\subsection{Vacuum and dressed edge wave functions on the modified Le--network}\label{sec:vvw}

We now define a system of vectors  ${\mathfrak E}^{(m)}_{i_r j_{l} }$ on the edges of ${\mathcal N}^{\prime}$ using the vectors introduced in Section~\ref{sec:le}.
First of all, in ${\mathcal N}^{\prime}$, for each fixed $r\in[k]$ and $l\in[0,N_r]$ we assign the row vector $E^{(r)}[j_l]$ to the vertical edge $e^{(2)}_{i_r j_l}$ at the white vertex $V_{i_r j_l}\in {\mathcal N}^{\prime}$: ${\mathfrak E}^{(2)}_{i_r j_{l} }\equiv E^{(r)}[j_l]$. We then assign a vector also at each horizontal edge using linear relations at inner black and white vertices (see  Definition~\ref{def:vertexwf}). The vacuum edge wave function $\Phi^{(m)}_{ij}(\vec t)$ at the edge $e^{(m)}_{ij}\in {\mathcal N}^{\prime}$ is just the product of the edge vector ${\mathfrak E}^{(m)}_{i j}$ with the vector ${\mathfrak E}_\theta (\vec t)$: $\Phi^{(m)}_{ij}(\vec t) = \prec {\mathfrak E}^{(m)}_{i j}, {\mathfrak E}_\theta (\vec t) \succ$. 

\begin{figure}
  \centering
  {\includegraphics[width=0.44\textwidth]{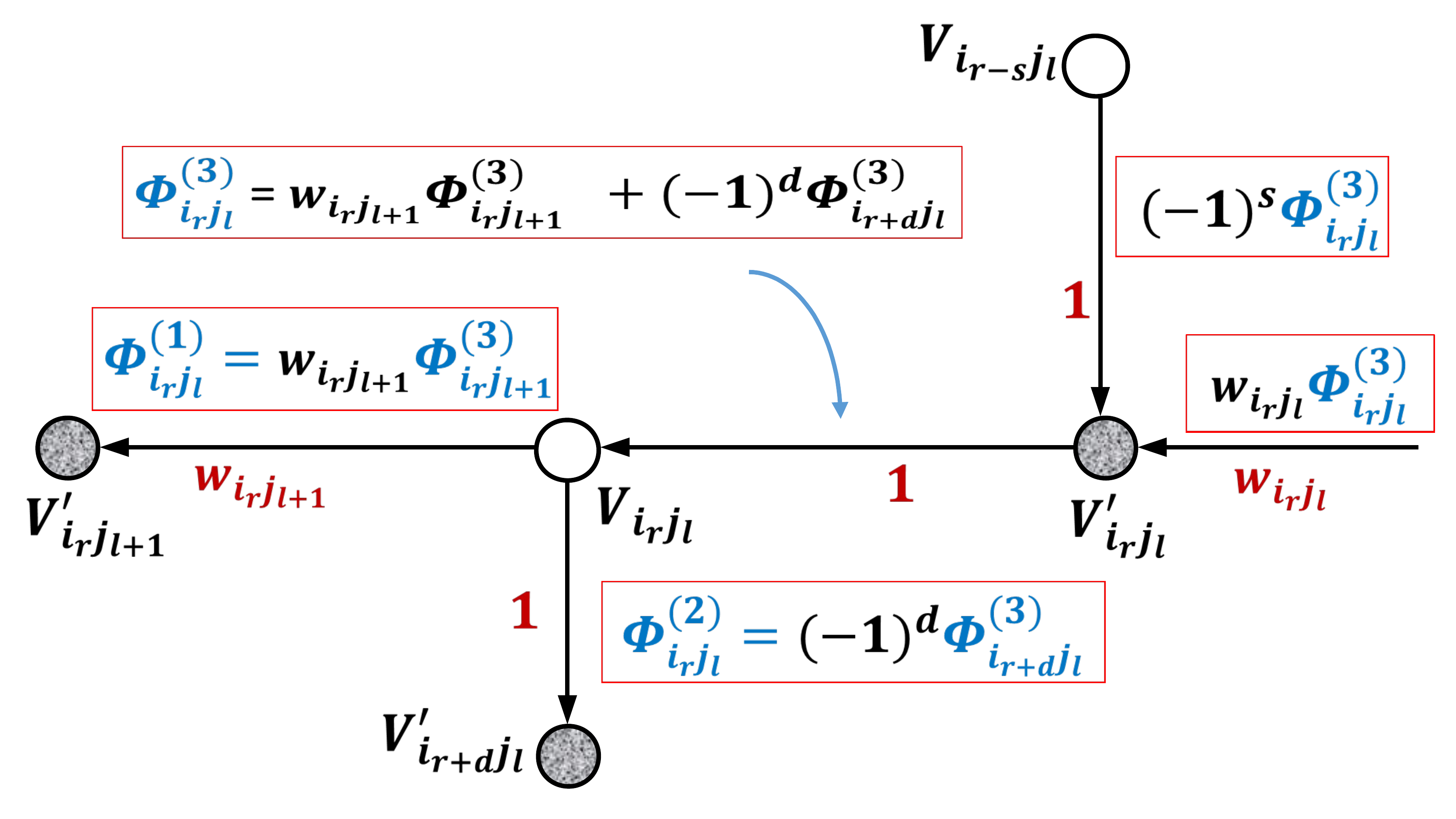}}
  \hfill
	{\includegraphics[width=0.44\textwidth]{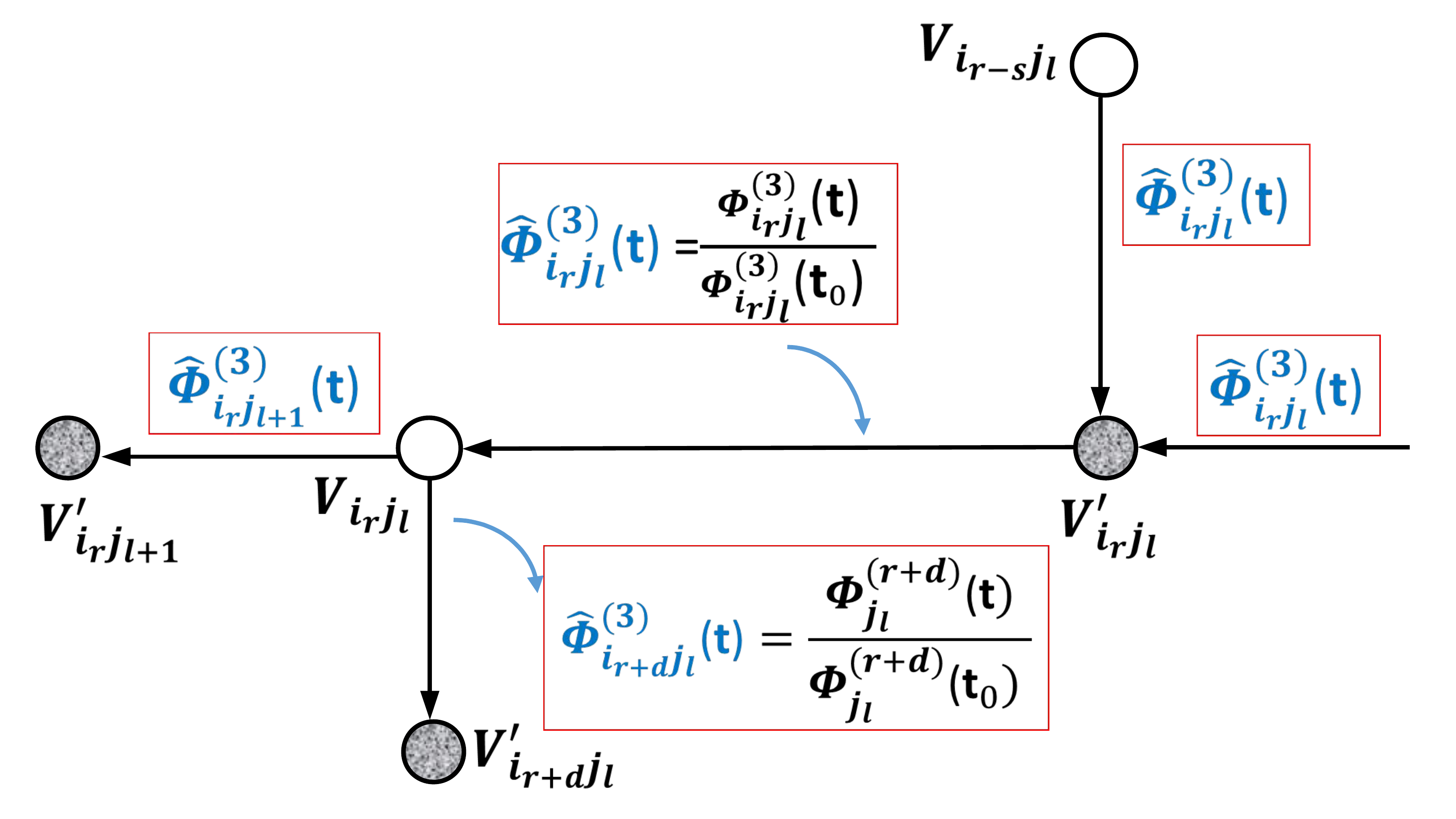}}
	
	\vspace{.8 truecm}
	
	{\includegraphics[width=0.44\textwidth]{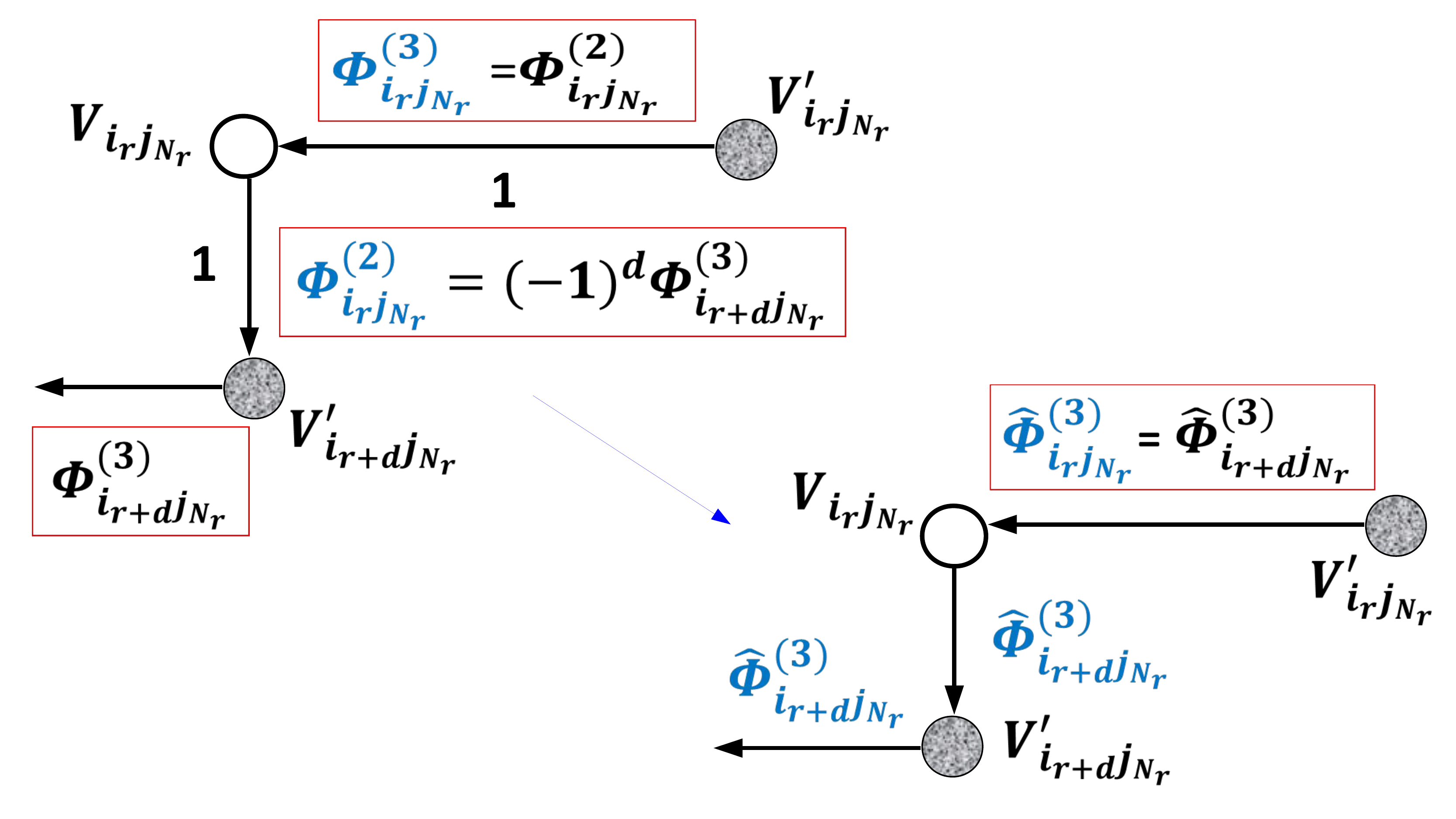}}
  \hfill
	{\includegraphics[width=0.44\textwidth]{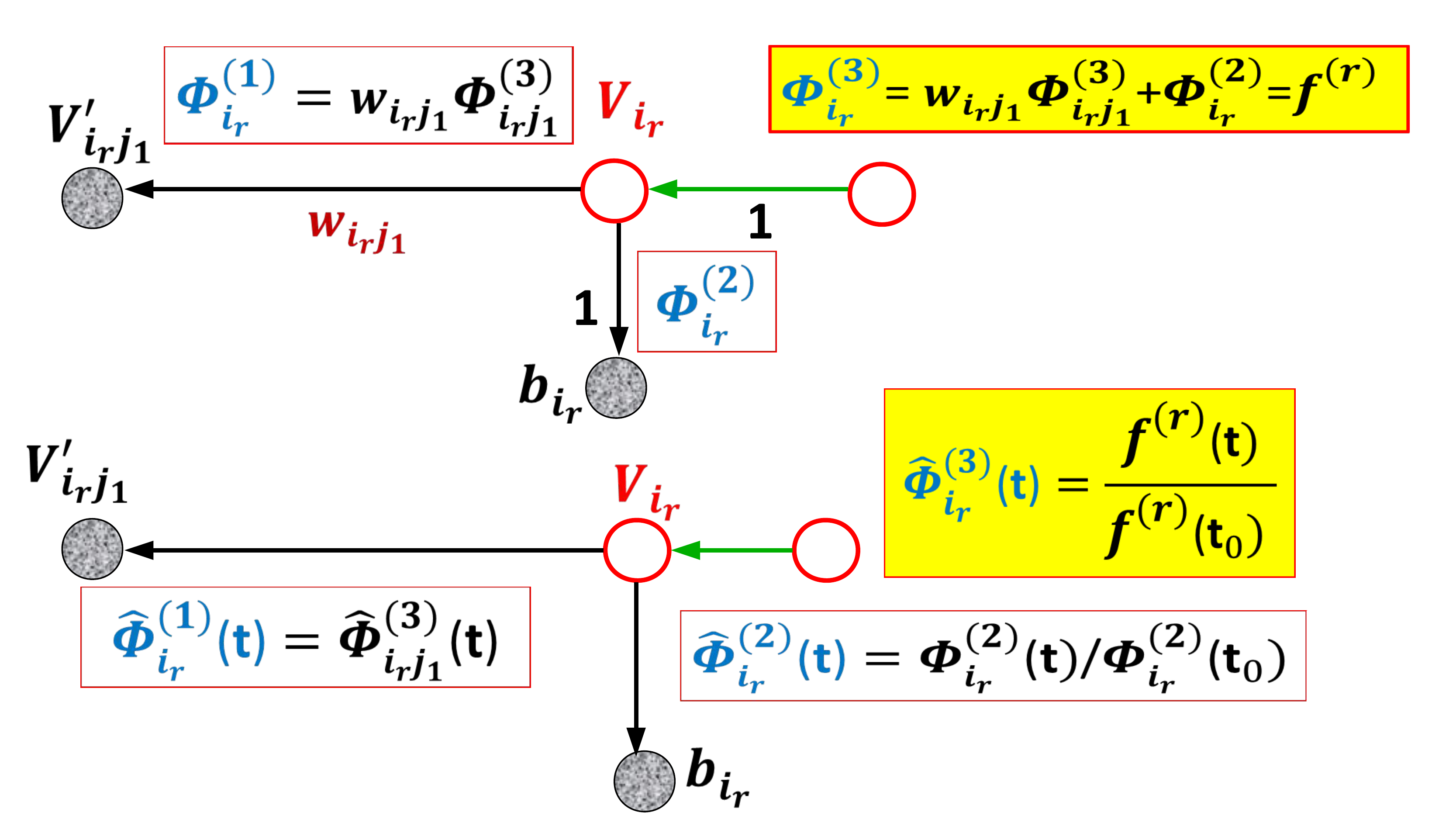}}
	\vspace{-.2 truecm}
  \caption{\footnotesize{\sl The linear relations for the vacuum edge wave function $\Phi^{(m)}_{i_rj_l} (\vec t)$ at white and black vertices [top,left]. The normalized v.e.w. $\hat \Phi^{(m)}_{i_rj_l} (\vec t)$ takes the same value at all the edges at a given black vertex $V^{\prime}_{i_rj_l}$ and different values at the edges at a given trivalent white vertex $V_{i_rj_l}$ [top right]. The same at the bivalent white vertex $V_{i_rj_{N_r}}$ [bottom,left] and at the vertex $V_{i_r}$, $i_r\in I$ [bottom, right]. Also the dressed edge wave function satisfies the same linear relations.}}
	\label{fig:vertexrule}
\end{figure}
Finally, by construction, the edge vector assigned to $e^{(3)}_{i_r}$ is the $r$-th row of the RREF matrix, for any $r\in [k]$, and, therefore, the vacuum edge wave function at $e^{(3)}_{i_r}$ coincides with one of the heat hierarchy solutions generating the Darboux transformation (see Lemma \ref{lemma:Phi3}).

\begin{definition}\label{def:vertexwf}{\bf Edge vectors (e.v.) and vacuum edge wave function (v.e.w.) on ${\mathcal N}^{\prime}$.}
Let the soliton data and the notations be fixed as in Remark \ref{rem:datasec5}. To each edge of ${\mathcal N}^{\prime}$, we associate an edge vector (e.v.) and a vacuum edge wave function (v.e.w.) depending on $\vec t$ as follows:
\begin{enumerate}
\item For any $r\in [k]$, to the vertical edge $e^{(2)}_{i_r}$ joining the white vertex $V_{i_r}$ to the boundary vertex $b_{i_r}$, we assign e.v. and v.e.w.
\begin{equation}\label{eq:Phi_i}
{\mathfrak E}^{(2)}_{i_r} = {\hat E}^{(k)} [i_r],\quad\quad \Phi^{(2)}_{i_r} (\vec t) \equiv \prec {\hat E}^{(k)} [i_r], {\mathfrak E}_\theta (\vec t)\succ = e^{\theta_{i_r}(\vec t)};
\end{equation} 
\item For any fixed $r\in [k]$ and $l\in [N_r]$, to the vertical edge $e^{(2)}_{i_rj_l}$ at the white vertex $V_{i_r j_l}$ we assign
\begin{equation}\label{eq:Phi_2}
{\mathfrak E}^{(2)}_{i_r j_l} = {\hat E}^{(r)} [j_l],
\quad\quad \Phi^{(2)}_{i_r j_l} (\vec t) \equiv \prec {\mathfrak E}^{(2)}_{i_r j_l}, {\mathfrak E}_\theta (\vec t)\succ;
\end{equation} 
\item For any $r\in [k]$, to the horizontal edge $e^{(3)}_{i_rj_{N_r}}$ joining the black vertex $V^{\prime}_{i_r j_{N_r}}$ to the white vertex $V_{i_r j_{N_r}}$ of the $r$-th row we assign:  
\begin{equation}\label{eq:Phi_3N}
{\mathfrak E}^{(3)}_{i_r j_{N_r}} = {\mathfrak E}^{(2)}_{i_r j_{N_r}},
\quad\quad \Phi^{(3)}_{i_r j_{N_r}} (\vec t) \equiv \prec {\mathfrak E}^{(3)}_{i_r j_{N_r}}, {\mathfrak E}_\theta (\vec t)\succ 
= \Phi^{(2)}_{i_r j_{N_r}}(\vec t)\end{equation}
\item For any $r\in [k]$, $l\in [0,N_r-1]$, to the horizontal edge $e^{(1)}_{i_rj_{l}}$ joining the white vertex $V_{i_r j_{l}}$ to the black vertex $V^{\prime}_{i_r j_{l+1}}$ we assign
\begin{equation}\label{eq:Phi_1}
{\mathfrak E}^{(1)}_{i_r j_{l}} = w_{i_r j_{l+1}} {\mathfrak E}^{(3)}_{i_r j_{l+1}},
\quad\quad \Phi^{(1)}_{i_r j_l} (\vec t) \equiv \prec {\mathfrak E}^{(1)}_{i_r j_{l}}, {\mathfrak E}_\theta (\vec t)\succ =w_{i_r j_{l+1}}\Phi^{(3)}_{i_r j_{l+1}}(\vec t), 
\end{equation}
where $w_{i_r j_1}$ is the weight of $e^{(1)}_{i_rj_{l}}$. Here $j_0=0$;
\item For any $r\in [k]$, $l\in [0,N_r-1]$, to the edge  $e^{(3)}_{i_rj_l}$ joining the black vertex $V^{\prime}_{i_r j_{l}}$ to the white vertex $V_{i_r j_{l}}$ we assign
\begin{equation}\label{eq:Phi_3}
{\mathfrak E}^{(3)}_{i_r j_{l}} = {\mathfrak E}^{(1)}_{i_r j_{l}} + {\mathfrak E}^{(2)}_{i_r j_{l}},
\quad\quad \Phi^{(3)}_{i_r j_l} (\vec t) \equiv \prec {\mathfrak E}^{(3)}_{i_r j_{l}}, {\mathfrak E}_\theta (\vec t)\succ = \Phi^{(1)}_{i_r j_l} (\vec t)+\Phi^{(2)}_{i_r j_l} (\vec t), 
\end{equation}
{\sl i.e.} the sum of the contributions from the edges to the left and below $V_{i_r j_{l}}$.
\end{enumerate}
\end{definition}

We illustrate Definition \ref{def:vertexwf} in Figure \ref{fig:vertexrule}.
\begin{remark}\textbf{Edge vectors and oriented paths in ${\mathcal N}^{\prime}$.} It is easy to check that, for any given edge $e$, the absolute value of the $j$--th component of the vector at $e$ is simply the sum of the weights of all the paths starting at $e$ and having destination $b_j$ and the sign of such component depends only on the number of boundary sources in $]i_r,j[$, where $r\in [k]$ is the biggest index such that there exists a walk from the Darboux source vertex $V^{(D)}_{i_r}$ to $e$. Recall that the acyclic orientation of $\mathcal N^{\prime}$ is associated to the lexicographically minimal base $I$; therefore there is a canonical way to count the number of Darboux sources for any path from an internal edge $e$ to a destination $b_j$, and it is also straightforward to check that all paths from $e$ to $b_j$ are assigned the same sign.
\end{remark}

\begin{definition}\label{def:dress_edgewf}{\bf The dressed edge wave function (d.e.w.) on ${\mathcal N}^{\prime}$.}
In the setting of Definition \ref{def:vertexwf}, for any fixed $r\in [k]$ and $l\in [0,N_r]$, we define the dressed edge wave function $\Psi^{(m)}_{i_r j_l} (\vec t)$  on the edge $e^{(m)}_{i_r j_l}\in {\mathcal N}^{\prime}$ as the dressing of the v.e.w.  $\Phi^{(m)}_{i_r j_l} (\vec t)$
\begin{equation}
\Psi^{(m)}_{i_r j_l} (\vec t) = {\mathfrak D} \Phi^{(m)}_{i_r j_l} (\vec t),
\end{equation}
where ${\mathfrak D}$ is the Darboux transformation associated to the given soliton data $({\mathcal K}, [A])$.
\end{definition}

\begin{remark}
In \cite{AG2}, we generalize the construction of edge vectors (and edge wave functions) to any network representing $[A]$ in Postnikov class and adapt the construction in \cite{Pos,Tal2} to express the components of the edge vectors in case of non acyclic orientations. In the latter case one has to deal with sums over infinite number if paths.
\end{remark}

By construction, the system of edge vectors ${\mathfrak E}^{(m)}_{i_r,j_l}$, the v.e.w. $\Phi^{(m)}_{i_r j_l} (\vec t)$ and the d.e.w. $\Psi^{(m)}_{i_r j_l} (\vec t)$ on ${\mathcal N}^{\prime}$ solve the following linear system at the inner vertices of ${\mathcal N}^{\prime}$ under suitable boundary conditions.

\begin{lemma}\label{lemma:lin_sys}\textbf{The linear system in ${\mathcal N}^{\prime}$}
Let $G_j (\vec t)$, $j\in [n]$, be smooth functions in $\vec t$. Then there exists a unique function $G_{e} (\vec t)$ defined on the edges $e$ of ${\mathcal N}^{\prime}$ satisfying for all $\vec t$:
\begin{enumerate}
\item For any $r\in [k]$ and $l\in [0,N_r]$, on the unit edge $e^{(3)}_{i_r j_l}$ pointing in at the white vertex $V_{i_rj_l}$,  define $G_{e^{(3)}_{i_r,j_l}}$ as the sum of the values of $G_{e}$ on the edges $e=e^{(1)}_{i_rj_l},e^{(2)}_{i_rj_l}$ pointing out at $V_{i_rj_l}$:
\[
G_{e^{(3)}_{i_r,j_l}} (\vec t) =G_{e^{(1)}_{i_rj_l}} (\vec t) +G_{e^{(2)}_{i_rj_l}} (\vec t);
\]
\item For any $r\in [k]$ and $l\in [0,N_r-1]$, on the horizontal edge $e^{(1)}_{i_r j_{l}}$ of weight $w_{i_r j_l}$ pointing in at the black vertex $V^{\prime}_{i_r j_{l+1}}$, define $G_{e^{(1)}_{i_r,j_l}}$ as
\[
G_{e^{(1)}_{i_r,j_l}} (\vec t)= w_{i_r j_{l+1}} G_{e^{(3)}_{i_r,j_{l+1}}} (\vec t),
\]
where $e^{(3)}_{i_r,j_{l+1}}$ is the unique edge pointing out at $V^{\prime}_{i_r j_{l+1}}$;
\item For any $r\in [k]$, the unit vertical edge $e^{(2)}_{i_r}$ joins the white vertex $V_{i_r}$ to the boundary vertex $b_{i_r}$. Define
\[
G_{e^{(2)}_{i_r j_l}} (\vec t) = G_{i_r} (\vec t);
\]
\item For any $r\in [N_r]$, if the unit vertical edge $e^{(2)}_{i_r, j_l}$ joins the white vertex $V_{i_r,j_l}$ to the boundary vertex $b_{j_l}$, define
\[
G_{e^{(2)}_{i_r j_l}} (\vec t) = G_{j_l} (\vec t).
\]
Otherwise, the black vertex $V^{\prime}_{i_{{\bar s}} j_l}$, with ${\bar s} = \mbox{ min } S$, with $S= \{ s\in [r+1,k] \, : \, \chi^{i_s}_{j_l}=1\, \}\not = \emptyset$ as in the proof of Theorem \ref{theo:mainalg}, is the ending vertex of $e^{(2)}_{i_r j_l}$, $d=d(i_r j_l) = \bar s - r $ is the number of pivot indexes in the interval $]i_r,j_l[$ and define
\[
G_{e^{(2)}_{i_r j_l}} (\vec t) = (-1)^d G_{e^{(3)}_{i_{r+d} j_l}} (\vec t).
\]
\end{enumerate}
In particular, if we assign the boundary conditions $G_j (\vec t) \equiv \Phi^{(2)}_{j} (\vec t)$, (respectively $G_j (\vec t) \equiv \Psi^{(2)}_{j} (\vec t)$), for all $j\in [n]$, then the edge function coincides with the v.e.w. of Definition \ref{def:vertexwf} (respectively with the d.e.w. of Definition \ref{def:dress_edgewf}).
\end{lemma}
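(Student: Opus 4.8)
The statement has two parts: existence and uniqueness of the edge function $G_e(\vec t)$ for arbitrary smooth boundary data $G_j(\vec t)$, and the identification of this function with the vacuum (resp. dressed) edge wave function when $G_j\equiv\Phi^{(2)}_j$ (resp. $G_j\equiv\Psi^{(2)}_j$). I would settle the first part by a propagation argument using the acyclicity of ${\mathcal N}^{\prime}$, and the second by matching the prescribed relations against the defining identities of Definitions~\ref{def:vertexwf} and~\ref{def:dress_edgewf}; the only point requiring real input is the vertical rule (4), which is exactly where Theorem~\ref{theo:mainalg} enters.

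For existence and uniqueness, the key remark is that each of the relations (1)--(4) determines the value of $G_e$ on an edge $e$ \emph{pointing into} a vertex $V$ of ${\mathcal N}^{\prime}$ in terms of the values of $G_e$ on the edges pointing \emph{out} of $V$: relation (1) determines the edge $e^{(3)}_{i_r j_l}$ entering the white vertex $V_{i_r j_l}$, relation (2) determines the edge $e^{(1)}_{i_r j_l}$ entering the black vertex $V^{\prime}_{i_r j_{l+1}}$, and relations (3)--(4) determine the vertical edges $e^{(2)}_{i_r j_l}$, either directly from the boundary data or in terms of the out-edge of the black vertex at which they arrive. A direct inspection shows that every edge of ${\mathcal N}^{\prime}$ is of exactly one of the types $e^{(1)},e^{(2)},e^{(3)}$ and appears as the determined quantity in precisely one such relation. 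Since ${\mathcal N}^{\prime}$ is acyclically oriented with all boundary vertices as sinks and the added Darboux vertices $V^{(D)}_{i_r}$ as its only sources, I would process the vertices in reverse topological order: the edges incident to the boundary sinks receive the values $G_{i_r}$, $G_{j_l}$ from (3) and the first case of (4); then at each successive vertex all out-edges are already known, so (1), (2) and the second case of (4) assign the in-edges a unique value. This both constructs an edge function and shows it is the unique one.

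For the identification with the v.e.w., relations (1), (2), (3) are literally the identities~(\ref{eq:Phi_3}), (\ref{eq:Phi_1}), (\ref{eq:Phi_i}) (together with~(\ref{eq:Phi_3N})), and the first case of (4), where $e^{(2)}_{i_r j_l}$ runs directly down to $b_{j_l}$, simply says that $V_{i_r j_l}$ is then the lowest white vertex of column $j_l$, so $\Phi^{(2)}_{j_l}$ \emph{is} $\Phi^{(2)}_{i_r j_l}$. The substantive point is the second case of (4): when $e^{(2)}_{i_r j_l}$ descends column $j_l$ through the empty boxes of the intermediate rows and reaches the black vertex $V^{\prime}_{i_{\bar s} j_l}$, one must verify ${\mathfrak E}^{(2)}_{i_r j_l}=(-1)^{\bar s-r}\,{\mathfrak E}^{(3)}_{i_{\bar s} j_l}$. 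I would obtain this from Theorem~\ref{theo:mainalg}: by~(\ref{eq:Er}), ${\mathfrak E}^{(2)}_{i_r j_l}={\hat E}^{(r)}[j_l]$, and iterating ${\hat E}^{(s-1)}=C^{[s-1,s]}{\hat E}^{(s)}$ from $s=\bar s$ down to $s=r+1$ across columns with empty boxes produces exactly the $\bar s-r$ sign flips recorded in the transition matrices~(\ref{eq:Ckk-1}), while at row $\bar s$ the splitting of the downward walk into the two out-edges of $V_{i_{\bar s} j_l}$ is the relation~(\ref{eq:Phi_3}) for ${\mathfrak E}^{(3)}_{i_{\bar s} j_l}$. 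Equivalently this is immediate from the combinatorial reading of edge vectors recalled after Definition~\ref{def:vertexwf}: the $j$-th entry of ${\mathfrak E}^{(2)}_{i_r j_l}$ is the signed sum of weights of the downward paths from $V_{i_r j_l}$ to $b_j$, each such path factors through $V^{\prime}_{i_{\bar s} j_l}$, and the sign discrepancy between $(-1)^{\sigma_{i_r j}}$ and $(-1)^{\sigma_{i_{\bar s} j}}$ equals $(-1)^{\bar s-r}$ because $\sigma_{i_r j}-\sigma_{i_{\bar s} j}$ counts the pivots in $]i_r,i_{\bar s}]$.

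By the uniqueness already proved, $\{\Phi^{(m)}_{e}\}$ is therefore the edge function attached to $G_j\equiv\Phi^{(2)}_j$. Since $\Psi^{(m)}_{e}={\mathfrak D}\Phi^{(m)}_{e}$ with ${\mathfrak D}$ a differential operator in $x=t_1$ whose coefficients do not depend on the network, applying ${\mathfrak D}$ to the identities (1)--(4) — all of which have $\vec t$-independent coefficients — shows that $\{\Psi^{(m)}_{e}\}$ satisfies the same system with boundary data ${\mathfrak D}\Phi^{(2)}_j=\Psi^{(2)}_j$, hence it is the associated edge function. I expect the main obstacle in the whole argument to be precisely the sign bookkeeping in the second case of (4), i.e. controlling the effect of the vertical segments across empty boxes, which is exactly the content of the transition matrices $C^{[r-1,r]}$ of Theorem~\ref{theo:mainalg}.
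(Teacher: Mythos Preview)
Your proposal is correct and follows essentially the same approach as the paper: both arguments hinge on the acyclicity of ${\mathcal N}^{\prime}$ to show the system can be solved recursively and uniquely from the boundary data. The paper supplements this with a count of variables versus equations to confirm that exactly $n$ values may be freely assigned, whereas you go directly to the reverse-topological propagation; either route suffices. You are also more explicit than the paper about the ``in particular'' clause, and your verification of rule~(4) via the transition matrices of Theorem~\ref{theo:mainalg} (or equivalently the path-counting sign rule) is exactly the right justification.
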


\begin{proof}
First let ${\mathcal N}^{\prime}$ correspond to an irreducible positroid cell $\S \subset \Grkn$ of dimension $g$.
Then, the number of variables in the linear system defined in the above Lemma is equal to the number of edges, $3g+2k$ ($g+k$ vertical edges and $2g+k$ horizontal edges). 
Any trivalent black vertex carries two equations, while each bivalent black vertex carries one condition. There are $g$ internal black vertices, and $g-n+k$ of them are trivalent. Therefore the total number of equations at black vertices is $2g-n+k$.  The total number of equations at white vertices is $g+k$, since the total number of internal white vertices is $g+2k$, but the univalent Darboux vertices on ${\mathcal N}^{\prime}$ do not carry any extra condition. So, we may freely impose a value to  $n$ variables (edges).

The presence of an isolated boundary sink $b_j$ implies the addition of an internal univalent vertex joined to $b_j$: we have a new variable (the univalent edge) and no extra condition. The presence of an isolated boundary source $b_i$ implies the addition of a bivalent vertex $V_i$ and of an univalent Darboux vertex $V^{(D)}_i$: we have two variables and one equation.

The system is well-defined and the solution is unique because the network is acyclic. Finally the system can be solved recurrently.
\end{proof}

\begin{remark}
We remark that the linear relations satisfied by the edge vectors, the vacuum edge wave function and its dressing at the black and white vertices (see Definition \ref{def:vertexwf} and Figure \ref{fig:vertexrule}) are of the same type as those imposed by momentum conservation at trivalent vertices of on--shell diagrams in \cite{AGP1,AGP2} (formulas (4.54) and (4.55) in \cite{AGP1}). 
\end{remark}

Comparing Theorem \ref{theo:mainalg} and Definition \ref{def:vertexwf}, it is not difficult to prove that the e.v. at the Darboux edge $e^{(3)}_{i_r}$ coincides with the $r$--th row of the RREF matrix $A$ (see equation \ref{eq:Er0} below); therefore the v.e.w. at the same edge is $f^{(r)}(\vec t)$ as required in Theorem~\ref{theo:exist}. 

\begin{lemma}\label{lemma:Phi3}
Let the soliton data and the notations be fixed as in Remark \ref{rem:datasec5}. Let the e.v. and v.e.w. be as in Definition \ref{def:vertexwf}. Then,
\begin{enumerate}
\item for any $r\in [k]$ and $l\in [N_r]$, the edge vector at the edge joining $V_{i_r j_l}$, $V^{\prime}_{i_r j_l}$ is 
\begin{equation}\label{eq:Erj}
{\mathfrak E}^{(3)}_{i_r j_l} = \sum_{s=l}^{N_r} {\hat c}^r_{j_s} {\hat E}^{(r)} [j_s];
\end{equation} 
\item for any $r\in [k]$ the edge vector at the edge $e^{(3)}_{i_r }$ is the $r$--th row of the RREF matrix representing $[A]\in \Grkn$ and the v.e.w. is the $r$--th generator of the dressing transformation associated to the soliton data $({\mathcal K}, [A])$
\begin{equation}\label{eq:Er0}
{\mathfrak E}^{(3)}_{i_r} = A[r],
\quad\quad
\Phi^{(3)}_{i_r} (\vec t) = f^{(r)} (\vec t).
\end{equation}
Therefore the d.e.w. satisfies $\Psi^{(3)}_{i_r} (\vec t)\equiv 0$, for all $r\in [k]$ and for all $\vec t$.
\end{enumerate}
\end{lemma}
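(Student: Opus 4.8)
The statement has two parts, and the second follows quickly from the first together with Theorem~\ref{theo:mainalg}, so the plan is to establish \eqref{eq:Erj} first and then read off \eqref{eq:Er0}. I would prove \eqref{eq:Erj} by descending induction on $l$, from $l = N_r$ down to $l = 1$, with $r$ fixed. The base case $l = N_r$ is immediate: by \eqref{eq:Phi_3N} we have ${\mathfrak E}^{(3)}_{i_r j_{N_r}} = {\mathfrak E}^{(2)}_{i_r j_{N_r}} = {\hat E}^{(r)}[j_{N_r}]$, and since ${\hat c}^r_{j_{N_r}} = c^r_{j_{N_r}}$ is by construction the coefficient attached to that single term, the right-hand side of \eqref{eq:Erj} for $l = N_r$ is exactly ${\hat c}^r_{j_{N_r}} {\hat E}^{(r)}[j_{N_r}]$. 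Wait --- I need to double-check the normalization: \eqref{eq:Erj} as written has ${\hat c}^r_{j_s}$ as coefficients, so I should verify that the edge-vector recursion in Definition~\ref{def:vertexwf} reproduces precisely these weights, which amounts to matching the weight $w_{i_r j_{l+1}}$ of the horizontal edge $e^{(1)}_{i_r j_l}$ against the ratio ${\hat c}^r_{j_{l+1}}/{\hat c}^r_{j_l}$ coming from \eqref{eq:coeff} and \eqref{eq:coelastrow}.

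For the inductive step, assume \eqref{eq:Erj} holds for index $l+1$. By the horizontal rule \eqref{eq:Phi_1}, ${\mathfrak E}^{(1)}_{i_r j_l} = w_{i_r j_{l+1}} {\mathfrak E}^{(3)}_{i_r j_{l+1}}$, so applying the inductive hypothesis gives ${\mathfrak E}^{(1)}_{i_r j_l} = w_{i_r j_{l+1}} \sum_{s=l+1}^{N_r} {\hat c}^r_{j_s} {\hat E}^{(r)}[j_s]$. Then the black-to-white rule \eqref{eq:Phi_3} yields ${\mathfrak E}^{(3)}_{i_r j_l} = {\mathfrak E}^{(1)}_{i_r j_l} + {\mathfrak E}^{(2)}_{i_r j_l} = w_{i_r j_{l+1}} \sum_{s=l+1}^{N_r} {\hat c}^r_{j_s} {\hat E}^{(r)}[j_s] + {\hat E}^{(r)}[j_l]$, using \eqref{eq:Phi_2} for the vertical edge. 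It remains to check the bookkeeping identity $w_{i_r j_{l+1}} {\hat c}^r_{j_s} = {\hat c}^r_{j_s}$ shifted appropriately --- more precisely, that $w_{i_r j_{l+1}} \cdot {\hat c}^r_{j_s}$ equals the coefficient ${\hat c}^r_{j_s}$ when the path from $b_{i_r}$ to $V_{i_r j_s}$ is routed through $V_{i_r j_{l+1}}$, and that the ``$+{\hat E}^{(r)}[j_l]$'' term carries coefficient ${\hat c}^r_{j_l} = 1$ in the normalization at the white vertex $V_{i_r j_l}$ itself; i.e. that $c^r_{j_s} = w_{i_r j_{l+1}} \cdot (\text{weight of path } V_{i_r j_{l+1}} \to V_{i_r j_s})$, which is just the multiplicativity of path weights along the horizontal line of row $r$. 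This reduces \eqref{eq:Erj} to the definition \eqref{eq:coeff} of $c^r_l$ as the product of edge weights along $P_{i_r l}$.

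For part (2): apply \eqref{eq:Erj} with $l = 1$ (or directly handle $l=0$ via \eqref{eq:Phi_3} at $V_{i_r}$, noting $j_0 = 0$ and ${\hat c}^r_{i_r} = 1$), obtaining ${\mathfrak E}^{(3)}_{i_r} = \sum_{s=0}^{N_r} {\hat c}^r_{j_s} {\hat E}^{(r)}[j_s] = {\hat E}^{(r)}[i_r] + \sum_{s=1}^{N_r} c^r_{j_s} {\hat E}^{(r)}[j_s]$, which is precisely the right-hand side of \eqref{eq:Asum} in Theorem~\ref{theo:mainalg}, hence equals $A[r]$. Then $\Phi^{(3)}_{i_r}(\vec t) = \prec A[r], {\mathfrak E}_\theta(\vec t) \succ = \sum_{j=1}^n A^r_j e^{\theta_j(\vec t)} = f^{(r)}(\vec t)$, the $r$-th generator of the Darboux transformation. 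Finally, since ${\mathfrak D} f^{(r)} = 0$ by the very definition \eqref{eq:D} of ${\mathfrak D} = W \partial_x^k$ (the $f^{(r)}$ span the kernel of ${\mathfrak D}$), we get $\Psi^{(3)}_{i_r}(\vec t) = {\mathfrak D} \Phi^{(3)}_{i_r}(\vec t) = {\mathfrak D} f^{(r)}(\vec t) \equiv 0$. The only genuinely delicate point is the coefficient-matching in the inductive step --- making sure the recursively defined edge vectors carry exactly the weights ${\hat c}^r_{j_s}$ and not some rescaled version --- and this is where one must be careful about which edge the weight $w_{i_r j_{l+1}}$ is attached to (the horizontal edge entering the black vertex $V^{\prime}_{i_r j_{l+1}}$, per \eqref{eq:Phi_1}) versus the unit weights on the vertical and Darboux edges; everything else is a routine unwinding of Definition~\ref{def:vertexwf} against Theorem~\ref{theo:mainalg}.
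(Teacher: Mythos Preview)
Your approach---descending induction on $l$ using the recursion \eqref{eq:Phi_3N}--\eqref{eq:Phi_3}, then invoking \eqref{eq:Asum} from Theorem~\ref{theo:mainalg} for part~(2)---is exactly what the paper does; its proof is the one-line instruction to compare \eqref{eq:Erj}--\eqref{eq:Er0} with \eqref{eq:coeff}, \eqref{eq:coelastrow}, \eqref{eq:Asum} and \eqref{eq:Phi_i}--\eqref{eq:Phi_3}.

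Your worry about the normalization is well placed, and you should actually carry out the check you flag rather than leaving it as ``it remains to verify''. If you unroll the recursion ${\mathfrak E}^{(3)}_{i_r j_l} = {\hat E}^{(r)}[j_l] + w_{i_r j_{l+1}} {\mathfrak E}^{(3)}_{i_r j_{l+1}}$ you get
\[
{\mathfrak E}^{(3)}_{i_r j_l} \;=\; \sum_{s=l}^{N_r}\Big(\prod_{m=l+1}^{s} w_{i_r j_m}\Big)\,{\hat E}^{(r)}[j_s]
\;=\; \frac{1}{{\hat c}^r_{j_l}}\sum_{s=l}^{N_r} {\hat c}^r_{j_s}\,{\hat E}^{(r)}[j_s],
\]
since ${\hat c}^r_{j_s}=c^r_{j_s}=\prod_{m=1}^{s}w_{i_r j_m}$ by \eqref{eq:coeff}, \eqref{eq:coelastrow}. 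So \eqref{eq:Erj} as printed is off by the overall factor ${\hat c}^r_{j_l}$; your base case does \emph{not} literally hold (${\hat c}^r_{j_{N_r}}\neq 1$ in general), and the worked Example~\ref{ex:vertexvector} confirms the rescaled version (the $5$th component of ${\mathfrak E}^{(3)}_{45}$ is $1$, not $w_{45}={\hat c}^3_5$). This is a cosmetic slip in the statement, not in your method: at $l=0$ the normalizing factor is ${\hat c}^r_{i_r}=1$, so ${\mathfrak E}^{(3)}_{i_r}=A[r]$ exactly, and your derivation of $\Phi^{(3)}_{i_r}=f^{(r)}$ and $\Psi^{(3)}_{i_r}\equiv 0$ via $\ker{\mathfrak D}$ is correct as written.
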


The proof of the Lemma easily follows comparing (\ref{eq:Erj}) and (\ref{eq:Er0}) with (\ref{eq:coeff}), (\ref{eq:ECE}), (\ref{eq:coelastrow}) and (\ref{eq:Asum}) in Theorem \ref{theo:mainalg} and with (\ref{eq:Phi_i})--(\ref{eq:Phi_3}) in Definition \ref{def:vertexwf}.

As remarked in the proof of Lemma~\ref{lemma:lin_sys}, the edge vectors and vacuum and dressed edge wave functions on ${\mathcal N}^{\prime}$ may be recursively computed using Theorem \ref{theo:mainalg} starting from the last row of the corresponding Le--diagram ($r=k$) and moving up decreasing the index $r$ till $r=1$. For simplicity, we illustrate the algorithm for the edge vectors only (see also Figure \ref{fig:vertexrule} for the edge wave functions).

\textbf{Algorithm for the edge vectors:}\label{alg:intver}
For any $r\in [k]$:
\begin{enumerate}
\item If $N_r=0$, {\sl i.e.} all the boxes of the $r$--th row of the Le--diagram contain zeros, assign to the white vertex $V_r$ the vector ${\hat E}^{(r)}[i_r]\equiv {\hat E}^{(k+1)}[i_r]$ and proceed to (\ref{item:pippo});
\item Otherwise:
\begin{enumerate}
\item\label{item:pluto} Start from the leftmost white vertex of the line, $V_{i_r j_{N_r}}$ and 
assign to the edge joining $V_{i_r j_{N_r}}$ and $V^{\prime}_{i_r j_{N_r}}$ the edge vector
\[
{\mathfrak E}^{(3)}_{i_r j_{N_r}} = {\hat E}^{(r)} [j_{N_r}]  
\]
\item For any $l$ from $N_r-1$ to 1, assign to the edge joining $V'_{i_r j_{l}}$ and $V_{i_r j_{l}}$, the vector 
\[
{\mathfrak E}^{(3)}_{i_r j_l} = w_{i_r j_{l+1}} {\mathfrak E}^{(3)}_{i_r j_{l+1}} + {\hat E}^{(r)}[ j_l].
\]
\item At the white vertex $V_r$ assign the vector
\[
{\mathfrak E}^{(3)}_{i_r 0} = w_{i_r j_{1}} {\mathfrak E}^{(3)}_{i_r j_1} + {\hat E}^{(r)} [i_r] =A[r],
\]
and go to (\ref{item:pippo}).
\end{enumerate}
\item\label{item:pippo} If $r=1$, just end. Otherwise set the counter to $r-1$ and repeat the whole procedure.
\end{enumerate}

\begin{example}\label{ex:vertexvector}
We illustrate such procedure for Example \ref{ex:gr492} (see Figure \ref{fig:NtoNprime1}[right]). At the horizontal edge joining $V_{45}$ and $V^{\prime}_{45}$ we assign the edge vector 
\[
{\mathfrak E}^{(3)}_{45} =E^{(3)}[5] + w_{46} {\mathfrak E}^{(3)}_{46} =(0,0,0,0,1,w_{46},0,-w_{46}w_{48},-w_{46}w_{48}w_{49}),
\]
since ${\mathfrak E}^{(3)}_{46} =E^{(3)}[6]+w_{48} {\mathfrak E}^{(3)}_{48} =E^{(3)}[6]+w_{48} ( {\hat E}^{(3)}[8]+ w_{49} {\hat E}^{(3)}[9]),$ and the edge wave function
\[
\Phi^{(3)}_{4\, 5} (\vec t) = e^{\theta_5 (\vec t)} +w_{46}e^{\theta_6 (\vec t)}-w_{46}w_{48}e^{\theta_8 (\vec t)}-w_{46}w_{48}w_{49}e^{\theta_9 (\vec t)}.
\]
At the horizontal edge joining $V_{25}$ and $V^{\prime}_{25}$ we associate the edge vector ${\mathfrak E}^{(3)}_{25} = E^{(2)}[5]=-{\mathfrak E}^{(3)}_{45} $ (compare with Example \ref{ex:rec}) and the edge wave function $\Phi^{(3)}_{25} (\vec t) =-\Phi^{(3)}_{45} (\vec t)$ for all times $\vec t$.
\end{example}

\subsection{The vacuum and dressed network divisors}\label{sec:vacuumdiv}

We now assign two real numbers, a vacuum network divisor number $\gvac_{ij}$ and a dressed network divisor number $\gdr_{ij}$, to each trivalent white vertex of ${\mathcal N}^{\prime}$ using the linear system considered in the previous section, after choosing a convenient initial time $\vec t_0$ with respect to which we normalize the wave functions. The fact that, for any soliton data $(\mathcal K, [A])$, there exists a proper choice of time $\vec t_0$ such that both the vacuum and the dressed wavefunctions are non zero at all double points using a is an essential condition to construct non--special (vacuum and dressed) divisors on $\Gamma$. In particular, we use the fact that the all horizontal edge vectors on the $r$-th row have the highest non-zero component in position $j_{N_r}$, and these components share the same sign.

\begin{lemma}
\label{rem:sign}\textbf{Choice of the initial time.}
Let the Le--network ${\mathcal N}^{\prime}$, the v.e.w. $\Phi^{(m)}_{i j} (\vec t)$ and the d.e.w. $\Psi^{(m)}_{i j} (\vec t)$ 
be as in the previous section for the soliton data $(\mathcal K,[A])$. Then there exists $x_0$ such that for all real $x>x_0$:
\begin{enumerate}
\item The signs of $\Phi^{(1)}_{i_r j_l} (x,0,0,\ldots)$ and $\Phi^{(3)}_{i_r j_l} (x,0,0,\ldots)$, for any given $l\in [0,N_r]$, $r\in[k]$,  are equal to the sign of their highest non zero coefficient;
\item The d.e.w. $\Psi^{(m)}_{i_r j_{l}} (x,0,0,\ldots)\not =0$ on any edge $e^{(m)}_{i_r j_l}$ distinct from the Darboux edges, $e^{(m)}_{i_r j_l} \not \in \{ e^{(3)}_{i_r}\, : \, r\in [k] \}$.  
\end{enumerate}
\end{lemma}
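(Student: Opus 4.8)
The plan is to restrict everything to the ray $\vec t=(x,0,0,\dots)$, on which $\theta_j(\vec t)=\kappa_j x$. Along this ray every edge wave function is a finite real exponential sum $\sum_{j=1}^{n}a_j e^{\kappa_j x}$: for a v.e.w.\ the $a_j$ are the entries of the corresponding edge vector, while for a d.e.w.\ they are those entries multiplied by $P(\kappa_j,\vec t)$, where $P(\zeta,\vec t)=\zeta^k-\mathfrak w_1(\vec t)\zeta^{k-1}-\cdots-\mathfrak w_k(\vec t)$ is the characteristic polynomial of $\mathfrak D$ (since $\mathfrak D\,e^{\theta_j(\vec t)}=P(\kappa_j,\vec t)\,e^{\theta_j(\vec t)}$, by $(\ref{eq:Satowf})$). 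Because $\kappa_1<\cdots<\kappa_n$, such a sum is governed as $x\to+\infty$ by its term of highest index with non-vanishing coefficient. Since $\mathcal N'$ has finitely many edges, it then suffices to produce a threshold for each edge separately and to set $x_0$ equal to their maximum.

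Statement (1) is essentially immediate with this set-up: on $e^{(m)}_{i_rj_l}$ one has $\Phi^{(m)}_{i_rj_l}(x,0,0,\dots)=\sum_j(\mathfrak E^{(m)}_{i_rj_l})_j e^{\kappa_j x}$, so if $j^{\ast}$ is the largest index with $(\mathfrak E^{(m)}_{i_rj_l})_{j^{\ast}}\neq0$ then $\Phi^{(m)}_{i_rj_l}(x,0,0,\dots)=(\mathfrak E^{(m)}_{i_rj_l})_{j^{\ast}}e^{\kappa_{j^{\ast}}x}(1+o(1))$, and its sign stabilises to the sign of the leading coefficient. For horizontal edges I would also read off from Definition~\ref{def:vertexwf}, $(\ref{eq:Erj})$ and $(\ref{eq:rowentry})$ that each such edge vector is a \emph{positive} combination of the vectors $E^{(r)}[j_s]$, whose entry in a fixed column $j$ carries the common sign $(-1)^{\sigma_{i_rj}}$; hence no cancellation occurs, the leading column of every horizontal edge vector on row $r$ is $j_{N_r}$, and the associated leading coefficients all have the same sign. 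In particular $j_{N_r}$ equals the column $c_r$ of the rightmost non-zero entry of the row $A[r]$.

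For statement (2) write $\Psi^{(m)}_{i_rj_l}=\mathfrak D\,\Phi^{(m)}_{i_rj_l}$ and let $V:=\mathrm{span}\{A[1],\dots,A[k]\}$ be the space of coefficient vectors of heat solutions killed by $\mathfrak D$. By Lemma~\ref{lemma:Phi3} the only edge vectors lying in $V$ are those of the Darboux edges ($\mathfrak E^{(3)}_{i_r}=A[r]$), so $\mathfrak E_e\notin V$ for every non-Darboux edge $e$; I would then subtract from $\mathfrak E_e$ a suitable element of $V$, killing successively the components at the pivot columns $d_1>d_2>\cdots>d_k$ of $V$ read from the right, to obtain a non-zero vector $h=(h_1,\dots,h_n)$ with $h_{d_r}=0$ for all $r$. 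Put $p:=\max\{j:\ h_j\neq0\}$, so $p\notin\{d_1,\dots,d_k\}$, and $H(\vec t):=\sum_j h_j e^{\theta_j(\vec t)}$, so that $\mathfrak D H=\mathfrak D\,\Phi^{(m)}_{i_rj_l}$. The last ingredient is the standard degeneration of Sato's dressing along the ray: from the Wronskian $(\ref{eq:tau})$, as $x\to+\infty$ the functions $f^{(i)}(x,0,0,\dots)$ are asymptotic to exponentials with exponents among $\kappa_{d_1},\dots,\kappa_{d_k}$, hence $\mathfrak D$ degenerates to $\prod_{r=1}^{k}(\partial_x-\kappa_{d_r})$, the coefficients $\mathfrak w_i(x,0,0,\dots)$ stay bounded, and $P(\zeta,(x,0,0,\dots))\to\prod_{r=1}^{k}(\zeta-\kappa_{d_r})$ uniformly on compacts. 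Therefore $e^{-\kappa_p x}\,\Psi^{(m)}_{i_rj_l}(x,0,0,\dots)=\sum_{j\le p}h_j P(\kappa_j,(x,0,0,\dots))e^{(\kappa_j-\kappa_p)x}\longrightarrow h_p\prod_{r=1}^{k}(\kappa_p-\kappa_{d_r})\neq0$, so $\Psi^{(m)}_{i_rj_l}(x,0,0,\dots)\neq0$ for $x$ above a threshold. (If $N_r=0$ the vertex $V_{i_r}$ is bivalent and $e^{(2)}_{i_r}$ carries the same vector $A[r]=e_{i_r}$ as the Darboux edge, so the statement is to be read for irreducible data, where $N_r\ge1$ for all $r$; the reducible case follows by passing to the reduced soliton data as in Remark~\ref{rem:reddiv}.)

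The main obstacle is (2): on the horizontal edges the naive leading term of $\mathfrak D\,\Phi_e$ vanishes, precisely because there the leading column $j_{N_r}=c_r$ is a column in which $A[r]$ is supported, so one genuinely has to project $\mathfrak E_e$ onto a complement of $V$ before extracting the asymptotics, and one must invoke the degeneration $\mathfrak D\to\prod_r(\partial_x-\kappa_{d_r})$ as $x\to+\infty$. Checking $\mathfrak E_e\notin V$ for every non-Darboux edge is the combinatorial heart of the argument, and it is read off from the explicit description of the edge vectors in Theorem~\ref{theo:mainalg}: the Darboux edges are the only ones whose vectors reassemble a full row of $A$.
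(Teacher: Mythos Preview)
Your argument is correct, and for part~(1) it coincides with the paper's. For part~(2), however, you take a genuinely different route. The paper's proof is much shorter: it observes that $\Psi_e=\mathfrak D\,\Phi_e$ can be written via the Wronskian identity $\mathfrak D\,\Phi_e=\mathrm{Wr}(f^{(1)},\dots,f^{(k)},\Phi_e)/\tau$, so that $\tau\cdot\Psi_e$ restricted to the ray $\vec t=(x,0,0,\dots)$ is an exponential polynomial with \emph{constant} real coefficients, non-trivial precisely when $\mathfrak E_e\notin\mathrm{span}\{A[1],\dots,A[k]\}$. A non-trivial exponential polynomial has only finitely many real zeros, and $\tau>0$, so one is done. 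This bypasses entirely the degeneration $\mathfrak D\to\prod_r(\partial_x-\kappa_{d_r})$, the identification of $\{d_1,\dots,d_k\}$ with the lex-maximal base, and the projection onto a complement of $V$.

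What your approach buys is an explicit asymptotic for $\Psi_e$: you actually identify the leading exponent and its non-zero coefficient $h_p\prod_r(\kappa_p-\kappa_{d_r})$, which gives the sign of $\Psi_e$ for large $x$ and would let you locate each dressed divisor point in its oval directly. The paper only needs eventual non-vanishing here, and recovers the oval positions of $\DKP$ later via the parity argument of Lemmas~\ref{lem:d_dressed}--\ref{lem:d_dressed2} applied to the vacuum divisor. One small remark: your motivational sentence ``the naive leading term of $\mathfrak D\,\Phi_e$ vanishes, precisely because $j_{N_r}$ is a column in which $A[r]$ is supported'' is not literally what forces the vanishing---the issue is whether $j_{N_r}$ lies in the lex-maximal base, not merely in the support of $A[r]$---but this does not affect your proof, since your projection argument handles all edges uniformly.
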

\begin{proof}
The first statement easily follows from the definition of the vacuum wave function, since no edge vector is null. For the second statement we recall that we use the reduced Darboux transformation if the soliton data belongs to a reducible cell. Therefore, we assume without loss of generality that $[A]$ belongs to an irreducible cell.  The Darboux operator $\mathfrak D$ is a regular $k$-th order ordinary linear differential operator, and the d.e.w. is identically zero by definition at the Darboux edges $e^{(3)}_{i_r}$, $r\in [k]$ and nowhere else. At all other edges the d.e.w. is a linear combination of real exponentials divided by the $\tau$-function. For this reason, the d.e.w. has only finite number of real zeroes, and, if $x_0$ is sufficiently big, $\Psi^{(m)}_{i_r j_{l}} (x,0,0,\ldots)\not =0$.
\end{proof}

\begin{definition}\label{def:vac_div_gen}\textbf{The vacuum network divisor $\DVN$ and the dressed network divisor $\DDN$.}
Let $\Phi^{(m)}_{i_rj_l} (\vec t)$, $\Psi^{(m)}_{i_rj_l} (\vec t)$ be the vacuum and the dressed edge wave functions on the edges $e^{(m)}_{i_r j_l}$, $m\in [3]$, at $V_{i_rj_l}$ of the network ${\mathcal N}^{\prime}$. Let $\vec t_0=(x_0,0,0,\ldots) $ be fixed as in Lemma~\ref{rem:sign}.

We assign a vacuum network divisor number $\gvac_{i_rj_l}$ to \textbf{each} trivalent white vertex $V_{i_rj_l}$ ($r\in [k]$, $l	\in [0,N_r-1]$) : 
\begin{equation}\label{eq:vac_pole_def2}
\gvac_{i_rj_l} = \frac{ \Phi^{(1)}_{i_rj_l} (\vec t_0)}{ \Phi^{(1)}_{i_rj_l} (\vec t_0)+\Phi^{(2)}_{i_rj_l} (\vec t_0)}=
\frac{ \Phi^{(1)}_{i_rj_l} (\vec t_0)}{ \Phi^{(3)}_{i_rj_l} (\vec t_0)}  .
\end{equation}
We call the collection of the $g$ pairs: $\DVN= \{( \gvac_{i_rj_l},V_{i_rj_l}),  \;  l	\in [0,N_r-1],  r\in [k] \}$ the vacuum network divisor on ${\mathcal N}^{\prime}$.

Analogously, we assign a dressed network divisor number $\gdr_{i_rj_l}$ to each trivalent white vertex $V_{i_rj_l}$ \textbf{not containing a Darboux edge} ($r\in [k]$, $l	\in [N_r-1]$):
\begin{equation}\label{eq:dress_pole_def}
\gdr_{i_rj_l} = \frac{ \Psi^{(1)}_{i_rj_l} (\vec t_0) }{\Psi^{(1)}_{i_rj_l} (\vec t_0)+\Psi^{(2)}_{i_rj_l} (\vec t_0)}=
 \frac{ \Psi^{(1)}_{i_rj_l} (\vec t_0) }{\Psi^{(3)}_{i_rj_l} (\vec t_0)}.
\end{equation}
We call the collection of the pairs $\DDN = \{ (\gdr_{i_rj_l}, V_{i_rj_l}), \; l	\in [N_r-1],  r\in [k]  \}$ the dressed network divisor on ${\mathcal N}^{\prime}$.
\end{definition}

\begin{remark}
We do not assign neither vacuum  nor dressed network divisor numbers to vertices connected with isolated boundary vertices, since these vertices are not trivalent in our construction.
\end{remark}

\begin{remark}
If the Le-network corresponds to a point in an irreducible positroid cell in $\Grkn$, then $\DDN$ is a collection of $g-k$ pairs.  If the Le-network corresponds to a point in a reducible positroid cell in $\Grkn$, and the reduced cell belongs to $Gr^{\mbox{\tiny TNN}}(k',n')$, then $\DDN$ is a collection of $g-k'$ pairs. 
\end{remark}
\begin{remark}
The vacuum and the dressed network divisors in Definition~\ref{def:vac_div_gen} are constructed using a certain orientation of the Le-network, but they behave differently with respect to changes of orientation and of the Darboux points positions \cite{AG2}. Indeed, any such change induces an untrivial transformation of the vacuum network divisor numbers. On the contrary, in case of changes of orientation, the dressed divisor numbers transform in agreement with the change of coordinates induced on the components of $\Gamma$. Moreover, the dressed divisor numbers are invariant under changes of Darboux points positions. Therefore the position of the KP divisor in the ovals is independent on both the orientation of the network and the position chosen for the Darboux source points. 
\end{remark}

In the next Lemma, we normalize the v.e.w. previously defined and characterize the vacuum network divisor.

\begin{definition}\label{def:norm_vac_gen}\textbf{The normalized vacuum edge wavefunction $\hat \Phi$, and the KP edge wave function $\hat \Psi$.}  Let $\Phi^{(m)}_{i_rj_l} (\vec t)$ and $\Psi^{(m)}_{i_rj_l} (\vec t)$ be the vacuum and the dressed edge wave functions on ${\mathcal N}^{\prime}$ with $\vec t_0$ as above.

Then the normalized vacuum edge wave function on ${\mathcal N}^{\prime}$ is
\begin{equation}\label{eq:vnvwf}
{\hat \Phi}^{(m)}_{i_r j_l} (\vec t) = \frac{\Phi^{(m)}_{i_r j_l} (\vec t)}{\Phi^{(m)}_{i_r j_l} (\vec t_0)},\quad \quad l\in [0,N_r], \,\, r\in[k],\,\, m\in [3],
\end{equation}
and the KP edge wave function on ${\mathcal N}^{\prime}$ is 
\begin{equation}\label{eq:KPvfnorN}
\hat \Psi^{(m)}_{i_r j_l} (\vec t) = \left\{ \begin{array}{ll} \displaystyle\frac{\Psi^{(m)}_{i_r j_l} (\vec t) }{\Psi^{(m)}_{i_r j_l} (\vec t_0) }, &\quad\quad \mbox{for all } m\in [3], \, l\in[N_r], \, r \in [k],\\
\displaystyle\frac{{\mathfrak D} e^{i\theta_{i_r} (\vec t)}}{{\mathfrak D} e^{i\theta_{i_r} (\vec t_0)}},&\quad\quad \mbox{if } m=3, l=0 \mbox{ i.e. on the Darboux edge } e^{(3)}_{i_r}, \;\; r\in [k].
\end{array}\right.
\end{equation}
\end{definition}

\begin{lemma}\label{lemma:vacvertexwf}
Let $({\mathcal K}, [A])$ and $\hat \Phi$, $\hat \Psi$ be as in the Definition \ref{def:norm_vac_gen} with  
$\vec t_0 =(x_0,0\dots)$ as in Lemma~\ref{rem:sign}. 
Then they have the following properties:
\begin{enumerate}
\item ${\hat \Phi}^{(m)}_{i_r j_l} (\vec t)$ and  ${\hat \Psi}^{(m)}_{i_r j_l} (\vec t)$ are regular functions of $\vec t$ for any $r\in[k]$, $l\in [0,N_r]$, $m\in [3]$;
\item  ${\hat \Phi}(\vec t)$  ( ${\hat \Psi}(\vec t)$ respectively) takes the same value on all edges at any given black vertex and any given bivalent white vertex;
\item For any fixed $r\in [k]$, $j\in [N_r-1]$, ${\hat \Phi}(\vec t)$  ( ${\hat \Psi}(\vec t)$ respectively) takes different values on the edges at the white trivalent vertex $V_{i_r j_l}$ for almost all $\vec t$, and the vacuum and dressed network divisor numbers $\gvac_{i_r j_l}$, $\gdr_{i_r j_l}$, defined in (\ref{eq:vac_pole_def2}) are the unique numbers such that
\[
{\hat \Phi}^{(3)}_{i_r j_l} (\vec t)\equiv\gvac_{i_r j_l} {\hat \Phi}^{(1)}_{i_r j_l} (\vec t) +(1-\gvac_{i_r j_l}) {\hat \Phi}^{(2)}_{i_r j_l} (\vec t), \quad\quad \forall \vec t;
\]
\[
{\hat \Psi}^{(3)}_{i_r j_l} (\vec t)\equiv\gdr_{i_r j_l} {\hat \Psi}^{(1)}_{i_r j_l} (\vec t) +(1-\gdr_{i_r j_l}) {\hat \Psi}^{(2)}_{i_r j_l} (\vec t), \quad\quad \forall \vec t.
\]
Moreover, all $\gvac_{i_r j_l}$ are \textbf{positive};
\item For any $r\in [k]$, such that $b_{i_r}$ is not an isolated boundary vertex, the normalized v.e.w. takes different values on the edges at the white trivalent vertex $V_{i_r }$ for almost all $\vec t$, and $\gvac_{i_r}$ defined in (\ref{eq:vac_pole_def2}) is the unique \textbf{positive} number such that
\[
{\hat \Phi}^{(3)}_{i_r} (\vec t)\equiv\gvac_{i_r} {\hat \Phi}^{(1)}_{i_r} (\vec t) +(1-\gvac_{i_r}) {\hat \Phi}^{(2)}_{i_r} (\vec t) , \quad\quad \forall \vec t.
\]
If, for some $r\in [k]$, $b_{i_r}$ is an isolated boundary vertex, the white vertex $V_{i_r }$ is bivalent and ${\hat \Phi}^{(3)}_{i_r} (\vec t)\equiv {\hat \Phi}^{(2)}_{i_r} (\vec t) = \exp(\theta_{i_r} (\vec t- \vec t_0))$, $\forall \vec t$;
\item For any $r\in [k]$, the normalized KP edge wave function takes the same value on the edges at the white trivalent vertex $V_{i_r }$ for all $\vec t$, 
\[
{\hat \Psi}^{(1)}_{i_r} (\vec t) = {\hat \Psi}^{(2)}_{i_r} (\vec t) = {\hat \Psi}^{(3)}_{i_r} (\vec t);
\]
\item The set of normalized v.e.w. ${\hat \Phi}^{(3)}_{i_r} (\vec t) $ defined at the vertices $V_{i_r}$, $r\in [k]$,
\begin{enumerate}
\item Form a basis of heat hierarchy solutions which satisfy ${\mathfrak D} \Phi^{(3)}_{i_r } (\vec t)\equiv 0 $, for all $r\in [k]$, $\vec t$, with 
${\mathfrak D}= W\partial_x^k \equiv \partial_x^k - {\mathfrak w}_1 (\vec t) \partial_x^{k-1} - {\mathfrak w}_2 (\vec t) \partial_x^{k-2} -\cdots -{\mathfrak w}_k (\vec t)$,
the Darboux transformation and $W$ the Sato dressing operator for $({\mathcal K}, [A])$;
\item Generate the KP soliton solution $u(\vec t)$ associated to $({\mathcal K}, [A])$ 
\[
u (\vec t) = 2\partial_{xx} \log (\tau (\vec t)), \quad\quad \tau(\vec t) = \mbox{ Wr}_x ({\hat \Phi}^{(3)}_{i_1 } (\vec t), \dots,{\hat \Phi}^{(3)}_{i_k } (\vec t))
\] 
where $\mbox{ Wr}_x$ denotes the Wronskian of ${\hat \Phi}^{(3)}_{i_r } (\vec t)$, $r\in [k]$, with respect to $x=t_1$. 
\end{enumerate}
\end{enumerate}
\end{lemma}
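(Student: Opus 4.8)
The plan is to derive every item by unwinding Definitions~\ref{def:vertexwf}, \ref{def:dress_edgewf}, \ref{def:vac_div_gen} and \ref{def:norm_vac_gen}, using that $\mathfrak D$ is linear, and invoking two facts already available: the non-vanishing and sign statements of Lemma~\ref{rem:sign} at the chosen initial time $\vec t_0=(x_0,0,0,\dots)$ with $x_0\gg 1$, and the identifications ${\mathfrak E}^{(3)}_{i_r}=A[r]$, $\Phi^{(3)}_{i_r}(\vec t)=f^{(r)}(\vec t)$ of Lemma~\ref{lemma:Phi3}. First I would dispatch item~(1): each vacuum edge wave function $\Phi^{(m)}_{i_r j_l}(\vec t)=\prec{\mathfrak E}^{(m)}_{i_r j_l},{\mathfrak E}_\theta(\vec t)\succ$ is an entire exponential sum which, by Lemma~\ref{rem:sign}, does not vanish at $\vec t_0$, so its normalization is entire in $\vec t$; for the dressed wave function one first notes that total non-negativity of $[A]$ makes $\tau(\vec t)>0$ for all real $\vec t$ by (\ref{eq:tau}), hence the coefficients of $\mathfrak D$ are smooth and $\Psi^{(m)}_{i_r j_l}=\mathfrak D\Phi^{(m)}_{i_r j_l}$ is smooth, while the normalizing constants in (\ref{eq:KPvfnorN}) are nonzero by Lemma~\ref{rem:sign} (after enlarging $x_0$, if necessary, so that also $\kappa_{i_r}\notin\DS(\vec t_0)$, which controls the Darboux edges). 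Item~(2) is then immediate: by (\ref{eq:Phi_1}), (\ref{eq:Phi_3N}) and the vertical gluing rules, every edge function incident to a given black vertex $V'_{i_r j_l}$ equals $\Phi^{(3)}_{i_r j_l}$ up to a factor which is a positive weight times $\pm1$, and the two edge functions at a bivalent white vertex coincide; since these factors survive under $\mathfrak D$ and cancel upon dividing each edge function by its own value at $\vec t_0$, both $\hat\Phi$ and $\hat\Psi$ take a single value on all edges at such a vertex.

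The heart of the argument is items~(3)--(4). At a trivalent white vertex $V_{i_r j_l}$, $l\in[N_r-1]$ (and likewise at $V_{i_r}$ when $N_r\ge 1$), the summation relation ${\mathfrak E}^{(3)}={\mathfrak E}^{(1)}+{\mathfrak E}^{(2)}$ of (\ref{eq:Phi_3}) yields $\Phi^{(3)}(\vec t)=\Phi^{(1)}(\vec t)+\Phi^{(2)}(\vec t)$ and, applying $\mathfrak D$, $\Psi^{(3)}(\vec t)=\Psi^{(1)}(\vec t)+\Psi^{(2)}(\vec t)$ for all $\vec t$; dividing by $\Phi^{(3)}(\vec t_0)=\Phi^{(1)}(\vec t_0)+\Phi^{(2)}(\vec t_0)\neq 0$ (resp.\ by $\Psi^{(3)}(\vec t_0)\neq 0$) produces exactly the convex--combination identities of the statement, with $\gvac_{i_r j_l}=\Phi^{(1)}(\vec t_0)/\Phi^{(3)}(\vec t_0)$ and $\gdr_{i_r j_l}=\Psi^{(1)}(\vec t_0)/\Psi^{(3)}(\vec t_0)$ as in (\ref{eq:vac_pole_def2})--(\ref{eq:dress_pole_def}). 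For uniqueness and the ``different values for almost all $\vec t$'' claim I would show that ${\mathfrak E}^{(1)}$ (a positive multiple of the next-to-the-right ${\mathfrak E}^{(3)}$ by (\ref{eq:Phi_1}), hence supported on columns $>j_l$) and ${\mathfrak E}^{(2)}={\hat E}^{(r)}[j_l]$ (with nonzero $j_l$-th entry) are not proportional, so $\hat\Phi^{(1)}\not\equiv\hat\Phi^{(2)}$; and that no nonzero combination $c_2{\mathfrak E}^{(1)}-c_1{\mathfrak E}^{(2)}$ lies in the row span of $A$, because these edge vectors vanish on every pivot column whereas a nonzero combination of the RREF rows does not, so $c_2\Psi^{(1)}-c_1\Psi^{(2)}=\mathfrak D(c_2\Phi^{(1)}-c_1\Phi^{(2)})\not\equiv 0$ and hence $\hat\Psi^{(1)}\not\equiv\hat\Psi^{(2)}$; the relevant difference is in either case a real--analytic function of $\vec t$ that is not identically zero, so its zero set is negligible. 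Positivity of $\gvac$ follows from Lemma~\ref{rem:sign}(1) together with the fact recalled there that all horizontal edge vectors of row $r$ --- in particular ${\mathfrak E}^{(1)}$ and ${\mathfrak E}^{(3)}$ --- carry their highest nonzero entry in the common column $j_{N_r}$ with a common sign, whence $\Phi^{(1)}(\vec t_0)$ and $\Phi^{(3)}(\vec t_0)$ share that sign; when $b_{i_r}$ is isolated, $V_{i_r}$ is bivalent with ${\mathfrak E}^{(3)}_{i_r}={\mathfrak E}^{(2)}_{i_r}={\hat E}^{(k)}[i_r]$, giving $\hat\Phi^{(3)}_{i_r}(\vec t)=\exp(\theta_{i_r}(\vec t-\vec t_0))$.

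For item~(5) I would use Lemma~\ref{lemma:Phi3}: $\Phi^{(3)}_{i_r}=f^{(r)}\in\ker\mathfrak D$, so $\Psi^{(3)}_{i_r}\equiv 0$, and combined with $\Psi^{(3)}_{i_r}=\Psi^{(1)}_{i_r}+\Psi^{(2)}_{i_r}$ this forces $\Psi^{(1)}_{i_r}(\vec t)=-\Psi^{(2)}_{i_r}(\vec t)$, hence ${\hat\Psi}^{(1)}_{i_r}={\hat\Psi}^{(2)}_{i_r}$ after normalization; since $\Phi^{(2)}_{i_r}(\vec t)=e^{\theta_{i_r}(\vec t)}$, the convention (\ref{eq:KPvfnorN}) makes ${\hat\Psi}^{(2)}_{i_r}$ equal to the Darboux--edge value ${\hat\Psi}^{(3)}_{i_r}=\mathfrak D e^{\theta_{i_r}(\vec t)}/\mathfrak D e^{\theta_{i_r}(\vec t_0)}$, so the three values agree. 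For item~(6), each $\Phi^{(3)}_{i_r}(\vec t)=f^{(r)}(\vec t)=\sum_j A^{r}_j e^{\theta_j(\vec t)}$ solves the heat hierarchy and is annihilated by $\mathfrak D$ by construction of the dressing operator $W$; since $A$ has rank $k$ and the $\kappa_j$ are distinct, the $f^{(r)}$ are linearly independent, so the rescalings ${\hat\Phi}^{(3)}_{i_r}=f^{(r)}/f^{(r)}(\vec t_0)$ (with $f^{(r)}(\vec t_0)\neq 0$ by Lemma~\ref{rem:sign}(1)) form a basis of the space of heat solutions killed by $\mathfrak D$; finally $\mathrm{Wr}_x({\hat\Phi}^{(3)}_{i_1},\dots,{\hat\Phi}^{(3)}_{i_k})=(\prod_{r=1}^k f^{(r)}(\vec t_0))^{-1}\,\mathrm{Wr}_x(f^{(1)},\dots,f^{(k)})=(\prod_{r=1}^k f^{(r)}(\vec t_0))^{-1}\,\tau(\vec t)$ by (\ref{eq:tau}), and the constant prefactor drops out of $2\partial_{xx}\log(\cdot)$, so the Wronskian of the ${\hat\Phi}^{(3)}_{i_r}$ generates $u(\vec t)$.

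The step I expect to be the main obstacle is the distinctness of the three values of the dressed wave function at a trivalent white vertex --- equivalently, the claim that the two outgoing edge vectors ${\mathfrak E}^{(1)}_{i_r j_l}$, ${\mathfrak E}^{(2)}_{i_r j_l}$ span a plane meeting the row span of $A$ only at the origin --- together with the sign statement for the leading components of the row--$r$ horizontal edge vectors that underlies the positivity of $\gvac$. Both should reduce to tracking the pivot columns as a ``control block'' for the vectors produced recursively in Theorem~\ref{theo:mainalg}, but this is the only place where a genuine argument, rather than bookkeeping and the results already in hand, is required.
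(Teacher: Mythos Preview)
Your proposal is correct and complete. The paper itself states ``The proof is trivial and is omitted,'' so there is no proof to compare against; your argument supplies exactly the kind of direct unwinding of Definitions~\ref{def:vertexwf}--\ref{def:norm_vac_gen} together with Lemmas~\ref{lemma:Phi3} and~\ref{rem:sign} that the authors presumably have in mind, including the one nontrivial check (that the two outgoing edge vectors at a trivalent white vertex vanish on all pivot columns and hence cannot combine to a row of $A$) which justifies the ``almost all $\vec t$'' distinctness for the dressed wave function.
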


The proof is trivial and is omitted. We illustrate the above statement in Figure \ref{fig:vertexrule}.

\begin{example}
Let's compute the normalized vacuum edge wave function for Example \ref{ex:vertexvector}. At the vertex $V_{45}$ the normalized vacuum edge wave function is
\[
{\hat \Phi}^{(3)}_{45} (\vec t) = \frac{e^{\theta_5 (\vec t)} +w_{46}e^{\theta_6 (\vec t)}-w_{46}w_{48}e^{\theta_8 (\vec t)}-w_{46}w_{48}w_{49}e^{\theta_9 (\vec t)}}{e^{\theta_5 (\vec t_0)} +w_{46}e^{\theta_6 (\vec t_0)}-w_{46}w_{48}e^{\theta_8 (\vec t_0)}-w_{46}w_{48}w_{49}e^{\theta_9 (\vec t_0)}}.
\]
At the vertex $V_{25}$ we associate the {\bf opposite} edge wave function $\Phi^{(3)}_{25} (\vec t) =-\Phi^{(3)}_{45} (\vec t)$ and the {\bf same normalized} edge wave function ${\hat \Phi}^{(3)}_{25} (\vec t) ={\hat \Phi}^{(3)}_{45} (\vec t)$ for all times $\vec t$. We leave as an exercise to the reader the computation of the vacuum divisor numbers for this example (see also Figure \ref{fig:Mcurve_ex2}).
\end{example}

\begin{figure}
  \centering
	  {\includegraphics[width=0.45\textwidth]{Example2_NG_bis.pdf}}
\hfill
  {\includegraphics[width=0.45\textwidth]{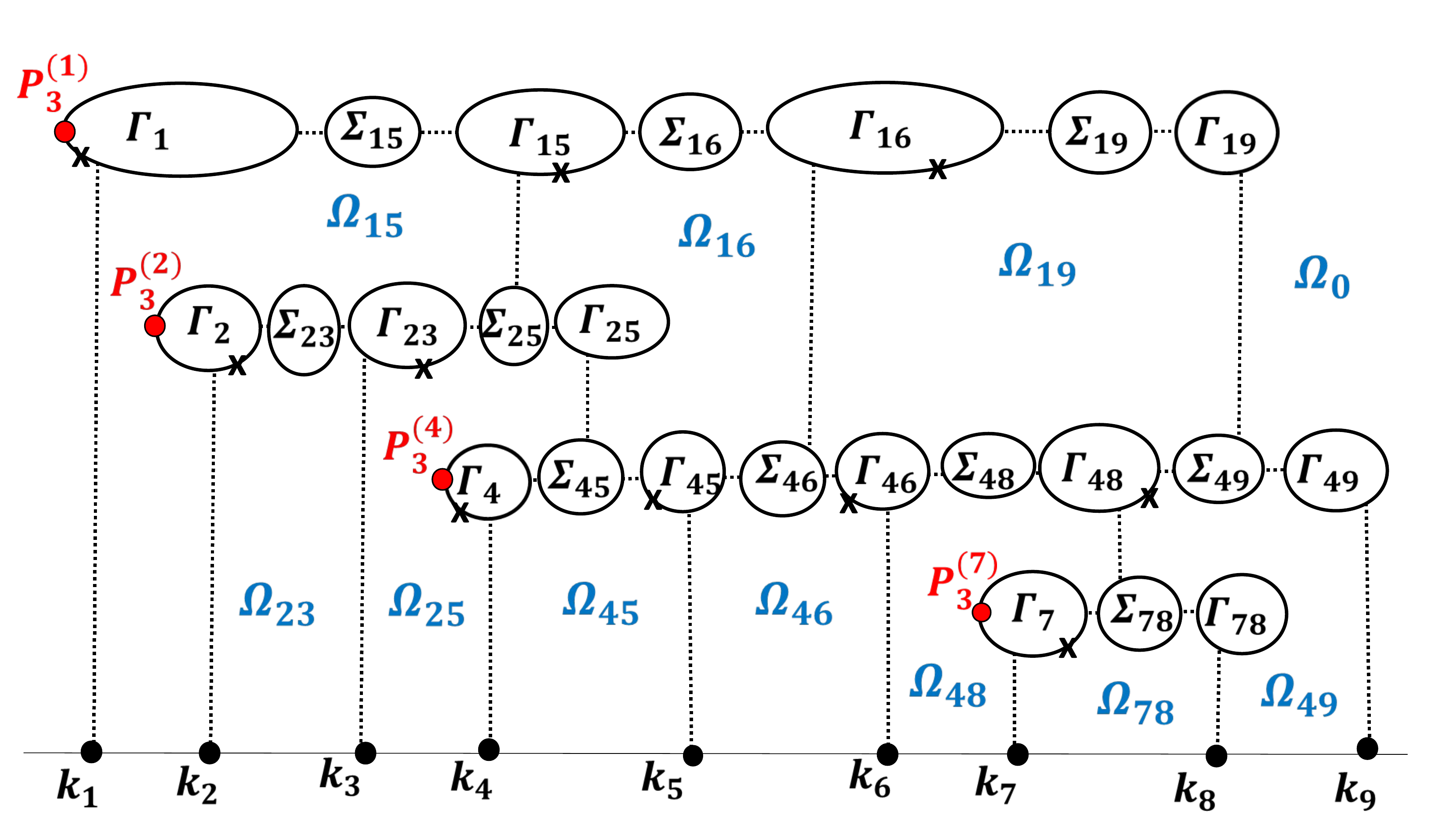}}
	\vspace{-.3 truecm}
  \caption{\footnotesize{\sl The vacuum divisor (crosses) on the $\mathtt M$--curve $\Gamma$ [right] for Example \ref{ex:gr492}: the local coordinate of the vacuum divisor point on the component $\Gamma_{ij}\subset \Gamma$ [right] is the divisor number of the vertex $V_{ij}$ on ${\mathcal N}^{\prime}$ [left].}}
	\label{fig:Mcurve_ex2}
\end{figure}

\subsection{From the edge wave functions on the network to the wave functions on the curve $\Gamma$}\label{sec:vacdiv}

In this section we define the normalized vacuum wave function ${\hat \phi} (P, \vec t)$ and the normalized KP wave function ${\hat \psi} (P, \vec t)$  on the curve $\Gamma$ using the following construction: 
\begin{construction}\textbf{The normalized vacuum and KP wave functions on $\Gamma$.}\label{def:vac0} 
\begin{enumerate}
\item On each component $\Gamma_{i_r j_s}$ of $\Gamma$ corresponding to a white vertex $V_{i_r j_s}$ carrying a vacuum (dressed) divisor number we assign a vacuum (dressed) divisor point $\Pvac_{i_r j_l}$ ($\Pdr_{i_r j_l}$)  with coordinate equal to this number;
\item On $\Gamma_0$ the normalized vacuum (dressed) wave functions coincide with the normalized vacuum (dressed Sato) wave functions respectively;
\item On each marked point of $\Gamma$ the value of the normalized vacuum (dressed) wave function is assigned equal to the value of the vacuum (dressed) edge wave function on the corresponding edge of $\mathcal N'$; 
\item On each component of $\Gamma\backslash\Gamma_0$ carrying no divisor points the corresponding function is extended as constant function in the spectral parameter. We use here that on each such component the values of the normalized (vacuum or KP) wave function are the same at all marked points;
\item On each component of $\Gamma\backslash\Gamma_0$ carrying a divisor point the corresponding function is extended to an order 1 meromorphic function in the spectral parameter with a simple pole at the divisor point. More precisely, for $r\in[k]$ such that $N_r>0$, $\gvac_{i_r j_s} = \frac{\Phi^{(1)}_{i_r j_s} (\vec t_0)}{\Phi^{(3)}_{i_r j_s} (\vec t_0)}$,  $s\in [0,N_r-1]$, and 
\[
 {\hat \phi}_{i_r j_s} (\zeta, \vec t) = \frac{\Phi^{(1)}_{i_r j_s} (\vec t) (\zeta-1) + \Phi^{(2)}_{i_r j_s} (\vec t) \zeta }{\Phi^{(3)}_{i_r j_s} (\vec t_0)  \left( \zeta- \gvac_{i_r j_s} \right)},
\]
where ${\hat \phi}_{i_r j_s} (\zeta, \vec t)$ denotes the restriction of the normalized vacuum wave function to the component $\Gamma_{i_r,j_s}$. Similarly, for $r\in[k]$ such that $N_r>0$, $\gdr_{i_r j_s} = \frac{\Psi^{(1)}_{i_r j_s} (\vec t_0)}{\Psi^{(3)}_{i_r j_s} (\vec t_0)}$,  $s\in [N_r-1]$, and 
\[
{\hat \psi}_{i_r j_s} (\zeta, \vec t) = \frac{\Psi^{(1)}_{i_r j_s} (\vec t) (\zeta-1) + \Psi^{(2)}_{i_r j_s} (\vec t) \zeta }{\Psi^{(3)}_{i_r j_s} (\vec t_0)  \left( \zeta- \gdr_{i_r j_s} \right)},
\]
where ${\hat \psi}_{i_r j_s} (\zeta, \vec t)$ denotes the restriction of the normalized KP wave function to the component $\Gamma_{i_r,j_s}$,  $\zeta$ is the local coordinate on this component.

\end{enumerate}
\end{construction}
The vacuum divisor $\DVG$ is the sum of all points $\Pvac_{i_r j_l}$, $r\in [k], \; j\in [0,N_r-1]$, assigned at $\Gamma\backslash\Gamma_0$, and it contains no points at $\Gamma_0$. The KP divisor $\DKP$ on $\Gamma$ is the sum of all points $\Pdr_{i_r j_l}$,  $r\in [k], \; j\in [N_r-1]$, assigned at $\Gamma\backslash\Gamma_0$ and of the Sato divisor  $\DS(\vec t_0)$ on $\Gamma_0$.

Finally, we check that we have the correct number of divisor points at each oval.

\begin{theorem}\label{theo:existvac}{\bf The vacuum divisor (Theorem~\ref{theo:exist}).} ${\hat \phi}$ as in Construction \ref{def:vac0}  is the real and regular vacuum wave function on $\Gamma$ for the soliton data $(\mathcal K, [A])$, that is it satisfies all the properties of Definitions \ref{def:vacuumKP} and \ref{def:vacuumKP2} on $\Gamma\backslash\{P_0\}$. In particular, it satisfies (\ref{eq:darboux_cond}), that is
\[
{\hat \phi} (P^{(3)}_{i_r}, \vec t) = \frac{\sum_{l=1}^n A^{r}_l \exp (\theta_l (\vec t))}{\sum_{l=1}^n A^{r}_l \exp (\theta_l (\vec t_0))}, \quad\quad \mbox{ for } r\in [k], \;\; \forall \vec t,
\]
where $A$ is the RREF representative matrix in $[A]$. Finally the vacuum divisor $\DVG = \{ \Pvac_{i_r j_l} \}$  of $\hat \phi$ satisfies:
\begin{enumerate} 
\item It consists of $g$ simple poles and no pole belongs to $\Gamma_0$;
\item There is exactly one pole on each component $\Gamma_{i_rj_l}$ corresponding to a trivalent white vertex in ${\mathcal N}^{\prime}$; 
\item\label{it:4} For any finite oval  $\Omega_{s}$, $s\in [g]$, let $\nu_{s} = \# \{ \DVG \cap \Omega_{s} \}$ and $\mu_{s} =  \# \{ P^{(3)}_{i_r} \in \Omega_{s}, \; r\in [k] \}$ respectively be the number of poles and the number of Darboux points in $\Omega_{s}$. Then 
\begin{equation}\label{eq:odd}
\nu_{s}+ \mu_{s} =\mbox{ odd number}, \quad\quad \mbox{ for any } s\in [g];
\end{equation}
\item\label{it:5} Let $\nu_{0} = \# \{ \DVG \cap \Omega_{0} \}$ and $\mu_{0} =  \# \{ P^{(3)}_{i_r} \in \Omega_{0}, \; r\in [k] \}$, respectively be the number of poles and the number of Darboux points in the infinite oval $\Omega_{0}$. Then
\begin{equation}\label{eq:even}
\nu_{0}+ \mu_{0} - k =\mbox{ even number}.
\end{equation}
Let us remark that $k-\mu_0$ is the number of Darboux source points not belonging to the infinite oval $\Omega_0$, therefore, equivalently, 
\begin{equation}\label{eq:even2}
\nu_{0} =  \#\mbox{Darboux points not belongig to $\Omega_0$} \mod 2.
\end{equation}

\end{enumerate}
\end{theorem}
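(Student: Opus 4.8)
The plan is to verify, one by one, that the function $\hat\phi$ of Construction~\ref{def:vac0} satisfies every requirement of Definitions~\ref{def:vacuumKP} and~\ref{def:vacuumKP2} on $\Gamma\backslash\{P_0\}$, then to count the poles of $\hat\phi$, and finally to establish the parity relations (\ref{eq:odd})--(\ref{eq:even2}). Most of the wave--function properties are immediate from the construction together with the lemmas of Section~\ref{sec:vvw}. On $\Gamma_0$ the function is $e^{\theta(\zeta,\vec t-\vec t_0)}$ by definition; on every other component it is built from the v.e.w. $\Phi^{(m)}_{i_rj_l}(\vec t)$, each a finite real--coefficient combination of the exponentials $e^{\theta_j(\vec t)}$, so it solves the vacuum hierarchy (\ref{eq:vacuum_eq}); on the degree--one components the denominator in Construction~\ref{def:vac0} is $\vec t$--independent, so the same holds there, and $\hat\phi(P,\vec t_0)=1$ and regularity of $\hat\phi$ in $\vec t$ away from $\DVG$ follow because $\Phi^{(m)}_{i_rj_l}(\vec t_0)\neq 0$ by the choice of $\vec t_0$ in Lemma~\ref{rem:sign}. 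Reality holds since all edge weights are positive and all $\kappa_j$, $\vec t$, $\gvac_{i_rj_l}$ are real. The matching $\hat\phi(P,\vec t)=\hat\phi(Q,\vec t)$ at glued points is exactly the statement that the gluing rules of Construction~\ref{def:gamma} identify pairs of marked points sitting on a common edge of $\mathcal N'$, combined with the part of Lemma~\ref{lemma:vacvertexwf} saying that the normalized v.e.w. takes a single value at each black and each bivalent white vertex and the three prescribed values at a trivalent white vertex; this is a finite case check against Figures~\ref{fig:markedpoints}--\ref{fig:doublepoints} and~\ref{fig:vertexrule}. The Darboux identity (\ref{eq:darboux_cond}) is Lemma~\ref{lemma:Phi3}(2): the edge vector at $e^{(3)}_{i_r}$ is the $r$--th row $A[r]$, hence $\Phi^{(3)}_{i_r}(\vec t)=f^{(r)}(\vec t)$, and dividing by $\Phi^{(3)}_{i_r}(\vec t_0)$ yields precisely the right--hand side of (\ref{eq:darboux_cond}).

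For the divisor, poles of $\hat\phi$ occur only on the components $\Gamma_{i_rj_l}$ assigned a degree--one extension, i.e. those corresponding to trivalent white vertices $V_{i_rj_l}$ with $r\in[k]$, $l\in[0,N_r-1]$ (for reducible data we first pass to the reduced soliton data as in Remark~\ref{rem:reddiv}, so we may assume $N_r\ge 1$ for every $r$). Their number is $\sum_{r=1}^k N_r$, which equals the number of filled boxes of the Le--diagram, hence the dimension $g$ of $\S$; each pole is simple, and none lies on $\Gamma_0$ since $\hat\phi|_{\Gamma_0}=e^{\theta(\zeta,\vec t-\vec t_0)}$ is zero-- and pole--free on $\Gamma_0\backslash\{P_0\}$. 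This proves items (1) and (2).

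The parity relations are the substantive part. A counting modulo $2$ shows that (\ref{eq:even}) for $\Omega_0$ is a consequence of (\ref{eq:odd}) for the $g$ finite ovals: since $\sum_s(\nu_s+\mu_s)+(\nu_0+\mu_0)=g+k$ and a sum of $g$ odd numbers is $\equiv g$, one gets $\nu_0+\mu_0\equiv k$. So it suffices to prove $\nu_s+\mu_s$ is odd for each finite oval. I would do this by a sign--counting argument: fix a generic real $\vec t$; then $\hat\phi(\cdot,\vec t)$ restricts to each finite oval $\Omega_s$ as a continuous real function on a topological circle with $\nu_s$ simple poles (the divisor points in $\Omega_s$) and a finite number $z_s$ of simple zeros, all of which lie on components of trivalent white vertices, the zero on $\Gamma_{i_rj_l}$ being at $\zeta=\Phi^{(1)}_{i_rj_l}(\vec t)/\Phi^{(3)}_{i_rj_l}(\vec t)$. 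Going once around the circle returns the sign of $\hat\phi$ to itself, so $\nu_s+z_s$ is even; hence it remains to show $z_s\equiv\mu_s+1\pmod2$. I would evaluate at $\vec t=(x,0,0,\dots)$ with $x$ large, so each v.e.w. is dominated by its highest nonzero term; then, using the sign facts of Lemma~\ref{rem:sign} (positivity of the top coefficient on every horizontal edge of a given row, which forces $\gvac_{i_rj_l}>0$, together with the sign rule $(-1)^{\sigma_{i_rj}}$ for the components of the vectors $E^{(r)}[l]$ of (\ref{eq:Er})--(\ref{eq:Erj})), one determines for each divisor point and each zero which of the three arcs of its component it lies on, i.e. which oval contains it, and one tracks these locations around the boundary of $\Omega_s$, a face of $\mathcal N'$ whose only additional marked features are the Darboux points $P^{(3)}_{i_r}$.

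I expect the genuine obstacle to be exactly this last bookkeeping: controlling, uniformly over all Le--diagrams, whether each $\gvac_{i_rj_l}$ lies in $(0,1)$ or in $(1,\infty)$ — equivalently the sign of $\Phi^{(2)}_{i_rj_l}(\vec t_0)$ relative to $\Phi^{(3)}_{i_rj_l}(\vec t_0)$ — and how the corresponding zero migrates as $\vec t$ becomes generic. The natural way to organize it is an induction on the rows of the Le--tableau mirroring the recursion of Theorem~\ref{theo:mainalg}: passing from row $r+1$ to row $r$ adds $N_r$ new trivalent white vertices (hence $N_r$ divisor points), one new Darboux point $P^{(3)}_{i_r}$, and $N_r$ new finite ovals $\Omega_{i_rj_1},\dots,\Omega_{i_rj_{N_r}}$; one checks that on each new oval the parity in (\ref{eq:odd}) is created correctly and that the pre-existing ovals keep their parity, which amounts to verifying that the new divisor points and the new Darboux point distribute among the old and new ovals in a way compatible with the signs produced by the transition matrix $C^{[r-1,r]}$. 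Once the finite--oval statement is secured, (\ref{eq:even}) and its reformulation (\ref{eq:even2}) follow from the global count above, and, combined with the first two parts, this completes the verification that $\hat\phi$ is the real and regular vacuum wave function of Theorem~\ref{theo:exist}.
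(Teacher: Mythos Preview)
Your treatment of the wave--function properties and of items (1)--(2) is correct and matches the paper. Your global parity count reducing (\ref{eq:even}) to (\ref{eq:odd}) is a genuine simplification the paper does not use: the paper proves (\ref{eq:even}) first by a direct case analysis (one Darboux point in $\Omega_0$ versus several) and only then turns to (\ref{eq:odd}); your observation that $\sum_{s\ge 0}(\nu_s+\mu_s)=g+k$ makes the infinite--oval statement an immediate corollary of the finite--oval ones.

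Where you diverge more substantially is in the proof of (\ref{eq:odd}). Your plan introduces the zeros $z_s$ of $\hat\phi(\cdot,\vec t)$ on each oval and argues $\nu_s+z_s\equiv 0$, then $z_s\equiv\mu_s+1$. This is not wrong, but it is a detour: the zero on $\Gamma_{i_rj_l}$ sits at $\Phi^{(1)}_{i_rj_l}(\vec t)/\Phi^{(3)}_{i_rj_l}(\vec t)$, so locating the zeros is the \emph{same} sign computation as locating the poles (which sit at $\Phi^{(1)}_{i_rj_l}(\vec t_0)/\Phi^{(3)}_{i_rj_l}(\vec t_0)$), only at a different time. You therefore gain nothing by passing through $z_s$, and your inductive bookkeeping is left undone. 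The paper instead works directly with the poles: since every $\gvac_{i_rj_l}>0$, the pole on $\Gamma_{i_rj_l}$ lies in one of two ovals, decided by whether $\gvac_{i_rj_l}\lessgtr 1$, i.e.\ by the sign of $\Phi^{(2)}_{i_rj_l}(\vec t_0)$. The paper then splits into two cases for each finite oval $\Omega_{i_rj_l}$: if $\Omega_{i_rj_l}\cap\Gamma_0=\emptyset$, only the poles on $\Gamma_{i_rj_l}$ and $\Gamma_{i_rj_{l-1}}$ can contribute, $\mu_{i_rj_l}=0$, and the sign of $\Phi^{(2)}$ is the same on both components (they see the same lower row), so exactly one pole lands in $\Omega_{i_rj_l}$. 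If $\Omega_{i_rj_l}\cap\Gamma_0\ne\emptyset$, one compares the signs of $\Phi^{(2)}_{i_rj_{l-1}}$ and $\Phi^{(2)}_{i_rj_l}$ via the number of pivot indices between the extreme $\kappa$'s on that piece of $\Gamma_0$, and this count exactly offsets the Darboux points and the extra divisor points coming from the source components inside the oval, using the already--proven infinite--oval parity on the subgraph. This is the concrete sign analysis that replaces your proposed induction on rows; it avoids tracking zeros altogether and closes the gap you identify in your last paragraph.
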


\begin{figure}
  \centering
  {\includegraphics[width=0.45\textwidth]{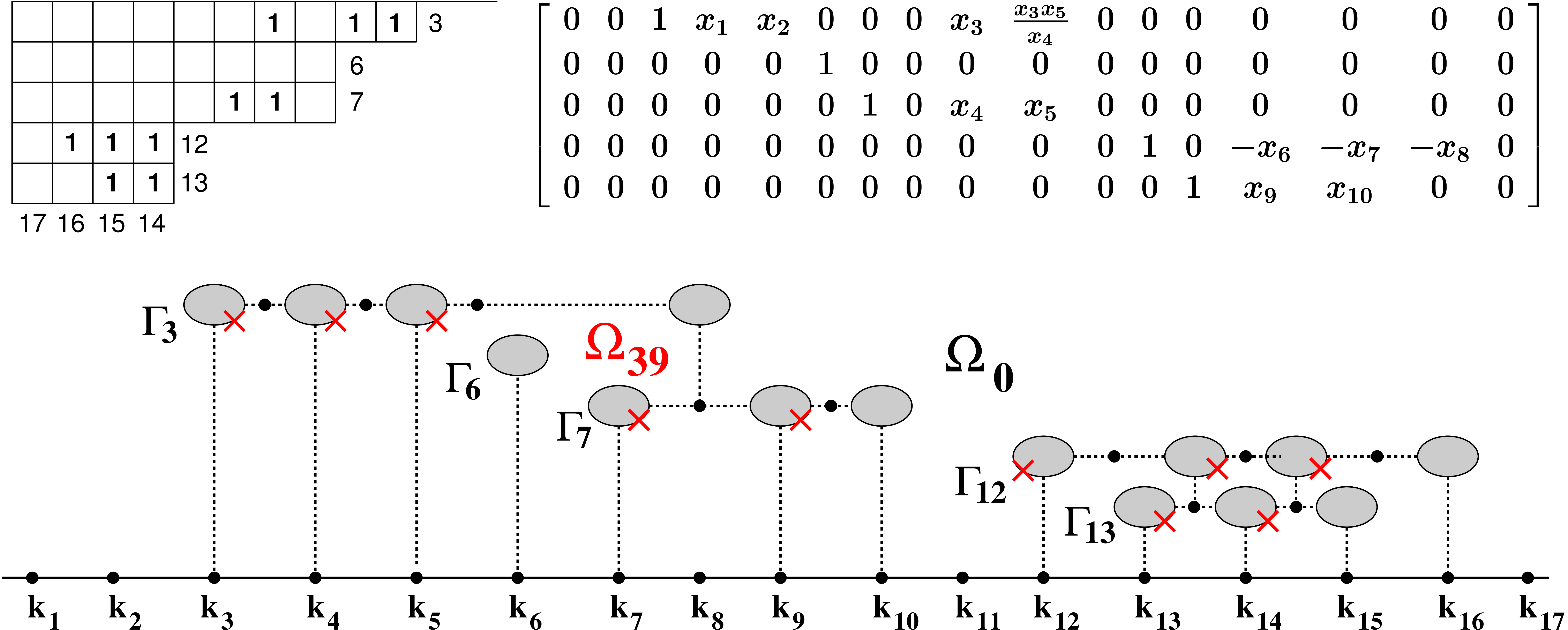}}
	\hspace{.5 truecm}
	{\includegraphics[width=0.45\textwidth]{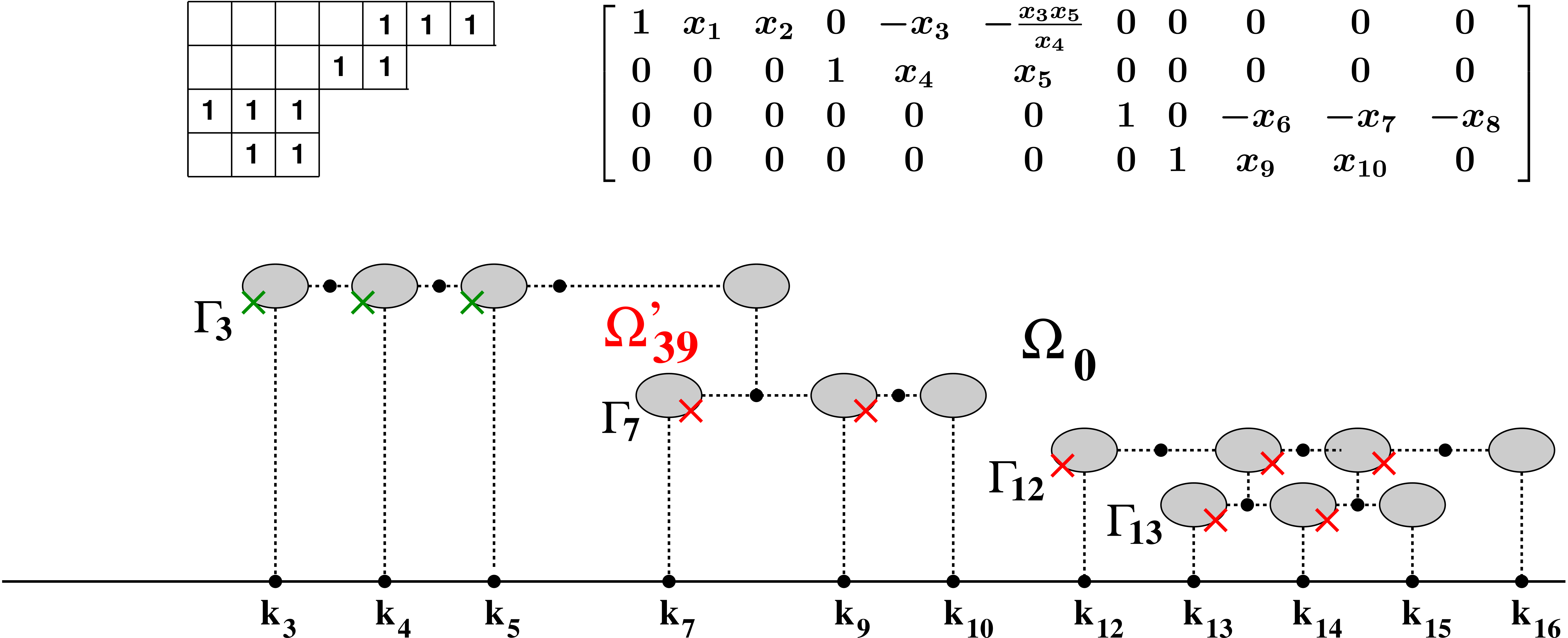}}
  \caption{\footnotesize{\sl  We illustrate Case b) in the proof of Theorem \ref{theo:existvac} for the example discussed in Remark \ref{rem:caseb}.}}
\label{fig:CaseB}
\end{figure}

\begin{proof}

The only untrivial statements are properties (\ref{it:4}) and (\ref{it:5}). Copies of $\mathbb{CP}^1$ corresponding to isolated boundary sources contribute to the counting only with a Darboux point, while those corresponding to not isolated boundary sources contribute with a pole and a Darboux point. Moreover, by definition, all poles have positive $\zeta$--coordinate so that:
\begin{enumerate}
\item for any fixed $r\in [k]$ and $l\in [N_r]$, the finite oval $\Omega_{i_rj_l}$ may contain only poles belonging either to $\Gamma_{i_rj_l}$ or to $\Gamma_{i_rj_{l-1}}$ or to $\Gamma_{i_{r+d}}$, with $d\in [k-r]$;
\item $\Omega_0$  may contain only poles belonging to $\Gamma_{i_{r}}$, with $r\in [k]$.
\end{enumerate}

Let's prove (\ref{eq:even}) first.
 
{\bf Case a)} First of all, consider the simplest situation in which the only Darboux point belonging to $\Omega_0$ is 
$P^{(D)}_{i_1}$. In this case, $\mu_0=1$ in ${\mathcal N}'$, since $\Gamma_{i_s} \cap\Omega_0\not =\emptyset$ if and only if $s=1$. The pole on $\Gamma_{i_1}$ has $\zeta$--coordinate
\[
\gvac_{i_1} = \frac{\Phi^{(1)}_{i_1}(\vec t_0)}{\Phi^{(1)}_{i_1}(\vec t_0)+ e^{\theta_{i_1} (t_0)}}>0
\]
and ${\rm sign} (\Phi^{(1)}_{i_1}(\vec t_0))= (-1)^{k-1}$ by construction (compare Lemma \ref{lemma:vacvertexwf} with Theorem \ref{theo:mainalg}).
Then, if $k$ is even, $\gvac_{i_1} >1$, the vacuum divisor point $\Pvac_{i_1} \in \Omega_0$, $\nu_{0}=1$ and (\ref{eq:even}) holds. If $k$ is odd, $0<\gvac_{i_1} <1$,  $\Pvac_{i_1} \not \in \Omega_0$,  $\nu_0=0$ and (\ref{eq:even}) again holds true.

{\bf Case b)} Otherwise, the oval  $\Omega_0$ contains $d>1$ Darboux points $P^{(D)}_{i_{s_1}}$,\ldots,  $P^{(D)}_{i_{s_d}}$, $1\le s_1< \cdots <s_d\le k$. Then, for any $l\in [d]$ , there exists an index ${\bar \jmath_l }$ satisfying $1\le i_{s_1} \le {\bar \jmath_1} <i_{s_2}\le {\bar \jmath_2 } <\cdots < i_{s_d}\le {\bar \jmath_d }\le n$ such that, for any fixed $l\in [d]$, if $i_{s_l} = {\bar \jmath_l }$ then $b_{i_{s_l}}$ is an isolated boundary source in ${\mathcal N}$(an isolated boundary sink in ${\mathcal N}^{\prime}$); otherwise ${\bar \jmath_l }$ is the maximum non pivot index such that there exists a path from the boundary source $b_{i_{s_l}}$ to the boundary sink $b_{{\bar \jmath_l}}$.

In this case, in ${\mathcal N}'$, $\Omega_0 \cap \Gamma_{i_r}\not =\emptyset$ if and only if $r\in \{ s_1,\dots, s_d\}$. For any fixed $l\in [d]$, we apply the argument of Case a) to the subgraph of ${\mathcal N}'$ containing only boundary vertices between $b_{i_{s_l}}$ and $b_{\bar \jmath_l}$, and again (\ref{eq:even}) holds true.  We refer to Figure \ref{fig:CaseB} for an illustrating example.

\smallskip

To prove (\ref{eq:odd}), we treat separately the cases:
\begin{enumerate}
\item The finite oval $\Omega_{i_r,j_l}$ does not intersect $\Gamma_0$;
\item The finite oval $\Omega_{i_r,j_l}$ intersects $\Gamma_0$;
\end{enumerate}
In the first case there are no Darboux points belonging to $\Omega_{i_r,j_l}$, whereas the only poles which may belong to $\Omega_{i_r,j_l}$ are those belonging to the upper components $\Gamma_{i_r,j_l}$ and $\Gamma_{i_r,j_{l-1}}$. Let $i_s$ be the biggest pivot index such that there exists a component with the first index $i_s$ intersecting  $\Omega_{i_r,j_l}$. By construction, the vacuum wave function has the same sign at all horizontal edges with the same first index at time $\vec t_0$, and $\sign \Phi^{(2)}_{{i_r}j_{l}} (\vec t_0) = (-1)^{s-r} \sign \Phi^{(3)}_{i_{s}j_{N_s}} (\vec t_0)$, 
$\sign \Phi^{(2)}_{{i_r}j_{l-1}} (\vec t_0) = (-1)^{s-r} \sign \Phi^{(3)}_{i_{s}j_{N_s}} (\vec t_0)$. Therefore, only one pole belongs to  $\Omega_{i_r,j_l}$ since $\Phi^{(2)}_{{i_r}j_{l}} (\vec t_0)\, \Phi^{(2)}_{i_{r}j_{l-1}} (\vec t_0)>0$. 

In the second case let us denote $s_L$ and $s_R$ respectively:
$$
s_L=\min\{s\in[n],\kappa_s\in\Omega_{i_r,j_l} \}, \ \ s_R=\max\{s\in[n],\kappa_s\in\Omega_{i_r,j_l} \}.
$$
Let us remark that $s_L$ and $s_R$ may be both pivot or non-pivot indexes. Let us also introduce an index $\nu_{s_R}$ which tells whether the divisor point $\gvac_{s_R}$ belong to $\Omega_{i_r,j_l}$ if $s_R\in I$.  
$$
\nu_{s_R}=\left\{\begin{array}{ll} 0 & \mbox{if} \ s_R \not\in I\\
0 & \mbox{if} \ s_R \in I \ \mbox{and} \ \gvac_{s_R}\not\in\Omega_{i_r,j_l}   \\
1 & \mbox{if} \ s_R \in I \ \mbox{and} \ \gvac_{s_R} \in\Omega_{i_r,j_l}.
  \end{array} \right.
$$
Then
$$
\sign  \Phi^{(2)}_{{i_r}j_{l-1}} (\vec t_0) = (-1)^{\#(]i_r,s_L]\cap I)}, \ \ 
\sign  \Phi^{(2)}_{{i_r}j_{l}} (\vec t_0) = (-1)^{\#(]i_r,s_R]\cap I) + \nu_{s_R}}.
$$
Therefore
\begin{equation}
\label{eq:even3}
\sign \left( \Phi^{(2)}_{{i_r}j_{l-1}} (\vec t_0)\,  \Phi^{(2)}_{{i_r}j_{l}} (\vec t_0) \right)=  (-1)^{\#(]s_L,s_R]\cap I) + \nu_{s_R}}. 
\end{equation}
If $s_R-s_L>1$, then the boundary of $\Omega_{i_r,j_l}$ includes components associated to the boundary sources in $]s_L,s_R[$, behaving like components of the infinite oval in the intersection with  $\Omega_{i_r,j_l}$. Using (\ref{eq:even2}) we immediately conclude that $\tilde\nu_0(i_r,j_l)$, the number of divisor points on the intersection of $\Omega_{i_r,j_l}$ with such components, is equal to the number of Darboux points in  $]s_L,s_R[$ not belonging to  $\Omega_{i_r,j_l}$ $\mod 2$. 

Therefore, substituting $\tilde\nu_0(i_r,j_l)$ into (\ref{eq:even3}) we get
\begin{equation}
\label{eq:even4}
\sign \left( \Phi^{(2)}_{{i_r}j_{l-1}} (\vec t_0)\,  \Phi^{(2)}_{{i_r}j_{l}} (\vec t_0) \right) =  (-1)^{\tilde\nu_0(i_r,j_l) + \mu_{i_r,j_l} + \nu_{s_R}},
\end{equation}
where $ \mu_{i_r,j_l}$ is the number of Darboux points in  $\Omega_{i_r,j_l}$.

If $\tilde\nu_0(i_r,j_l) + \mu_{i_r,j_l} + \nu_{s_R}=0\mod 2$, then $\# \big(\{\gvac_{i_r,j_{l-1}},\gvac_{i_r,j_{l}}\}\cap\Omega_{i_r,j_l}\big)=1$, otherwise  $\#\big(\{\gvac_{i_r,j_{l-1}},\gvac_{i_r,j_{l}}\} \cap\Omega_{i_r,j_l}\big)=0\mod 2$, and the proof of  (\ref{eq:odd}) is complete.  
\end{proof}

\begin{remark}
One may adapt this proof to check directly that the KP divisor has exactly one divisor point at each finite oval and no divisor points at the infinite oval, see \cite{AG3}.
\end{remark}

\begin{remark}\label{rem:caseb}{\bf The effect of the isolated boundary vertices on the effective vacuum divisor.}
Any KP soliton solution is associated to an unique irreducible cell in $Gr^{\mbox{\tiny TNN}}(k^{\prime},n^{\prime})$ which is obtained eliminating all isolated boundary vertices from the Le--diagram. Let $I$ be the pivot set. The elimination of an isolated boundary vertex $b_j$, $j\in \bar I$, does not affect the vacuum divisor. The elimination of an isolated boundary vertex $b_{i_s}$, $i_s\in I$, produces a change of sign in the matrix elements in all the rows $r\in [1,s-1]$ which lie to the right of the $i_s$--th column. This change of sign is automatically encoded in the basis of vectors by the recursion associated to the reduced matrix, and effects both the sign of the vacuum edge wave function and the position of the vacuum divisor not just in the oval where we remove $\Gamma_{i_s}$, but also in all the ovals to the left and above it. We show an example in Figure \ref{fig:CaseB} of both the reducible and reduced vacuum divisors. The original Le--diagram in Fig.\ref{fig:CaseB}[above left] has RREF matrix $A$ depending on 10 positive parameters $x_l$, $l\in [10]$ and the corresponding reducible rational $\mathtt M$--curve is shown in Fig.\ref{fig:CaseB}[bottom left]. The crosses are the divisor points of the KP vacuum wave function: here $k=5$, $d=2$, $s_1=1$, $s_2=4$, $i_{1}=3$, $i_4=12$, $j_1=10$ and $j_2=16$. The infinite oval $\Omega_0$ intersects $\Gamma_3$ and $\Gamma_{12}$ and $\mu_0 =2$. $\Phi^{(1)}_{12} (\vec t_0)$ has the sign of the entry $A^4_{12}=-x_8<0$, $\Phi^{(2)} (\vec t_0) ) =\exp(\theta_{12} (\vec t_0))>0$ so that $\gvac_{12}>1$ and $\Pvac_{12}\in \Omega_0$. $\Phi^{(1)}_{3} (\vec t_0)$ has the sign of the entry $A^1_{10}=(x_3x_5)/x_4>0$, $\Phi^{(2)} (\vec t_0) ) =\exp(\theta_{3} (\vec t_0))>0$ so that $\gvac_{3}<1$ and $\Pvac_{3}\not\in \Omega_0$. In conclusion $\nu_0=1$ and $k+\mu_0+\nu_0=8$ is even.

The reduced Le--diagram and the reduced matrix (see Fig.\ref{fig:CaseB}[top, right]) are obtained eliminating, respectively, all isolated boundary vertices. This transformation corresponds to the elimination of the component $\Gamma_6$ in the reducible rational curve (Fig.\ref{fig:CaseB}[bottom right]) and to the change of the vacuum pole divisor points (crosses) in the oval $\Omega_{39}^{\prime}$ which corresponds to $\Omega_{39}$ and in all the ovals to the left and above such oval. All other vacuum divisor points (crosses) in the ovals to the right and below respectively of $\Omega_{39}$ and of $\Omega_{39}^{\prime}$ are the same in Figure \ref{fig:CaseB} [bottom, left] and \ref{fig:CaseB}[bottom, right].
\end{remark}

\section{Construction of the plane curve and the divisor to soliton data in $Gr^{\mbox{\tiny TP}}(2,4)$ and comparison with the construction in \cite{AG1}}\label{sec:example}

\begin{figure}
\includegraphics[width=0.49\textwidth]{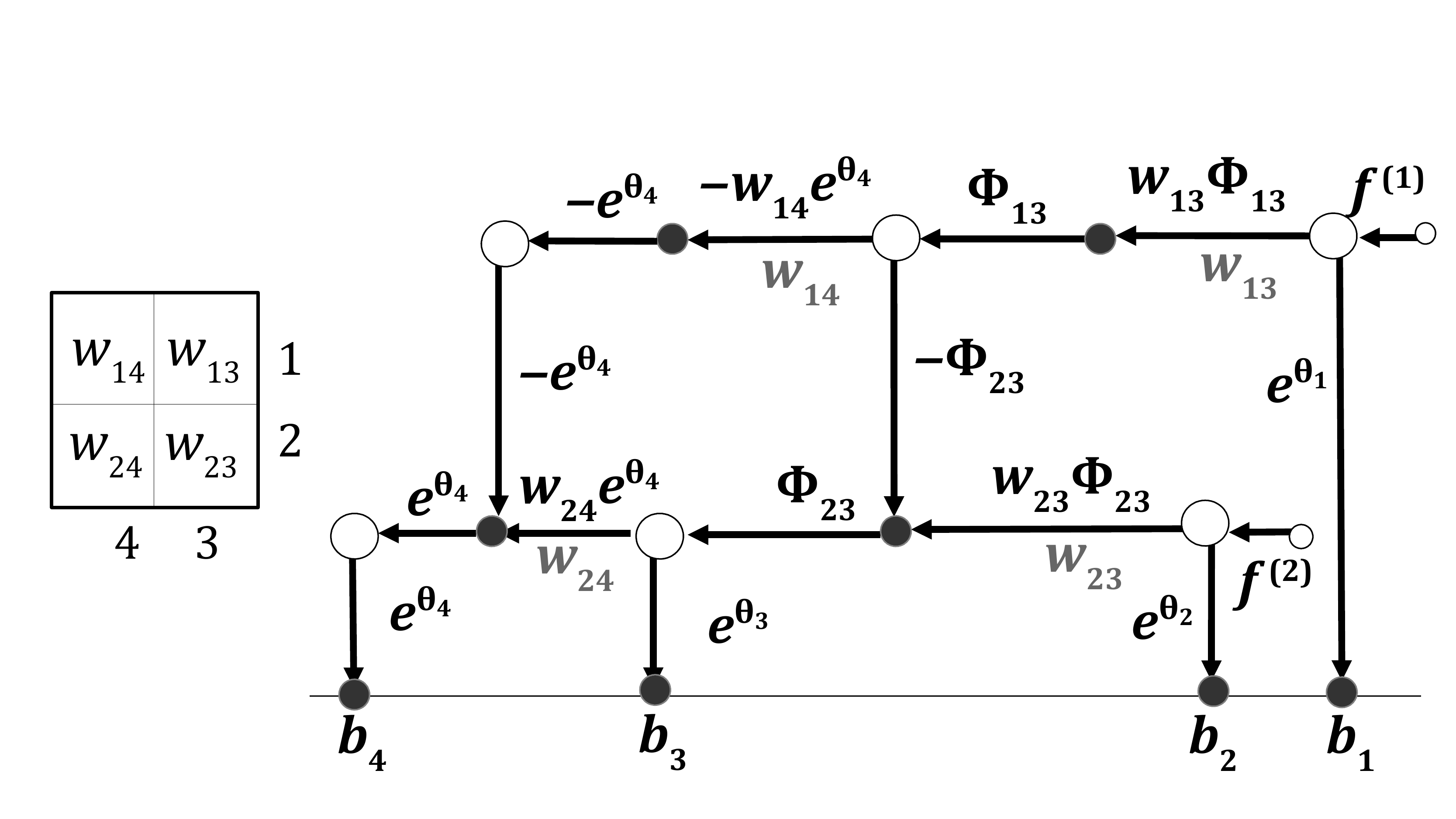}
\vspace{-.3 truecm}
\caption{\footnotesize{\sl The Le-tableau and vacuum edge wave function on the modified Le--network for soliton data in $Gr^{\mbox{\tiny TP}}(2,4)$. $\Phi_{13} (\vec t)=-e^{\theta_3(\vec t)}-(w_{14}+w_{24}) e^{\theta_4(\vec t)}$, while $\Phi_{23} (\vec t)=e^{\theta_3(\vec t)}+w_{24} e^{\theta_4(\vec t)}$.}}
\label{fig:Gr24_cur_1}
\end{figure}

$Gr^{\mbox{\tiny TP}}(2,4)$ is the main cell in $Gr^{\mbox{\tiny TNN}} (2,4)$ and its elements $[A]$ are equivalence classes of real $2\times 4$ matrices $A$ with all maximal minors positive. Such matrices are parametrized by the four weights of the Le-tableau (see Figure \ref{fig:Gr24_cur_1}), $w_{ij}$, $i=1,2$, $j=3,4$, and may be represented in the reduced row echelon form (RREF),
\begin{equation}
\label{eq:RREF}
A = \left( \begin{array}{cccc}
1 & 0 & - w_{13} & -w_{13} (w_{14}+w_{24})\\
0 & 1 & w_{23}   & w_{23}w_{24}
\end{array}
\right).
\end{equation}
The generators of the Darboux transformation ${\mathfrak D}=\partial_x^2 -{\mathfrak w}_1 (\vec t)\partial_x -{\mathfrak w}_2(\vec t)$ are
\begin{equation}\label{eq:gen_f2}
f^{(1)} (\vec t) = e^{\theta_1(\vec t)} -w_{13}e^{\theta_3(\vec t)}-w_{13} (w_{14}+w_{24})e^{\theta_4(\vec t)}, \quad\quad
f^{(2)} (\vec t) = e^{\theta_2(\vec t)} +w_{23}e^{\theta_3(\vec t)}+w_{23}w_{24}e^{\theta_4(\vec t)}.
\end{equation}
In \cite{AG1} we have proposed a plane curve representation of $\Gamma(\xi)$ for soliton data in $Gr^{\mbox{\tiny TP}} (2,4)$ as the product of a line, a quadric and a cubic and we have desingularized $\Gamma(\xi)$ to a genus 4 $\mathtt M$--curve. In \cite{A2} and \cite{AG3}, we use the reduced Le--network ${\mathcal G}_{{\mbox{\scriptsize red}}}$ to implement the construction of the present paper, and represent $\Gamma({\mathcal G}_{{\mbox{\scriptsize red}}})$ by five lines. We recall that the elimination of $\mathbb{CP}^1$ copies associated to bivalent vertices is fully justified since it does not effect neither the properties of the desigularized curve (see Section \ref{sec:gamma}) nor the divisor (see Remark \ref{rem:red_div}).

In \cite{AG3} we also desingularize $\Gamma({\mathcal G}_{{\mbox{\scriptsize red}}})$ to a smooth genus 4 $\mathtt M$--curve on which we numerically construct real--regular KP--II finite gap solutions, while in \cite{A2} evidence is provided that the asymptotic behavior of the KP soliton zero divisor in $\| (x,y) \| \gg 1$ for fixed time $t$ is consistent with the characterization in \cite{CK2}.

In this section, we discuss the same example with a different spirit from our previous publications: we represent 
\[
\Gamma({\mathcal G}_{{\mbox{\scriptsize red}}}) =\Gamma_0 \sqcup\Gamma_{13}^{\prime}\sqcup\Gamma_{23}\sqcup\Sigma_{23}^{\prime}\sqcup\Sigma_{24}^{\prime},
\]
as five lines which are the limit of two lines and a cubic representing
\[
\Gamma(\xi) =\Gamma_0\sqcup\Gamma_1(\xi)\sqcup\Gamma_2(\xi), \ \ \xi\gg1,
\]
so that $\Gamma({\mathcal G}_{{\mbox{\scriptsize red}}})=\Gamma(\infty)$. We present the 
topological model of both $\Gamma({\mathcal G}_{{\mbox{\scriptsize red}}})$ and $\Gamma(\xi)$ when $\xi\gg 1$ in Figure \ref{fig:gr24_top_3}[top]. 

$\Gamma(\xi)$ in Section \ref{sec:gamma_24_xi} is the 
reducible curve obtained by the intersection of two lines representing $\Gamma_0$ and $\Gamma_2(\xi)$ and a cubic representing
$\Gamma_1(\xi)$, such that
\[
\Gamma_1(\infty) = \Gamma_{23}\sqcup\Sigma^{\prime}_{23}\sqcup\Sigma^{\prime}_{24},\quad\quad \Gamma_2(\infty) = \Gamma_{13}^{\prime}.
\]
We remark that the above representation of $\Gamma(\xi)$ is different from that proposed in \cite{AG1}.
We also compute the KP divisors both on $\Gamma({\mathcal G}_{{\mbox{\scriptsize red}}})$ and $\Gamma(\xi)$. For $\xi\gg 1$, the
KP divisor on $\Gamma(\xi)$ coincides at leading order with that of $\Gamma({\mathcal G}_{{\mbox{\scriptsize red}}})$ in appropriate coordinates. We remark that the KP divisor is independent on the plane curve representation.
Finally, we desingularize both $\Gamma({\mathcal G}_{{\mbox{\scriptsize red}}})$ and  $\Gamma(\xi)$ to genus 4 $\mathtt M$--curves.

\subsection{A spectral curve for the reduced Le--network for soliton data in $Gr^{\mbox{\tiny TP}}(2,4)$ and its desingularization}\label{sec:gamma_24}
 
We briefly illustrate the construction of the rational spectral curve $\Gamma({\mathcal G}_{{\mbox{\scriptsize red}}})$ for soliton data in $Gr^{\mbox{\tiny TP}} (2,4)$. We reduce the degree of the curve from 11 to 5 by eliminating the $\mathbb{CP}^1$ components corresponding to bivalent vertices in ${\mathcal G}$. 
We represent the topological model of $\Gamma({\mathcal G})$ and of $\Gamma({\mathcal G}_{{\mbox{\scriptsize red}}})$ in Figure \ref{fig:gr24_top_1}: at all double points marked with the same symbol the value of the normalized dressed wave function is the same for all times. 

\begin{figure}
  \centering
  {\includegraphics[width=0.42\textwidth]{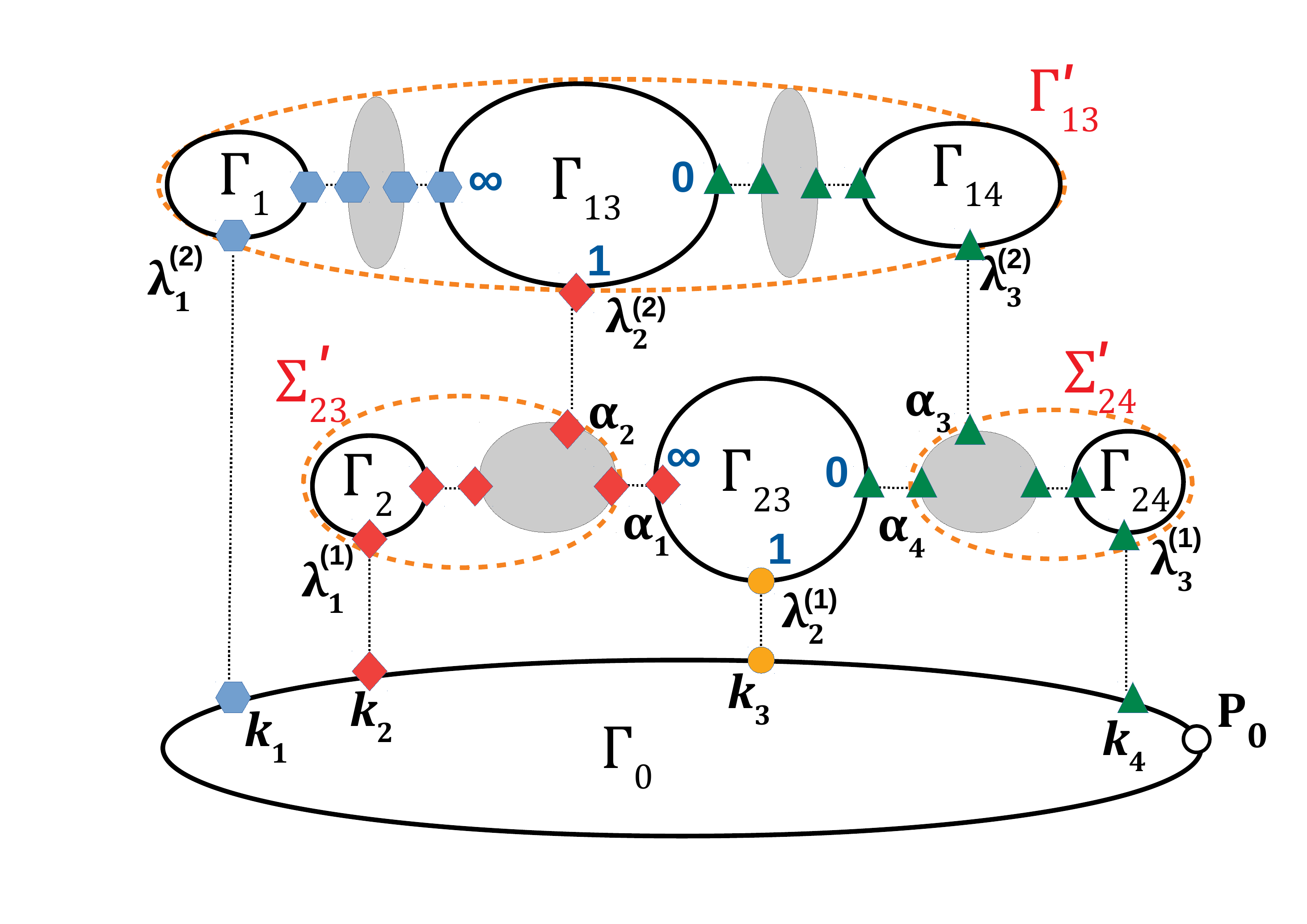}}
  \hspace{.6 truecm}
  {\includegraphics[,width=0.42\textwidth]{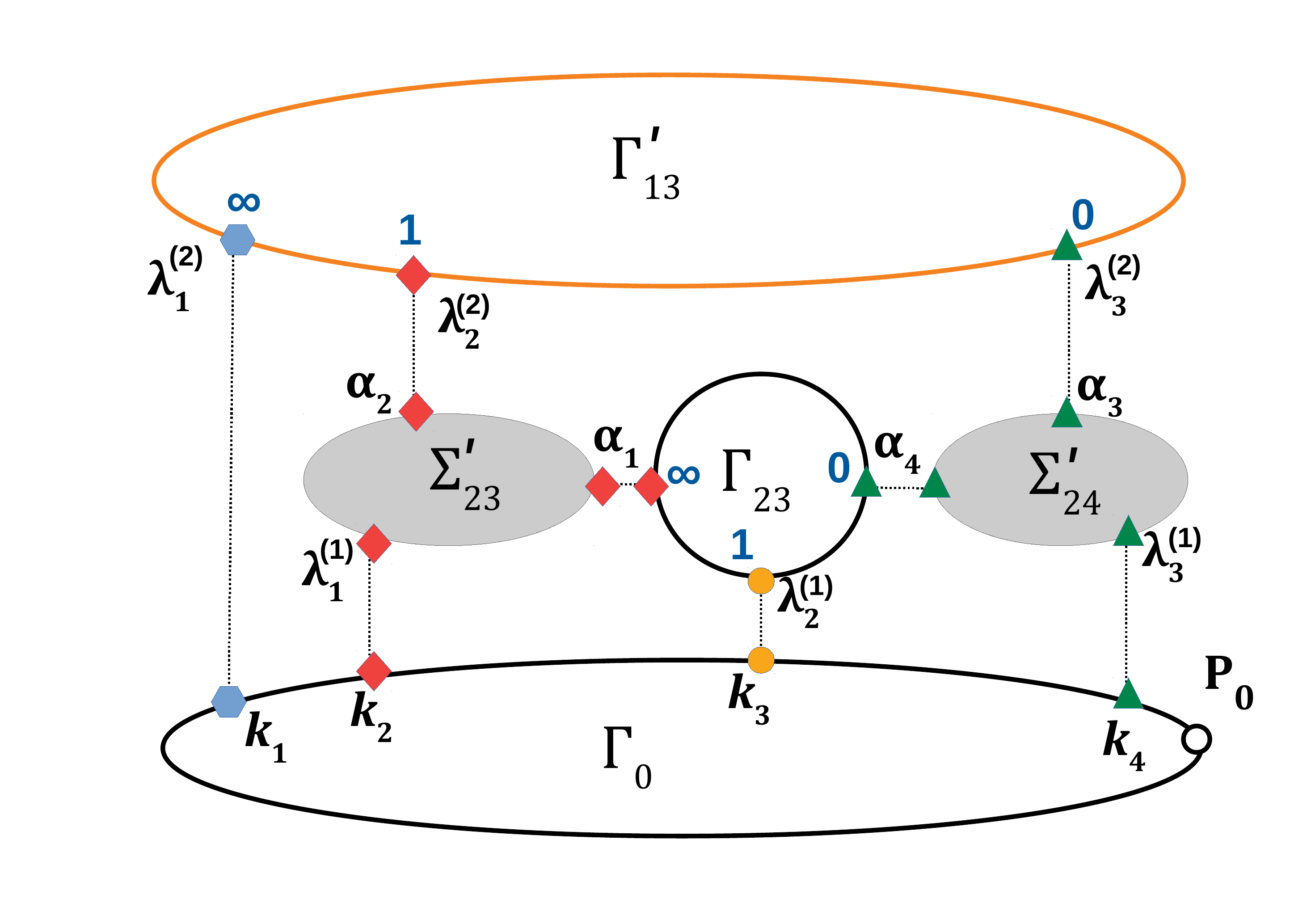}}
\caption{\footnotesize{\sl The topological model of the spectral curves $\Gamma({\mathcal G})$ [left] and $\Gamma({\mathcal G}_{{\mbox{\scriptsize red}}})$ [right] for soliton data $Gr^{\mbox{\tiny TP}}(2,4)$. In both figures, the value of the KP edge wave function is the same at all points marked with the same symbol.}}\label{fig:gr24_top_1}        
\end{figure}

The curve $\Gamma({\mathcal G}_{{\mbox{\scriptsize red}}})$ is the partial normalization of the nodal plane curve $\Pi_0(	\lambda,\mu)=0$, with $\Pi_0$ as in (\ref{eq:curveGr24}), and it is the rational degeneration of the genus 4 $\mathtt M$--curve $\Gamma_{\varepsilon}$ in (\ref{eq:curveGr24_pert}) for $\varepsilon\to 0$. 
Here we modify the plane curve representation in \cite{AG3} so that the plane curve 
representing $\Gamma(\xi)$ is a rational deformation of $\Pi_0(	\lambda,\mu)=0$ for $\xi\gg 1$.
We plot both the topological model and the plane curve for this example in Figure \ref{fig:gr24_top_2}. 

The reducible rational curve $\Gamma({\mathcal G}_{{\mbox{\scriptsize red}}})$ is obtained gluing five copies of $\mathbb{CP}^1$: $\Gamma_0$, $\Gamma_{13}^{\prime}$, $\Gamma_{23}$, $\Sigma_{23}^{\prime}$ and $\Sigma_{24}^{\prime}$ and it may be represented as a plane curve given by the intersection of five lines. To simplify its representation, we impose that $\Gamma_0$ is one of the coordinate axis in the $(\lambda,\mu)$--plane, say $\mu=0$, that $P_0\in \Gamma_0$ is the infinite point, that the lines $\Sigma_{23}^{\prime}$, $\Sigma_{24}^{\prime}$ are orthogonal to $\Gamma_0$, that $\Gamma_{23}$ is parallel to the first bisector and that $\Gamma_{13}^{\prime}$ and $\Gamma_{23}$ intersect at a finite point $\alpha_5$:
\begin{equation}\label{eq:lines}
\Gamma_0:  \mu=0, \ \ \Gamma_{13}^{\prime}: \mu-c_{13} (\lambda - \kappa_1) =0, \ \  \Gamma_{23}: \mu-\lambda +\kappa_3 =0, \ \ 
\Sigma_{23}^{\prime}: \lambda -\kappa_2=0, \ \ \Sigma_{24}^{\prime}: \lambda -\kappa_4=0.
\end{equation}
Notice that here we do not follow the generic construction of Section~\ref{sec:planar} and use parallel lines.

We choose
\begin{equation}\label{eq:c_13}
c_{13} =\frac{(\kappa_2+\kappa_4-2\kappa_3)(\kappa_4-\kappa_3)(\kappa_2-\kappa_1)+(\kappa_3-\kappa_1)(\kappa_3-\kappa_2)^2}{(\kappa_4-\kappa_1)(\kappa_2-\kappa_1)(\kappa_2+
\kappa_4 -2\kappa_3)},
\end{equation}
so that it coincides at leading order in $\xi$ with the coefficient of the line $\Gamma_2 (\xi)$ in (\ref{eq:gamma_2}).
Our representation fits generic values of $\kappa_j$. The figures in this and the following sections all refer to the case $\kappa_2+\kappa_4-2\kappa_3<0$.

As usual we denote $\Omega_0$ the infinite oval, that is $P_0\in \Omega_0$, and $\Omega_j$, $j\in[4]$, the finite ovals (see Figure \ref{fig:gr24_top_2}). 
Since the singularity at infinity is completely resolved, the lines $\Sigma_{23}^{\prime}$ and $\Sigma_{24}^{\prime}$ do not intersect at infinity. Finally the ovals $\Omega_1$ and $\Omega_4$ are both finite since neither of them passes through $P_0$. 

\begin{figure}
  \centering
  {\includegraphics[width=0.44\textwidth]{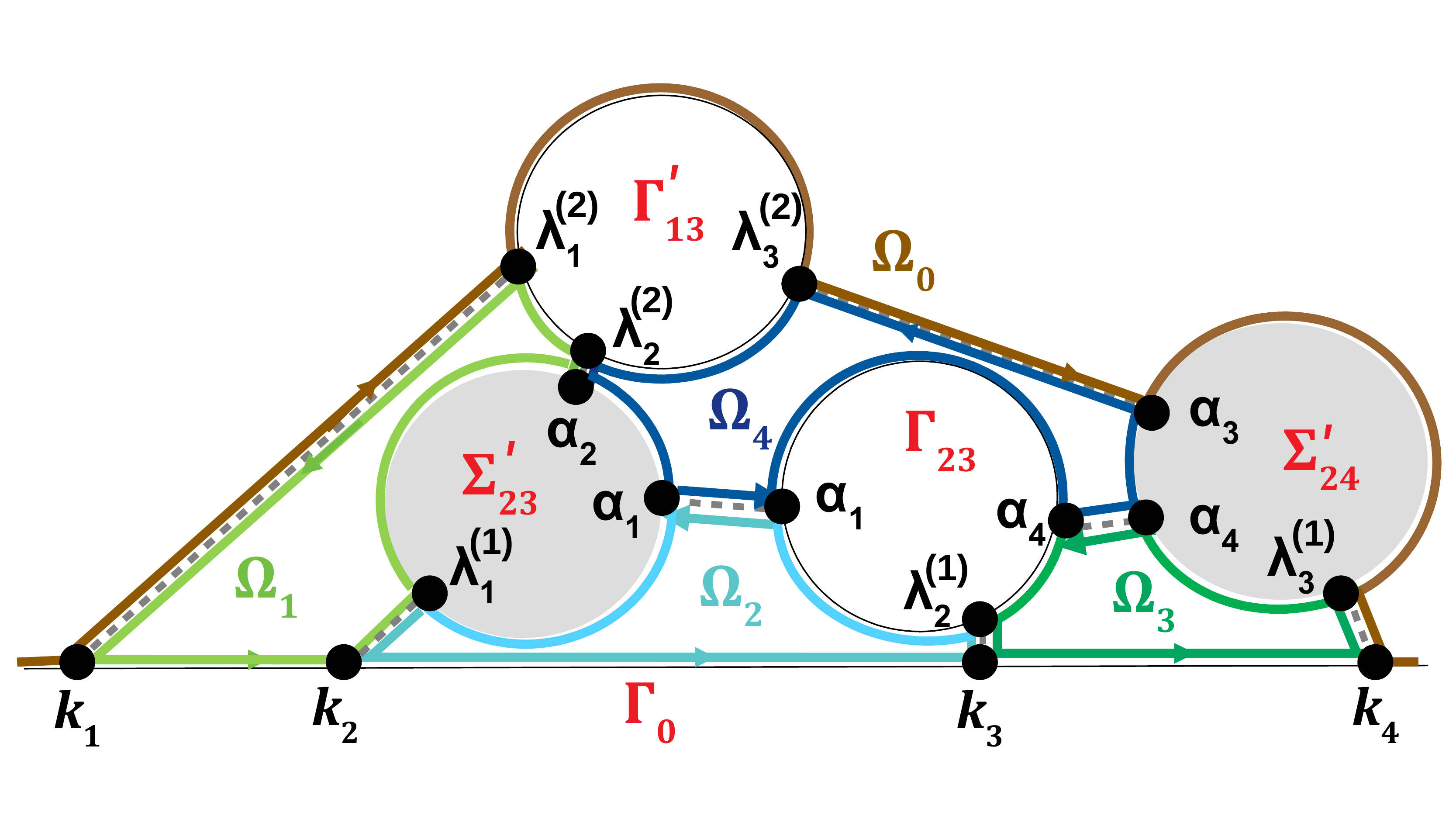}}
  \hspace{.6 truecm}
  {\includegraphics[,width=0.44\textwidth]{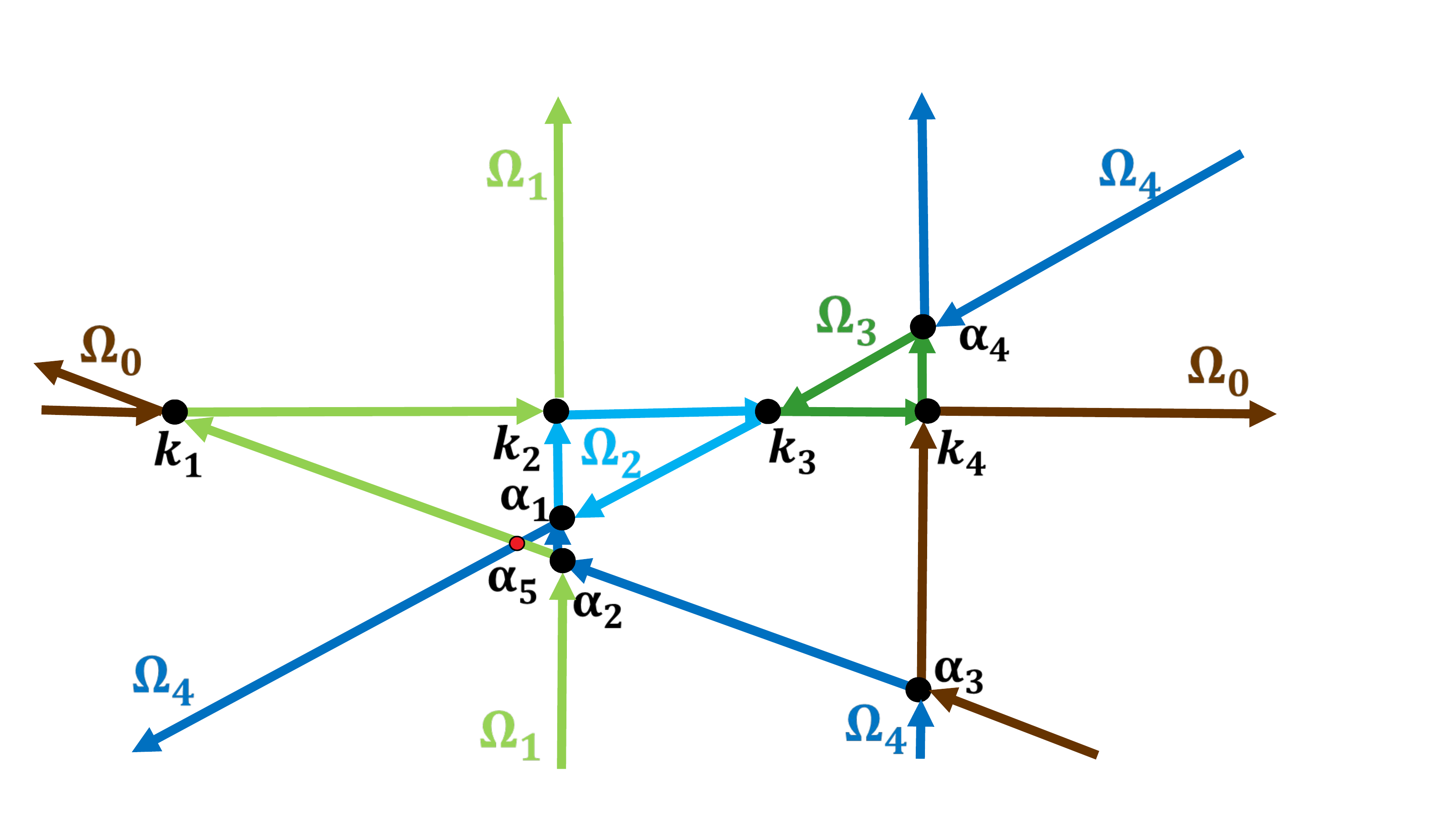}}
\caption{\footnotesize{\sl The topological model of spectral curve for soliton data $Gr^{\mbox{\tiny TP}}(2,4)$, $\Gamma({\mathcal G}_{{\mbox{\scriptsize red}}})$ [left] is the partial normalization of the plane algebraic curve [right]. The ovals in the nodal plane curve are labeled as in the real part of its partial normalization. }}\label{fig:gr24_top_2}        
\end{figure}

The relation between the coordinate $\lambda$ in the plane curve representation and the coordinate $\zeta$ introduced in Definition \ref{def:loccoor} may be easily worked out at each component of
$\Gamma({\mathcal G}_{{\mbox{\scriptsize red}}})$.
On $\Gamma_{13}^{\prime}$, we have 3 real ordered marked points, with $\zeta$--coordinates: $\zeta(\lambda_3^{(2)})=0<\zeta(\lambda_2^{(2)})=1<\zeta(\lambda_1^{(2)})=\infty$.
Comparing with (\ref{eq:lines}) we then easily conclude that
\[
\lambda = \frac{\kappa_1 (\kappa_4-\kappa_2) \zeta +(\kappa_2-\kappa_1)\kappa_4}{(\kappa_4-\kappa_2) \zeta +(\kappa_2-\kappa_1)}.
\]
On $\Sigma_{23}^{\prime}$, we have 3 real ordered marked points, and the following constraints: $\mu(\lambda_1^{(1)})=\mu(\kappa_2)=0$, $\mu(\alpha_1)=\kappa_2-\kappa_3$, $\mu(\alpha_2)=\mu(\lambda_2^{(2)})=c_{13}(\kappa_2-\kappa_1)$.
Similarly on $\Sigma_{24}^{\prime}$, we have 3 real ordered marked points, and the following constraints: $\mu(\lambda_3^{(1)})=\mu(\kappa_4)=0$, $\mu(\alpha_4)=\kappa_4-\kappa_3$, $\mu(\alpha_3)=\mu(\lambda_1^{(2)})=c_{13}(\kappa_4-\kappa_1)$.
Analogously, on $\Gamma_{23}$ in the initial $\zeta$ coordinates we have 3 real ordered marked points and $\zeta(\alpha_4)=0<\zeta(\lambda_2^{(1)})=1<\zeta(\alpha_1)=\infty$, therefore the fractional linear transformation to the $\lambda$ is:
\[
\lambda =\frac{\kappa_2 (\kappa_4-\kappa_3) \zeta +(\kappa_3-\kappa_2)\kappa_4}{(\kappa_4-\kappa_3) \zeta +(\kappa_3-\kappa_2)}.
\]
We remark that $\Gamma^{\prime}_{13}$ and $\Gamma_{23}$ intersect at 
\begin{equation}\label{eq:alpha_5}
\alpha_5 = (\lambda_5,\mu_5) = \left( -\frac{\kappa_3-c_{13}\kappa_1}{c_{13}-1} , -\frac{c_{13}(\kappa_3-\kappa_1)}{c_{13}-1}\right),
\end{equation}
which does not correspond to any of the marked points of the topological model of $\Gamma$ (see Figure \ref{fig:gr24_top_2}). Such singularity is resolved in the partial normalization and therefore there are no extra conditions to be satisfied by the vacuum and the dressed wave functions at $\alpha_5$.
Finally
$\Gamma({\mathcal G}_{{\mbox{\scriptsize red}}})$ is represented by the reducible plane curve $\Pi_0(\lambda,\mu)=0$, with
\begin{equation}
\label{eq:curveGr24}
\Pi_0(\lambda,\mu)=\mu\cdot\big(\mu-c_{13}(\lambda-\kappa_1)\big)\cdot\big(\lambda-\kappa_2\big)\cdot\big(\mu-\lambda+\kappa_3\big)\cdot\big(\lambda-\kappa_4\big).
\end{equation}

The desingularization of $\Gamma({\mathcal G}_{{\mbox{\scriptsize red}}})$ gives a genus 4 $\mathtt M$--curve $\Gamma_{\varepsilon}$ for the following choice ($0<\varepsilon \ll 1$):
\begin{equation}
\label{eq:curveGr24_pert}
\Gamma_{\varepsilon} \; : \quad\quad \Pi(\lambda, \mu;\varepsilon)= \left\{
\begin{array}{ll} \Pi_0(\lambda,\mu)+\varepsilon^2\left(\lambda -\lambda_5\right)^2C_0(\lambda,\mu)=0, & \mbox{ if } c_{13}\not =1,\\
\Pi_0(\lambda,\mu)+\varepsilon^2C_0(\lambda,\mu)=0, & \mbox{ if } c_{13} =1,
\end{array}
\right.
\end{equation}
where $\lambda_5$ is as in (\ref{eq:alpha_5}) and $C_0$ is a cubic polynomial in $\lambda,\mu$,
\[
C_0(\lambda,\mu)= \beta_0 +\beta_{1,0} \lambda + \beta_{0,1} \mu + \beta_{2,0} \lambda^2 +\beta_{2,1} \lambda\mu +\beta_{0,2} \mu^2 +
\beta_{3,0} \lambda^3 +\beta_{2,1} \lambda^2 \mu +\beta_{1,2} \lambda\mu^2+\beta_{0,3}\mu^3,
\]
of which we control the sign at the marked points. If $\kappa_2+\kappa_4-2\kappa_3<0$, the conditions are
\[
\mbox{sign }\left( C_0(\kappa_1) \right)= \mbox{sign }\left( C_0(\alpha_2)\right)\not =\mbox{sign }\left( C_0(\kappa_j)\right),\mbox{sign }\left( C_0(\alpha_l)\right),\quad \mbox{ for } j=2,3,4, \,\, l=1,3,4,
\]
(see also Figure \ref{eq:curveGr24_pert} [bottom, left]). The case $\kappa_2+\kappa_4-2\kappa_3>0$ can be treated similarly.
For instance, possible choices are $C_0 = -(35\lambda^3+\mu^3+70\lambda^2)$ for ${\mathcal K} = \{-3, -1, 2, 3 \}$, and $C_0 = \mu^2-35$ for ${\mathcal K} = \{-3, -1, 2, 6 \}$.

In Figure \ref{fig:gr24_top_3}[left], we represent the topological model of $\Gamma ({\mathcal G}_{{\mbox{\scriptsize red}}})$ [top], which is the partial normalization of the plane curve $\Pi_0=0$ [middle] and the oval structure of the plane curve $\Gamma_{\varepsilon}$ as in (\ref{eq:curveGr24_pert}) when $\kappa_2+\kappa_4-2\kappa_3<0$. We remark that the normalized KP wave function takes the same value at all points marked with the same symbol in Figures \ref{fig:gr24_top_1} and \ref{fig:gr24_top_3}.

\subsection{Construction of $\Gamma(\xi)$ as in \cite{AG1} starting from $\Gamma({\mathcal G}_{{\mbox{\scriptsize red}}})$}\label{sec:gamma_24_xi}

In \cite{AG1} the reducible curve  $\Gamma(\xi)=\Gamma_0\sqcup\Gamma_1(\xi)\sqcup\Gamma_2(\xi)$ is obtained gluing the three rational irreducible components, $\Gamma_0,\Gamma_1(\xi)$ and $\Gamma_2(\xi)$, at prescribed real points whose position is ruled by the parameter $\xi\gg 1$. 
In Figure \ref{fig:gr24_top_3} [top,right], we represent the topological model of $\Gamma(\xi)$ for soliton data in $Gr^{\mbox{\tiny TP}}(2,4)$ and call $\tilde \zeta$ the local coordinate introduced in Section 4 in \cite{AG1}. For instance the point $\alpha_2\in \Gamma_1(\xi)$ has local coordinate $\tilde \zeta (\alpha_2)=\xi^{-1}$ and is glued to the point $\lambda^{(2)}_2\in \Gamma_2(\xi)$ having local coordinate $\tilde \zeta (\lambda_2^{(2)})=-1$.

The plane curve $\Gamma(\xi)$ constructed in this section satisfies $\Gamma(\infty) =\Gamma ({\mathcal G}_{{\mbox{\scriptsize red}}})$, so it is more suitable than the one presented in \cite{AG1} for comparing the two models curves and the KP divisors:

\begin{figure}
  \centering
	  {\includegraphics[width=0.42\textwidth]{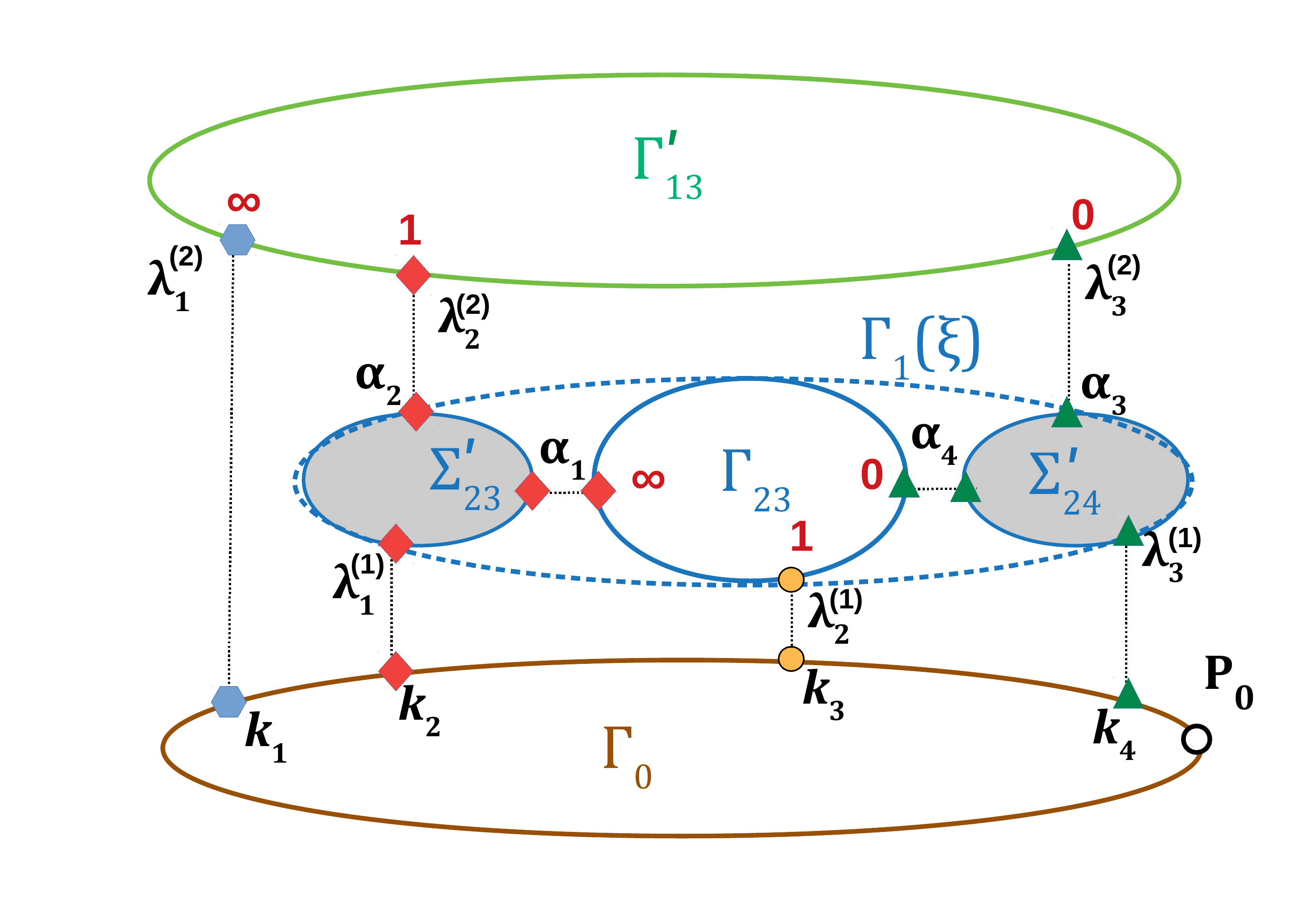}}
  \hspace{.6 truecm}
  {\includegraphics[width=0.42\textwidth]{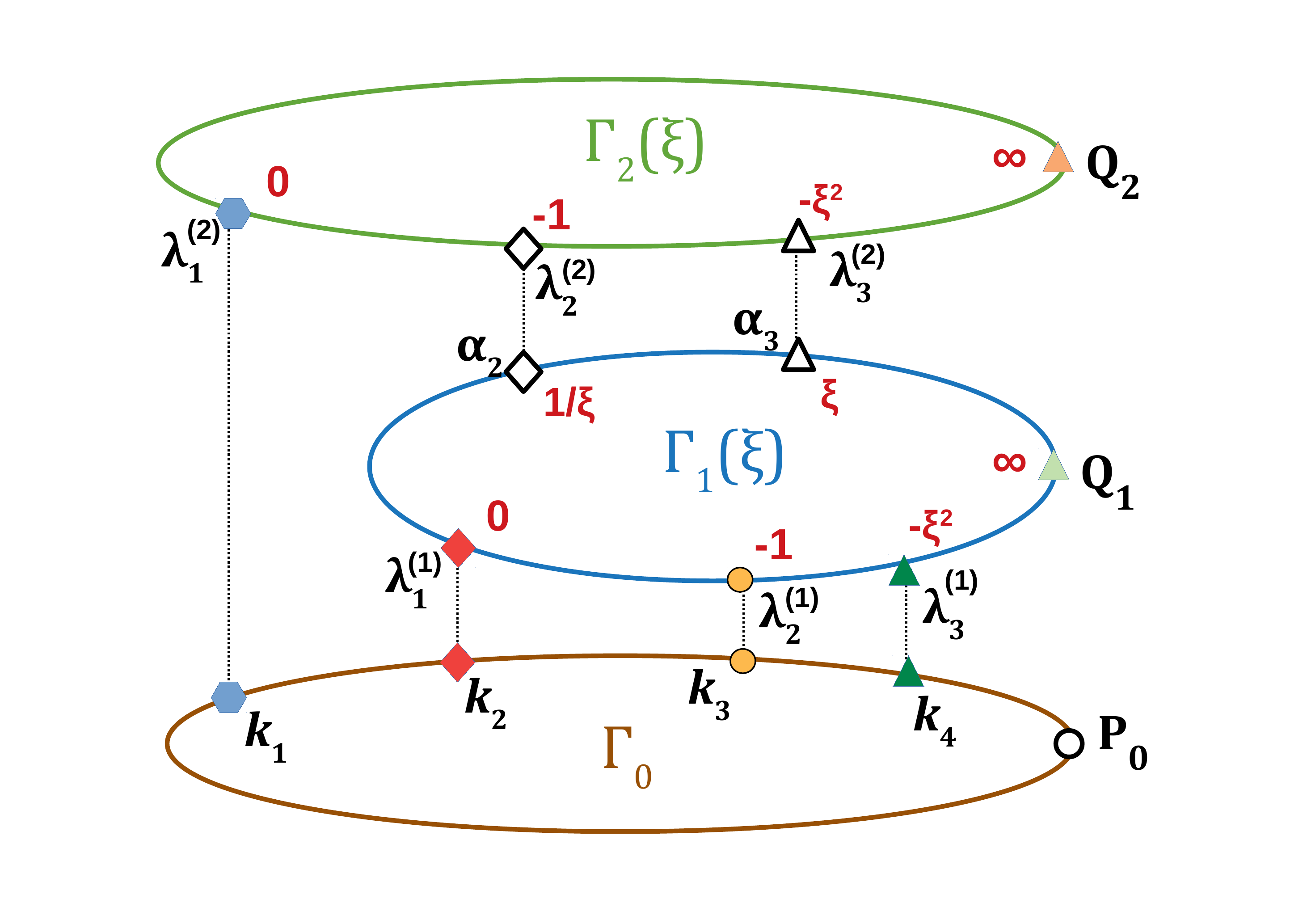}}
	{\includegraphics[width=0.44\textwidth]{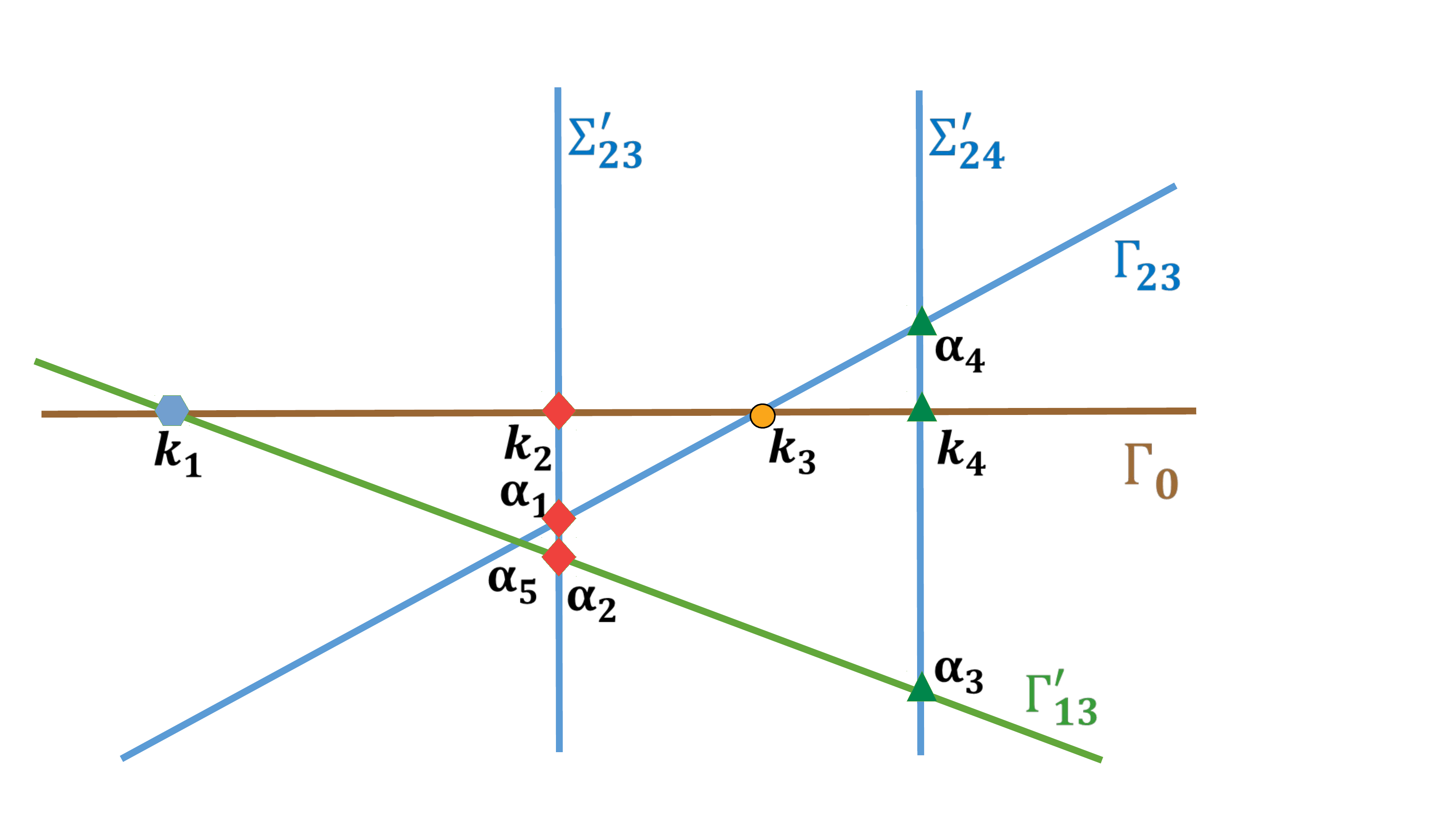}}
	 \hspace{.6 truecm}
	{\includegraphics[width=0.44\textwidth]{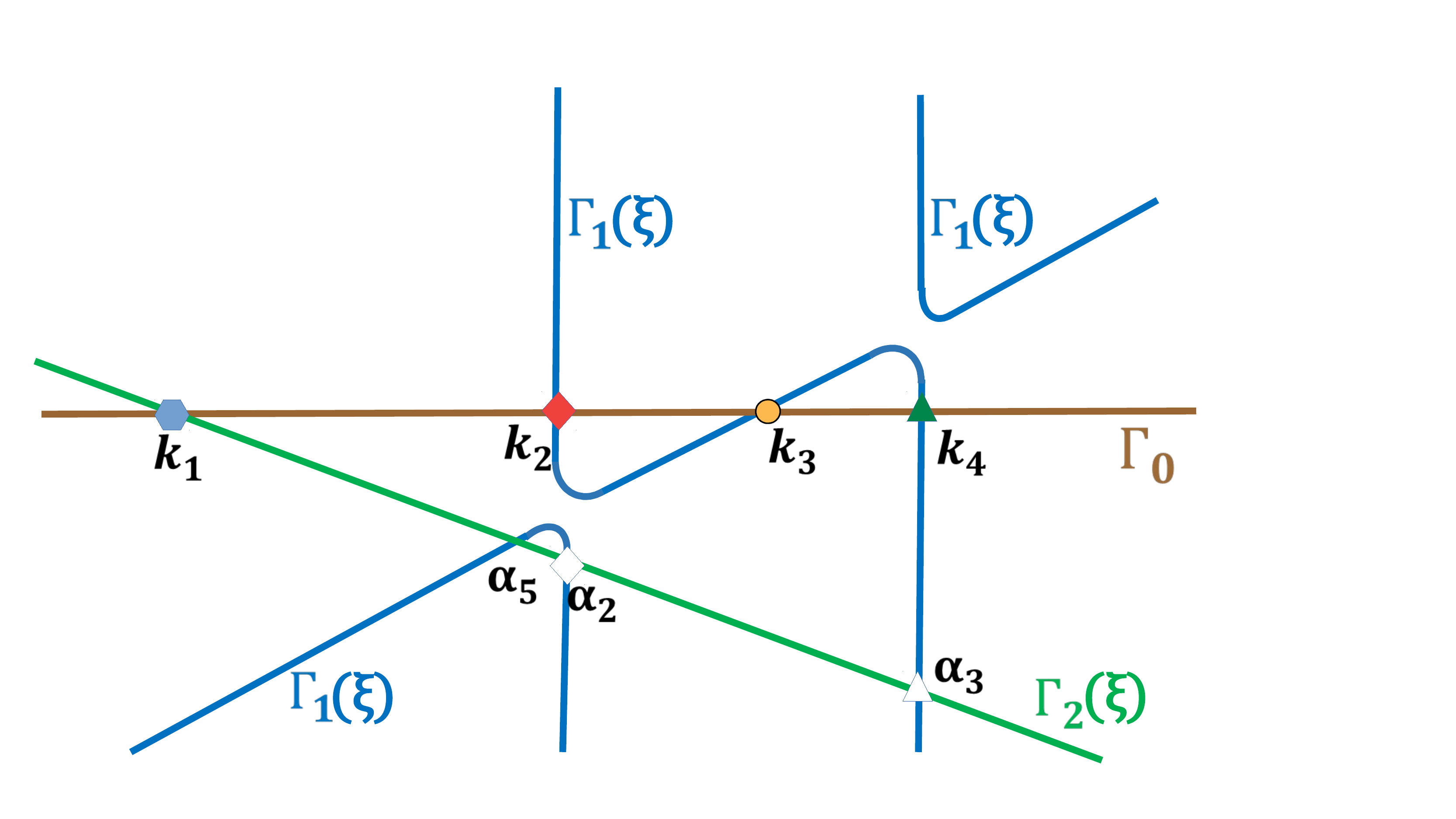}}
		{\includegraphics[width=0.44\textwidth]{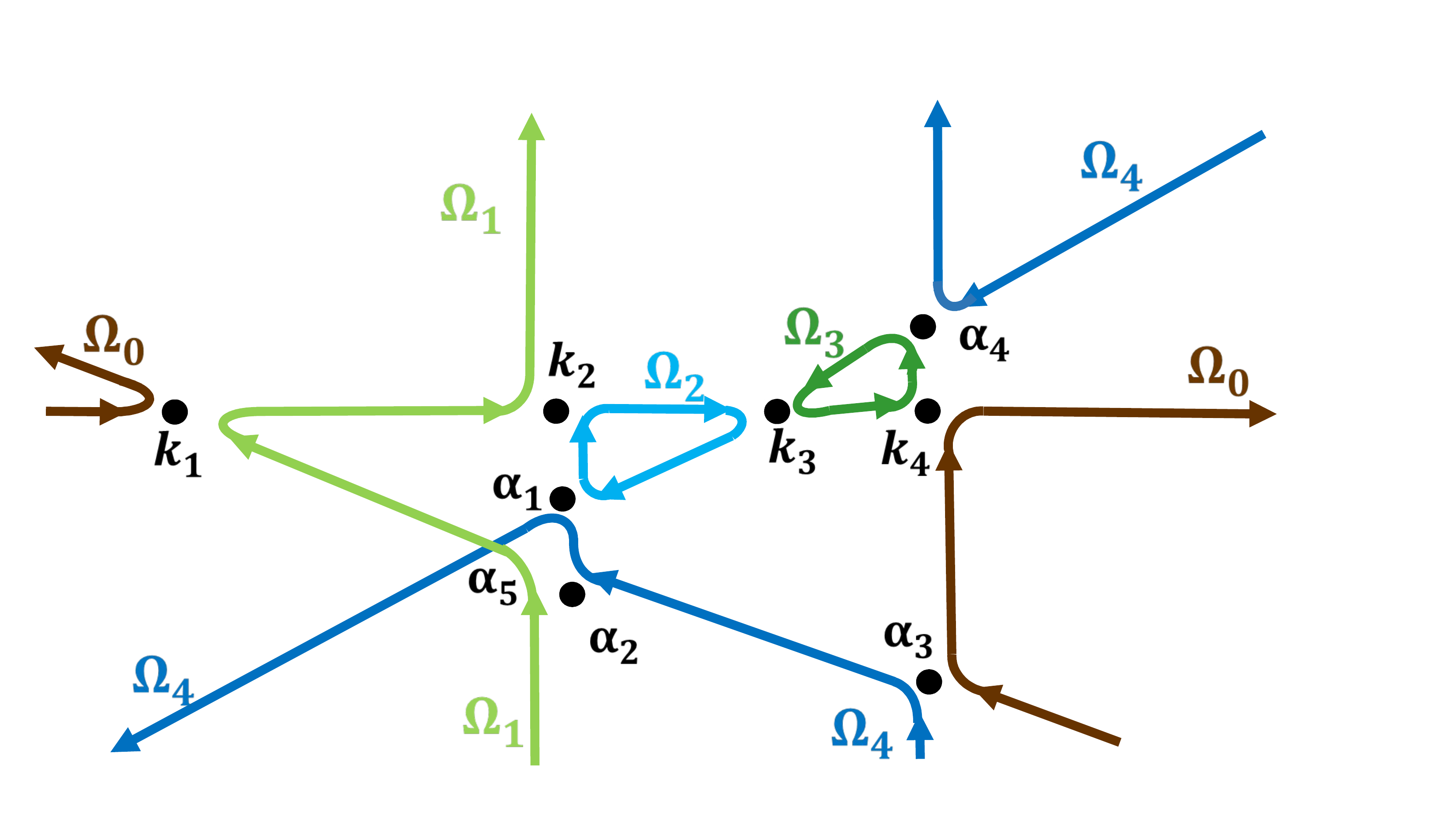}}
	 \hspace{.6 truecm}
	{\includegraphics[width=0.44\textwidth]{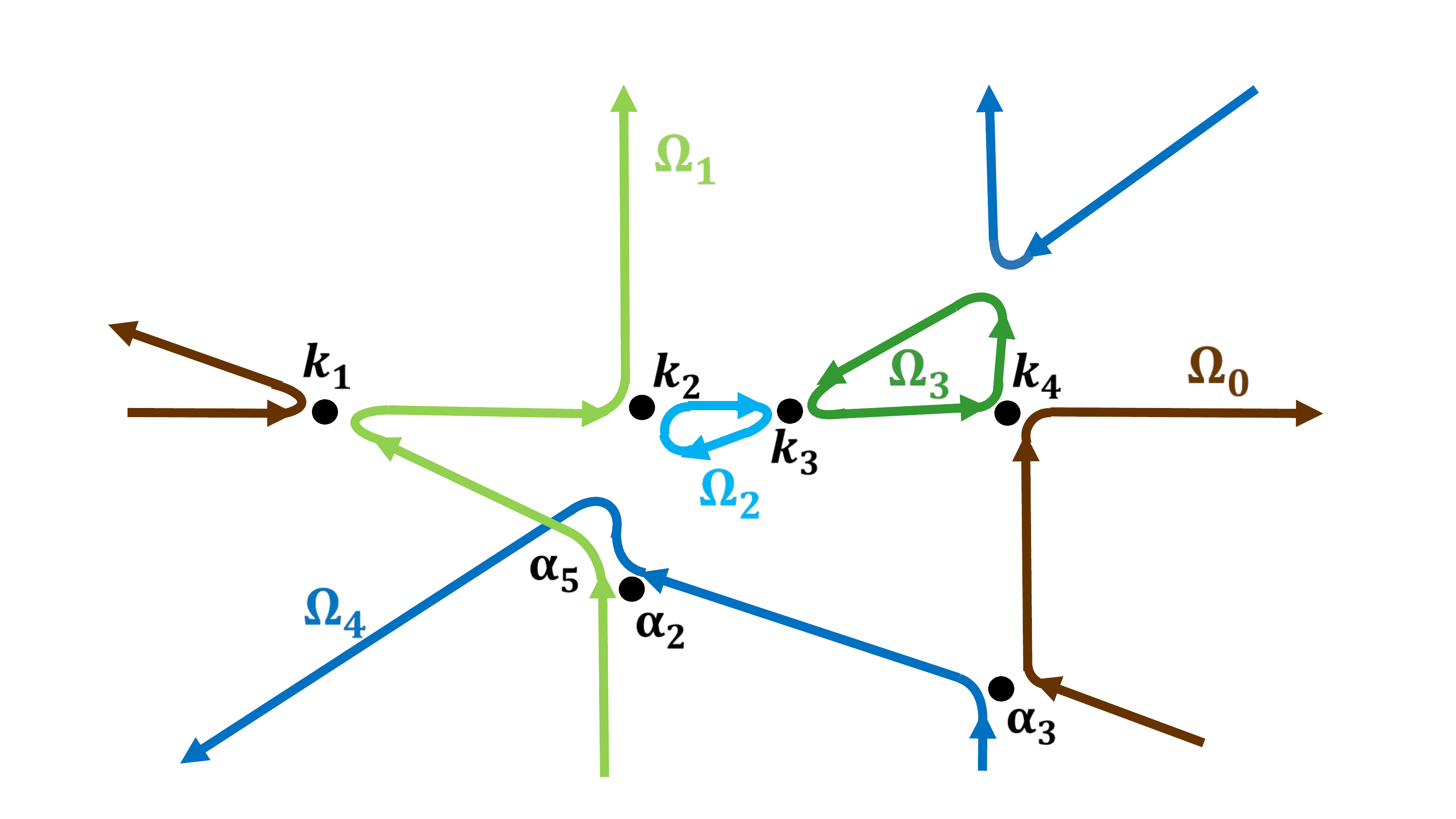}}
\caption{\footnotesize{\sl For the same soliton data in $Gr^{\mbox{\tiny TP}}(2,4)$ we compare the topological model of the spectral curve
[top], its realization as a plane curve [center] and the oval structure of the smooth genus 4 $\mathtt M$--curve [bottom] for
the reducible rational curves $\Gamma({\mathcal G}_{{\mbox{\scriptsize red}}})=\Gamma_0\sqcup\Gamma_{13}^{\prime}\sqcup\Gamma_{23}\sqcup\Sigma^{\prime}_{23}\sqcup\Sigma^{\prime}_{24}$ [left] and $\Gamma(\xi)=\Gamma_0\sqcup\Gamma_{1}(\xi)\sqcup\Gamma_{2}(\xi)$ [right]. In all figures the value of the normalized KP wave function for given soliton data in $Gr^{\mbox{\tiny TP}}(2,4)$ is equal for all times at the points marked with the same symbol.}}\label{fig:gr24_top_3}        
\end{figure}

\begin{enumerate}
\item  $\Gamma_0$ is the same rational component both for $\Gamma ({\mathcal G}_{{\mbox{\scriptsize red}}})$ and $\Gamma(\xi)$, and we represent it by the line $\mu=0$ in both cases;
\item $\Gamma_1(\xi)$ is a rational component which may be obtained from $\Gamma_{23}\sqcup\Sigma_{23}\sqcup\Sigma_{24}$ desingularising it at the double points $\alpha_1$ and $\alpha_4$. This transformation effects the position of the marked points $\alpha_j\equiv\alpha_j(\xi)$ and $j=2,3$, which have $\lambda$--coordinates $\lambda_{2,\xi}\equiv \lambda(\alpha_2)=\kappa_2+O(\xi^{-1})$ and $\lambda_{3,\xi}\equiv\lambda(\alpha_3)=\kappa_4+O(\xi^{-1})$ (compare with Section 7 in \cite{AG1}). Then $\Gamma_1(\xi)$ may be represented by a cubic depending on $\xi$ which tends to the product of the three lines when $\xi\to\infty$.
\item $\Gamma_2(\xi)$ has three marked points: $\lambda^{(2)}_1$ is glued to $\kappa_1$, while $\lambda^{(2)}_j$, $j=2,3$ are glued
 respectively to $\alpha_j$, $j=2,3$. Then $\Gamma_2(\xi)$ may be represented by a line which tends to that representing $\Gamma _{13}^{\prime}$ when $\xi\to\infty$.
\end{enumerate}
The above ansatz is consistent with the fact that, for any fixed $\xi\gg 1$ and for all times, the normalized KP wave function on $\Gamma(\xi)$ coincides exactly with that of $\Gamma ({\mathcal G}_{{\mbox{\scriptsize red}}})$ at the marked points $\kappa_j$, $j\in [4]$, and at leading order in $\xi$ at the marked points $\alpha_2$, $\alpha_3$ (see Figure \ref{fig:gr24_top_3}).

Let us now construct a plane curve $\Gamma(\xi)$ satisfying the above requirements.
$\Gamma_0$ is the line $\mu=0$ as in the previous section. We require that $\Gamma_2(\xi)$ is a line passing through the point $(\kappa_1,0)$ 
\begin{equation}\label{eq:gamma_2}
\Gamma_2(\xi) \; : \; \mu -c_{13}(\xi) (\lambda-\kappa_1)=0,
\end{equation}
where $c_{13}(\xi)$ tends to $c_{13}$ as in (\ref{eq:lines}) in the limit $\xi\to \infty$, so that the line $\Gamma_2(\infty)$ coincides with that representing $\Gamma_{13}^{\prime}$.

$\Gamma_1(\xi)$ is a cubic which degenerates to the product of the three lines representing $\Gamma_{23}$,$\Sigma_{23}^{\prime}$ and $\Sigma_{24}^{\prime}$ in the limit $\xi\to \infty$. We make the following choice
\begin{equation}\label{eq:gamma_1}
\Gamma_1(\xi) \; : \; \frac{\mu}{\xi}-c_{23} (\xi) (\lambda-\kappa_2)(\lambda-\kappa_3-\mu)(\lambda-\kappa_4) =0.
\end{equation}
The coefficients $c_{13}(\xi)$, $c_{23} (\xi)$ are uniquely defined by the conditions that $\Gamma_1(\xi)$ and $\Gamma_2(\xi)$ intersect at the real points $\alpha_2 \equiv \alpha_2 (\xi)$, $\alpha_3\equiv \alpha_3 (\xi)$ such that in the local coordinate $\tilde \zeta$ used in \cite{AG1} 
$\tilde \zeta(\alpha_2) = \xi^{-1}$, $\tilde \zeta(\alpha_3)=\xi$ (see also Figure \ref{fig:gr24_top_3} [top,right]). In the coordinate $\lambda$ used throughout this section, we have
\[
\resizebox{\textwidth}{!}{$
\lambda_{2,\xi}\equiv \lambda(\alpha_2) = \frac{\xi^2\kappa_2(\kappa_4-\kappa_3)+(1-\xi)\kappa_3(\kappa_4-\kappa_2)}{\xi^2(\kappa_4-\kappa_3)+(1-\xi)(\kappa_4-\kappa_2)}, \quad \lambda_{3,\xi}\equiv \lambda(\alpha_3) =\frac{\xi\kappa_4(\kappa_3-\kappa_2)-\kappa_3(\kappa_4-\kappa_2) }{\xi(\kappa_3-\kappa_2)-\kappa_4+\kappa_2}.
$}
\]
The third real intersection point between $\Gamma_1(\xi)$ and $\Gamma_2(\xi)$, $\alpha_5\equiv\alpha_5(\xi)$, has then $\lambda$ 
coordinate
\begin{equation}\label{eq:alpha_5_xi}
\resizebox{\textwidth}{!}{$
\lambda_{5,\xi}\equiv\lambda(\alpha_5) = \frac{( \lambda_{2,\xi}\lambda_{3,\xi}\kappa_1- (\lambda_{2,\xi}+\lambda_{3,\xi}) (\kappa_1\kappa_2+\kappa_1\kappa_4-\kappa_2\kappa_4)-\kappa_2^2(\kappa_4-\kappa_1)-\kappa_4^2(\kappa_2-\kappa_1)+\kappa_1\kappa_2\kappa_4}{ \lambda_{2,\xi}\lambda_{3,\xi}-\kappa_1(\lambda_{2,\xi}+\lambda_{3,\xi})+\kappa_1\kappa_2+\kappa_1\kappa_4-\kappa_2\kappa_4}
$}
\end{equation}
while the coefficients satisfy
\[
\resizebox{\textwidth}{!}{$
\begin{array}{ll}
c_{13} (\xi) &=\displaystyle \frac{\displaystyle\lambda_{2,\xi}\lambda_{3,\xi}(\lambda_{2,\xi}+\lambda_{3,\xi}-\sum_{j=1}^4\kappa_j)+\kappa_1(\lambda_{2,\xi}+\lambda_{3,\xi})\sum_{j\not=1}\kappa_j-\kappa_1(\lambda_{2,\xi}^2+\lambda_{3,\xi}^2+\sum_{1<i<j\le 4}\kappa_i\kappa_j)+\kappa_2\kappa_3\kappa_4}{(\lambda_{2,\xi}-\kappa_1)(\lambda_{3,\xi}-\kappa_1)(\lambda_{2,\xi}+\lambda_{3,\xi}-\kappa_2-\kappa_4)}= c_{13} + O(\xi^{-1}),\\
&\\
c_{23} (\xi) &=-\displaystyle\frac{c_{13}(\xi)( \lambda_{2,\xi}-\kappa_1)}{\xi( \lambda_{2,\xi}-\kappa_4)( \lambda_{2,\xi}-\kappa_2)(c_{13}(\xi) ( \lambda_{2,\xi}-\kappa_1)- \lambda_{2,\xi}+\kappa_3)}=\displaystyle -\frac{c_{13}(\kappa_4-\kappa_1)(\kappa_2-\kappa_1)(\kappa_2+\kappa_4-2\kappa_3)}{(\kappa_3-\kappa_1)(\kappa_3-\kappa_2)(\kappa_4-\kappa_3)(\kappa_4-\kappa_2)^2} + O(\xi^{-1}).
\end{array}
$}
\]
Therefore $\Gamma(\xi)$ is represented by the plane curve
\begin{equation}\label{eq:gr24_xi_unp}
\Gamma(\xi) \, : \quad\quad \Pi_{\xi} (\lambda, \mu) = \mu \Big(\mu-c_{13}(\xi) (\lambda-\kappa_1) \Big)\left(\frac{\mu}{\xi}- c_{23} (\xi)(\lambda-\kappa_2)(\lambda-\kappa_3 -\mu)(\lambda -\kappa_4) \right) =0.
\end{equation}
We represent both the plane curve (Figure \ref{fig:gr24_top_3}[right,middle]) and its partial normalization (Figure \ref{fig:gr24_top_3}[right,top]) in the case $\kappa_2+\kappa_4-2\kappa_3<0$.

The genus 4 $\mathtt M$--curve, $\Gamma_{\varepsilon}(\xi)$, obtained from $\Gamma(\xi)$ is then a perturbation of $\Gamma_{\varepsilon}$,
under the genericity assumption $c_{13}\not =1$:
\begin{equation}
\label{eq:curveGr24_pert_xi}
\Gamma_{\varepsilon}(\xi) \; : \quad\quad \Pi_{\xi}(\lambda, \mu;\varepsilon)= \Pi_{\xi}(\lambda,\mu) +\varepsilon^2\left(\lambda -\lambda_{5,\xi}^2\right)^2C_0(\lambda,\mu)=0, \quad\quad
0<\varepsilon \ll 1,
\end{equation}
where $\lambda_{5,\xi}$ is as in (\ref{eq:alpha_5_xi}), and the coefficients of the cubic polynomial $C_0$ coincide with those in (\ref{eq:curveGr24_pert_xi}), for $\xi$ fixed and sufficiently big.
In Figure \ref{fig:gr24_top_3}[right,bottom], we present the oval structure of $\Gamma_{\varepsilon}(\xi)$ as in (\ref{eq:curveGr24_pert_xi}) in the case $\kappa_2+\kappa_4-2\kappa_3<0$. 

\subsection{The KP divisors on $\Gamma({\mathcal G}_{{\mbox{\scriptsize red}}})$ and on $\Gamma(\xi)$}\label{sec:div_24}

In \cite{AG1}, \cite{AG3} and \cite{A2} we have computed the vacuum and the KP divisor for soliton data in $Gr^{\mbox{\tiny TP}}(2,4)$ 
respectively on $\Gamma(\xi)$, on $\Gamma({\mathcal G}_{{\mbox{\scriptsize red}}})$ and on $\Gamma({\mathcal G})$; therefore we do not repeat this computation here. We just verify that the KP divisors $\DKP= (\Ps_1,\Ps_2,\Pdr_{1,\xi},\Pdr_{2,\xi})$ on $\Gamma(\xi)$ and $\DKP= (\Ps_1,\Ps_2,\Pdr_{13},\Pdr_{23})$ on $\Gamma({\mathcal G}_{{\mbox{\scriptsize red}}})$, in the coordinate $\zeta$ introduced in Definition \ref{def:loccoor}, satisfy
\begin{equation}\label{eq:rel_div}
\zeta(\Pdr_{1,\xi}) = \zeta(\Pdr_{23}),\quad\quad \zeta(\Pdr_{2,\xi}) = \zeta(\Pdr_{13}) + O(\xi^{-1}).
\end{equation}
In the following we use the abridged notation $e^{\theta_{j,0}} = e^{\theta_j(\vec t_0)} =e^{\kappa_j x_0}$.
By construction $\DKP$ consists of the degree $2$ Sato divisor $(\Ps_1,\Ps_2)$ defined in (\ref{eq:Satodiv}),
$\zeta(\Ps_l)= \gs_l$, $l=1,2$, and of $2$ simple poles $(\Pdr_{13},\Pdr_{23})$ respectively belonging to $\Gamma_{13}$ and  $\Gamma_{23}$. In the local coordinates induced by the orientation of the Le--network, we have \cite{AG3,A2}
\begin{equation}\label{eq:div_GN}
\begin{array}{ll}
\zeta(\Ps_1) +\zeta(\Ps_2) = {\mathfrak w}_1 (\vec t_0), &\quad  \zeta(\Ps_1)\zeta(\Ps_2) = -{\mathfrak w}_2 (\vec t_0),\\
\displaystyle\zeta(\Pdr_{13}) = \frac{w_{14} ( {\mathfrak D} e^{\theta_{2,0}}+ w_{23} {\mathfrak D} e^{\theta_{3,0}})}{
(w_{14}+w_{24}){\mathfrak D} e^{\theta_{2,0}}+ w_{23}w_{14} {\mathfrak D} e^{\theta_{3,0}}}, 
&\quad\displaystyle
\zeta(\Pdr_{23}) =1 + w_{23}\frac{ {\mathfrak D} e^{\theta_{3,0}}}{{\mathfrak D} e^{\theta_{2,0}}}.
\end{array}
\end{equation}
In \cite{AG3,A2}, we have also discussed the position of the divisor points in dependence on the signs of ${\mathfrak D} e^{\theta_{2,0}}$ and ${\mathfrak D} e^{\theta_{3,0}}$.

The relation between the local coordinates $x_{i,j}$ used in \cite{AG1} and the weights $w_{ij}$ is
\[
x_{1,1} = w_{23}, \quad x_{1,2} =w_{23}w_{24},\quad x_{2,1} = w_{13}, \quad x_{2,2} = w_{13}w_{14}w_{23}.
\]
Applying the Darboux transformation ${\mathfrak D}$ to the vacuum wave function on $\Gamma(\xi)$ computed in \cite{AG1}, it is not difficult to prove that on $\Gamma_1(\xi)$ the divisor point $\Pdr_{1,\xi}$ has ${\tilde \zeta}$--coordinate
\begin{equation}\label{eq:div_1}
{\tilde \zeta}(\Pdr_{1,\xi}) = -\frac{\xi^2 {\mathfrak D} e^{\theta_{2,0}}}{\xi^2{\mathfrak D} e^{\theta_{2,0}}+ w_{23}(\xi^2-1){\mathfrak D}e^{\theta_{3,0}}}.
\end{equation}
and that on $\Gamma_2(\xi)$ the divisor point $\Pdr_{2,\xi}$ has ${\tilde \zeta}$--coordinate
\begin{equation}\label{eq:div_2}
{\tilde \zeta}(\Pdr_{2,\xi}) =-\frac{\xi^3(\xi-1)\left[ (w_{14}+w_{24}) {\mathfrak D} e^{\theta_{2,0}}+w_{14}w_{23} {\mathfrak D} e^{\theta_{3,0}}\right]}{w_{23}(\xi-1)[w_{14}(\xi^3 +\xi+1) + w_{24}(1-\xi^2)]{\mathfrak D} e^{\theta_{3,0}}+\xi^2[ (\xi^2-1)w_{14}+(1-\xi)w_{24}]e^{\theta_{2,0}}}.
\end{equation}
The value of the KP wave function is the same at all points marked with the same symbol in Figure \ref{fig:gr24_top_3} and coincides at leading order in $\xi$ at the points $\alpha_2$ and $\alpha_3$ for $\xi\gg 1$. Therefore, for any $\xi\gg 1$, the KP divisor point $\Pdr_{1,\xi}\in \Gamma(\xi)$ coincides with the divisor point $\Pdr_{23}\in \Gamma({\mathcal G}_{{\mbox{\scriptsize red}}})\equiv\Gamma(\infty)$, while $\Pdr_{2,\xi}\in \Gamma(\xi)$ coincides \textbf{at leading order in $\xi$} with the divisor point $\Pdr_{13}\in \Gamma({\mathcal G}_{{\mbox{\scriptsize red}}})\equiv\Gamma(\infty)$. The relation between the coordinate $\tilde \zeta$ used in \cite{AG1} and the coordinate $\zeta$ at the marked points $\lambda^{(1)}_{s}$ in $\Gamma_{23}$ (resp. $\lambda^{(2)}_{s}$ in $\Gamma_{23}$), $s\in [3]$, is the following: $\zeta =\infty,1,0$, respectively correspond to ${\tilde z}=0,-1,-\xi^2$. Finally, inserting the fractional linear transformation
\[
{\tilde \zeta} = \frac{\xi^2}{(1-\xi^2)\zeta -1},
\] 
in (\ref{eq:div_1}) and (\ref{eq:div_2}) and using (\ref{eq:div_GN}), it is straightforward to verify (\ref{eq:rel_div}).

\appendix

\section{The totally nonnegative Grassmannian}\label{app:TNN}

In this appendix we recall some useful definitions and theorems from \cite{Pos} to make the paper self--contained. For more details on the topological properties of $\Grkn$ and on generalizations of total positivity to reductive groups we refer to \cite{Lus1,Lus2,MR,PSW,Rie}. In particular we use Postnikov rules to represent each Le--tableau $D$ by a unique  bipartite trivalent oriented network ${\mathcal N}$ in the disk. In Section~\ref{sec:gamma} we use the Le--graph ${\mathcal G}$ of ${\mathcal N}$
to construct a curve $\Gamma({\mathcal G})$, which is a rational degeneration of a 
smooth ${\mathtt M}$-curve of genus equal to the dimension $d$ of the corresponding positroid cell. 

\begin{definition}{\bf The totally non--negative part of $Gr(k,n)$ \cite{Pos}.}
The totally non--negative Grassmannian $\Grkn$ is the subset of the Grassmannian $Gr(k,n)$  with all Pl\"ucker coordinates non-negative, 
i.e. it may be defined as the following quotient: $\Grkn = GL^+_k \backslash \mbox{ Mat}^{\mbox{\tiny TNN}}_{kn}$. Here 
$GL^+_k$ is the group of $k\times k$ matrices with positive determinant, and $\mbox{ Mat}^{\mbox{\tiny TNN}}_{kn}$ is the set of real $k\times n$ matrices $A$ of rank $k$
such that all maximal minors are non-negative, i.e. $\Delta_I (A) \ge 0$, for all $k$--element subsets $I\subset [n]$. 

The totally positive Grassmannian $Gr^{\mbox{\tiny TP}} (k,n)\subset \Grkn$ is the subset of $Gr(k,n)$ whose elements may be represented by $k\times n$ matrices with all strictly positive maximal minors $\Delta_I(A)$.
\end{definition}

It is well-known that $Gr(k,n)$ is decomposed into a disjoint union of Schubert cells $\Omega_{\lambda}$ indexed by partitions $\lambda\subset (n-k)^k$ whose Young diagrams fit inside the $k\times (n-k)$ rectangle (we use the so-called English notation in our text). A refinement of this decomposition was proposed in \cite{GGMS,GS}. Intersecting the matroid strata with 
$\Grkn$ one obtains the totally non-negative Grassmann cells \cite{Pos} with the following property: each cell is birationally equivalent to an 
open octant of appropriate dimension, and this birational map is also a topological homeomorphism.

Let us recall these constructions. The Schubert cells are indexed by partitions, or, equivalently by pivot sets. To each 
partition  $\lambda=(\lambda_1,\ldots,\lambda_k)$, $n-k\ge\lambda_1\ge\lambda_2\ldots\ge\lambda_k\ge0$, $\lambda_j\in\Z$ 
there is associated a pivot set $I(\lambda) =\{ 1\le i_1 < \cdots < i_k \le n\}$ defined by the following relations: 
\begin{equation}
\label{eq:young}
i_j = n-k+j-\lambda_j, \ \ j\in[k].
\end{equation}
Each Schubert cell is the union of all Grassmannian points sharing the same set of pivot columns. Therefore,
any point in $\Omega_{\lambda}$ with pivot set $I$ can be represented by a matrix in canonical reduced row echelon form, {\sl i.e.} a matrix $A$ such that $A^i_{i_l}=1$ for $l\in [k]$ and all the 
entries to the left of these 1's are zero.
The Young diagram representing the Schubert cell $\Omega_{\lambda}$ is a collection of boxes arranged in $k$ rows, 
aligned on the left such that the $j$-th row contains $\lambda_j$ boxes, $j\in[k]$. 

\begin{figure}
  \centering
  {\includegraphics[width=0.46\textwidth]{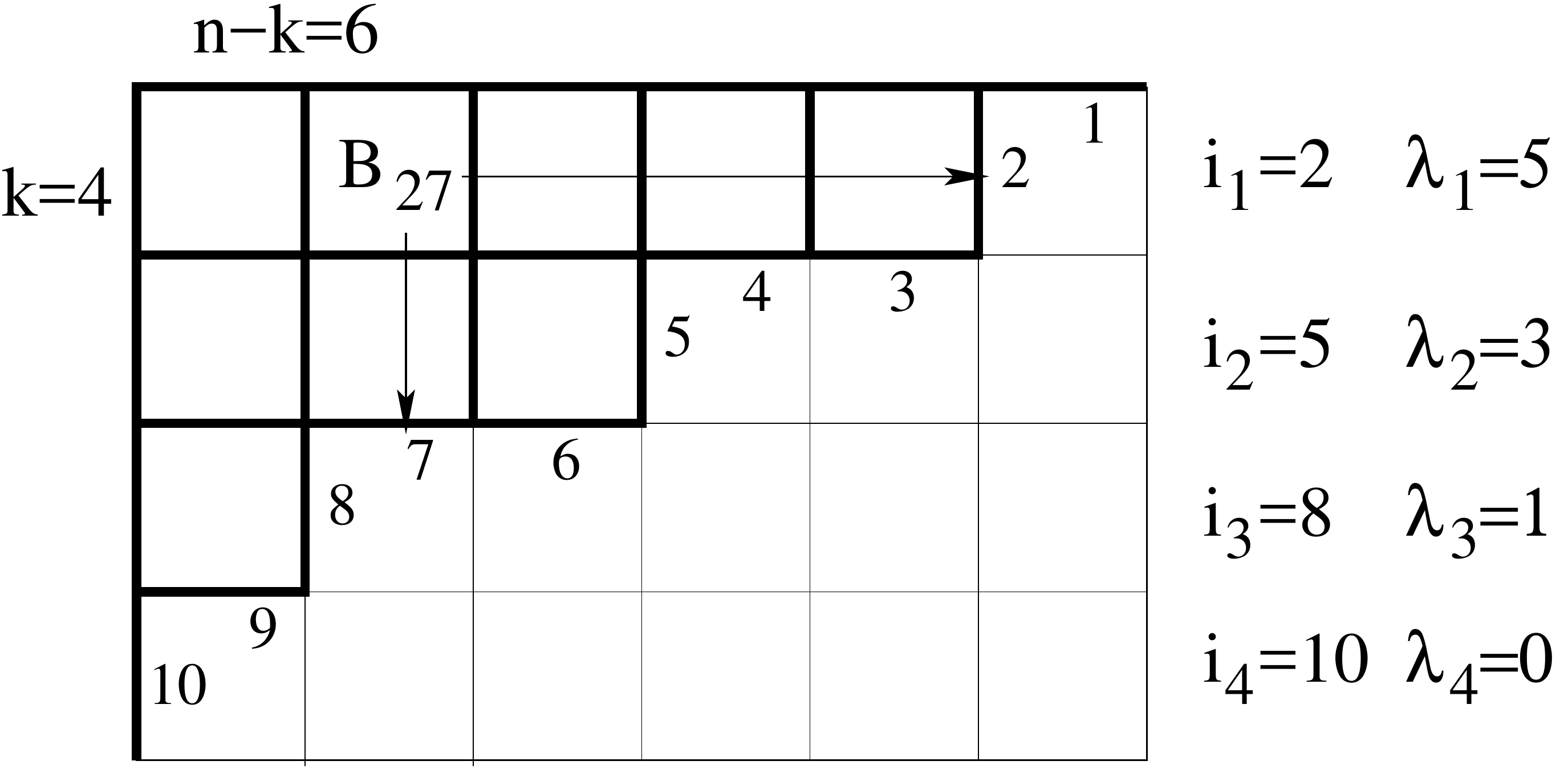}}
  \caption{\label{fig:young}\footnotesize{\sl The Young diagram associated to the partition $(5,3,1,0)$, $k=4$, $n=10$.}}
\end{figure}

\begin{remark}
In this paper we use the ship battle rule to enumerate the boxes of the  Young diagram of a given partition $\lambda$. Let $I=I(\lambda)$ be the pivot set of the $k$ vertical steps in the path along the SE boundary of the Young diagram proceeding from the NE vertex to the SW vertex of the $k\times(n-k)$ bound box, and let ${\bar I} = [n]\backslash I$ be the non--pivot set. Then the box $B_{ij}$ corresponds to the pivot element $i\in I$ and the non--pivot element $j\in {\bar I}$ (see Figure~\ref{fig:young} for an example).
\end{remark}

Each stratum in the refined decomposition of $Gr(k,n)$ into matroid strata \cite{GGMS} is composed by the points of the Grassmannian
which share the same set of non-zero Pl\"ucker coordinates. Each Pl\"ucker coordinate is indexed by a base, i.e. a $k$-element subset in $[n]$, and, for a given stratum, the set of these bases forms a matroid ${\mathcal M}$, i.e. for all $I, J\in {\mathcal M}$ for each $i\in I$ there exists $j\in J$ such that $I\backslash\{i\}\cup\{j\}\in 
{\mathcal M} $. Then the stratum ${\mathcal S}_{\mathcal M}\subset Gr(k,n)$   is defined as
\[
{\mathcal S}_{\mathcal M} = \{ [A] \in Gr(k,n) \, : \, \Delta_I (A) \not = 0 \; \iff I\in {\mathcal M}\}.
\]
A matroid ${\mathcal M}$ is called realizable if ${\mathcal S}_{\mathcal M}\ne\emptyset$. The pivot set $I$ 
is the lexicographically minimal base of the matroid ${\mathcal M}$.
In \cite{Pos}, Postnikov studies the analogous stratification for $\Grkn$. 

\begin{definition}\textbf{Positroid cell. \cite{Pos}}
The totally nonnegative Grassmann (positroid) cell $\S$ is the intersection of the matroid stratum ${\mathcal S}_{\mathcal M}$ with the totally nonnegative Grassmannian $\Grkn$:
$$
\S = \{\, GL^+_k \cdot A \in \Grkn \, : \, \Delta_I(A) >0 \mbox{ if } I\in {\mathcal M}, \mbox{ and } \Delta_I(A) =0 \mbox{ if } I\not \in {\mathcal M}\, \}.
$$ 
The matroid $\mathcal M$ is totally nonnegative if the matroid stratum $\S \not = \emptyset$. 
\end{definition}

\begin{example}
$Gr^{\mbox{\tiny TP}} (k,n)$ is the top dimensional cell and corresponds to the complete matroid ${\mathcal M} = \binom{[n]}{k}$. 
\end{example}

\begin{figure}
\includegraphics[scale=0.25]{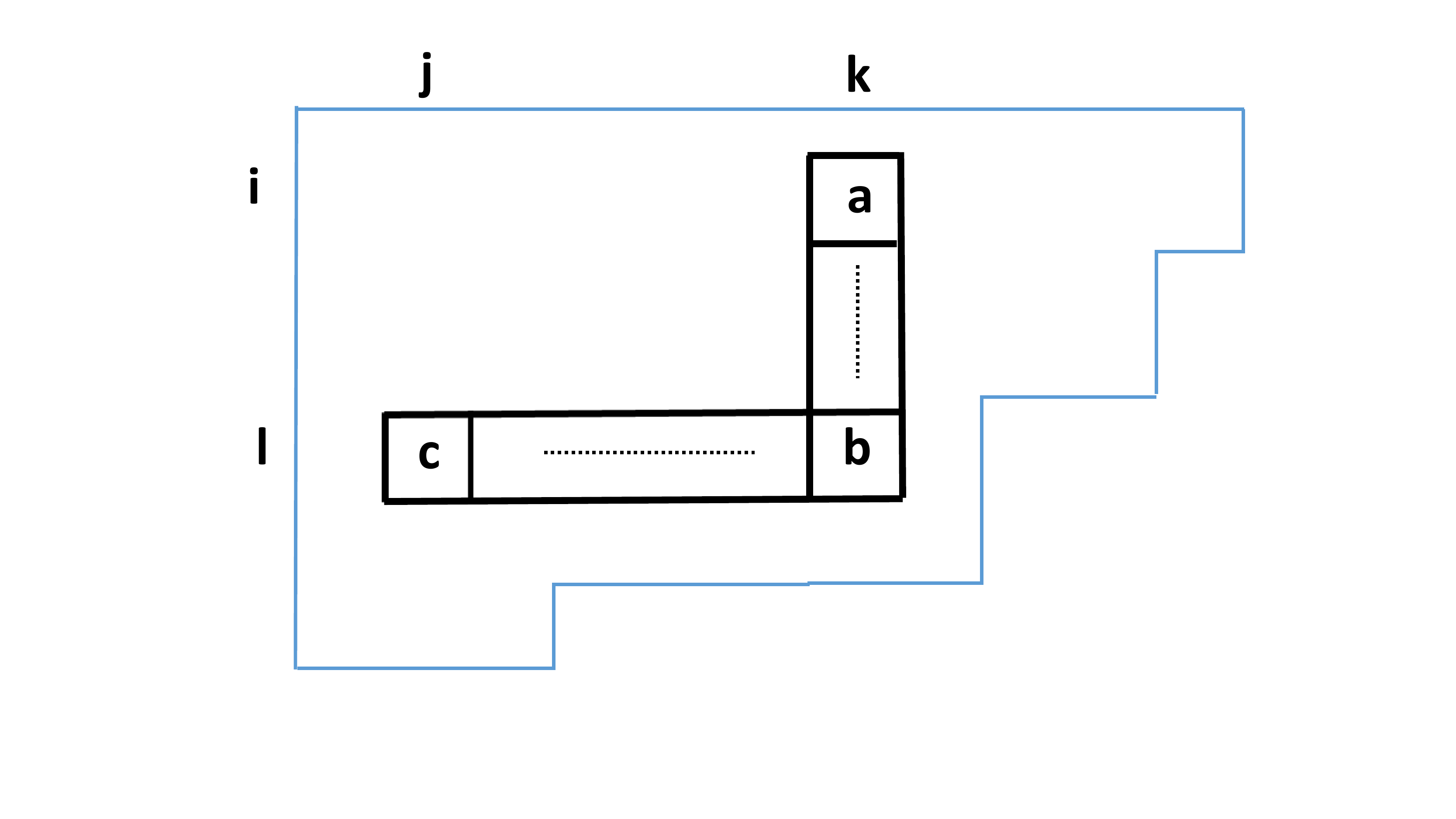}
\vspace{-1.2 truecm}
\caption{\footnotesize{\sl The Le--rule $a,c\not =0$ imply $b\not =0$. }}
\label{fig:lerule}
\end{figure}

A useful tool to classify positroid cells are Le--diagrams and Le--graphs \cite{Pos}. 

\begin{definition}\textbf{Le--diagram and Le--tableau.\cite{Pos}}
For a partition $\lambda$, a Le--diagram $D$ of shape $\lambda$ is a filling of the boxes of its Young diagram with $0$'s and $1$'s such that, for any three boxes indexed $(i,k)$, $(l, k)$, $(l, j)$, where $i<l$ and $k<j$, filled correspondingly with $a,b,c$, if $a,c\not =0$, then $b\not =0$ (see Figure~\ref{fig:lerule}). For such a diagram 
denote by $d$ the number of boxes of $D$ filled with $1$s. 

The Le--tableau $T$  is obtained from a Le--diagram $D$ of shape $\lambda$, by replacing all 1s in $D$ by positive 
numbers $w_{ij}$ (weights).
\end{definition}

\begin{figure}
  \centering
  {\includegraphics[width=0.8\textwidth]{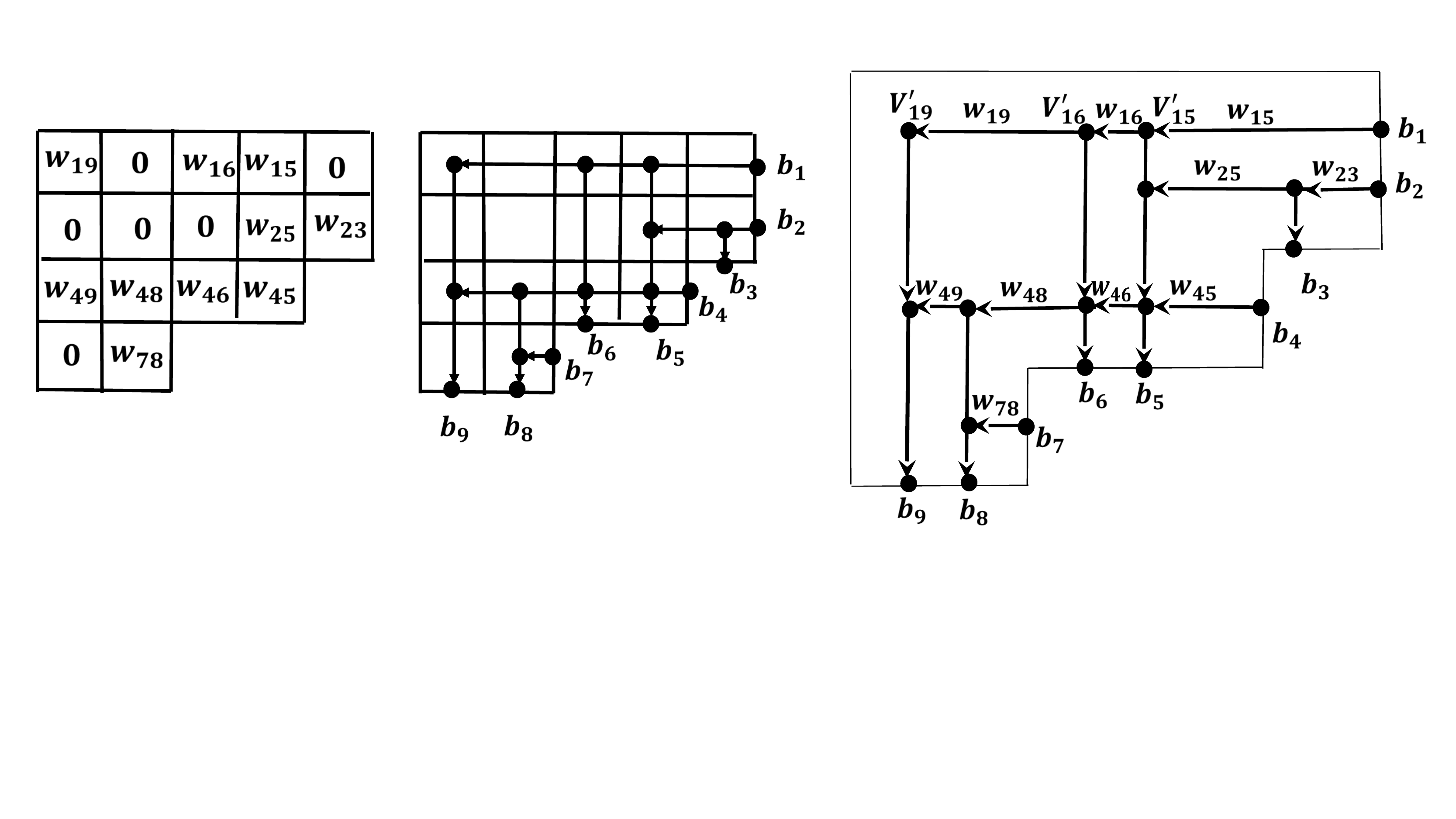}}
		\vspace{-2.5 truecm}
	\caption{\footnotesize{\sl The Le--diagram and the Le--network of Example \ref{ex:gr492}. The horizontal edges are oriented from right to left while the vertical ones from top to bottom.}}
  \label{fig:lediagex2}
\end{figure}

The construction of the Le--graph ${\mathcal G}$ associated to a given Le--diagram is as follows \cite{Pos}.
The boundary of the Young diagram of $\lambda$ gives the lattice path of length $n$ from the upper right corner to the 
lower left corner of the rectangle $k \times (n-k)$.
A vertex is placed in the middle of each step in the lattice path and is marked
$b_1,\dots,b_n$ proceeding NE to SW. The vertices $b_i$, $i\in I\equiv
I(\lambda)$ corresponding to vertical steps are the sources of the network and the
remaining vertices $b_j$, $j\in {\bar I}$, corresponding to horizontal steps are the
sinks. Then the upper right corner is connected to the lower left corner by another path to
obtain a simple close curve containing the Young diagram. For each box of the
Le--diagram $(i,j)$ filled by 1 an internal vertex $V_{ij}$ is placed in the middle of
the box; from such vertex one draws a vertical  line downwards to the
boundary sink $b_j$ and a horizontal line to the right till it reaches the
boundary source $b_i$. By the Le--property any intersection of such lines is also a
vertex. All edges are oriented either to the left or downwards.

To obtain a Le-network ${\mathcal N}$ from a Le--tableaux of shape $\lambda$ one constructs the Le-graph from the corresponding 
Le-diagram, assigns the weight $w_{ij}>0$ from the box $B_{ij}$ to the horizontal edge $e$ 
which enters $V_{ij}$ and assigns unit weights $w_e=1$ to all vertical edges \cite{Pos}. The 
correspondence between the Le--tableau and the Le-network is illustrated in Figure~\ref{fig:lediagex2}.

The map $T\mapsto {\mathcal N}$ gives the isomorphism ${\mathbb R}^{d}_{>0} \simeq {\mathbb R}^{E(G)}_{>0}$ between the set of Le--tableaux $T$ with fixed Le-diagram $D$ and the set of Le-networks (modulo gauge transformations) with fixed graph $\mathcal G$ corresponding to the diagram $D$ as above. 

Given a Le--tableau $T$ with pivot set $I$ it is possible to reconstruct both the matroid and the representing matrix in reduced row echelon form using the Lindstr\"om lemma.   

\begin{proposition}\cite{Pos}
Let ${\mathcal N}$ be the Le-network associated to the Le--diagram $D$ and let $I$ be the pivot set. For any $k$--elements subset $J\subset [n]$, let $K=I\backslash J$ and $L=J\backslash I$. Then the maximal minor $\Delta_J(A)$ of the matrix $A=A({\mathcal N})$ is given by the following subtraction--free polynomial expression in the edge weights $w_e$:
\[
\Delta_J (A) = \sum\limits_P \prod\limits_{i=1}^r w(P_i),
\]
where the sum is over all non--crossing collections $P=(P_1,\dots,P_r)$ of paths joining the boundary vertices $b_i$, $i\in K$ with boundary vertices $b_j$, $j\in L$.

Let $i_r\in I$, where $r\in [k]$ and $j\in [n]$. Then the element $A^r_j$ of the matrix $A$ in reduced row echelon form (RREF)  associated to the Le--network ${\mathcal N}$ is
\begin{equation}\label{eq:ARREF}
A^{r}_j = \left\{ \begin{array}{ll} 0         &\quad\quad j<i_r,\\
1         &\quad\quad j=i_r,\\
  (-1)^{\sigma_{i_r j}} \sum\limits_{P : i_r \mapsto j} \left( \prod\limits_{e\in P} w_e \right) &\quad\quad j>i_r,
\end{array}
\right.
\end{equation}
where the sum is over all paths $P$ from the boundary source $b_{i_r}$ to the boundary sink $b_j$, $j\in {\bar I}$, and $\sigma_{i_r j}$ is the number of pivot elements $i_s\in I$ such that $i_r<i_s <j$.
\end{proposition}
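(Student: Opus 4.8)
The plan is to derive both assertions from the Lindstr\"om--Gessel--Viennot (LGV) lemma for planar acyclic networks, together with the explicit combinatorial geometry of the Le--graph. Since every edge of $\mathcal N$ points left or downward, the orientation is acyclic, so between any boundary source $b_{i_r}$ and any boundary sink $b_j$ there are only finitely many directed paths and each boundary measurement $\sum_{P:\,i_r\mapsto j}\prod_{e\in P}w_e$ is a well-defined subtraction-free polynomial in the edge weights. First I would take the right-hand side of (\ref{eq:ARREF}) as the \emph{definition} of $A=A(\mathcal N)$ and check that it is in reduced row echelon form with pivot set $I$: a directed path issued from $b_{i_r}$ reaches only boundary vertices $b_j$ with $j\ge i_r$, so for $j<i_r$ the path sum is empty and $A^r_j=0$; for $j=i_s\in I$ the vertex $b_{i_s}$ is a source, not a sink, hence receives no directed path, so $A^r_{i_s}=0$ unless $r=s$, in which case the declared value is $1$. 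Thus the columns indexed by $I$ carry the standard basis vectors and all entries to their left vanish.

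Next I would establish the minor formula. Fix a $k$-subset $J\subset[n]$ and write $J=(I\cap J)\sqcup L$ with $L=J\setminus I$ and $K=I\setminus J$, so $|K|=|L|=:r$. By the RREF structure just proved, the columns indexed by $I\cap J$ restrict, on the rows $\{s:i_s\in I\cap J\}$, to a permutation block, so a Laplace expansion of $\det(A|_J)$ along those rows and columns reduces it, up to an explicit sign depending only on the positions of $K$ and $L$, to the minor $\det\big(A|_{K,L}\big)$ on the rows labelled by $K$ and the columns labelled by $L$. Applying the LGV lemma to the subnetwork that retains only the sources $\{b_i:i\in K\}$ and the sinks $\{b_j:j\in L\}$ expresses $\det(A|_{K,L})$ as a signed sum over families of pairwise vertex-disjoint directed paths from $K$ to $L$; planarity of the Le--graph forces such a family (if any exists at all) to realise the unique non-crossing bijection between the boundary vertices of $K$ and those of $L$ compatible with the cyclic order, so a single fixed permutation sign occurs. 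The assertion of the proposition is then that this permutation sign, the sign produced by the Laplace expansion, and the product of the entrywise signs $(-1)^{\sigma_{i_r j}}$ cancel, leaving exactly the manifestly positive expression $\sum_P\prod_{i=1}^r w(P_i)$; in particular $\Delta_J(A)>0$ precisely when a non-crossing path family exists, i.e. when $J$ belongs to the matroid $\mathcal M$ of $\mathcal N$, which is the asserted compatibility with $[A]\in\S$.

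The hard part will be precisely this sign bookkeeping: verifying that the three independent sources of signs conspire to $+1$. I would carry this out inside the boundary-measurement formalism of \cite{Pos}, where the entrywise signs $(-1)^{\sigma_{i_r j}}$ are introduced exactly so as to make all maximal minors subtraction-free, checking gauge invariance (independence under rescaling the weights around an interior vertex) along the way. As a more self-contained alternative, one can induct on the number $d$ of filled boxes of the Le--diagram, peeling off one interior vertex at a time in the spirit of the recursive construction of Section~\ref{app:mainalgtheo}, and tracking at each step how $\sigma_{i_r j}$, the non-crossing bijection, and the Laplace sign change; the base case $d=0$ is the Schubert-cell point represented by the matrix whose only nonzero entries are the pivot $1$'s, for which both formulas are immediate. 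Everything else --- finiteness, acyclicity, planarity and the RREF normalization --- is then routine.
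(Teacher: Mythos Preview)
The paper does not prove this proposition at all: it is stated in Appendix~\ref{app:TNN} as a quotation from \cite{Pos}, with no argument supplied, as part of the background material recalled ``to make the paper self-contained.'' So there is no paper proof to compare against.

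Your outline is the standard route to this result and is correct in spirit: acyclicity of the Le--network makes the boundary measurements polynomial; the RREF structure follows from the geometry of the graph; and the minor formula is Lindstr\"om--Gessel--Viennot plus a Laplace expansion along the common pivot columns $I\cap J$, with planarity forcing a unique non-crossing matching. You are also right that the only nontrivial content is the sign cancellation between the Laplace sign, the LGV permutation sign, and the product of the entrywise signs $(-1)^{\sigma_{i_r j}}$. Your two proposed strategies---invoking Postnikov's boundary-measurement sign conventions directly, or inducting on the number of filled boxes---are both viable; the first is essentially what \cite{Pos} does, while the second would give a more self-contained argument at the cost of some bookkeeping. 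Either would complete the proof.
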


\begin{example}\label{ex:gr492}
Let us consider the Le--diagram $D$ and Le--network represented in Figure \ref{fig:lediagex2}. 
Then $I= (1,2,4,7)$ and the matrix in reduced row echelon form is
\[\resizebox{\textwidth}{!}{$
A = \left( \begin{array}{ccccccccc}
1 & 0 & 0 & 0 & w_{15} & w_{15}(w_{16}+w_{46}) & 0 & -w_{15}w_{48}(w_{16}+w_{46}) & -w_{15}w_{48}w_{49} (w_{16}+w_{46})-w_{15}w_{16}w_{19} \\
0 & 1 & w_{23} & 0 & -w_{23}w_{25} &-w_{23}w_{25}w_{46} & 0 & w_{23}w_{25}w_{46}w_{48} & w_{23}w_{25}w_{46}w_{48}w_{49}\\
0 & 0 & 0 & 1 & w_{45} & w_{45}w_{46} & 0 & -w_{45}w_{46}w_{48} & -w_{45}w_{46}w_{48}w_{49}\\
0 & 0 & 0 & 0 & 0 & 0 & 1 & w_{78} & 0 
\end{array}
\right).$}
\]
The same example is used in \cite{KW2} to illustrate the combinatorial properties of the KP--soliton tropical asymptotics in the limit $t\to -\infty$.
\end{example}

\begin{remark}\label{rem:redLe}{\bf Reducible positroid cells}
A totally non--negative cell $\S\subset \Grkn$ is reducible if its Le--diagram contains either columns or rows filled by 0s \cite{Pos}. The Le--diagram has the $j$--th column filled by zeros if and only if no base in $\mathcal M$ contains the element $j$. Similarly, the Le--diagram has the $r$-th row filled by zeros if and only if  all bases in $\mathcal M$ contain $i_r$, the $r$--th element in the lexicographically minimal base $I\in \mathcal M$.

In the first case, there is no path in the Le--network with destination $j$, and
the RREF matrix $A$ has the $j$--th column filled by zeroes. One can then shift by one all indexes bigger than $j$ in $\mathcal M$, call $\mathcal M^{\prime}$ the resulting matroid of $k$ element subsets in $[n-1]$, correspondingly eliminate the $j$--th column from the Le--diagram and the 
$j$--th column from the matrix $A$, and
represent the same point in the totally non--negative cell 
${\mathcal S}^{\mbox{\tiny TNN}}_{\mathcal M^{\prime}}\subset Gr^{\mbox{\tiny TNN}} (k,n-1)$.

In the second case, there is no path in the Le--network starting from the boundary source $b_{i_r}$, and
the $r$--th row of $A$ contains just the pivot element. One can then eliminate $i_r$ from all the bases in $\mathcal M$ and shift by one all the indexes greater than $i_r$, call $\mathcal M^{\prime}$ the resulting matroid of $k-1$ elements in $[n-1]$, 
correspondingly eliminate the $r$--th row of the Le--diagram, eliminate the $r$--th row and the $i_r$--th columns from the matrix $A$ and change the sign of all elements of
$A^i_j$ with $i<r$ and $j>i_r$, and represent the same point in the totally non--negative cell ${\mathcal S}^{\mbox{\tiny TNN}}_{\mathcal M^{\prime}}\subset Gr^{\mbox{\tiny TNN}} (k-1,n-1)$.
\end{remark}

\begin{figure}
\includegraphics[scale=0.2]{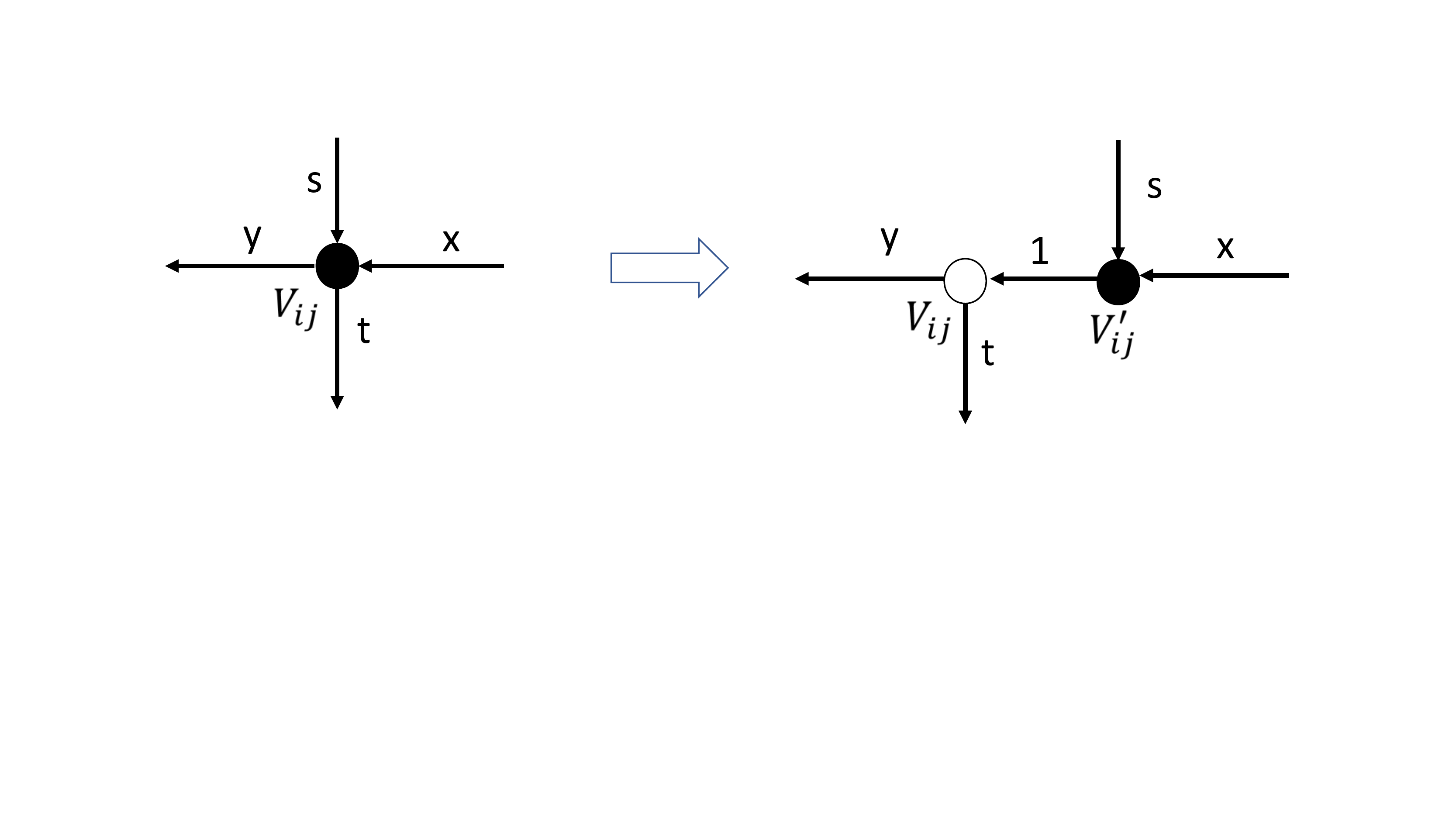}
\vspace{-1.5 truecm}
\caption{\footnotesize{\sl All internal vertices of the Le--network are transformed to trivalent vertices, preserving the boundary measurement map.}}
\label{fig:trivalent}
\end{figure}

In section \ref{sec:gamma} we associate an unique \text{universal} curve to each positroid cell, by modeling the construction of a 
rational degeneration of an $\mathtt M$--curve on the Le--graph: vertices of the graph correspond to copies of $\mathbb{CP}^1$, whereas the edges govern the positions of the double points. To provide a construction of the curve without parameters, we require that each copy of 
$\mathbb{CP}^1$ associated to an internal vertex has three marked points. Moreover, the recursive construction of the wave function 
and the characterization of its divisor is technically simpler if modeled on a bipartite graph where black and white vertices alternate. 
For the above reasons we follow Postnikov's rules to transform the Le--network ${\mathcal N}$ into a planar bipartite perfect network with internal vertices 
of degree at most three, and we continue to denote it with ${\mathcal N}$, since this transformation is well-defined.
We remark that, after such transformation, ${\mathcal N}$ is perfect since each boundary vertex has degree one and each internal vertex in $\mathcal G$ is either the initial vertex of exactly one edge or the final vertex of exactly one edge.
For the Le--graph the only relevant transformation concerns the degree four internal vertices which become couples of trivalent vertices of opposite colour \cite{Pos} (see Figure \ref{fig:trivalent}). Moreover, following Postnikov \cite{Pos}, we assign black color to each internal vertex with exactly one outgoing edge and white color to each trivalent internal vertex with exactly one incoming edge. We also assign black color to all boundary vertices.
Finally we move all boundary vertices to the same line (see Figure \ref{fig:boundary_vertex}). Therefore we get to the following definition.

\begin{figure}
\includegraphics[width=0.65\textwidth]{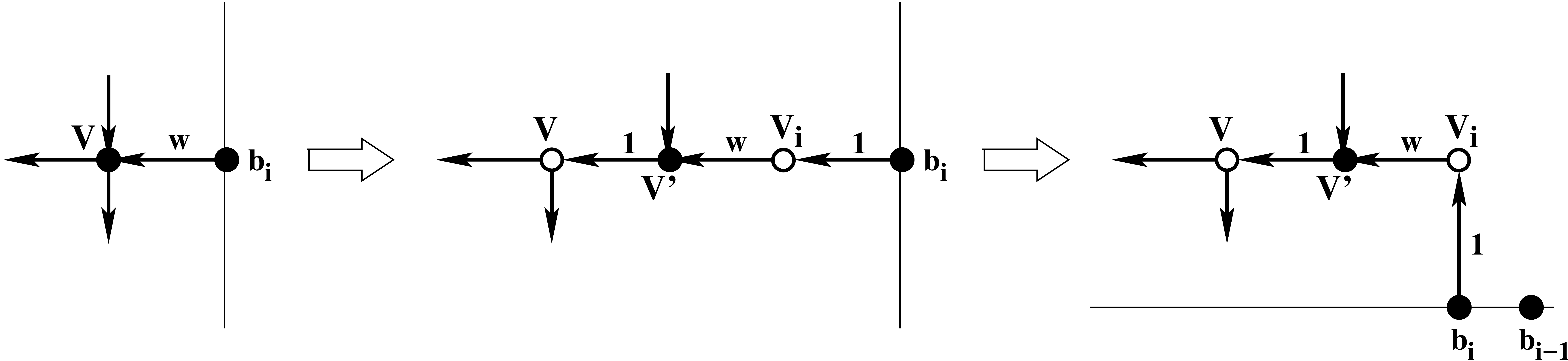}
\caption{\footnotesize{\sl Transformation of graph at the boundary source $b_i$.}}
\label{fig:boundary_vertex}
\end{figure}

\begin{definition}\label{def:can_Le}\textbf{The trivalent bipartite Le--network used to construct the curve $\Gamma({\mathcal G})$}
The acyclically oriented network associated to the Le--tableau $T$ is transformed into a {\bf perfect trivalent bipartite network ${\mathcal N}$} in the disk with the following rules:
\begin{enumerate}
\item If the box $B_{ij}$ of $T$ is filled with 1, the vertex $V_{ij}$ is transformed into a couple of one black vertex $V^{\prime}_{ij}$ and one white vertex $V_{ij}$ (see Figure \ref{fig:trivalent}[left]); following \cite{Pos}
the horizontal edge joining the black vertex $V_{ij}^{\prime} $ to the white vertex $V_{ij}$ has unit weight, while all other weights are unchanged (see Figure \ref{fig:trivalent}[right]);
\item All boundary vertices have black colour and degree one. Any isolated boundary source $b_i$ is joined by a vertical edge to a white vertex $V_i$. If the boundary source $b_i$ is not isolated, we add
a white vertex $V_i$ to the edge of weight $w$ starting at $b_i$, we assign unit weight to the edge joining $b_i$ to $V_i$ and weight $w$ to the other edge at $V_i$ (see Figure \ref{fig:boundary_vertex} middle);
\item All internal vertices corresponding to a given row $r$ in $T$, included $V_{i_r}$, lie on a common horizontal line;
\item The contour of the disk is continuously deformed in such a way that all of the boundary sources and boundary sinks lay on the same horizontal segment and the edge at each boundary vertex is vertical; in this process the positions of all internal vertices are left invariant (see Figure \ref{fig:boundary_vertex} right). 
\end{enumerate}
\end{definition}

In Figure \ref{fig:bipex1} we show the  bipartite Le--network for Example \ref{ex:gr492}.

\begin{figure}
\includegraphics[width=0.49\textwidth]{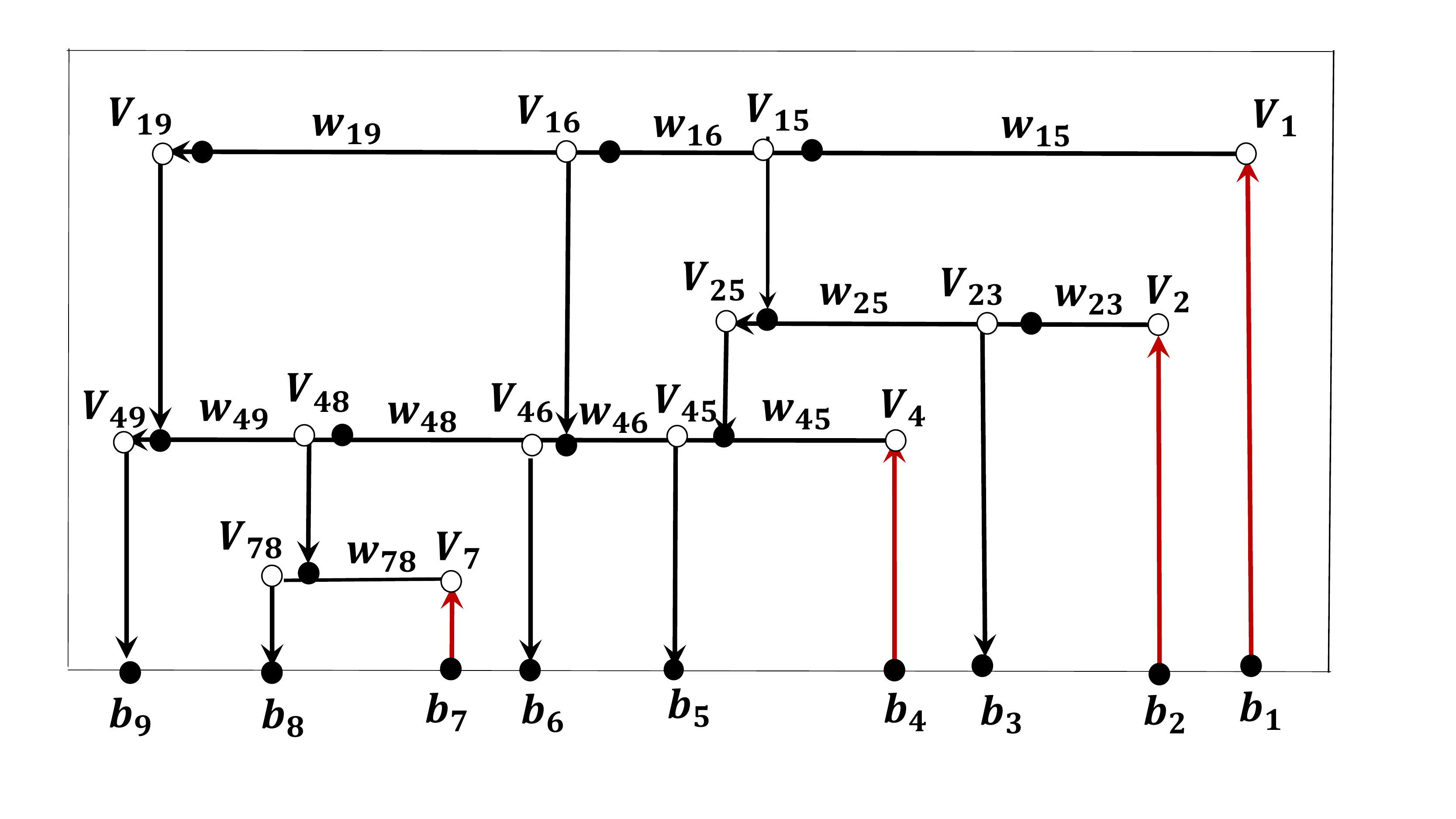}
\vspace{-.6 truecm}
\caption{\footnotesize{\sl The bipartite Le--network for Example \ref{ex:gr492} (see Figure \ref{fig:lediagex2}). The weights refer to the perfect orientation associated to the pivot base $[1,2,4,7]$ of the matroid: the vertical edges starting at the boundary sources $b_1,b_2,b_4$ and $b_7$ are oriented upwards, all other vertical edges are oriented downwards, while all horizontal edges are oriented from right to left.}}
\label{fig:bipex1}
\end{figure}

\begin{remark}
Let ${\mathcal G}$ be the bipartite Le--graph associated to the Le--diagram $D$. Then in ${\mathcal G}$ 
\begin{enumerate}
\item Each black vertex has {\bf at most} one vertical edge; 
\item Each white vertex has {\bf exactly} one vertical edge;
\item The total number of white vertices is $d+k$;
\item If $D$ is irreducible, then the total number of trivalent white vertices is $d-k$, while the total number of trivalent black vertices is $d-n+k$.
\end{enumerate}
\end{remark}

For any $r\in [k]$, we denote $N_{r}$ the number of boxes filled with 1 in the $r$--th row of the Le--diagram $D$. By construction we have 
\begin{equation}\label{eq:Nij}
d \equiv \sum\limits_{i_r\in I} N_{r}, \quad \quad \mbox{ with } N_{r} \equiv \# \, \{ \mbox{ boxes } B_{i_r j}  \mbox{ filled by 1, for }  j \in {\bar I}  \} 
\end{equation}
We also introduce an index to simplify the counting of boxes filled by ones. For any fixed $r\in [k]$, let $1\le j_1< j_2 \dots < j_{N_r}\le n$ be the non--pivot indexes of the boxes $B_{i_r j_s}$, $s\in {\hat N}_r$, filled by one in the $r$-th row. Then
for any $r\in [k]$, we define the index 
\begin{equation}\label{eq:chiindex}
\chi^{i_r}_{l} = \left\{ \begin{array}{ll} 
1 &\quad \mbox { if there exists } s\in [N_r] \mbox{ such that } l=j_s, \\
0 &\quad \mbox { if } B_{i_r l} \mbox { is filled by 0  or } l<i_r. \\
\end{array}\right.
\end{equation}

\bibliographystyle{alpha}

\begin{thebibliography}{111}

\bibitem{A} S. Abenda \textit{ On a family of KP multi--line solitons associated to rational degenerations of real hyperelliptic curves and to the finite non--periodic Toda hierarchy}, J.Geom.Phys. {\bf 119} (2017) 112--138

\bibitem{A2} S. Abenda \textit{ On some properties of KP--II soliton divisors in $Gr^{\mbox{\tiny TP}}(2,4)$}, Ricerche di Matematica (2018), https://doi.org/10.1007/s11587-018-0381-0

\bibitem{AG1} S. Abenda, P.G. Grinevich \textit{ Rational degenerations of $M$-curves, totally positive Grassmannians and KP--solitons}, Commun.Math.Phys. {\bf 361}, Issue 3 (2018), 1029--1081, doi:10.1007/s00220-018-3123-y.

\bibitem{AG2} S. Abenda, P.G. Grinevich \textit{ KP theory, plane-bipartite networks in the disk 
and rational degenerations of $\mathtt M$--curves}, arXiv:1801.00208

\bibitem{AG3} S. Abenda, P.G. Grinevich \textit{ Real periodic soliton lattices of KP-II and desingularization of spectral curves: the $Gr^{\mbox{\tiny TP}}(2,4)$ case}, Topology and physics, Collected papers. Dedicated to Academician Sergei Petrovich Novikov on the occasion of his 80th birthday, Tr. Mat. Inst. Steklova, 302, ed. V. M. Buchstaber, I. A. Dynnikov, O. K. Sheinman, MAIK Nauka/Interperiodica, Moscow, 2018; arXiv:1803.10968

\bibitem{ACG}  E. Arbarello , M. Cornalba,  P.A. Griffiths \textit{ Geometry of algebraic curves. Volume II. With a contribution by Joseph Daniel Harris}, Grundlehren der Mathematischen Wissenschaften  268, Springer, Heidelberg, (2011) xxx+963 pp.

\bibitem{AGP1} N. Arkani-Hamed, J.L. Bourjaily, F. Cachazo, A.B. Goncharov, A. Postnikov, J. Trnka \textit{ Scattering Amplitudes and the Positive Grassmannian}, arXiv:1212.5605

\bibitem{AGP2} N. Arkani-Hamed, J.L. Bourjaily, F. Cachazo, A.B. Goncharov, A. Postnikov, J. Trnka \textit{ Grassmannian geometry of scattering amplitudes}, Cambridge University Press, Cambridge, (2016), ix+194 pp.

\bibitem{ADM} M. Atiyah, M. Dunajski, L.J. Mason \textit{ Twistor theory at fifty: from contour integrals to twistor strings}, Proc. A. {\bf 473} (2017),  20170530, 33 pp. 

\bibitem{BN} M. Baker and S. Norine, \textit{Riemann--Roch and Abel--Jacobi theory on a finite graph}, Adv. Math. {\bf 215}:2 (2007), 766--788.

\bibitem{BC} G. Biondini, S. Chakravarty \textit{ Soliton solutions of the Kadomtsev-Petviashvili II equation.} Journal of Mathematical Physics, {\bf 47}, (2006), 033514; doi:10.1063/1.2181907

\bibitem{BPPP} M. Boiti, F. Pempinelli, A.K. Pogrebkov, B. Prinari \textit{ Towards an inverse scattering theory for non-decaying potentials of the heat equation}, Inverse Problems {\bf 17} (2001), 937--957

\bibitem{BG} V. Buchstaber, A. Glutsyuk \textit{ Total positivity, Grassmannian and modified Bessel functions}, 
arXiv:1708.02154

\bibitem{BG2} V.M. Buchstaber, A.A. Glutsyuk \textit{ On determinants of modified Bessel functions and entire solutions of double confluent Heun equations}, Nonlinearity, {\bf 29} (2016), 3857--3870

\bibitem{BT1} V.M. Buchstaber, S. Terzic \textit{ Topology and geometry of the canonical action of $T^4$ on the complex Grassmannian $G_{4,2}$ and the complex projective space $\mathbb{CP}^5$,} Mosc. Math. J., {\bf 16}:2 (2016), 237--273

\bibitem{BT2} V.M. Buchstaber, S. Terzic \textit{ Toric topology of the complex Grassmann manifolds}, (2018), arXiv: 1802.06449v2

\bibitem{CK2} S. Chakravarty, Y. Kodama \textit{ Classification of the line-solitons of KPII}, J. Phys. A Math.Theor. {\bf 41}  (2008), 275209

\bibitem{CK} S. Chakravarty, Y. Kodama \textit{ Soliton solutions of the KP equation and application to shallow water waves}, Stud. Appl. Math. {\bf 123} (2009) 83--151

\bibitem{D} L.A. Dickey \textit{ Soliton equations and Hamiltonian systems}, Second edition. Advanced Series in Mathematical Physics, 26. World Scientific Publishing Co., Inc., River Edge, NJ, 2003. xii+408 pp.

\bibitem{DMH} A. Dimakis, F. M\"uller-Hoissen \textit{ KP line solitons and Tamari lattices}, J. Phys. A {\bf 44} (2011), no. 2, 025203, 49 pp.

\bibitem{Druma} V. S. Dryuma \textit{ Analytic solution of the two-dimensional Korteweg-de Vries (KdV) equation}, JETP Letters, {\bf 19}:12 (1973), 387--388

\bibitem{Dub} B.A. Dubrovin \textit{ Theta functions and non-linear equations}, Russian Math. Surveys, {\bf 36}:2 (1981), 11--92

\bibitem{DKMM} B.A. Dubrovin, T.M. Malanyuk, I.M. Krichever, V.G. Makhankov \textit{ Exact solutions of the time-dependent Schr\"odinger equation with self-consistent potentials}, Soviet J. Particles and Nuclei, {\bf 19}:3 (1988), 252--269

\bibitem{DKN} B.A. Dubrovin, I.M. Krichever, S.P. Novikov \textit{ Integrable systems.}  Dynamical systems, IV, 177-332, Encyclopaedia Math. Sci., 4, Springer, Berlin, (2001)

\bibitem{DN} B. A.Dubrovin, S.M. Natanzon \textit{ Real theta-function solutions of the Kadomtsev-Petviashvili equation.} Izv. Akad. Nauk SSSR Ser. Mat. {\bf 52} (1988) 267--286

\bibitem{FG} V.Fock, A. Goncharov \textit{ Moduli spaces of local systems and higher Teichm\"uller theory}, Publ.Math. I.H.E.S. {\bf 103} (2006), 1--211

\bibitem{FZ1} S. Fomin, A. Zelevinsky \textit{ Double Bruhat cells and total positivity.} J. Amer. Math. Soc. {\bf 12} (1999) 335--380

\bibitem{FZ2} S. Fomin, A. Zelevinsky \textit{ Cluster algebras I: foundations.} J. Am. Math. Soc.
{\bf 15} (2002) 497--529
 
\bibitem{FN} N.C. Freeman, J.J.C. Nimmo \textit{ Soliton solutions of the Korteweg de Vries and the Kadomtsev-Petviashvili equations: the Wronskian technique}, Proc. R. Soc. Lond. A {\bf 389} (1983), 319--329

\bibitem{GK} F.R. Gantmacher, M.G. Krein \textit{ Sur les matrices oscillatoires}. C.R. Acad. Sci. Paris
{\bf 201} (1935) 577--579

\bibitem{GK2} F.R. Gantmacher, M.G. Krein \textit{ Oscillation Matrices and Kernels and Small Vibrations of Mechanical Systems}, (Russian), Gostekhizdat, Moscow-Leningrad, (1941), second edition (1950);  revised English edition from AMS Chelsea Publ. (2002)

\bibitem{GSV}  M. Gekhtman, M. Shapiro, A. Vainshtein \textit{ Cluster algebras and Poisson geometry}, Mathematical Surveys and Monographs, 167. American Mathematical Society, Providence, RI, 2010. xvi+246 pp

\bibitem{GGMS} I. M. Gel'fand, R. M. Goresky, R. D. MacPherson, V. V. Serganova \textit{ Combinatorial geometries, convex polyhedra, and Schubert cells}, Adv. in Math. {\bf 63} (1987), no. 3, 301--316

\bibitem{GS} I.M Gel'fand and V.V. Serganova \textit{ Combinatorial geometries and torus strata on homogeneous compact manifolds}, Russian Mathematical Surveys, {\bf 42} (1987), no. 2,  133--168

\bibitem{KG} A.B. Goncharov, R. Kenyon \textit{ Dimers and cluster integrable systems}, Ann. Sci. \'Ec. Norm. Sup\'er. (4) {\bf 46} (2013), no. 5, 747--813

\bibitem{GrH} P. Griffiths, J. Harris \textit{ Principles of Algebraic Geometry}, John Wiley \& Sons, (1978)

\bibitem{Gud} D.A. Gudkov \textit{ The topology of real projective algebraic varieties},  Russ. Math. Surv. {\bf 29} (1974) 1--79 

\bibitem{Har} A. Harnack \textit{ \"Uber die Vieltheiligkeit der ebenen algebraischen Curven}, Math. Ann. {\bf 10} (1876) 189--199

\bibitem{H} R. Hirota \textit{ The direct method in soliton theory}, Cambridge Tracts in Mathematics, 155. Cambridge University Press, Cambridge, (2004), xii+200 pp.

\bibitem{IMS} I. Itenberg, G. Mikhalkin, E. Shustin \textit{ Tropical algebraic geometry. Second edition}, Oberwolfach Seminars, 35. Birkh\"auser Verlag, Basel, (2009), x+104 pp.

\bibitem{KP} B.B. Kadomtsev, V.I. Petviashvili \textit{ On the stability of solitary waves in weakly dispersive media}, Sov. Phys. Dokl. {\bf 15} (1970) 539--541

\bibitem{Kar} S. Karlin \textit{"Total Positivity, Vol. 1.} Stanford, (1968) 

\bibitem{KSO} R. Kenyon, A. Okounkov, S. Sheffield \textit{ Dimers and amoebae}, Ann. of Math. (2) {\bf 163} (2006), no. 3, 1019--1056

\bibitem{Kod1} Y. Kodama \textit{ Young diagrams and N-soliton solutions of the KP equation}, J Phys. A Math. Gen. {\bf 37} (2004), 11169--11190

\bibitem{Kod2} Y. Kodama \textit{ KP solitons in shallow water}  J. Phys. A: Math. Theor.  {\bf 43} (2010), 434004

\bibitem{KW1} Y. Kodama, L.K. Williams \textit{ The Deodhar decomposition of the Grassmannian and the regularity of KP solitons}, Adv. Math. {\bf 244} (2013), 979--1032

\bibitem{KW2} Y. Kodama, L.K. Williams \textit{ KP solitons and total positivity for the Grassmannian}, Invent. Math. {\bf 198} (2014) 637--699

\bibitem{Kr1} I.M. Krichever \textit{ An algebraic-geometric construction of the Zakharov-Shabat equations and their periodic solutions}, Sov. Math., Dokl. {\bf 17} (1976), 394--397

\bibitem{Kr2} I.M. Krichever \textit{ Integration of nonlinear equations by the methods of algebraic geometry}, Functional Analysis and Its Applications, {\bf 11}:1 (1977), 12--26

\bibitem{Kr3} I.M. Krichever \textit{ Spectral theory of finite-zone nonstationary Schr\"odinger operators. A nonstationary Peierls model}, Functional Analysis and Its Applications, {\bf 20}:3 (1986), 203--214

\bibitem{Kr4} I.M. Krichever \textit{ Spectral theory of two-dimensional periodic operators and its applications}, Russian Math. Surveys, {\bf 44}:8 (1989), 146--225

\bibitem{Lus1} G. Lusztig \textit{ Total positivity in reductive groups}, Lie Theory and Geometry: in honor
of B. Kostant, Progress in Mathematics {\bf 123}, Birkh\"auser, Boston, (1994), 531--568

\bibitem{Lus2} G. Lusztig \textit{ Total positivity in partial flag manifolds}, Representation Theory, {\bf 2} (1998), 70--78

\bibitem{Mal} T.M. Malanyuk \textit{ A class of exact solutions of the Kadomtsev–Petviashvili equation},  Russian Math. Surveys, {\bf 46}:3 (1991), 225--227

\bibitem{Mn} N. E. Mn\"ev \textit{ The universality theorems on the classification problem of configuration
varieties and convex polytope varieties}, in Topology and Geometry -- Rohlin Seminar,
O. Ya. Viro ed., Lecture Notes in Mathematics {\bf 1346}, Springer, Heidelberg.

\bibitem{MR} R. Marsh, K. Rietsch \textit{ Parametrizations of flag varieties}, Representation Theory, {\bf 
8},(2004), 212--242

\bibitem{Mat} V.B. Matveev \textit{ Some  comments on the rational solutions of the Zakharov-Schabat equations},  Letters in Mathematical Physics, {\bf 3} (1979), 503--512 

\bibitem{MJD} T. Miwa, M. Jimbo, E. Date \textit{ Solitons. Differential equations, symmetries and infinite-dimensional algebras}, Cambridge Tracts in Mathematics, 135. Cambridge University Press, Cambridge, (2000), x+108 pp.

\bibitem{Nat} S.M. Natanzon \textit{ Moduli of real algebraic surfaces, and their superanalogues. Differentials, spinors, and Jacobians of real curves}, Russian Mathematical Surveys, {\bf 54}:6 (1999), 1091--1147

\bibitem{Nov} S.P. Novikov \textit{ The periodic problem for the Korteweg—de vries equation}, Functional Analysis and Its Applications,  {\bf 8}:3 (1974), 236--246

\bibitem{Pin} A. Pinkus \textit{ Totally positive matrices}, Cambridge Tracts in Mathematics, {\bf 181}, Cambridge University Press, Cambridge, (2010), xii+182. 

\bibitem{Pos} A. Postnikov \textit{ Total positivity, Grassmannians, and networks}, arXiv:math/0609764 [math.CO]

\bibitem{PSW} A. Postnikov, D. Speyer, L. Williams \textit{ Matching polytopes, toric geometry, and the totally non-negative Grassmannian}, J. Algebraic Combin. {\bf 30} (2009), no. 2, 173--191

\bibitem{Rie} K. Rietsch \textit{ An algebraic cell decomposition of the nonnegative part of a flag variety},
Journal of Algebra {\bf 213} (1999), no. 1, 144--154.

\bibitem{S}  M. Sato \textit{ Soliton Equations as Dynamical Systems on a Infinite Dimensional Grassmann Manifolds}, RIMS Kokyuroku, {\bf 439} (1981), 30--46

\bibitem{Sch} I. Schoenberg \textit{ \"Uber variationsvermindende lineare Transformationen}, Math. Zeit. {\bf 32}, (1930), 321--328.

\bibitem{Taim} I.A. Taimanov \textit{ Singular spectral curves in finite-gap integration}, Russian Mathematical Surveys, {\bf 66}:1 (2011), 107--144

\bibitem{Tal2} K. Talaska. \textit{A Formula for Pl\"ucker Coordinates Associated with a Planar Network,} IMRN, {\bf 2008}, (2008),  Article ID rnn081, 19 pages.
doi:10.1093/imrn/rnn081


\bibitem{Thu}  D.P. Thurston \textit{ From dominoes to hexagons}, Proc. Centre Math. Appl. Austral. Nat. Univ., {\bf 46}, Austral. Nat. Univ., Canberra, (2017), 399--414

\bibitem{Vi} O. Ya. Viro \textit{ Real plane algebraic curves: constructions with controlled topology}, Leningrad Math. J. {\bf 1} (1990), no. 5, 1059--1134

\bibitem{ZS} V.E. Zakharov, A. B. Shabat \textit{ A scheme for integrating the nonlinear equations of mathematical physics by the method of the inverse scattering problem. I}, Funct. Anal. and Its Appl., {\bf 8} (1974), Issue 3, 226--235 

\bibitem{Z} Y. Zarmi \textit{  Vertex dynamics in multi-soliton solutions of Kadomtsev-Petviashvili II equation}, Nonlinearity {\bf 27} (2014), 1499--1523 
\end{thebibliography}

\end{document}